\documentclass[11pt]{article}
\pdfoutput=1
\usepackage[english]{babel}
\usepackage[centertags]{amsmath}
\usepackage{amsthm}
\usepackage[square, comma, sort&compress,numbers]{natbib}
\usepackage{array,multirow}
\numberwithin{equation}{section}
\usepackage{amssymb,amsfonts}
\usepackage{graphicx}
\usepackage{color}
\usepackage{mathtools,bm}
\usepackage{accents}

\usepackage{calligra}
\usepackage[T1]{fontenc}

\usepackage{epsfig}

\usepackage{wrapfig}
\usepackage{float}
\usepackage{soul}
\usepackage{tkz-euclide}
\usepackage{braket}
\usepackage{tikz,pgf}
\usetikzlibrary{decorations.pathreplacing,calligraphy}
\tikzset{snake it/.style={decorate, decoration=snake}}
\usetikzlibrary{decorations.pathreplacing,angles,quotes}
\usetikzlibrary{shapes}
\usetikzlibrary{calc}
\usetikzlibrary{decorations.pathmorphing}
\usetikzlibrary{decorations.pathreplacing,shapes.misc}
\usetikzlibrary{positioning}
\usetikzlibrary{arrows}
\usetikzlibrary{decorations.markings}
\usetikzlibrary{shadings}

\usetikzlibrary{intersections}


\def\be{\begin{equation}}
\def\ee{\end{equation}}
\def\ba{\begin{array}}
\def\ea{\end{array}}

\def\dps{\displaystyle}
\newcommand{\half}{\frac{1}{2}}

\def\1{\tilde{1}}
\def\2{\tilde{2}}
\def\3{\tilde{3}}


\newdimen\tableauside\tableauside=1.0ex
\newdimen\tableaurule\tableaurule=0.4pt
\newdimen\tableaustep
\def\phantomhrule#1{\hbox{\vbox to0pt{\hrule height\tableaurule
width#1\vss}}}
\def\phantomvrule#1{\vbox{\hbox to0pt{\vrule width\tableaurule
height#1\hss}}}
\def\sqr{\vbox{%
\phantomhrule\tableaustep

\hbox{\phantomvrule\tableaustep\kern\tableaustep\phantomvrule\tableaustep}%
\hbox{\vbox{\phantomhrule\tableauside}\kern-\tableaurule}}}
\def\squares#1{\hbox{\count0=#1\noindent\loop\sqr
\advance\count0 by-1 \ifnum\count0>0\repeat}}
\def\tableau#1{\vcenter{\offinterlineskip
\tableaustep=\tableauside\advance\tableaustep by-\tableaurule
\kern\normallineskip\hbox
{\kern\normallineskip\vbox
{\gettableau#1 0 }%
\kern\normallineskip\kern\tableaurule}%
\kern\normallineskip\kern\tableaurule}}
\def\gettableau#1 {\ifnum#1=0\let\next=\null\else
\squares{#1}\let\next=\gettableau\fi\next}

\tableauside=1.5ex

\tableaurule=0.8pt


\newtheorem{prop}{Proposition}[section]
\newtheorem{lemma}[prop]{Lemma}
\newtheorem{definition}{Definition}[section]

\newtheorem{corollary}{Corollary}[section]


\newcommand{\bref}[1]{\textbf{\ref{#1}}}


\newcommand{\RR}{\mathbb{R}}
\newcommand{\ZZ}{\mathbb{Z}}
\newcommand{\NN}{\mathbb{N}}



\def\cF{\mathcal{F}}

\def\cL{\mathcal{L}}
\def\cM{\mathcal{M}}
\def\cN{\mathcal{N}}
\def\cO{\mathcal{O}}
\def\cP{\mathcal{P}}
\def\cQ{\mathcal{Q}}

\def\cS{\mathcal{S}}

\def\cW{\mathcal{W}}


\numberwithin{equation}{section} \makeatletter
\@addtoreset{equation}{section}

\hfuzz=1.5pt


\def\be{\begin{equation}}
\def\ee{\end{equation}}
\def\ba{\begin{array}}
\def\ea{\end{array}}

\def\dps{\displaystyle}

\def\ba{\begin{array}}
\def\ea{\end{array}}

\def\dps{\displaystyle}

\def\tdelta{\tilde \delta}

\newcommand{\td}{{h}}

\newcommand{\wb}{\bar{w}}

\def\C2{\text{C}_2}

\def\sl2{sl(2,\mathbb{R})}

\def\pr{\text{\calligra Prin}\;}
\def\reg{\text{\calligra Reg}\;}
\def\g2{\textup{P}}
\def\f1{\textup{B}}
\def\h2{\mathcal{K}}
\def\wb{\mathcal{B}}
\def\ws{\mathcal{T}} 
\def\tih{\tilde h}
\def\tiw{\tilde w}
\def\cb{\mathsf{f}}
\def\QQ{\mathbb{Q}^+_0} 
\def\rn{{_{\mathbb{Q}}}}

\definecolor{forestgreen(web)}{rgb}{0.13, 0.55, 0.13}





\usepackage{jheppub}
\makeatletter
\def\@fpheader{\vspace{-.1cm}}
\makeatother

\title{\centering{A note on the  large-$c$ conformal block asymptotics and $\alpha$-heavy operators}}

\author[a,b]{Konstantin\ Alkalaev}   
\author[c]{and Pavel\ Litvinov}

\affiliation[a]{I.E. Tamm Department of Theoretical Physics, \\P.N. Lebedev Physical
Institute, 119991 Moscow, Russia}
\affiliation[b]{Institute for Theoretical and Mathematical Physics,\\
Lomonosov Moscow State University,
119991 Moscow, Russia}
\affiliation[c]{Department of Physics of Complex Systems, \\ Weizmann Institute of Science, Rehovot 7610001, Israel}

\abstract{We consider $\alpha$-heavy conformal  operators in CFT$_2$ which dimensions grow as $h = O(c^\alpha)$ with $\alpha$ being non-negative rational number and conjecture  that the large-$c$ asymptotics of the respective 4-point Virasoro conformal block is exponentiated similar to the standard case of $\alpha=1$. It is shown that  the leading exponent is given by a Puiseux polynomial which is a  linear combination of power functions in the central charge with fractional powers  decreasing  from $\alpha$ to $0$ according to some pattern. Our analysis is limited by considering the first six explicit coefficients of the Virasoro block function in the coordinate. For simplicity,  external primary operators are chosen to be of equal conformal dimensions that, therefore, includes  the case of the vacuum  conformal block. The consideration is also extended to  the  4-point  ${\cal W}_3$  conformal block of  four semi-degenerate operators, in which case the exponentiation hypothesis  works the same way. Here, only the first three block coefficients can be treated  analytically.
}

\begin{document}

\maketitle
\flushbottom

\section{Introduction}

The large-$c$ regime in CFT$_2$  plays an important role  in the AdS$_3$/CFT$_2$ correspondence due to the Brown-Henneaux relation \cite{Brown:1986nw} which in this case says that  quantum gravity in the bulk can be considered semi-classically. On the boundary,  this leads to various essential simplifications of  conformal correlation functions expanded into conformal blocks. Namely, assuming that conformal dimensions  grow at large $c$ either as $c^0$ or $c^1$, which corresponds to  light and heavy operators, the conformal blocks are drastically simplified. In the former case, the original Virasoro block at leading order in $c$ reduces to the global block which is completely defined by $\sl2$ subalgebra of Virasoro algebra \cite{Ferrara:1974ny,Belavin:1984vu}. This is the simplest approximation because the (multi-point) global blocks are hypergeometric-type functions \cite{Ferrara:1974ny,Belavin:1984vu,Fateev:2011qa,Alkalaev:2015fbw,Rosenhaus:2018zqn,Fortin:2019zkm,Fortin:2020zxw,Fortin:2023xqq,Alkalaev:2023evp}.\footnote{The subleading corrections in $1/c$ to global blocks were studied in e.g. \cite{Hikida:2018dxe,Bombini:2018jrg}.} In the later case, the conformal blocks are exponentiated with the exponent being a linear function of $c$  \cite{Zamolodchikov1986,Hadasz:2005gk,Litvinov:2013sxa,Besken:2019jyw}.

The large-$c$ conformal blocks in CFT$_2$  can be  considered from many perspectives, most prominently in the context of the AdS/CFT correspondence. E.g.  they were shown to calculate Witten geodesic diagrams and lengths of geodesic networks anchored on the conformal boundary of AdS$_3$ space with defects \cite{Hartman:2013mia,Fitzpatrick:2014vua,Hijano:2015rla,Fitzpatrick:2015zha,Alkalaev:2015wia,Hijano:2015qja,Hijano:2015zsa,Alkalaev:2015lca,Alkalaev:2015fbw,Banerjee:2016qca,Anous:2016kss,Alkalaev:2016rjl,Chen:2016cms,Chen:2017yze,Belavin:2017atm,Kraus:2017ezw,Alkalaev:2017bzx,Gobeil:2018fzy,Hijano:2018nhq,Alkalaev:2018qaz,Chen:2018qzm,Alkalaev:2018nik,Parikh:2019ygo,Anous:2019yku,Alkalaev:2019zhs,Chen:2019hdv,Jepsen:2019svc,Alkalaev:2020kxz,Anous:2020vtw,RamosCabezas:2020mew,Pavlov:2021lca}. Also, one can show that  gravitational Wilson networks with boundary attachments in $sl(N, \RR)\oplus sl(N, \RR)$ Chern-Simons higher-spin  gravity calculate  global conformal blocks \cite{deBoer:2013vca,Ammon:2013hba,deBoer:2014sna,Hegde:2015dqh,Bhatta:2016hpz,Besken:2016ooo,Besken:2017fsj,Hikida:2017ehf,Hikida:2018eih,Hikida:2018dxe,Besken:2018zro,Bhatta:2018gjb,DHoker:2019clx,Castro:2018srf,Kraus:2018zrn,Hulik:2018dpl,Hung:2018mcn,Castro:2020smu,Alkalaev:2020yvq,Belavin:2022bib,Belavin:2023orw,Alkalaev:2023axo}.

\subsection{$Vir$ classical conformal blocks}

Let us discuss the exponentiated Virasoro conformal blocks in more detail. We will focus on the case of the correlation  function $\langle \cO_1\cO_2\cO_3\cO_4 \rangle $ of four identical primary operators with conformal dimensions $h$ in plane CFT$_2$. Then, isolating the conformal block contribution of an intermediate dimension $\tilde h$ and stripping off the leg factor one is left with the  bare conformal block which we denote $\cF(h, \tilde h, c|z)$, see fig. \bref{fig:vir}. 

\begin{figure}[h]
    \centering
    \begin{tikzpicture}[scale = 0.4]
\draw[black, thick] (-2,0) -- (2,0);
\draw[black, thick] (-4,2) -- (-2,0);
\draw[black, thick] (-4,-2) -- (-2,0);
\draw[black, thick] (4,2) -- (2,0);
\draw[black, thick] (4,-2) -- (2,0);
\draw (-5,2.5) node{$\cO_h$};
\draw (-5,-2.5) node{$\cO_h$};
\draw (5,2.5) node{$\cO_h$};
\draw (5,-2.5) node{$\cO_h$};
\draw (0,0.9) node{$[\cO_{\tilde h}]$};

\end{tikzpicture}
    \caption{4-point Virasoro block diagram for four identical operators $\cO_h$ exchanging by a conformal family $[ \cO_{\tilde h}]$. }
    \label{fig:vir}
\end{figure}
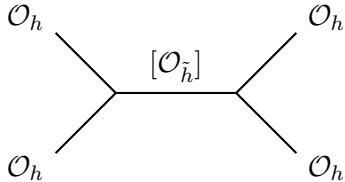

Assuming that the conformal dimensions grow as $h, \tilde h \sim c$  one finds the  large-$c$ asymptotics of the Virasoro conformal block  \cite{Zamolodchikov1986}: 
\be
\label{stand_block}
\cF(h, \tilde h, c|z) \; \simeq \;  e^{c f_1(z)+f_0(z)} \left[1+ \frac{a_1(z)}{c}+ \frac{a_2(z)}{c^2}+ O(c^{-3})\right],
\ee
where $f_1(z)$ is known as the classical conformal block, $f_0(z)$ is usually not indicated in the exponent in which case the asymptotic series in the square brackets is redefined and begins from some zeroth-order function $a_0(z)$. We restore $f_0$ for future convenience, in particular,  the $O(1/c)$ series in the brackets then starts with 1. The exponents  $f_{0,1}(z)$ and the expansion coefficients $a_n(z)$ additionally depend on the constant ratios $h/c, \tilde h/c$. The issue of  convergence of  \eqref{stand_block} along with summation  over all conformal blocks in the correlation function -- that turns  the large-$c$ expansion of conformal correlators into trans-series -- this requires careful analysis which leads to profound non-perturbative physical implementations on the both sides of the AdS$_3$/CFT$_2$ correspondence  \cite{Fitzpatrick:2016ive,Fitzpatrick:2016mjq,Chen:2017yze,Benjamin:2023uib}. 

In this technical (though still voluminous) note, we revisit the large-$c$ asymptotics of the  conformal blocks and study kinematical regimes when  operators have  dimensions which scale  with the central charge in a non-standard manner  as\footnote{For our notation and conventions see Appendix \bref{app:notation}.} 
\be
\label{a_dim}
h  = O(c^{\alpha})\,,
\quad
\text{where}\;\;\;  \alpha \in \QQ\,.
\ee
Such a regime of operator dimensions has been  discussed previously in \cite{Fitzpatrick:2014vua} for $\alpha = 1/2$ for the 4-point vacuum  block in Virasoro CFT$_2$ and in \cite{Anous:2020vtw} for $\alpha=1/2, 2/3$ for the six-point vacuum  block in the snowflake channel (where these operators were called ''hefty" and ''heftier" ones).\footnote{The vacuum (identity) block means  an exchange by the identity operator conformal family.} It was shown there  that the respective large-$c$ conformal vacuum  blocks  are exponentiated. Here, we show that these observations can be systematically extended to any $\alpha \in [0, \infty)_\rn$ and non-vacuum  blocks. 

In the large-$c$ Virasoro CFT$_2$ one may consider four types of  operators: 
\be
\label{four}
\ba{l}
h \sim \,1: \;\;\text{light operators, $\alpha=0$}
\vspace{1mm}
\\
{1 \ll }\; h\ll c: \;\;\text{hefty operators, $0<\alpha<1$}
\vspace{1mm}
\\
h \sim \, c: \;\;\text{heavy operators, $\alpha=1$}  
\vspace{1mm}
\\
h \gg c: \;\;\text{huge operators, $\alpha>1$}
\ea
\ee
which can be  collectively referred to as $\alpha$-heavy operators and $\alpha\in \mathbb{Q}_0^+$ can be  called a heaviness parameter. Complementary, one can  introduce $h = O(c^{-\beta})$ with $\beta \in \mathbb{Q}^+$ which means that the conformal dimensions are vanishing at large-$c$ and the respective conformal block $\cF(h, \tilde h, c|z) = 1+  O(c^{-\beta})$. Calling this type of  operators ''weightless" they can be depicted together with other types \eqref{four}  as in Fig. \bref{fig:alpha-line}.
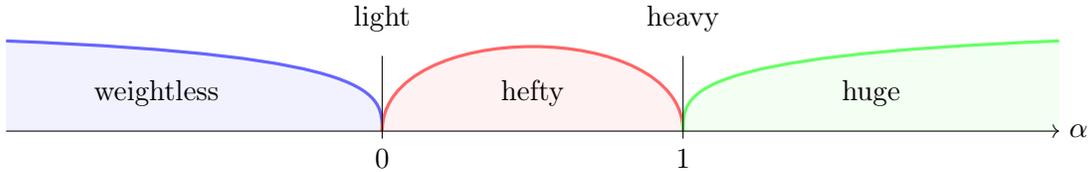
\begin{figure}
    \centering
    \begin{tikzpicture}
        \fill[ fill=blue!5]
            (-3,0) -- (-3,1.2)
            .. controls (2,1) and (2,0.5) .. (2,0) -- cycle;
        \draw[color = blue!60, very thick]
            (-3,1.2)
            .. controls (2,1) and (2,0.5)  .. (2,0);
        \fill[ fill=red!5]
            (2,0) .. controls (2,1.5) and (6,1.5) .. (6,0) -- cycle;
        \draw[color = red!60, very thick]  
            (2,0) .. controls (2,1.5) and (6,1.5) .. (6,0);
        \fill[ fill=green!5]
            (6,0) .. controls (6,0.5) and (6,1)  .. (11,1.2) -- (11,0) -- cycle;
        \draw[color = green!60, very thick] (6,0) .. controls  (6,0.5) and (6,1) .. (11,1.2);
        \node at (-1, 0.5) {weightless};
        \node at (4,0.5) {hefty};
        \node at (8.5,0.5) {huge};
        
        \draw[->] (-3,0) -- (11,0) node[right] {$\alpha$};
        
        \draw (2,1) -- (2,-0.1) node[anchor=north] {0};
        \draw (6,1) -- (6,-0.1) node[anchor=north] {1};
        
        \node at (2, 1.5) {light};
        \node at (6, 1.5) {heavy};

    \end{tikzpicture}
    \caption{Types of heaviness. }
    \label{fig:alpha-line}
\end{figure}

We conjecture  that  the large-$c$ asymptotics of the bare Virasoro conformal block for $\alpha$-heavy operators with rational $\alpha = p/q \in \QQ$ reads as 
\be
\label{alpha_block}
\cF(h, \tilde h, c|z) \; \simeq \;  e^{\cS^{(\alpha)}(c|z)} \left[1+ \frac{b_1(z)}{c^{1/q}}+ \frac{b_2(z)}{c^{2/q}}+ O(c^{-(3/q)})\right],
\ee
where the exponent is  {\it a Puiseux polynomial}  which is a linear combination of power functions in the central charge with decreasing non-negative  fractional degrees, 
\be
\label{cl_bl_S}
\cS^{(\alpha)}(c| z) = c^{\alpha}f^{(\alpha)}_{\alpha}(z) + \, ... \,+ c^{0}f^{(\alpha)}_0(z) = \sum_{n=0}^p c^{(p-n)/q} f^{(\alpha)}_{(p-n)/q}(z)\,.
\ee
In general, it is this function that we call the {\it ($\alpha$-heavy)  classical conformal block}. At $\alpha=1$ we keep the old terminology and call the coefficient $f^{(1)}_{1}(z)$ of $\cS^{(\alpha)}(c| z)$ as the {\it Zamolodchikov classical conformal  block}. At $\alpha = 0$ the asymptotic representation is superfluous since the 4-point Virasoro conformal block $\cF(h, \tilde h, c|z) $ is known to be the Gauss hypergeometric function in which case it is called the {\it global conformal block}.          

The large-$c$ representation \eqref{alpha_block}, \eqref{cl_bl_S}  can be extended to any real positive $\alpha$. In fact, our method makes no essential distinction between (ir)rational $\alpha$.   Note that for $\alpha\in \mathbb{R}_+ $ the infinity $c=\infty$ is an algebraic/logarithmic branch point. This means that instead of the Laurent series the large-$c$ expansion of the (logarithmic) conformal block is given by the Puiseux series which is the generalized Laurent series with fractional or real powers. The leading expansion coefficient in  \eqref{cl_bl_S}  can be defined as $f_{\alpha}(z) = \lim\limits_{c\to\infty} c^{-\alpha}\log \cF(\td, h_i , c\,| z)$. To find subleading terms one needs a  Cauchy integral formula modified by including   contours  which avoid  the branch cuts (for irrational $\alpha$).

\begin{figure}
\centering
\includegraphics[width=0.45\linewidth]{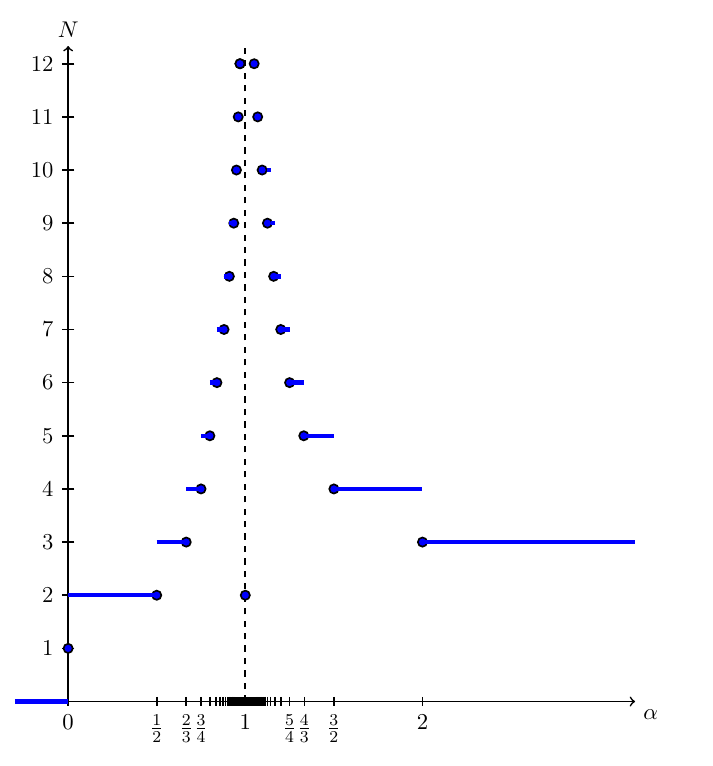} 
\caption{A counting  function $N = N(\alpha)$: a number of terms $N$ in the classical block as a function of the heaviness parameter $\alpha$. It is piecewise-defined, the bold dots denote values of $N(\alpha)$ at points of discontinuities. } 
\label{fig_graphic}
\end{figure}

The powers in \eqref{cl_bl_S} decrease with  step $1/q$ but the corresponding expansion coefficients $f^{(\alpha)}_{(p-n)/q}$ are not necessarily all non-zero. The number of terms in $\cS^{(\alpha)}(c| z)$ crucially depends on particular $\alpha = p/q$.  For $\alpha = 1$ which corresponds  to the standard heavy operators with $h/c, \tilde h/c = const$ the exponent  contains two terms, cf. \eqref{stand_block}. In fact, the problem is not only to show that the $\alpha$-heavy conformal block exponentiates but also to reveal the structure of the leading exponent $\cS^{(\alpha)}(c| z)$ depending on  $\alpha$. 

We demonstrate  that the number of terms in $\cS^{(\alpha)}(c| z)$ grows up from two to infinity when $\alpha$ increases from $0$ to $1^-$, then, at $\alpha=1$ this number is sharply reduced to two,  then again it sharply blows up to infinity at  $\alpha \to 1^+$, after which gradually decreases and for $\alpha\geq 2$ the number of terms in $\cS^{(\alpha)}(c| z)$ is stabilized  at three. Moreover, it is shown that splitting the half-line $\alpha\in [0, \infty)_\rn$ into an infinite union of particular half-open intervals the number of terms does not change within a given interval. The change in the number of terms occurs only when $\alpha$ goes from a given interval to  an adjacent one, see fig. \bref{fig_graphic}.

The evidence in support of  \eqref{alpha_block} is based on both  numerical computation of the conformal blocks for $\alpha \in \QQ$ along with analytical treatment of the lower level block coefficients.\footnote{\label{foo1} On the standard laptop during 12 hours our {\it Mathematica} code computes (for equal external dimensions):  non-vacuum Virasoro conformal block coefficients up to $O(z^{6})$; vacuum Virasoro block coefficients up to $O(z^{22})$;  classical (non-)vacuum conformal block coefficients up to $O(z^{16})$. The calculation  time for higher-order coefficients is much longer.} In particular, using {\it Mathematica} code we reproduce (up to $O(z^{16})$) the exponentiated vacuum  block analytically found in \cite{Fitzpatrick:2014vua} which  in our nomenclature corresponds to $\alpha=1/2$:  in this case the exponent has two terms, $\cS^{(1/2)}(c| z) = c^{1/2} f^{(1/2)}_{1/2}(z)+ f^{(1/2)}_0(z)$, where $f^{(1/2)}_{1/2} =0$ and $f^{(1/2)}_0$ is given by the hypergeometric series. In particular, in this example one can observe a lowering  the order of the pole in $c$. It is  a general phenomenon  which  we call a transmutation of singularities since  for some  specific  parameters $\{\alpha, h_i, \tilde h\}$ a singular point $c= \infty$  changes its type.

One of the major  consequences of  introducing $\alpha$-heavy operators is that the perturbative classical block at $\alpha=1$, which is obtained by making constant ratios $h/c, \tilde h/c$  (large) small, can be interpreted in terms of the classical blocks $\cS^{(\alpha)}(c| z)$ at $\alpha \neq 1$.  More precisely, the $\alpha$-heavy operators  \eqref{four} with $\alpha \in (0,1)_\rn \cup (1, \infty)_\rn$ define classical blocks which coefficients $f^{(\alpha)}_{m}(z)$ \eqref{cl_bl_S}  specifically reconstruct  coefficients of the Zamolodchikov  classical conformal block which is expanded near $0$ or $\infty$ in the space of conformal dimensions in the large-$c$ regime. In this way,  perturbative expansions of the Zamolodchikov classical block can be systematically described in terms of  considering contributions of $\alpha$-heavy operators.   

\subsection{$\cW_3$ classical conformal blocks}

The large-$c$ $\cW_3$ conformal blocks are much less studied in the literature. Some partial results basically concerning vacuum  blocks can be found in \cite{deBoer:2014sna,Hegde:2015dqh,Besken:2016ooo,Coman:2017qgv,Hulik:2018dpl,Karlsson:2021mgg,Pavlov:2022irx}, where the classical  blocks were considered  as solutions to the associated monodromy problem and in \cite{Fateev:2011qa,Belavin:2023orw,Belavin:2024nnw}, where the global blocks were studied. In fact, the real issue is deeper since $\cW_3$ conformal blocks cannot be uniquely defined since the extended conformal symmetry does not uniquely fix 3-point functions of descendants  in terms of 3-point functions of primaries. The leftover freedom reduces to calculating matrix elements of  the $(W_{-1})^n$ descendant operator which  is not generally known (here, $W_{-1}$ is the Laurent mode of the spin-$3$ current $W(z)$ and $n\in \mathbb{N}$). This difficulty can be overcome by imposing an extra constraint $W_{-1} \sim L_{-1}$ which is in fact the null-condition defining a singular vector on the first level in the  Verma module \cite{Bowcock:1993wq}.\footnote{See e.g. a  discussion of the conformal block decomposition  in Refs. \cite{Mironov:2009dr,Kanno:2010kj,Belavin:2016qaa}.} A respective local primary operator $\cO_{h,w}$ is called semi-degenerate, its conformal dimension $h$ and spin-3 charge $w$ are polynomially related as
\be
\label{degW0}
\left[\frac{32}{22+5c}\left(h+\frac{1}{5}\right)-\frac{1}{5}\right]h^2 = w^2\,.  
\ee
It follows that considering a 4-point correlation function $\langle \cO_1\cO_2\cO_3\cO_4 \rangle$, where all operators are semi-degenerate, one can consistently define the conformal block decomposition. 
\begin{figure}[h]
\centering
\begin{tikzpicture}[scale = 0.4]
\draw[black, thick, double, double distance=0.5mm] (-2,0) -- (2,0);
\draw[black, thick] (-4,2) -- (-2,0);
\draw[black, thick] (-4,-2) -- (-2,0);
\draw[black, thick] (4,2) -- (2,0);
\draw[black, thick] (4,-2) -- (2,0);
\draw (-5,2.5) node{$\cO_{h,w}^*$};
\draw (-5,-2.5) node{$\cO_{h,w}$};
\draw (5,2.5) node{$\cO_{h,w}$};
\draw (5,-2.5) node{$\cO_{h,w}^*$};
\draw (0,0.9) node{$[\cO_{\tilde h, 0}]$};

\end{tikzpicture}
\caption{4-point $\cW_3$ conformal block diagram. All external operators are semi-degenerate. The intermediate conformal family  has a primary non-degenerate operator of zeroth spin-3 charge. The double line notation is explained in Section  \bref{sec:mat_block}.}
\label{fig:comb_W3}
\end{figure}
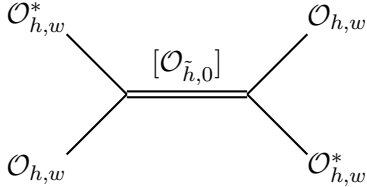

To keep things simple, in this paper we consider a particular correlation function 
\be
\label{4ptW}
\big\langle \cO(z_1)\cO^*(z_2)\cO(z_3)\cO^*(z_4) \big\rangle\,,
\ee
where $\cO$ and $\cO^*$ are dual semi-degenerate operators which means their weights are $(h,w)$ and $(h,-w)$ \eqref{degW0}. 
The exchange channel is fixed by considering the fusion rules arising in the OPEs $\cO\cO^*$ and $\cO^*\cO$. One can  show that the intermediate operator can be chosen  as a non-degenerate operator $\tilde \cO_{\tih,\tiw}(z)$ with (independent) weights  $\tih$ and $\tiw = 0$, i.e. 
\be
[\cO_{h,w}] [\cO_{h,-w}] = [\tilde \cO_{\tilde h,0}]\,.
\ee 

The  respective  (bare) block denoted by $\wb(h,w, \tilde h,c|z)$ is a power series in $z$ with expansion coefficients  depending on three independent  parameters: conformal dimension $h$  (for semi-degenerate operators the weights  are related,  $w= w(h)$),  conformal dimension $\tilde h$;  the central charge $c$, see fig. \bref{fig:comb_W3}.

The issue of the large-$c$ asymptotics is more involved because  $\cW_3$ conformal  operators have two  weights $h$ and $w$ which can be assigned different behaviour at large $c$. In general, one  can claim that for large $c$ the weights are given by 
\be
\label{ab_dim}
h = O(c^\alpha)\,,
\quad
w = O(c^\beta)\,,
\quad
\text{where}\;\;\;  \alpha, \beta \in \QQ\,,
\ee
cf. \eqref{a_dim}.  It follows that a given $\cW_3$ conformal operator  is $\alpha\beta$-heavy which means that  depending on particular values of $\alpha$ and $\beta$ there are  8 types of heaviness  obtained by applying  the list \eqref{four} for two independent weights $h$ and $w$:
\be
\label{ab_dim_h}
\alpha\beta\text{-heaviness}:\quad (\mathbb{A}, \mathbb{B})\,,
\qquad \text{where} \quad \{\mathbb{A}, \mathbb{B} = \text{light, heavy, hefty, huge}\}\;.
\ee
However, in this paper the spin-3 charges  are either functions of the conformal dimensions, i.e. $w = w(h)$, or just equal zero, $w=0$. Thus, all weights are characterized  by a single scaling parameter $\alpha\in \QQ$ so that  we can  keep calling the respective operators $\alpha$-heavy. 

It turns out that  for such $\cW_3$ conformal blocks  both the calculation of the expansion coefficients and the large-$c$ analysis go along the same lines as in the Virasoro case. However, our  {\it Mathematica} code allows finding   only lower-order block coefficients (up to $O(z^4)$)  in a reasonable time (see footnote \bref{foo1}). Nonetheless, they also demonstrate the same exponentiation pattern. Namely, for $\alpha$-heavy operators with rational $\alpha = p/q \in \QQ$ the $\cW_3$ conformal block for large $c$ is asymptotic to 
\be
\label{expW}
\wb(h,w, \tih, c|z)  \; \simeq \;  e^{\ws^{(\alpha)}(c|z)} \left[1+ \frac{d_1(z)}{c^{1/q}}+ \frac{d_2(z)}{c^{2/q}}+ O(c^{-(3/q)})\right],
\ee
where the leading exponent $\ws^{(\alpha)}(c|z)$ which is now {\it a  classical $\cW_3$ conformal block} 
\be
\label{cl_bl_T}
\ws^{(\alpha)}(c| z) = c^{\alpha}\,\cb^{(\alpha)}_{\alpha}(z) + \, ... \,+ c^{0}\,\cb^{(\alpha)}_0(z) = \sum_{n=0}^p c^{(p-n)/q}\, \cb^{(\alpha)}_{(p-n)/q}(z)
\ee
remains to be a Puiseux polynomial function of the central charge with coefficients $\cb^{(\alpha)}_{(p-n)/q}(z)$ depending on classical dimensions $h/c^{\alpha}$ and $\tih/c^{\alpha}$ (recall that $w$ is dependent on $h$). We show that the structure of $\ws^{(\alpha)}(c| z)$ (i.e. the number of terms and their growth with $\alpha$) obeys the same patterns which take place in the Virasoro case, see fig. \bref{fig_graphic}. Of course, these findings apply only to the $\cW_3$ conformal blocks with particular external/intermediate operators as fixed above  and for more general settings the structure of $\ws^{(\alpha)}(c| z)$ can change.

The paper is organized as follows. In Section \bref{sec:vir} we review the Virasoro blocks in the  large-$c$ regime and formulate  the procedure to calculate the asymptotics for $\alpha$-heavy operators. Later, in Section \bref{sec:w3} we extend this discussion to the $\cW_3$ conformal blocks. Section \bref{sec:pert} discusses a relation between perturbative Zamolodchikov classical conformal blocks and  classical conformal blocks for $\alpha$-heavy operators.  Our findings and future developments are discussed in the concluding Section \bref{sec:concl}. Notation and conventions are given in Appendix \bref{app:notation}. Appendix \bref{app:vir} contains analytical treatment of the lowest order coefficients of the Virasoro logarithmic blocks.  We discuss in detail how to expand Puiseux rational functions of the central charge near infinity and prove a number of Propositions collected  in Section \bref{sec:vir}. Also, we  develop the Newton  polygon technique which considerably simplifies and structures the study of the large-$c$ asymptotics. In Appendix \bref{app:W3} we apply this technique to the study of the $\cW_3$ logarithmic block coefficients. Appendix \bref{app:proof_prop} contains the proof of the  Proposition from Section \bref{sec:pert}. In Appendices  \bref{app:W3-deg} and \bref{app:B1} we discuss the fusion rules for semi-degenerate $\cW_3$ primary operators and calculate the first block coefficient, respectively. Appendix \bref{sec:numerics} contains a few  examples of the exponentiated Virasoro and $\cW_3$ conformal blocks in the lowest orders for various $\alpha$ which are intended to illustrate  Propositions in this paper by explicit expressions. 

\section{Virasoro conformal blocks}
\label{sec:vir}

The (bare) conformal block is given as a power series in coordinate $z$ in the domain $|z|<1$ as 
\be
\label{block_def}
\cF(\td, h_i , c\,| z) = 1+ F_1 z+F_2 z^2 +F_3 z^3+... \;=\; \sum_{n=0}^\cN F_n z^n+ O(z^{\cN+1})\,, 
\ee     
where the coefficients  $F_n = F_n(\td, h_i , c)$ are  rational functions which can be calculated in a given order \cite{Belavin:1984vu}. In practice, we take $z$ to be small and calculate the block function at finite order. Also, all external dimensions are chosen to be equal, $h=h_i$. Then, using {\it Mathematica} we can explicitly  calculate the first six coefficients  $F_{n}(h,\tilde h,c)$, $n=0,1,...,5$ which we later analyze analytically: 
\be
\label{General block}
\cF (h, \tilde h, c, z) = 1 + \frac{\tilde h}{2} z + \frac{c \tilde h (\tilde h + 1)^2 + 8 \big( {\tilde h}^4 + {\tilde h}^3 + {\tilde h}^2 (h-1) + h^2 + \tilde h h (2h-1) \big) }{8\tilde h(8\tilde h - 5) + c (8\tilde h + 4)} z^2 + 
\ee
$$
+ \frac{(\tilde h + 2) \Big( c \tilde{h} (\tilde{h}+1)(\tilde{h}+2)+2\tilde{h}^2 \big( \tilde{h} (4\tilde{h}+9)-7 \big) + 24 \tilde{h} h (\tilde{h}-1) + 24 h^2 (2\tilde{h}+1) \Big)}{48\tilde h(8\tilde h - 5) + 6 c (8\tilde h + 4)} z^3 + O(z^4)\,. 
$$

For fixed dimensions $h, \tilde h$ the Virasoro conformal block $\cF (h, \tilde h, c, z)$ has   a certain large-$c$ asymptotics which leading term is the global conformal block given by the hypergeometric function. On the other hand, if conformal dimensions are particular functions of the charge, then $\cF (h, \tilde h, c, z) \to \cF (c, z)$  can have different behaviour in $c$ and, hence, different large-$c$ asymptotics. A textbook example is given by the conformal block at the  Kac dimensions $h_{r,s} = h_{r,s}(c)$ which are simple poles of the block coefficients \cite{Zamolodchikov:recursion,Zamolodchikov1987}.\footnote{For equal external dimensions the poles arising in \eqref{General block} have all Kac dimensions except for $h_{1,1} =0$. Thus, the zeroth intermediate dimension is not a pole that makes  the vacuum  conformal block $\cF (h, c, z)$ possible. The large-$c$ behaviour of the conformal block with intermediate Kac dimensions will be considered in Section \bref{sec:sing}.}

\subsection{Large-$c$ conformal dimensions}
\label{sec:large-c}

Let   conformal dimensions  $\Delta  \equiv \{h, \tilde h\}$ and the central charge  $c$ be large. To consider the respective asymptotics of the conformal block one supposes that conformal dimensions are functions of the central charge which at $c\to\infty$  behave   as $\Delta(c) = O(c^\alpha)$ for some $\alpha\in\QQ$.\footnote{Equivalently, $\Delta/c^\alpha= fixed$ at $c\to\infty$ (recall that the big $O$ notation stands for a set of functions, not a single function).} Explicitly, parameterizing $\alpha = p/q$ this  means that $\Delta(c)$ can be expanded    near $c=\infty$ into the Puiseux series (a modified Laurent series containing fractional exponents) as 
\be
\label{delta_puis}
\Delta(c)  = \sum_{n=-p}^\infty \delta_{(n)} c^{-n/q}  \equiv  \pr\, \Delta(c) + \reg\, \Delta(c)\,,
\ee
with  constant  expansion coefficients $\delta_{(n)} \in  \mathbb{R}$. For future convenience, we define the principal part as containing a possible $c^0$ term:  
\begin{definition}
\label{def_prin}
The principal part $\pr$ is a Puiseux polynomial with non-negative powers of $c$. The regular part $\reg$ is a Puiseux series with negative powers of $c$. 
\end{definition}

It is assumed that $c=\infty$ is an isolated singularity so that the principal part has finitely many terms in which case it is an algebraic branch point of order $q$. E.g. for $\alpha = 1$ one can consider $\Delta(c) = \delta_{(-3)} c^1 + \delta_{(-1)} c^{1/3} + \delta_{(0)} c^0+ \reg\, \Delta(c)$. We will be studying  the simplest case when the principal parts contain just one term, i.e. the conformal dimensions are power-low functions of the central charge:   
\be
\label{h_th_a}
h(c) = \delta c^\alpha\,, \qquad \tilde h(c)  =\tdelta c^\alpha\,, \qquad \alpha \in \QQ
\;.
\ee
The coefficients $\delta, \tdelta$ will be called classical conformal dimensions. More complicated principal parts (as in the above example) as well as regular parts  will change the large-$c$ exponential behaviour of the Virasoro block.

\subsection{Large-$c$ Virasoro conformal block} 

Choosing the conformal dimensions as \eqref{h_th_a}  one finds that the  large-$c$ asymptotic expansion of the $n$-th level coefficient \eqref{block_def}  behaves  as $F_n  = O(c^{\alpha n})$ (this is correct at least up to $O(z^6)$):
\be
F_n(h, \tilde h, c) = \sum_{m=-np}^\infty F_{n|m}(\delta, \tilde \delta)\, c^{-m/q}\,.
\ee
Then, the Virasoro  conformal block is a double power series which can be  reorganized as the Puiseux series in  $c$ as
\be
\label{asymp}
\ba{c}
\dps
\cF(h, \tilde h, c\,| z) = \sum_{m=-\cN p}^\infty G_m(\delta, \tilde \delta|z)\, c^{-m/q} \equiv \pr \cF(\delta, \tilde \delta, c\,| z) + \reg\cF(\delta, \tilde \delta, c\,| z)\,,
\ea
\ee     
where the expansion coefficients $G_m(\delta, \tilde \delta|z)$  are  power series in $z$ calculated up to $O(z^{\cN+1})$.  Sending $\cN\to \infty$ yields the full-fledged  large-$c$ asymptotic expansion of the Virasoro conformal block. The infinity $c=\infty$ being generally a branch point of finite order for  $\cF\simeq \sum_{n=0}^\cN F_n z^n$  evolves into an essential singularity  since the principal part in  \eqref{asymp} acquires infinitely many terms.

In order to conveniently characterize an essential  singularity one may try to find a functional map of the original function which, instead, will have a pole or branch point of finite order. A textbook example here is  $\sum_{n=0}^\infty z^{-n}/n! = e^{1/z}$, where an essential singularity $z=0$ is packed into a simple pole of the exponent. Having this in mind one introduces the logarithmic conformal block  
\be
\label{def}
f(h, \tilde h, c\,| z) \coloneqq \log[\cF(h, \tilde h , c\,| z)]\,
\ee         
and studies the principal part $\pr f(h, \tilde h, c\,| z)$ choosing  $h,\tilde h$ as  particular functions of the central charge with $O(c^\alpha)$ asymptotic behaviour. Equivalently, the original conformal block can be cast into the exponential form
\be
\label{def2}
\cF(c\,| z) = e^{\pr f(c| z)}\left[1 + O(1/c^\beta)\right]\,,
\qquad \exists\, \beta >0\,,
\ee   
which is valid near $c=\infty$. We stress that the form of $\pr f(c | z)$ -- the order of singularity and the number of terms   -- crucially  depends on choosing one or another $c$-dependence of conformal dimensions \eqref{delta_puis}. E.g. choosing $h, \tilde h = c$ and $h, \tilde h  =  c+c^{1/2}$  which both belong to the $O(c)$ class of functions will produce different exponents $\pr f(c| z)$ as well as  regular parts in the square brackets.

Finally, comparing two large-$c$ representations \eqref{asymp} and \eqref{def2} one notes that knowing $\pr f$ does not completely reproduce $\pr \cF$ since the exponential gives  arbitrarily  large degrees of $c$ which can be compensated by arbitrarily small degrees from the square brackets in \eqref{def2} to yield any finite positive degrees of $c$ in \eqref{asymp}. Moreover, all expansion coefficients in $c$ in the both representations are functions of classical dimensions that leads to the issue of controlling their growth/decrease compared to that of $O(c^\rho)$ and $O(1/c^\sigma)$ for $\rho,\sigma>0$. Leaving this issue aside we will be treating  the large-$c$ expansions as formal (Puiseux) power series. From now on, we define a large-$c$ asymptotics of the Virasoro conformal block as the exponential representation \eqref{def2}.

\subsection{Large-$c$ logarithmic conformal block}
\label{sec:large_c}

Substituting  the series \eqref{block_def} into the logarithmic conformal block  \eqref{def}  one  obtains the series  \cite{Zamolodchikov1987} 
\be
\label{asymp_f}
f(h, \tilde h, c\,| z) = \sum_{n=1}^\cN g_n(h, \tilde h, c) z^n + O(z^{\cN+1})\,, 
\quad \text{where} \quad g_n(h, \tilde h, c) = \frac{p_n(h, \tilde h, c)}{q_n(\tilde h, c)}
\ee 
are rational functions of the conformal dimensions and the charge  which  directly follow from the Mercator series  formula. Numerical calculation yields the logarithmic Virasoro block: 
\be
\label{log_block}
\ba{l}
\dps
f (h, \tilde{h} , c\,| z) =  \frac{\tilde h}{2} z + \frac{\tilde h (3 \tilde h +2)c +2\Big( {\tilde h}^2 (13 \tilde h -8) + 8 h \tilde h (\tilde h -1) + 8 h^2 (2 \tilde h + 1) \Big)}{8 \big( c + 2 c \tilde h + 2 \tilde h (8 \tilde h -5) \big) } z^2 
\ea
\ee
$$
\dps
\hspace{20mm}+ \frac{\tilde h (5 \tilde h +4)c + 2 {\tilde h}^2 (23 \tilde h -14) + 48 h \tilde h (\tilde h -1) + 48 h^2 (2 \tilde h +1)}{24 \big( c + 2 c \tilde h + 2 \tilde h (8 \tilde h -5) \big)} z^3 + O(z^4)\,,
$$
where we omitted the forth and fifth coefficients. Now,  choosing  conformal dimensions as  \eqref{h_th_a} one  obtains from \eqref{log_block} the $\alpha$-heavy  logarithmic conformal block,   
\be
\label{f_exp_alpha}
f^{(\alpha)}(\delta, \tilde{\delta},c\,| z)  =  \sum_{n=1}^\cN g^{(\alpha)}_n(h, \tilde h, c) z^n + O(z^{\cN+1}) 
\ee
\vspace{-2mm} 
$$
\ba{c}
\dps
=\frac{\tilde \delta c^{\alpha}}{2}  z 
+ \frac{(26 {\tilde \delta}^3 + 16 {\tilde \delta}^2 \delta + 32 \tilde \delta \delta^2)c^{3 \alpha} + 16({\delta^2 - \tilde \delta \delta - {\tilde \delta}^2)c^{2 \alpha} + 3 {\tilde \delta}^2 c^{2 \alpha + 1} + 2 \tilde \delta c^{\alpha + 1}}}{128 {\tilde \delta}^2 c^{2 \alpha}+16 {\tilde \delta}c^{\alpha+1} - 80 {\tilde \delta} c^\alpha + 8 c} z^2 
\vspace{3mm}
\\
\dps
+ \frac{2 \tilde \delta (23 {\tilde \delta}^2 + 24 \tilde \delta \delta + 48 \delta^2)c^{3 \alpha} - 4 (7 {\tilde \delta}^2 + 12 \tilde \delta \delta - 12 \delta^2)c^{2 \alpha} + 5 {\tilde \delta}^2 c^{2 \alpha + 1} + 4\tilde \delta c^{\alpha + 1}}{24 \big( c - 10 \tilde \delta c^{\alpha} + 2 \tilde \delta c^{\alpha+1} + 16 {\tilde \delta}^2 c^{2 \alpha} \big)} z^3 + O(z^4)\,.
\ea
$$
Expanding this expression  near $c=\infty$ one represents the $\alpha$-heavy logarithmic conformal block $f^{(\alpha)}$ as the Puiseux series  
\be
\label{alpha_lor}
f^{(\alpha)} = \sum_{n\in \mathbb{D}_\alpha} f_{n}^{(\alpha)}\, c^n   
\equiv \pr f^{(\alpha)}  + \reg f^{(\alpha)}\,,
\ee 
where $\mathbb{D}_\alpha\subset \mathbb{Q}$ is the domain in rational numbers which is characterized by choosing a particular $\alpha \in  \QQ$. We are going to explicitly describe $\mathbb{D}_\alpha$ and show that the principal part in \eqref{alpha_lor} leads to the exponential representation \eqref{alpha_block}-\eqref{cl_bl_S}:
\be
\label{exp_prin}
\log[\cF(h, \tilde h , c\,| z)] \;\simeq \;  \pr f^{(\alpha)}(\delta, \tilde\delta, c|z) \equiv \cS^{(\alpha)}(\delta, \tilde\delta, c|z)\,,
\ee   
where $\simeq$ means that we omitted $\reg$-terms. Note that once the principal part has a term of the highest degree $\alpha = p/q$, where $p,q$ are not necessarily coprime,  then all other terms in $\pr f^{(\alpha)}$  will have degrees $m/q$, where $m \in \mathbb{Z}$, $m<p$. This is a simple consequence of the fact that any (finite) set of rational numbers can be represented as fractions with the same denominator. By introducing $c^{1/q}$ as a new variable a Puiseux series becomes a Laurent series in an $q$-th root of $c$. If the central charge is complex,\footnote{Complex central charges arise e.g. in the context of the dS/CFT  correspondence, see \cite{Witten:1989ip,Balasubramanian:2002zh}.} $c\in \mathbb{C}$, one finds that the classical block given by a Puiseux polynomial in fact defines $q$ different functions because a complex number has $q$ {different} $q$-th roots.

The first coefficient in \eqref{log_block} is  linear in $\tilde h$  so that $g_1^{(\alpha)}$ in \eqref{f_exp_alpha} already demonstrates that the highest degree term is $\sim c^\alpha$. Moreover, we see that representing $\tilde h$ as a general Puiseux series \eqref{delta_puis} yields for the logarithmic conformal block a pattern of $c$-dependence:
\be
\label{g_1_P}
g_1^{(\alpha)} = \half \left(\pr\, \tilde h(c) + \reg\, \tilde h(c)\right) 
= \sum_{n=0}^p \frac{\delta_{(-n)}}{2} c^{n/q} + O(c^{-1/q}) \, .
\ee   
In particular, the form of the exponent $\cS(\delta, \tilde\delta, c|z)$ \eqref{exp_prin} in the linear order in $z$ is completely defined by the first coefficient and cannot be changed by higher-order terms $O(z^2)$ \eqref{asymp_f}.  In the case of arbitrary   external conformal dimensions the first coefficient  of the Virasoro conformal block is given by   $F_1 = (\tilde h - h_1+h_2)(\tilde h +h_3-h_4)/2\tilde h$ so that  substituting the conformal dimensions as \eqref{delta_puis}  one  obtains  the same  type Puiseux series \eqref{g_1_P} (with different, however, series coefficients).

The second coefficient $g_2^{(\alpha)}$ in \eqref{f_exp_alpha} can be represented as
\be
\label{f_2ex}
g^{(\alpha)}_2 = \frac{A c^{3\alpha} + B c^{2\alpha+1} + C c^{2\alpha}+ D c^{\alpha+1} }{M c^{2\alpha} +N c^{\alpha+1}+K c^{\alpha}+L c^1}\,.  
\ee  
Here, the  principal part $\pr g_2^{(\alpha)}$ is more intricate, though, it is immediately seen that the highest-degree term here is again $\sim c^{\alpha}$. The third coefficient $g_3^{(\alpha)}$ takes the same form as \eqref{f_2ex} but with different coefficients $A,B,C,...\,$. The coefficients $g_4^{(\alpha)}$ and $g_5^{(\alpha)}$ are again rational functions with the same highest-degree term $c^\alpha$  but in this case the (de)nominators contain 27 terms each, see \eqref{g4_num}--\eqref{g4_den}.    

\subsection{Structure of $\cS^{(\alpha)} = \pr f^{(\alpha)}$}
\label{sec:structure}

This section describes the principal part of the large-$c$ logarithmic conformal blocks for  conformal dimensions given by \eqref{h_th_a}. Let us give a general view  of  the problem we solve. Generally, the logarithmic block coefficients are rational functions of the form  
\be
\label{FcPQ}
g(c) = \frac{\cP(c)}{\cQ(c)} = \frac{\dps\sum_{k,m \in D_{\cP}} p_{k,m}\, c^{k+m\alpha}}{\dps\sum_{s,t \in D_{\cQ}} q_{s,t}\, c^{s+t\alpha}}\;,  
\ee 
where  $\cP(c)$ and $\cQ(c)$ are some Puiseux polynomials, i.e. $\alpha \in \mathbb{Q}^+_0$  and $D_{\cP}$,  $D_{\cQ}$ are {\it the Newton polygons} \cite{chebotarev,gindikin} which are particular finite domains on a two-dimensional lattice  $\ZZ\oplus \ZZ$, coefficients $p_{k,m}, q_{s,t} \in \mathbb{C}$. Then, one  expands this function near $c=\infty$ into a Puiseux series as 
\be
\label{puiF}
g(c) = \pr g(c) + \reg g(c)= \sum_{ a\in D_{\hspace{-0.8mm}\pr}} p_a \, c^a + \reg g(c)\,,
\ee 
where the principal part is characterized by a domain $D_{\hspace{-0.8mm}\pr} \in \RR$ which again can conveniently be  described as a domain $\subset \ZZ\oplus \ZZ$  by parameterizing  $a = l+n  \alpha$.  {\it One seeks  to describe a domain $D_{\hspace{-0.8mm}\pr}$  and  a set of expansion coefficients $\{p_a, a \in D_{\hspace{-0.8mm}\pr}\}$ as functions of  $\alpha$}. The only thing that can be immediately seen  about $D_{\hspace{-0.8mm}\pr}$ is the order of the pole in the principal part. By contrast, identifying degrees of other terms in the principle part and, moreover, of those in the regular part requires an extended work.  

It turns out that the domain of $\alpha$ is naturally split into four regions, where the conformal blocks have essentially different properties:
\be
\label{domains}
\text{{\bf (I)}\; $\alpha = 0$; \;\;\;\;{\bf (II)}\; $\alpha\in (0,1)_\rn$;\;\;\;\; {\bf (III)}\; $\alpha = 1$;\;\;\;\; {\bf (IV)}\; $\alpha\in (1,\infty)_\rn$}\,,
\ee 
cf. \eqref{four}. These regions are shown in fig. \bref{fig:plot}, where  each $\alpha\in(0,1)_\rn$ and $\alpha\in(1,\infty)_\rn$ can be placed, respectively, into one of half-open intervals 
\be
\label{intervals}
\Big(\frac{M-1}{M}, \frac{M}{M+1} \Big]_\rn
\quad \text{and} \quad 
\Big[\frac{N+1}{N}, \frac{N}{N-1} \Big)_\rn
\ee
for some $M,N\in\mathbb{N}$. Note that the endpoints of  both intervals tend to 1 at $M,N \to \infty$ which means that their widths are shrinking with large $M,N$ and the intervals  themselves  condense near 1 (see fig. \bref{fig:plot}).

\begin{figure}
\centering
\begin{tikzpicture}[thick,scale=3]
\draw[thick,->] (0,0) -- (3.2,0) node[anchor=north west] {$\alpha$};
\foreach \x/\xtext in {0/0, 0.5/\frac{1}{2}, 0.666666/\frac{2}{3}, 0.75/\frac{3}{4}, 0.8/, 0.833333/, 0.857142/, 0.875/, 0.888888/, 0.9/, 0.909090/, 0.916666/, 0.923076/, 0.928571/, 0.933333/, 0.9375/, 1/1, 1.0625/, 1.066666/, 1.071428/, 1.076923/,  1.083333/, 1.090909/, 1.1/, 1.11111/,  1.125/, 1.142857/, 1.166666/, 1.2/, 1.25/\frac{5}{4}, 1.333333/\frac{4}{3}, 1.5/\frac{3}{2}, 2/2}{
   \draw (\x cm,0.75pt) -- (\x cm,-0.75pt) node[anchor=north] {$\xtext$};
   }
   \fill [black] (0.9375, -0.75pt) rectangle (1.0625, 0.75pt);
\end{tikzpicture}
\caption{Splitting the half-line $\alpha \in \mathbb{Q}^+_0$ into half-open intervals.}
\label{fig:plot}
\end{figure}
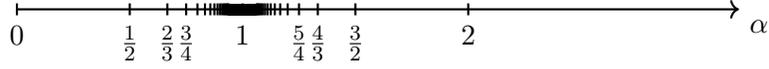

\begin{definition}
\label{def:reg}
The endpoints of the half-open intervals \eqref{intervals} denoted as  
\be
\tilde \alpha_N = \frac{N+1}{N}\;,
\qquad
\overset{\approx}{\alpha}_{M} =\frac{M}{M+1}\; 
\ee
are called exceptional. All other points $\alpha\in  \mathbb{Q}_0^+\backslash \{0,1\}$ are called regular. 
\end{definition}

In Appendix \bref{app:vir} we study  the Puiseux series \eqref{puiF} for  the first five Virasoro conformal block  coefficients and prove a number of Propositions  given below which structure the large-$c$ Puiseux asymptotics of the logarithmic conformal block $f^{(\alpha)}(h, \tilde h, c,z)$ up to $O(z^6)$ and up to $O(z^{16})$ in the case of  classical vacuum conformal blocks.\footnote{ Appendix \bref{app:vir}  also studies regular parts, but here, in view  of \eqref{def2},  we will be discussing  just principal parts. It turns out that the propositions are also valid for any $\alpha\in \RR_{\geq 0}$.}

\begin{prop}[proper fractions]
\label{prop1}
Let\; $\alpha\in (0,1)_\rn$ and $M\in \mathbb{N}$ be such that $\alpha \in \big(\frac{M-1}{M}, \frac{M}{M+1} \big]_\rn$. Then:
\begin{enumerate}
\item The principal part $\pr f^{(\alpha)}$ \eqref{exp_prin} has  coefficients:  
\be
\label{f_sequence1}
f^{(\alpha)}_\alpha\,,\;\;  f^{(\alpha)}_{2\alpha-1}\,,\;\;  f^{(\alpha)}_{3\alpha-2}\,,\;\;...\,, \;\;f^{(\alpha)}_{M\alpha-(M-1)}\,,\;\; f_0^{(\alpha)}\,;
\ee
equivalently, the classical conformal block \eqref{cl_bl_S} is the following Puiseux polynomial:   
\be
\label{lead_exp1}
\cS^{(\alpha)}(\delta, \tilde\delta, c|z) = \sum_{m=1}^{M} f_{m \alpha -m+1}^{(\alpha)}\, c^{m \alpha -m+1}+f_0^{(\alpha)}\,.
\ee
\item The  coefficients in $\pr f^{(\alpha)}$ at different $\alpha$ can be presented in tabular form 
\be
\label{table1}
    \begin{aligned}
    &M=1: \qquad f_{\alpha_1} ^{(\alpha_1)},\;\; &f_0^{(\alpha_1)}\,;\\
    &M=2: \qquad f_{\alpha_2} ^{(\alpha_2)},\;\;  f_{2\alpha_2 -1} ^{(\alpha_2)}, \;\;&f_0^{(\alpha_2)}\,;\\
    &M=3: \qquad f_{\alpha_3} ^{(\alpha_3)},\;\;  f_{2\alpha_3 -1} ^{(\alpha_3)},\;\; f_{3\alpha_3 -2} ^{(\alpha_3)}, &f_0^{(\alpha_3)}\,;\\
    &M=4: \qquad f_{\alpha_4} ^{(\alpha_4)},\;\;  f_{2\alpha_4 -1} ^{(\alpha_4)},\;\; f_{3\alpha_4 -2} ^{(\alpha_4)},\;\; f_{4\alpha_4 -3} ^{(\alpha_4)},\;\; &f_0^{(\alpha_4)}\,;\\
    &\hdots
    \end{aligned}
\ee
Here, each row lists the coefficient functions of $\pr f^{(\alpha_M)}$, where regular $\alpha_M \in \big(\frac{M-1}{M} , \frac{M}{M+1} \big)_\rn$ and $M=1,2,...\,$. Functions in a given column are equal to each other, i.e. 
\be
\label{columneq1}
\ba{l}
f_{\alpha_1} ^{(\alpha_1)} = f_{\alpha_2} ^{(\alpha_2)} = f_{\alpha_3} ^{(\alpha_3)} = \ldots\;,
\vspace{2mm}
\\
\dps
f_{2\alpha_1-1} ^{(\alpha_1)} = f_{2\alpha_2-1} ^{(\alpha_2)} = f_{2\alpha_3-1} ^{(\alpha_3)} = \ldots\;,
\vspace{2mm}
\\
\dps
\;\;\vdots
\vspace{2mm}
\\
\dps
f_{0}^{(\alpha_1)} = f_{0}^{(\alpha_2)} = f_{0} ^{(\alpha_3)} = \ldots\;.
\ea
\ee 
For all columns except the last one we can extend the interval $\big(\frac{M-1}{M} , \frac{M}{M+1} \big)_\rn$ to $\big(\frac{M-1}{M} , \frac{M}{M+1} \big]_\rn$. At endpoints which are exceptional points $\tilde \alpha_M = \frac{M}{M+1}$ the last column contains essentially different coefficient functions:
\be
\label{f0_exceptional1}
f_{0}^{(\tilde \alpha_1)} \neq  f_{0}^{(\tilde \alpha_2)} \neq  f_{0} ^{(\tilde\alpha_3)} \neq  ...\;. 
\ee
 
\end{enumerate}
\end{prop}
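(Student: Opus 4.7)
The plan is to analyze the principal part of each logarithmic block coefficient $g_n^{(\alpha)}(c)$ separately, using the fact that each is a rational function of the form \eqref{FcPQ} whose numerator and denominator are Puiseux polynomials with monomials $c^{k+m\alpha}$ for $(k,m)$ in a finite Newton polygon. I would first set up, for each $n \leq 5$ (which is the range where we have $g_n^{(\alpha)}$ in closed form), the Newton polygons $D_\cP$ and $D_\cQ$ explicitly. For $n=1,2,3$ they can be read off from \eqref{log_block}; for $n=4,5$ one uses the 27-term numerator and denominator from Appendix \bref{app:vir}. The key preliminary observation is that on the $(k,m)$-lattice the value $k+m\alpha$ is a linear functional parameterized by $\alpha$, so the ``dominant'' vertex of each polygon is determined by the position of $\alpha$ in a piecewise-linear decomposition of the $\alpha$-axis.

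Next, I would carry out the large-$c$ expansion by the standard procedure: factor out the dominant monomials of numerator and denominator, then expand the remaining factors as geometric series in ratios $c^{k'+m'\alpha}$ with $k'+m'\alpha<0$. For $\alpha \in (0,1)$ one checks directly that in both $\cP$ and $\cQ$ the dominant vertex is stable throughout each open interval $\bigl(\frac{M-1}{M},\frac{M}{M+1}\bigr)_\rn$, so the product expansion takes the form
\begin{equation*}
g_n^{(\alpha)}(c) \;=\; c^\alpha \sum_{r,s\geq 0} A_{r,s}\, c^{r(\alpha-1)}\, c^{-s\alpha} \;+\;\text{lower},
\end{equation*}
with combinatorial coefficients $A_{r,s}$ that are built out of the vertex data of $D_\cP,D_\cQ$ alone and hence do \emph{not} depend on $\alpha$. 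I would then identify the exponents $\alpha+r(\alpha-1)-s\alpha=(r+1-s)\alpha-r$ that are non-negative. The condition $(r+1-s)\alpha - r \geq 0$ with $\alpha<1$ forces $s=0$ and $r\leq M$, reproducing exactly the sequence \eqref{f_sequence1}, together with a single constant-order contribution coming from the $s=r+1$ diagonal. Summing these terms across $n=1,\dots,5$ gives \eqref{lead_exp1}.

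For claim 2, the column equality \eqref{columneq1} follows at once from the previous step: the coefficient $f^{(\alpha)}_{m\alpha-(m-1)}$ equals $\sum_n A^{(n)}_{m-1,0}\,z^n$, a quantity that is $\alpha$-independent provided $\alpha$ stays inside an open interval where the dominant vertices do not switch. The interval can be extended to the right endpoint $\frac{M}{M+1}$ for all columns \emph{except} the last one because only the exponent $0$ can be hit by a second mechanism: at $\alpha=\tilde\alpha_M=\frac{M}{M+1}$ one has $(M+1)\alpha - M = 0$, so the would-be subleading term from the diagonal $r=M$, $s=0$ collides with $f_0^{(\alpha)}$ and contributes a new summand. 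This gives the inequalities \eqref{f0_exceptional1}; one verifies non-triviality by plugging into the closed-form $g_n^{(\alpha)}$ and checking that the extra coefficient is nonzero (this is a direct computation on the explicit numerators).

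The main obstacle is purely bookkeeping: for $n=4,5$ the Newton polygons have many vertices and one must verify that the vertex giving the maximum of $k+m\alpha$ really is constant on each interval $\bigl(\frac{M-1}{M},\frac{M}{M+1}\bigr)_\rn$, and similarly that no spurious cancellations occur among the $A_{r,s}$ which would lower the order of the pole at $c=\infty$ (the ``transmutation'' phenomenon mentioned in the introduction). These checks can be organized uniformly via the Newton polygon technique developed in Appendix \bref{app:vir}, which reduces the verification to identifying, for each $n$, the upper edge of $D_\cP$ minus the upper edge of $D_\cQ$ and confirming that its slope lies in the complement of the exceptional points $\{\tilde\alpha_M\}_{M\geq 1}$.
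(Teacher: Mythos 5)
Your proposal is correct and follows essentially the same route as the paper: Appendix \bref{app:vir} likewise factors out the dominant monomial of the denominator (located via the Newton polygons and their properties \textbf{(a)}--\textbf{(e)}), expands the reciprocal as a geometric series whose coefficients are $\alpha$-independent because $c$ and $c^{1-\alpha}$ are treated as independent variables (the ``splitting trick''), enumerates the non-negative exponents $m\alpha-(m-1)$ exactly as in your $(r,s)$-bookkeeping, and attributes the jump in $f_0^{(\alpha)}$ at $\tilde\alpha_M$ to the collision of the $(M+1)\alpha-M$ term with the $c^0$ term. The only point worth stating more sharply is that the column equalities \eqref{columneq1} across \emph{different} intervals require the dominant vertex to be stable on all of $(0,1)_\rn$, not merely on each subinterval — which is true here and is what the paper records as Proposition \bref{array_universality_01}.
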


\begin{prop}[improper fractions]
\label{prop2}
Let $\alpha\in (1,\infty)_\rn$ and $N\in \mathbb{N}$ be such that $\alpha \in \big[\frac{N+1}{N} , \frac{N}{N-1} \big)_\rn$. Then:
\begin{enumerate}

\item The principal part $\pr f^{(\alpha)}$ \eqref{exp_prin}  has the coefficients  
\be
\label{f_sequence2}
f^{(\alpha)}_\alpha\,,\;\; f^{(\alpha)}_1\,,\;\; f^{(\alpha)}_{2-\alpha}\,,\;\;  f^{(\alpha)}_{3-2\alpha}\,,\;\;...\,\,,\;\; f^{(\alpha)}_{N-(N-1)\alpha}\,,\;\; f_0^{(\alpha)}\,;
\ee
equivalently, the classical conformal block \eqref{cl_bl_S} is the following Puiseux polynomial:   
\be
\label{lead_exp2}
\cS^{(\alpha)}(\delta, \tilde\delta, c|z) = \sum_{n=-1}^{N} f_{-n\alpha +n+1}^{(\alpha)}\, c^{-n\alpha +n+1}+f_0^{(\alpha)}\,.
\ee

\item The  coefficients in $\pr f^{(\alpha)}$ at different $\alpha$ can be presented in tabular  form
\be
\label{table2}
\begin{aligned}
    &\hdots\\
    &N=4: \qquad f_{\alpha_4}^{(\alpha_4)},\;\;  f_1 ^{(\alpha_4)},\;\; f_{2-\alpha_4} ^{(\alpha_4)},\;\; f_{3-2\alpha_4}^{(\alpha_4)},\;\; f_{4-3\alpha_4} ^{(\alpha_4)}, \;\; &f_0^{(\alpha_4)}\,;\\  
    &N=3: \qquad f_{\alpha_3}^{(\alpha_3)},\;\;  f_1 ^{(\alpha_3)},\;\;\;\; f_{2-\alpha_3} ^{(\alpha_3)},\;\; f_{3-2\alpha_3} ^{(\alpha_3)},\;\; &f_0^{(\alpha_3)}\,;\\
    &N=2: \qquad f_{\alpha_2}^{(\alpha_2)},\;\;  f_1 ^{(\alpha_2)},\;\; f_{2-\alpha_2} ^{(\alpha_2)},\;\; &f_0^{(\alpha_2)}\,;\\
    &N=1: \qquad f_{\alpha_1}^{(\alpha_1)},\;\;  f_1 ^{(\alpha_1)},\;\; &f_0^{(\alpha_1)}\,.
\end{aligned}
\ee
Here, each row lists the coefficient functions of $\pr f^{(\alpha_N)}$, where $\alpha_N \in \big(\frac{N+1}{N} , \frac{N}{N-1} \big)_\rn$ and $N=1,2,...\,$. Functions in a given column are equal to each other, i.e.
\be
\label{columneq2}
\ba{l}
\ldots  = f_{\alpha_3} ^{(\alpha_3)} = f_{\alpha_2} ^{(\alpha_2)} =  f_{\alpha_1} ^{(\alpha_1)}\;,
\vspace{2mm}
\\
\dps
\ldots  =  f_{1} ^{(\alpha_3)} = f_{1} ^{(\alpha_2)} = f_{1} ^{(\alpha_1)}\;,
\vspace{2mm}
\\
\dps
\hspace{35mm} \;\;\vdots
\vspace{2mm}
\\
\dps
\ldots  =  f_{0}^{(\alpha_3)} = f_{0}^{(\alpha_2)} = f_{0} ^{(\alpha_1)}\;.
\ea
\ee 
For all columns except the last one we can extend the interval $\big(\frac{N+1}{N} , \frac{N}{N-1} \big)_\rn$ to $\big[\frac{N+1}{N} , \frac{N}{N-1} \big)_\rn$. At endpoints which are exceptional points $\overset{\approx}{\alpha}_N = \frac{N+1}{N}$ the last column contains  essentially different coefficient functions:
\be
\label{f0_exceptional2}
\ldots  \neq  f_{0}^{(\overset{\approx}{\alpha}_3)} \neq f_{0}^{(\overset{\approx}{\alpha}_2)} \neq f_{0} ^{(\overset{\approx}{\alpha}_1)}\;.
\ee
\end{enumerate}
\end{prop}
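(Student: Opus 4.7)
The strategy is to mirror the proof of Proposition~\bref{prop1} using the Newton polygon technique of Appendix~\bref{app:vir}, applied to the five explicit logarithmic block coefficients $g_n^{(\alpha)}(\delta,\tilde\delta,c)$, $n = 1, \ldots, 5$. Each of these is a rational Puiseux function of the form \eqref{FcPQ} whose numerator and denominator are supported on finite Newton polygons $D_{\mathcal P}, D_{\mathcal Q} \subset \mathbb{Z}\oplus\mathbb{Z}$, with a lattice point $(k,m)$ encoding the monomial $c^{k+m\alpha}$. The key difference from the proper-fraction regime is that for $\alpha > 1$ the linear functional $(k,m) \mapsto k+m\alpha$ is dominated by the $m$-direction rather than the $k$-direction, so the leading vertices selected in both $\mathcal P$ and $\mathcal Q$ lie on the opposite face of the polygon.

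The first concrete step is, for each $g_n^{(\alpha)}$, to identify the lattice point $(s_\ast, t_\ast) \in D_{\mathcal Q}$ maximizing $s + t\alpha$ on the regime $\alpha > 1$; inspection of \eqref{f_2ex} and its higher-order analogues \eqref{g4_num}--\eqref{g4_den} shows that this is the vertex with the largest $t$. Factoring out $c^{s_\ast + t_\ast\alpha}$ from $\mathcal Q(c)$ and expanding the reciprocal as a geometric series in the translated cone, then multiplying by $\mathcal P(c)$ and regrouping by exponent, yields the full Puiseux expansion of $g_n^{(\alpha)}$. A lattice point $(a,b)$ in the resulting support contributes to the principal part precisely when $a + b\alpha \geq 0$.

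The next step is to show that, after summing over all five coefficients, the non-negative exponents appearing with nonzero contribution are exactly the list \eqref{f_sequence2}. They all lie on the one-parameter family $\{-n\alpha + n + 1 : n = -1, 0, 1, \ldots\}$, which is a consequence of the combinatorics of the edges of $D_{\mathcal P}$ and $D_{\mathcal Q}$ under the geometric expansion (the $\alpha > 1$ counterpart of the edge combinatorics used in Proposition~\bref{prop1}). Non-negativity then restricts $n \leq N$ via the interval condition $\alpha < \frac{N}{N-1}$, reproducing \eqref{lead_exp2}. The column-equality \eqref{columneq2} is then immediate: as long as $\alpha$ stays strictly inside the open interval $\big(\frac{N+1}{N}, \frac{N}{N-1}\big)_\rn$, no lattice point crosses the zero hyperplane $k + m\alpha = 0$, so each coefficient $f_{-n\alpha + n + 1}^{(\alpha)}$ depends on $\alpha$ only through the prescribed exponent and not through the set of contributing monomials.

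The main obstacle will be the exceptional left endpoint $\overset{\approx}{\alpha}_N = \frac{N+1}{N}$, where the inequality \eqref{f0_exceptional2} must be established. At this value a new lattice point of the form $(N+2, -(N+1))$ slides onto the boundary hyperplane $k+m\alpha = 0$ and merges with the $c^0$ contribution, shifting $f_0^{(\overset{\approx}{\alpha}_N)}$ away from its open-interval limit. Proving that this shift is genuinely nonzero, uniformly in $N$, is the hardest part: one must show that the coefficient contributed by the newly-arriving monomial does not vanish identically on the five-coefficient system. I would handle this by a direct endpoint calculation, extracting the incoming contribution from each $g_n^{(\alpha)}$ and checking in closed form that the sum is a nontrivial function of $\delta, \tilde\delta$. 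The remaining columns extend continuously across the endpoint — so the interval can be closed to $\big[\frac{N+1}{N}, \frac{N}{N-1}\big)_\rn$ for all but the last column — because no other lattice point crosses the zero hyperplane there.
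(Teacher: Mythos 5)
Your proposal follows essentially the same route as the paper's Appendix \bref{app:vir}: parameterize $\alpha = 1+\varkappa$, factor out the dominant (largest-$m$) vertex of the denominator's Newton polygon, expand the reciprocal geometrically, and observe that the positive-degree exponents $1-m\varkappa = -n\alpha+n+1$ are non-degenerate throughout each open interval while the $c^0$ term absorbs one extra contribution exactly at the exceptional endpoints, whose nonvanishing is then checked directly on the available coefficients. One small correction: the lattice point that lands on the hyperplane $k+m\alpha=0$ at $\overset{\approx}{\alpha}_N=\tfrac{N+1}{N}$ is $(k,m)=(N+1,-N)$ (the $n=N$ member of the family $c^{-n\alpha+n+1}$), not $(N+2,-(N+1))$, which at that endpoint has exponent $-1/N$ and remains in the regular part.
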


The cases of $\alpha=0,1$ are known in the literature as global and Zamolodchikov classical conformal blocks (see our notation below \eqref{cl_bl_S}): 
\be
\ba{l}
\alpha=0:\qquad  \cS^{(0)}(\delta, \tilde\delta, c|z) = f_0^{(0)}\,,
\vspace{2mm} 
\\
\dps
\alpha=1:\qquad  \cS^{(1)}(\delta, \tilde\delta, c|z) = c f_1^{(1)} +f_0^{(1)}\,.
\ea
\ee
These are considered in Appendix \bref{app:alpha01}.

\begin{corollary}
\label{cor1}
There are only four different highest-order terms $f^{(\alpha)}_\alpha  c^\alpha$ in $\pr f^{(\alpha)}$  associated with four domains \eqref{domains}, cf. \eqref{four}. 
\end{corollary}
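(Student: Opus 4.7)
The plan is to derive the corollary as an almost immediate bookkeeping consequence of Propositions 1.1 and 1.2, supplemented by the two base cases $\alpha=0$ and $\alpha=1$ noted just before the statement. I would begin by tabulating the four candidate leading terms of $\pr f^{(\alpha)}$: (I) at $\alpha=0$ the exponent is the constant-in-$c$ global block $f_0^{(0)}$; (III) at $\alpha=1$ it is the linear-in-$c$ Zamolodchikov classical block $c\,f_1^{(1)}$; (II) on each regular $\alpha\in(0,1)_\rn$ the leading term is $c^{\alpha} f_\alpha^{(\alpha)}$ from \eqref{lead_exp1}; and (IV) on each regular $\alpha\in(1,\infty)_\rn$ it is $c^{\alpha} f_\alpha^{(\alpha)}$ from \eqref{lead_exp2}.

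Next I would invoke the first columns of the tables \eqref{table1} and \eqref{table2}: the equalities \eqref{columneq1} and \eqref{columneq2} assert precisely that, as a power series in $z$ in the classical dimensions $\delta,\tilde\delta$, the coefficient function $f_\alpha^{(\alpha)}$ is independent of $\alpha$ throughout the open interval $(0,1)_\rn$, and likewise independent of $\alpha$ throughout $(1,\infty)_\rn$. Consequently only two functions $\Phi_{\textup{II}}(\delta,\tilde\delta\,|z)$ and $\Phi_{\textup{IV}}(\delta,\tilde\delta\,|z)$ appear as leading coefficients on domains (II) and (IV) respectively, and the exceptional endpoints $\tilde\alpha_N,\overset{\approx}{\alpha}_M$ of Definition 2.2 are absorbed into them since the column equalities persist on the half-open extensions stated in the propositions.

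It remains to verify that the four candidate highest-order terms are genuinely pairwise different. Cases (I) and (III) are structurally distinguished from (II) and (IV) by the $c$-power carried, $c^0$ and $c^1$ respectively, which is fixed rather than $\alpha$-dependent. To separate (II) from (IV) I would read off the $O(z^2)$ contribution from the explicit expression \eqref{f_exp_alpha} for $g_2^{(\alpha)}$: in the range $\alpha\in(0,1)_\rn$ the leading behaviour comes from the ratio of the $Bc^{2\alpha+1}$ numerator term to the $Nc^{\alpha+1}$ denominator term, giving $\tfrac{3\tilde\delta}{16}\,c^{\alpha}$; whereas for $\alpha>1$ it comes from the $Ac^{3\alpha}$ over $Mc^{2\alpha}$ terms, yielding $\tfrac{1}{64}\bigl(13\tilde\delta+8\delta+16\delta^2/\tilde\delta\bigr)c^{\alpha}$. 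These $z^2$-coefficients are manifestly different as functions of $(\delta,\tilde\delta)$, so $\Phi_{\textup{II}}\neq\Phi_{\textup{IV}}$. A parallel check at $\alpha=0$ (using $h=\delta$) and at $\alpha=1$ (using $h=\delta c$) separates (I) and (III) from each other and from (II), (IV).

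The only genuinely non-trivial input is the pair of propositions themselves; given them, the corollary is essentially a relabeling. The main obstacle is therefore not in the corollary but in the propositions, and specifically in being sure that the list of ``columns'' in \eqref{table1}--\eqref{table2} is exhaustive so that no further subdivision of the domains (II) or (IV) is hiding additional distinct leading functions. I would address this by pointing back to the Newton-polygon analysis of Appendix B, which shows that across each open interval the vertex of the polygon producing the leading monomial does not jump, so $f_\alpha^{(\alpha)}$ as a function of $(\delta,\tilde\delta)$ is locally constant in $\alpha$ and hence constant on the whole interval.
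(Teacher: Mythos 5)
Your proposal is correct and follows essentially the same route as the paper, which treats the corollary as an immediate consequence of the column equalities \eqref{columneq1}, \eqref{columneq2} in Propositions \bref{prop1}--\bref{prop2} (extended to the half-open intervals) together with the two isolated cases $\alpha=0,1$. Your explicit check via the $O(z^2)$ coefficients that the four leading functions are pairwise distinct is a detail the paper leaves implicit (it is visible in the sample expressions of Appendix \bref{app:vir_num}), and your numerator/denominator dominance analysis of $g_2^{(\alpha)}$ matches the Newton-polygon treatment of Appendix \bref{app:second}.
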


\begin{corollary}
\label{cor11} The number of terms in $\pr f^{(\alpha)}$ infinitely increases when $\alpha$ grows from $0$ to $1$, then at $\alpha=1$ the principal part sharply shrinks to just two terms, then, with  $\alpha$ growing  from $1$ to $\infty$, the number of terms in $\pr f^{(\alpha)}$  decreases from infinitely many  back to three terms. See fig. \bref{fig_graphic}. 
\end{corollary}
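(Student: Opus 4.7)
The plan is to derive this corollary as a direct counting consequence of Propositions \bref{prop1} and \bref{prop2}, together with the explicit forms $\cS^{(0)} = f_0^{(0)}$ and $\cS^{(1)} = c f_1^{(1)} + f_0^{(1)}$ recorded between \eqref{f0_exceptional2} and Corollary \bref{cor1}. No new analytic input is needed; the work is all combinatorial bookkeeping on the lists \eqref{f_sequence1} and \eqref{f_sequence2}.

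First I would tabulate, for each of the four regimes in \eqref{domains}, how many summands appear in $\pr f^{(\alpha)}$. At $\alpha=0$ there is a single term, and at $\alpha=1$ there are exactly two. For regular $\alpha\in(0,1)_\rn$ lying in the proper-fraction window $\bigl(\tfrac{M-1}{M},\tfrac{M}{M+1}\bigr]_\rn$, Proposition \bref{prop1} presents $\cS^{(\alpha)}$ as the $M$ fractional-degree pieces $f^{(\alpha)}_{m\alpha-m+1}$ with $m=1,\dots,M$ together with $f_0^{(\alpha)}$, for a total of $M+1$ terms. For regular $\alpha\in(1,\infty)_\rn$ lying in the improper-fraction window $\bigl[\tfrac{N+1}{N},\tfrac{N}{N-1}\bigr)_\rn$, Proposition \bref{prop2} presents $\cS^{(\alpha)}$ as $f^{(\alpha)}_\alpha$, $f^{(\alpha)}_1$, the $N-1$ intermediate entries $f^{(\alpha)}_{k-(k-1)\alpha}$ with $k=2,\dots,N$, and $f_0^{(\alpha)}$, for a total of $N+2$ terms.

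Next I would monitor how these counts vary as $\alpha$ sweeps along $[0,\infty)_\rn$. On the proper-fraction side, the window index $M$ is a non-decreasing step function of $\alpha$, and since $M/(M+1)\to 1^-$ as $M\to\infty$, the count $M+1$ increases without bound as $\alpha\to 1^-$. At $\alpha=1$ the exponent collapses to the two Zamolodchikov-type terms. On the improper-fraction side, the window index $N$ is a non-increasing step function of $\alpha$ that diverges as $\alpha\to 1^+$ (because $(N+1)/N\to 1^+$) and terminates at $N=1$ for every $\alpha\geq 2$; hence the count $N+2$ drops from infinity down to its stable value $3$ on $[2,\infty)_\rn$. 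These three behaviors combine into the asserted profile depicted in fig.\ \bref{fig_graphic}.

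The only point requiring a brief sanity check is the assignment at the exceptional endpoints $\tilde\alpha_M=M/(M+1)$ and $\overset{\approx}{\alpha}_N=(N+1)/N$: the half-open bracketing used in item 1 of each proposition places every exceptional $\alpha$ into a unique window, so the count is unambiguous there. The replacements of $f_0^{(\alpha)}$ at exceptional points recorded in \eqref{f0_exceptional1} and \eqref{f0_exceptional2} alter the identity of a coefficient but not its presence, and therefore do not affect the total. I do not anticipate a genuine obstacle, since the substantive content -- the explicit structure of $\pr f^{(\alpha)}$ on each window -- is already supplied by the two propositions, and the corollary reduces to reading off $M+1$ and $N+2$ and taking the obvious monotone limits.
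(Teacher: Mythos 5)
Your proposal is correct and matches the paper's own (implicit) derivation: the corollary is obtained by reading off the term counts $M+1$ and $N+2$ directly from Propositions \bref{prop1} and \bref{prop2} on the windows \eqref{intervals}, combining them with the one- and two-term forms at $\alpha=0,1$, and taking the obvious monotone limits of the step functions $M(\alpha)$ and $N(\alpha)$ as $\alpha\to 1^{\mp}$. Your count of $N+2$ on the improper-fraction side is the correct reading of \eqref{f_sequence2} and the table \eqref{table2} (the upper summation limit in \eqref{lead_exp2} should be $N-1$ to be consistent with these, as in the $\cW_3$ analogue \eqref{lead_exp2W}), so the bookkeeping goes through exactly as you describe.
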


From above Propositions it follows that coefficients $f_0 ^{(\alpha)}$ are  the same for any {\it regular}  $\alpha$ (see Definition \bref{def:reg}): there are four different functions,  $f_0 ^{(\alpha <1)}$ and $f_0 ^{(\alpha > 1)}$, $f_0^{(1)}$ and $f_0^{(0)}$. On the other hand,  the zeroth-order  coefficients at {\it exceptional} points are different. 

\begin{definition}
\label{def:isol}
The two coefficients  $f_0^{(-)} \equiv f_0 ^{(\alpha < 1)}$ and $f_0^{(+)} \equiv f_0 ^{(\alpha >1)}$ existing for any regular $\alpha \in (0,1)_\rn \cup (1, \infty)_\rn$  are  called universal; the coefficients $f_0^{(\tilde \alpha_M)}$ and $f_{0}^{(\overset{\approx}{\alpha}_N)}$ are called isolated. 
\end{definition}

Both universal and isolated coefficients $f_0^{(\alpha)}$ can be collected as
\be
\ba{l}
\dps
\alpha \in (0,1)_\rn \;\,: \quad  \{ f_0 ^{(-)}\,,\; f_{0}^{(\tilde \alpha_M)}\;, \; M \in \NN  \}\,,
\vspace{2mm}
\\
\dps
\alpha \in (1, \infty)_\rn \,: \quad  \{ f_0 ^{(+)}\,,\; f_{0}^{(\overset{\approx}{\alpha}_N)}\;, \; N\in \NN\}\,.
\ea
\ee
In other words,  everywhere on the interval $\alpha \in (0,1)_\rn$ the zeroth-order term  $f_0^{(\alpha)}c^0$ in the Puiseux series remains the same except for  exceptional points, where its form is changed. The same holds true  on the interval
$\alpha \in (1,\infty)_\rn$.    

There are  linear relations allowing one to express particular non-zero order coefficients in the principal part in terms of  universal and isolated coefficients.

\begin{prop}
\label{universal01}
Consider a term   of minimal non-zero degree in $\pr f^{(\alpha)}$ (see the second-to-last terms in \eqref{f_sequence1} and  \eqref{f_sequence2}).  

\begin{enumerate}

\item Let $\alpha \in(0,1)_\rn$. For  regular $\alpha_M \in \left(\frac{M-1}{M} , \frac{M}{M+1} \right]_\rn$ and exceptional $\tilde \alpha_{M-1} = \frac{M-1}{M}$, $M=2,3,...\,$ one has  
\be
\label{restor}
f_{M \alpha_M-(M-1)} ^{(\alpha_M)} = f_0 ^{(\tilde \alpha_{M-1})} - f_0 ^{(-)}\;.
\ee

\item Let $\alpha \in (1, \infty)_\rn$. For  regular  $\alpha_N \in \left[\frac{N+1}{N} , \frac{N}{N-1} \right)_\rn$ and exceptional $\overset{\approx}{\alpha}_{N-1} = \frac{N}{N-1}$, $N=2,3,...\,$ one has  
\be
f_{N-(N-1)\alpha_N} ^{(\alpha_N)} = f_0^{(\overset{\approx}{\alpha}_{N-1})} - f_0 ^{(+)}\;.
\ee
\end{enumerate}
\end{prop}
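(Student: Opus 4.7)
The plan is to exploit a degree-collision at the exceptional endpoints. Observe first that the exponent appearing in \eqref{restor} is $M\alpha - (M-1)$, and this quantity vanishes precisely when $\alpha = \tilde\alpha_{M-1} = \frac{M-1}{M}$: indeed $M \cdot \frac{M-1}{M} - (M-1) = 0$. An analogous identity holds in Case 2, where $N - (N-1)\alpha$ vanishes at $\alpha = \overset{\approx}{\alpha}_{N-1} = \frac{N}{N-1}$. Thus the subleading principal-part term that is listed immediately before $f_0^{(\alpha)}$ in the sequences \eqref{f_sequence1} and \eqref{f_sequence2} is precisely the term that will collide with the zeroth-order term at the adjacent exceptional point. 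This is the only degree crossing that is relevant to \eqref{restor}, since by Proposition \ref{prop1} (resp.\ \ref{prop2}) all other entries in the row for $\alpha_M$ (resp.\ $\alpha_N$) belong to columns which persist through the exceptional point without merging.

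The next step is to argue that within the open regular interval the coefficient functions $f_{M\alpha-(M-1)}^{(\alpha)}$ and $f_0^{(\alpha)}$ are independent of $\alpha$, and then to take the one-sided limit $\alpha \to \tilde\alpha_{M-1}^{+}$. Independence on the regular interval is exactly the column equality \eqref{columneq1}: along the row-$M$ entries of \eqref{table1} the coefficient $f_0^{(\alpha_M)}$ reduces to the universal $f_0^{(-)}$, and the entry $f_{M\alpha_M-(M-1)}^{(\alpha_M)}$ is a fixed function of $(\delta,\tilde\delta,z)$. Now pass to the exceptional point. The principal-part coefficients are extracted from the rational functions $g_n(h,\tilde h,c)$ in \eqref{FcPQ} by expanding $1/\cQ(c)$ as a geometric series around the monomial(s) of dominant weight in $\cQ$, which is encoded by the upper hull of the Newton polygon $D_\cQ$ in the variables $(k,m)$. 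Inside a regular $\alpha$-interval the dominant vertex is unique and fixed, hence the exponents $\{l+n\alpha\}$ in the Puiseux series depend affinely on $\alpha$ while the coefficients $p_a$ in \eqref{puiF} are constant in $\alpha$. At the endpoint $\tilde\alpha_{M-1}$ two exponents coalesce at $c^{0}$, and continuity of the rational function $g_n(h(c),\tilde h(c),c)$ in $\alpha$ forces the coefficient of $c^0$ at the endpoint to equal the sum of the two coefficients on either side of the collision:
\begin{equation}
f_0^{(\tilde\alpha_{M-1})} \;=\; \lim_{\alpha\to \tilde\alpha_{M-1}^{+}}\!\bigl[\,f_{M\alpha-(M-1)}^{(\alpha)} + f_0^{(\alpha)}\bigr] \;=\; f_{M\alpha_M-(M-1)}^{(\alpha_M)} + f_0^{(-)},
\end{equation}
which rearranges to \eqref{restor}. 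Case 2 goes through identically, approaching $\overset{\approx}{\alpha}_{N-1}$ from below and using the column equality \eqref{columneq2} together with the universal coefficient $f_0^{(+)}$.

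The main obstacle is justifying rigorously that the coefficient-collision rule
$f_0^{(\tilde\alpha_{M-1})} = f_{M\alpha-(M-1)}^{(\alpha)} + f_0^{(\alpha)}$
really is the sum and not some more subtle analytic continuation. One needs to show that at the moment two exponents of a Puiseux series become equal in a one-parameter family (here $\alpha$), the new expansion coefficient is the naive sum of the old two, with no extra correction produced by interactions with further-subleading terms that also move under the variation of $\alpha$. This requires confirming that no other vertex of the Newton polygon $D_\cQ$ drifts into the colliding pair at the endpoint, i.e.\ that the collision is isolated. This is the content of the explicit Newton-polygon analysis performed in Appendix \bref{app:vir} for each of the first five block coefficients $g_n$, so the proposition reduces to checking this local-isolation property case-by-case from the Newton polygons already drawn there, and then assembling the pieces via $f^{(\alpha)} = \sum_n g_n^{(\alpha)} z^n$.
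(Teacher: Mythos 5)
Your proposal is correct and follows essentially the same route as the paper: the identity $M\alpha-(M-1)\to 0$ as $\alpha\to\tilde\alpha_{M-1}$ is exactly the degeneracy of the index $a=0$ analyzed in Appendix \bref{app:vir}, where the splitting trick makes the double-series coefficients $\varkappa$-independent and Lemma \bref{constraint_solutions_1inf} shows that the only contribution joining $\g2_{0,0}$ at an exceptional point is the single coefficient $\g2_{-1,M}$, giving $f_0^{(\tilde\alpha_{M-1})}=f_0^{(-)}+f_{M\alpha_M-(M-1)}^{(\alpha_M)}$ as in \eqref{g2_f0_01} and \eqref{g2_f0_1inf}. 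The only cosmetic difference is that you phrase the collision as a one-sided limit in $\alpha$ plus continuity, whereas the paper obtains the sum rule purely algebraically (Proposition \bref{general_a_1inf}: the Puiseux coefficient at a degenerate index is by construction the finite sum over all colliding lattice points), which sidesteps having to justify interchanging the limit with coefficient extraction — but since you ultimately defer the isolation of the collision to the same Newton-polygon case check, the substance is identical.
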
     

\noindent This proposition makes it possible to reconstruct the principal part from partial data. 

\begin{corollary}
\label{cor111}
The principal part $\pr f^{(\alpha)}$ for any $\alpha \in (0,1)_\rn \cup (1, \infty)_\rn$ can be found   by  having only $\pr f^{(\alpha)}$ on the $M,N=1$ open intervals  \eqref{table1} and \eqref{table2} as well as isolated coefficients $f_0 ^{(\tilde \alpha_M)}$ and $f_0^{(\overset{\approx}{\alpha}_{N})}$ for $\forall N,M \in \mathbb{N}$.
\end{corollary}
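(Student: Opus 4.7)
The plan is to observe that Corollary~\bref{cor111} is a direct bookkeeping consequence of Propositions~\bref{prop1}, \bref{prop2}, and \bref{universal01}: the column equalities let one read off each coefficient of $\pr f^{(\alpha)}$ from the row of the table in which that coefficient first appears, and Proposition~\bref{universal01} rewrites each newly appearing coefficient in terms of an isolated $f_0$ value together with the universal constant $f_0^{(-)}$ or $f_0^{(+)}$. The whole argument then reduces to checking that the data listed in the corollary (the $M,N=1$ principal parts together with all isolated $f_0^{(\tilde\alpha_M)}, f_0^{(\overset{\approx}{\alpha}_N)}$) suffices to populate every column of the tables \eqref{table1}--\eqref{table2}.

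I would begin with $\alpha\in(0,1)_\rn$, fixing $M\in\mathbb{N}$ with $\alpha\in\big(\tfrac{M-1}{M},\tfrac{M}{M+1}\big]_\rn$. By Proposition~\bref{prop1} one has
\[
\pr f^{(\alpha)} \;=\; \sum_{m=1}^{M} f_{m\alpha-m+1}^{(\alpha)}\, c^{m\alpha-m+1} \;+\; f_0^{(\alpha)},
\]
so it is enough to determine each term on the right. The constant term is the universal $f_0^{(-)}$ when $\alpha$ is regular and the isolated $f_0^{(\tilde\alpha_M)}$ when $\alpha=\tilde\alpha_M$; both are among the allowed inputs. Each non-constant coefficient $f_{m\alpha-m+1}^{(\alpha)}$ is a column-$m$ entry, so by \eqref{columneq1}, extended to $\big(\tfrac{m-1}{m},\tfrac{m}{m+1}\big]_\rn$ for $m<M$, it coincides (as a function of the classical dimensions) with the column-$m$ entry at any representative $\alpha_m$. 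For $m=1$ this yields the universal column-$1$ function, which is part of the $M=1$ input. For $m\ge 2$ the column-$m$ entry is precisely the second-to-last coefficient of the row-$m$ principal part, and Proposition~\bref{universal01}(1) identifies it as $f_0^{(\tilde\alpha_{m-1})}-f_0^{(-)}$, manifestly expressible in terms of the permitted inputs.

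The case $\alpha\in(1,\infty)_\rn$ is handled in a parallel fashion using Proposition~\bref{prop2}, the column equalities \eqref{columneq2}, and the $(+)$-version $f_{N-(N-1)\alpha_N}^{(\alpha_N)}=f_0^{(\overset{\approx}{\alpha}_{N-1})}-f_0^{(+)}$ from Proposition~\bref{universal01}(2). The one step I would verify with care is that the column equality may be legitimately invoked at the endpoint $\alpha=\tilde\alpha_m$ (respectively $\alpha=\overset{\approx}{\alpha}_n$) for every column except the last one: this is exactly the half-open extension recorded in Propositions~\bref{prop1} and \bref{prop2}, and without it the non-constant column-$m$ function at an exceptional $\alpha$ would be ambiguous. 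Once this point is verified, assembling the pieces column by column reconstructs $\pr f^{(\alpha)}$ for every $\alpha\in(0,1)_\rn\cup(1,\infty)_\rn$, and no information beyond the $M=1$, $N=1$ principal parts and the isolated $f_0$ values is used.
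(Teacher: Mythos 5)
Your proposal is correct and takes essentially the same route as the paper: both arguments reduce the corollary to the column equalities of Propositions \bref{prop1}--\bref{prop2} (which fix every non-constant coefficient by its value in the row where it first appears) together with Proposition \bref{universal01}, which converts that newly appearing minimal-degree coefficient into $f_0^{(\tilde\alpha_{m-1})}-f_0^{(-)}$ (respectively the $(+)$ version), with the constant term supplied by the universal or isolated $f_0$. The only difference is presentational --- the paper phrases this as an interval-by-interval iteration starting from $M=1$, while you assemble the coefficients column by column in a single pass --- and your explicit check that the half-open extension of the column equalities is what legitimizes the endpoint values matches the paper's handling of exceptional points.
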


Let us see how this algorithm works for $\alpha \in (0,1)_\rn$ (the case $\alpha \in (1, \infty)_\rn$ can be considered analogously).   Suppose that from  Proposition  \bref{prop1} we know  the principal part $\pr f^{(\alpha_1)}$ on the first ($M=1$) open interval  $\alpha_1 \in(0,\half)_\rn$. It  consists of two terms which coefficients are  the highest-order  $f_{\alpha_1} ^{(\alpha_1)}$  and the zeroth-order universal $f_0 ^{(-)}$, see the first row in  \eqref{table1}:
\be
\pr f^{(\alpha_1)} =   f_{\alpha_1} ^{(\alpha_1)}c^{\alpha_1} + f_0 ^{(-)}\;.
\ee
These two functions are the same for any succeeding open interval ($M>1$). To have $\pr f^{(\alpha_1)}$ on the half-open interval $\alpha_1 \in (0,\frac12]_\rn$ one replaces the universal $f_0 ^{(-)}$ by the isolated $f_0 ^{\tilde \alpha_1}$, where $\tilde \alpha_1 = \frac12$ is the first exceptional point.   Consider now the second ($M=2$) open interval $\alpha_2 \in (\frac12,\frac23)_\rn$, where  the principal part gains one more coefficient, 
\be
\label{sec_int_f}
\pr f^{(\alpha_2)} =   f_{\alpha_2} ^{(\alpha_2)}\,c^{\alpha_2} + f_{2\alpha_2-1} ^{(\alpha_2)}\, c^{2\alpha_2-1} + f_0 ^{(-)}\;,
\ee
where the leading coefficient is equal to  that one from the first interval, $f_{\alpha_2} ^{(\alpha_2)} = f_{\alpha_1} ^{(\alpha_1)}$, see \eqref{columneq1}. Using Proposition \bref{universal01}  the subleading  coefficient $f_{2\alpha_2-1} ^{(\alpha_2)}$ can be found from \eqref{restor}  as 
\be
\label{restor2}
f_{2\alpha_2-1} ^{(\alpha_2)} = f_0 ^{(\tilde \alpha_{1})} - f_0 ^{(-)}\;.
\ee
Thus, knowing the isolated $f_0 ^{(\tilde \alpha_{1})}$ one can find the whole principal part on the second open interval \eqref{sec_int_f}. Going to the half-open interval $\alpha_2 \in (\frac12, \frac23]_\rn$ one again replaces the universal $f_0 ^{(-)}$ by the isolated $f_0 ^{(\tilde \alpha_2)}$, where $\tilde \alpha_2 = \frac23$ is the second exceptional point.  By repeating this procedure one obtains  $\pr f^{(\alpha_M)}$ on any interval labelled by $M = 2,3,...\,$.

\paragraph{Short summary.} Let us  summarize the basic properties of the leading exponent in the large-$c$ exponentiated Virasoro conformal block  \eqref{def2} for $\alpha$-heavy operators \eqref{h_th_a}. 
\begin{itemize}

\item The leading exponent is given by a Puiseux polynomial which is the principal part of the  logarithmic conformal block $\cS^{(\alpha)}(\delta, \tilde\delta, c|z)  = \pr f^{(\alpha)}(\delta, \tilde\delta, c|z)$. The principal part is unconventionally defined to contain the zeroth-order term which unexpectedly plays a crucial role in defining other terms in $\pr f^{(\alpha)}(\delta, \tilde\delta, c|z)$. (See Propositions \bref{prop1}, \bref{prop2}, and \bref{universal01},  Corollary \bref{cor111}.)

\item The half-line $\alpha \in [0, \infty)_\rn$ is naturally divided into four domains: $\alpha = 0 $, $\alpha\in (0,1)_\rn$, $\alpha  = 1$, $\alpha\in (1, \infty)_\rn$, where the leading exponent $\cS^{(\alpha)}(\delta, \tilde\delta, c|z)$ has crucially  different properties. The leading terms of $\cS^{(\alpha)}(\delta, \tilde\delta, c|z)$ in each of four domains are unique. (See Corollary \bref{cor1}.)   

\item Each open interval from the previous item can be divided into infinitely many subintervals within each of which the Puiseux polynomial $\cS^{(\alpha)}(\delta, \tilde\delta, c|z)$ has a fixed number of terms. The number of terms grows with a serial number of a given interval when moving $\alpha$ from $0$ to $1$ and decreases when  moving $\alpha$ from $1$ to $\infty$. (See \eqref{intervals} and fig. \bref{fig:plot} as well as Corollary \bref{cor11}.)

\item Every coefficient function which enters the principal part for some $\alpha_0$, except $f^{(\alpha_0)}_0$, is the same for all  $\alpha \in [\alpha_0,1)_\rn$, if $\alpha_0 \in(0,1)_\rn$, and for all $\alpha\in(1,\alpha_0]_\rn$, if $\alpha_0\in (1, \infty)_\rn$. 

\item There is a reconstruction method which allows building  $\cS^{(\alpha)}(\delta, \tilde\delta, c|z)$ for any value of $\alpha\in (0,1)_\rn \cup (1, \infty)_\rn$ from some initial data which are the leading exponent for the first intervals $\alpha\in (0, \frac12)_\rn$ and $\alpha \in (2, \infty)_\rn$ as well  as the zeroth-order terms for any $\alpha \in (0,1)_\rn\cup (1,\infty)_\rn$. (See Proposition \bref{universal01} and Corollary \bref{cor111}.) 

\item The above constructions for $\alpha \in \QQ$ are also valid for $\alpha\in \RR_0^+$. In this case the Puiseux polynomial contains irrational powers and, hence, $c=\infty$ becomes a logarithmic branch point. (See Appendix \bref{app:vir}.)

\end{itemize}

\subsection{Transmutation of singularities}
\label{sec:sing}

It is observed that for particular conformal dimensions the order of a branch point/pole of the logarithmic conformal block can be changed. We call this phenomenon {\it a transmutation of singular points} which manifests itself already for the second block coefficient. Recall that the $\alpha$-heavy  logarithmic conformal block is $O(c^\alpha)$.  Next we  consider two essentially different cases when the leading order of $c$ either increases or decreases. In the later case, the logarithmic block coefficients are $O(c^0)$ and the point $c=\infty$ becomes {\it a removable singularity}. In the former case, the logarithmic conformal block contains terms of  arbitrarily high order in powers of  $c$ and, therefore, becomes $O(c^\infty)$ which means that the singularity is {\it essential}.

In this section, this phenomenon is shown in different cases, without trying to develop a general theory. For further purposes, having in mind our definition \eqref{h_th_a}, we introduce a class of   conformal dimensions:
\begin{definition} Classical Kac dimensions,  
\be
\label{kac_cl}
\tilde h_{n} = \tilde \delta_{n} c =\frac{1 - n^2}{24} c\,,
\qquad 
\text{where}
\quad
\frac{\partial \tilde \delta_{n}}{\partial c}  = 0\,, \quad n \in \NN\,.
\ee  
\end{definition}
\noindent This name is due to the Kac dimensions belonging to the same $O(c)$ class, i.e. 
\be
\label{kac}
h_{r,s} = \frac{1 - r^2}{24} c +\frac{13 r^2-12 r s-1}{24}c^0 + O(c^{-1})\,, \qquad r,s = 1,2,...\,.
\ee

\subsubsection{Transmutation for heavy operators} 
\label{sec:trans_heavy}

Here, we consider the  $\alpha=1$ case. The logarithmic block   \eqref{f_exp_alpha}  can be represented as 
\be
\label{f_2ex2_1}
\ba{c}
\dps
f^{(1)}(\delta, \tilde \delta|z) = g_1^{(1)} z+ g_2^{(1)} z^2 + O(z^3) \equiv A_1 c\, z +\frac{A_{11} c^2 + A_{12} c}{A_{21} c +A_{22}}\,z^2 + O(z^3) 
\vspace{2mm}
\\
\dps
 =\frac{\tilde \delta}{2} c \, z + \frac{\tilde\delta(26 {\tilde\delta}^2 + 16 {\tilde \delta} \delta + 32\delta^2+ 3 {\tilde \delta})c^{2} + (16({\delta^2 - \tilde \delta \delta - {\tilde \delta}^2)  + 2\tilde\delta)c}}{16\tilde\delta(8\tilde \delta+1)c+(8- 80 {\tilde \delta})} z^2
+ O(z^3)\,.
\ea
\ee
Coefficients $A_1,A_{11}, ... $  depend on two classical dimensions  and the respective Puiseux series turns to the Laurent series: 
\be
\label{c1}
\ba{c}
\dps
f^{(1)}(\delta, \tilde \delta|z) = A_1 c\, z + \left[\frac{A_{11}}{A_{21}}c + \frac{A_{12} A_{21} -A_{11} A_{22}}{A_{21}^2}+O(c^{-1})\right]z^2 + O(z^3)
\vspace{2mm}
\\
\dps
= \frac{\tilde \delta}{2} c \, z 
+\left[\frac{32 \delta^2+16 \tilde\delta \delta+26 \tilde\delta^2+3 \tilde\delta}{16(8\tilde\delta+1)}c+\frac{(2\tilde\delta-24\delta+1)^2}{32 (8\tilde\delta+1)^2}+O(c^{-1})\right]z^2+ O(z^3)\,,
\ea
\ee
which means that the point  $c=\infty$ is an isolated singularity of the first order (a simple pole). For arbitrary conformal  dimensions the leading asymptotics here is always $c^1$. 

Examining  the representation \eqref{f_2ex2_1} one finds out that choosing the intermediate conformal dimension as the classical Kac dimension $\tilde\delta =  -1/8 \equiv \tilde \delta_{2}$ the leading asymptotics becomes  $ c^2$. Indeed, this is equivalent to imposing   $A_{21} = 0$ while other coefficients $A_1,A_{11},A_{12},A_{22}\neq 0$ that makes the denominator in \eqref{f_2ex2_1} to be of lower-order in $c$.  It immediately follows that the expansion \eqref{c1} is not applicable anymore because of the $A_{21}^{-1}$ pole.

On the other hand, choosing in \eqref{f_2ex2_1} another classical Kac dimension $\tdelta   = 0 \equiv  \tilde \delta_{1}$ the leading asymptotics remains  $c^1$ because in this case $A_1,A_{11}, A_{21}=0$, while  $A_{12},A_{22} \neq 0$. The expansion \eqref{c1} is still  not applicable: the condition $A_{21} =0$ generates a pole  yet the highest degree of $c$ is not changed.  We emphasize  that one substitutes  $\tdelta =0$ into the logarithmic block  $f^{(1)}(\delta, \tilde \delta|z)$ {\it before} expanding near $c=\infty$. Exchanging  these two steps yields  inconsistency. The expanded block $f^{(1)}(\delta, \tdelta|z)$ \eqref{c1} at $\tdelta=0$ takes the form 
\be
\label{noncom}
f^{(1)}(\delta, 0|z) = \left[2\delta^2c+\frac{(24\delta-1)^2}{32}+O(c^{-1})\right]z^2+ O(z^3)\,.
\ee 
However, this vacuum  block is false since the further substitution of  $\delta=0$ produces  a non-vanishing function $f^{(1)}(0, 0|z) = -1/32 \,z^2 + ...\,$ which by definition must be equal to $0$.\footnote{\label{F=1}The bare Virasoro conformal block \eqref{block_def} for zero  conformal dimensions equals $1$ for any $c$: $\cF(0, 0, c\,| z) = 1$.}   
 
To summarize, for particular intermediate classical  dimensions the second coefficient  changes from being rational to quadratic,  
\be
\label{secoef1}
\frac{A_{11} c^2 + A_{12} c}{A_{21} c +A_{22}} \;\;\to\;\; \frac{A_{11}}{A_{22}} c^2 + \frac{A_{12}}{A_{22}} c\,,
\ee
that leads to poles of different orders in the respective Laurent expansions. This is {\it a   transmutation of a singular point}. In particular, this  prevents getting  the vacuum  classical block by substituting $\tdelta=0$ into the classical block which was calculated for  $\tdelta \neq 0$ (see Section \bref{sec:identity} for further discussion). 

So far we have studied   the transmutation phenomenon for the first and second block coefficients $g_1^{(1)}$ and $g_2^{(1)}$. Considering other available higher-order block coefficients $g_n^{(1)}$, $n=1,...,5$ in \eqref{f_2ex2_1}  one finds poles in the classical Kac dimension $\tilde \delta_{4} = -5/8$. Our numerical studies demonstrate that other classical Kac dimensions on these levels  do not show up. One can explicitly see that for intermediate classical Kac dimensions $\tilde \delta_{2}$ and $\tilde \delta_{4}$ the large-$c$ logarithmic  conformal blocks  up to $O(z^6)$ are given by 
\be
\label{kac_blocks}
\ba{l}
\dps
f^{(1)}(\delta, \tilde \delta_{2}|z) = a_4(\delta|z) \,c^4 +  a_3 (\delta|z)\, c^3 +  a_2(\delta|z) \,c^2 +  a_1(\delta|z)\, c^1 + a_0(\delta|z) + O(c^{-1})\,,   
\vspace{2mm}
\\
\dps
f^{(1)}(\delta, \tilde \delta_{4}|z) =  b_2(\delta|z) \,c^2+  b_1(\delta|z) \, c^1 + b_0(\delta|z) \,  + O(c^{-1}) \,,  
\ea
\ee
where $a_i, b_j$ are some coefficients.  We observe here  that the leading terms  grow faster than $c^1$. Moreover, the more higher-order terms in $z$ we take into account, the  higher degree of the leading term in $c$ we obtain:\footnote{\label{merc}First, note that by means of the Mercator series formula the logarithmic conformal block inherits all poles of  the Virasoro conformal block. Second, the Kac determinant on the $\cN$-th level keeps the zeros from the previous levels. Then, substituting classical Kac dimensions into the block coefficients leads to that almost all terms in the denominators become zero which in turn leads to the pole's order growing linearly with $\cN$. On the other hand, higher level block coefficients contain $\tdelta_n$ with larger $n$. }  for the classical Kac dimension $\tdelta_n$ one finds the large-$c$ logarithmic conformal  block $O(c^{2(\cN-1)/n})$ in the $O(z^{\cN+1})$ approximation \eqref{block_def}.  In this respect, the large-$c$ expansion of the  logarithmic conformal block for intermediate classical Kac  dimensions is   similar to the large-$c$ expansion of  the original Virasoro conformal block \eqref{asymp} for arbitrary dimensions: at $\cN\to \infty$ the point  $c=\infty$ becomes {\it an essential singularity} of this particular  logarithmic conformal block, $f^{(1)} = O(c^\infty)$. Obviously, there is no exponentiation with a finite order pole in this case.            

Let us discuss now how the original Kac dimensions which are responsible for poles of the Virasoro conformal block show up within the large-$c$ asymptotic expansions. We stress that a classical Kac dimension \eqref{kac_cl}  is different from the original  Kac dimension \eqref{kac}. Instead, they both belong to the $O(c)$ class of conformal dimensions having  a leading term $\frac{1 - r^2}{24} c$. The classical Kac dimension is not a pole of the Virasoro conformal block which is defined by zero denominator $A_{21} c+A_{22} =0$ in \eqref{f_2ex2_1} (see footnote \bref{merc}).  Indeed, solving this equation one finds the Kac dimension $h_{2,1} = h_{2,1}(c)$ \eqref{kac} which can be represented as    
\be
\tilde h_{2,1} = \sum_{n=-1}^\cM  (\tilde h_{2,1})_n \,c^{-n} + O(1/c^{\cM+1})
\equiv  
\tdelta_2 c + \sum_{n=0}^\cM  (\tilde h_{2,1})_n \,c^{-n} + O(1/c^{\cM+1})\,,    
\ee      
see \eqref{kac_cl}. Now, discarding here the $O(1/c^{\cM+1})$ terms and using  such a truncated conformal dimension when calculating  the large-$c$ logarithmic conformal block  one finds that the second coefficient $g_2^{(1)}$ behaves as $ c^{\cM+3}$ (earlier we saw that using just the first term ($\cM=-1$) yields the leading asymptotic $ c^2$). Sending $\cM$ to infinity (thereby, reconstructing the original Kac dimension) we are gradually approaching the pole of the second coefficient which reveals  itself as 
\be
g^{(1)}_{2} =  \lim_{\cM\to \infty} O(c^{\cM+3}) = O(c^{\infty})\,.
\ee 
A number of terms in $\pr g_2^{(1)}$  increases  infinitely so that  $c=\infty $ becomes an essential singularity of the logarithmic block $f^{(1)}$ already in the second order in $z$ (contrary to the classical Kac dimensions where the essential singularity is produced when $\cN \to \infty$).

Finally, let us briefly consider the case of lowering the order of the pole. By imposing $A_{11} =0$ the second coefficient changes as (cf. \eqref{secoef1})  
\be
\frac{A_{11} c^2 + A_{12} c}{A_{21} c +A_{22}}
\;\; \to \;\; 
\frac{A_{12} c}{A_{21} c +A_{22}}\,,
\ee
which means that $g_2^{(1)} = O(c^0)$. The equation $A_{11}\equiv \tilde \delta\left(32  \delta^2+ 16 \tilde \delta \delta+26 \tilde \delta^2+3\tilde \delta\right) = 0$ has roots
\be
\tilde \delta = 0 \quad \text{and/or} \quad \delta = \frac{1}{8} \Big(-2 \tilde \delta \pm\sqrt{-6\tilde \delta (8 \tilde \delta+1)}\Big)\,,
\ee
the reality condition restricts the intermediate dimension to be $\tilde \delta \in [-1/8,0] \equiv [\tilde \delta_2, \tilde \delta_1]$, cf. \eqref{kac_cl}.\footnote{One can check higher-order  block  coefficients $g_n^{(1)}$, where $n=1,..., \cN$,  and find higher-order  polynomial equations on $\delta, \tilde \delta$ which lower the order of the pole. Their solutions contain pairs $(\delta, \tilde \delta)$, where $\tilde \delta$ is one of the classical Kac dimensions $\tilde \delta_{r,s}$ with $rs \leqslant \cN$.} We should emphasize here that contrary to the previous case, the lowering of the order happens only for some of higher-order coefficients and the large-$c$ asymptotics of the whole  logarithmic block remains the same as for general conformal dimensions, $f^{(1)} = O(c^1)$. Indeed, considering e.g. just first two terms in $f^{(1)} = g_1^{(1)} z + g_2^{(1)} z^2 + ...$ and imposing $A_{11} = 0$ does not generally lead to $g_1^{(1)} \equiv A_1 c =0$ that keeps the leading asymptotic equal to $c^1$ in this order.

\subsubsection{Transmutation for $\alpha$-heavy operators} 

The situation here is less informative since there are no Kac dimensions $O(c^\alpha)$ at $\alpha \neq 1$ which are poles of the original Virasoro block. For the $\alpha$-heavy   logarithmic conformal block this implies that choosing higher-order coefficients equal to zero in both the numerator  and denominator  does not give non-trivial solutions. Let us consider two examples: 
\be
\ba{l}
\label{f_2ex21/2}
\dps
\alpha = \frac12:\qquad f^{(\half)} = A_1 c^\half\, z +\frac{A_{11} c^{\frac32}+A_{12} c+A_{13} c^{\frac12}}{A_{21} c +A_{22} c^{\frac12}+A_{23}}\,z^2 + O(z^3)\,;
\vspace{3mm}
\\
\dps
\alpha = \frac32:\qquad f^{(\frac32)} = B_1 c^{\frac32}\, z +\frac{B_{11} c^{3}+B_{12} c^{\frac72}+B_{13} c^{2}+B_{14} c^{\frac32}}{B_{21} c^2 +B_{22} c^{\frac32}+B_{23} c^{\frac12}+B_{24}}\,z^2 + O(z^3)\,.
\ea
\ee
Explicit coefficients in each case  can be read off from \eqref{f_exp_alpha}. 

\begin{itemize}
\item $\alpha=\half$. The singularity here is the branch point of order $2$, i.e. $f^{(\half)} = O(c^{\half})$. It can transmute provided that at least $A_{21}=0$, $A_{11}\neq 0$, that would lead to $g_2^{(\half)} = O(c^1)$, or $A_1,A_{11}=0$, that would lead to $f^{(\half)}=O(c^0)$. However, one can see that the only solution is always given by $\tilde \delta = 0$, in which case the logarithmic vacuum conformal block is given by  
\be
f^{(\half)} = 2 \delta^2\, z^2 + O(z^3)\,,
\ee 
and, hence, $c=\infty$ is {\it a removable singularity} in this order.    

\item $\alpha=3/2$. Here, $g_2^{(3/2)}$ has more terms in the numerator and denominator and, hence, more  possibilities to change the order of singularity. However, in these cases one observes the same effect that the only value of the  intermediate dimension which triggers  the transmutation is given by  $\tilde \delta = 0$. Namely, in this case $B_1,B_{11},B_{12},B_{14},B_{21},B_{22},B_{23}=0$, while $B_{13} = 16 \delta^2$ and $B_{24} = 8$ that gives 
\be
f^{(3/2)} =  2 \delta^2 c^2\, z^2 +O(z^3)\,.
\ee 
The point $c=\infty$ is the second order pole here. One can show that taking account of  higher-order coefficients $g_n^{(3/2)}$ the order of $c$ increases with $n$  thereby making $c=\infty$ an essential singularity of the logarithmic block $f^{(3/2)}$.         
\end{itemize}

\noindent In the next section we analyze the   logarithmic vacuum conformal blocks in more detail and show that their large-$c$ behaviour crucially depends on either $\alpha$ is more or less than 1. Namely, $c=\infty$ for $\alpha \in  (0,1)_\rn$ is still  a branch point of $f^{(\alpha)}$ but of lower order, while for $\alpha \in (1, \infty)_\rn$ it is an essential singularity. The later case  is similar to that one with classical Kac dimensions examined  in Section \bref{sec:trans_heavy}. In both cases, one has a transmutation of singularities.

\subsection{Classical vacuum conformal blocks}
\label{sec:identity}

Substituting first $\tilde h = 0$ and $h = \delta c^{\alpha}$ in the Virasoro conformal block $(\ref{General block})$ one then computes the  logarithmic vacuum  conformal block
\be
\label{log_vac}
\ba{l}
\dps
\dot{f}^{(\alpha)}(\delta, c, z) = 2 \delta^2 c^{2\alpha -1} z^2 + 2 \delta^2 c^{2\alpha -1} z^3 + \frac{c^{2\alpha -2}\delta^2 \big( c(9c+40) + 4 \delta c^{\alpha +1} - 44 \delta^2 c^{2\alpha} \big)}{22+5c} z^4 + O(z^5)\,.
\ea
\ee
We use a dot to mark the vacuum  classical block since due to the singular transmutation this function cannot be obtained from non-vacuum  classical block by choosing $\tdelta=0$, see the discussion around eq. \eqref{noncom} in the previous section. Using our {\it Mathematica} code we can calculate the vacuum Virasoro conformal  blocks  up to $O(z^{22})$, while the  classical vacuum conformal block can be calculated up to $O(z^{16})$ (see the footnote \bref{foo1}). The behaviour of vacuum  blocks is essentially different in the four domains of  $\alpha$ \eqref{domains}, in particular, one directly see that for $\alpha \in \mathbb{Q}^-_0$ the  principal part $\pr f^{(\alpha)} = 0$. Other cases are analyzed below. 

\paragraph{${\bm\alpha\in(1, \infty)_\rn}$.} In this case, the singular transmutation converts   $c=\infty$ into an essential singularity.  Indeed, examining higher-order coefficients $g_n^{(\alpha)}$  in \eqref{log_vac} (up to $O(z^{16})$)  one finds a general pattern of increasing the order of $c$: 
\be
g_{n}^{(\alpha)} = O(c^{m \alpha - (m-1)})\,,
\qquad
\text{where} 
\quad 
m=n+ \frac{1 + (-1)^{n+1}}{2}\,, \quad n =2,3,4, ...\,,
\ee
which says that $\lim_{n\to\infty} g_{n}^{(\alpha)} = O(c^\infty)$.

\paragraph{${\bm\alpha \in(0,1)_\rn}$.} The same numerical analysis shows that the transmutation changes the leading asymptotics from $c^{\alpha}$ to $c^{2\alpha-1}$. The Proposition \bref{prop1} is still valid for the  classical vacuum conformal blocks. However, one finds out that both the highest-order and lowest-order terms in the principal part equal zero. Namely,  these are the leading  and  universal coefficients (see Definition \bref{def:isol}), 
\be
\label{fdot1}
\dot{f}_{\alpha}^{(\alpha)} = 0\,,
\qquad 
\dot{f}_0^{(-)} = 0\;.
\ee
In general, such constraints are not surprising because setting  $\tdelta =0$ makes many coefficients in both numerators and denominators of rational functions $g_{n}^{(\alpha)}$  equal to zero. In particular, the first relation is just the fact that the leading asymptotics becomes $O(c^{2\alpha-1})$, see \eqref{f_sequence1}. This allows one to  formulate the analog of Proposition \bref{universal01} for the  classical vacuum conformal blocks valid up to $O(z^{16})$:
\begin{prop} 
\label{prop_vacuum}
Let $\alpha\in (0,1)_\rn$. Consider a term of minimal non-zero degree in $\pr \dot{f}^{(\alpha)}$.  For regular $\alpha_M \in \left(\frac{M-1}{M} , \frac{M}{M+1} \right]_\rn$ and exceptional $\tilde \alpha_M = \frac{M-1}{M}$, $M=2,3,...\,$ one has 
\be
\label{rel_vac}
\dot{f}_{M \alpha_M-(M-1)}^{(\alpha_M)} = \dot{f}_0^{(\tilde \alpha_M)}\,.
\ee
\end{prop}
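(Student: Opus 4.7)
The plan is to mirror the argument of Proposition~\bref{universal01}, adapted to the vacuum setting where two simplifications occur: by \eqref{fdot1} both the universal zeroth coefficient $\dot f_0^{(-)}$ and the leading coefficient $\dot f^{(\alpha)}_\alpha$ vanish throughout the regular range $\alpha \in (0,1)_\rn$. As stressed in Section~\bref{sec:sing}, one must impose $\tilde h = 0$ at the level of the explicit block coefficients in \eqref{log_vac} \emph{before} expanding at $c = \infty$; otherwise the transmutation of singularities distorts the Puiseux expansion. After this substitution, the structural statement of Proposition~\bref{prop1} (the shape \eqref{f_sequence1} of the exponent set and the column-equalities \eqref{columneq1}) still applies within each open interval $\alpha_M\in\big(\tfrac{M-1}{M},\tfrac{M}{M+1}\big)_\rn$, with the leading and zeroth columns simply set to zero.

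The core step is a collision-of-exponents argument at the left endpoint $\tilde\alpha_M=(M-1)/M$. Within the open interval, $\dot f^{(\alpha_M)}_{M\alpha_M-(M-1)}(\delta)$ is $\alpha_M$-independent by the column-equality part of Proposition~\bref{prop1}, so one may safely take the limit $\alpha_M\to\tilde\alpha_M^+$. Under this limit the exponent $M\alpha_M-(M-1)$ tends to $0$, so the second-to-last term of \eqref{f_sequence1} merges with the zeroth-order term of the Puiseux series. At the exceptional value $\alpha=\tilde\alpha_M$ the zeroth-order coefficient is no longer the universal $\dot f_0^{(-)}$ but the isolated $\dot f_0^{(\tilde\alpha_M)}$, and the discontinuous shift \eqref{f0_exceptional1} must precisely absorb the contribution of the colliding term:
\[
\dot f_0^{(\tilde\alpha_M)} \;=\; \dot f^{(-)}_0 \;+\; \dot f^{(\alpha_M)}_{M\alpha_M-(M-1)}.
\]
Substituting $\dot f_0^{(-)}=0$ from \eqref{fdot1} gives \eqref{rel_vac}. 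Operationally, this is verified level-by-level by taking the coefficients $g_n^{(\alpha)}(c)$ (with $\tilde h = 0$) from the Mathematica data up to $O(z^{16})$, reading off the two Puiseux coefficients on either side of the endpoint, and matching.

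The main obstacle is ensuring that the collision of two Puiseux exponents at $\alpha=\tilde\alpha_M$ produces only a finite jump in the zeroth-order term rather than a $\log c$ contribution, which would be the generic outcome when exponents merge. Concretely, one needs that for each $n$ the relevant numerator in $g_n^{(\alpha)}(c)$ does not degenerate to an order matching the denominator at $\alpha=\tilde\alpha_M$; this can be checked directly from the explicit rational form of the vacuum block coefficients in \eqref{log_vac} and its higher-order analogs. Conceptually, the absence of logarithms is expected because the null-vector defining the identity family forces additional cancellations that are absent in the non-vacuum setting; but a clean structural proof of this fact would be the natural way to lift the present level-by-level verification to arbitrary order in $z$. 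Until then, the proposition is stated within the computational reach $O(z^{16})$, parallel to Proposition~\bref{universal01}, whose analog-argument is carried out explicitly in Appendix~\bref{app:proof_prop}.
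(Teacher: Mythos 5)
Your argument is correct and is essentially the paper's own: Proposition \bref{prop_vacuum} is obtained by specializing the exponent-collision mechanism behind Proposition \bref{universal01} --- the isolated $c^0$ coefficient at the exceptional left endpoint absorbs the minimal-nonzero-degree term, cf. the $\delta_{\varkappa,1/M}$ shift in \eqref{g2_f0_01} --- to the vacuum case where $\dot f_0^{(-)}=0$ by \eqref{fdot1}, with the same level-by-level verification up to $O(z^{16})$. Your worry about a possible $\log c$ contribution at the collision is moot in this framework: the block coefficients are rational functions of $c^{1/q}$, and the paper's degeneracy analysis (cf. \eqref{exB1}--\eqref{exB2}) shows that coinciding exponents simply add their coefficients and never generate logarithms, so no extra non-degeneracy check of the numerator is needed.
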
 

Quite similar to the non-vacuum  classical conformal blocks  discussed in Corollary \bref{cor111} the relation \eqref{rel_vac} can be used to find the whole principal part $\pr \dot{f}^{(\alpha)}$ for any $\alpha \in (0,1)_\rn$ by knowing only isolated coefficients $\dot{f}_0 ^{(\tilde \alpha_M)}$ for $\forall M \in \mathbb{N}$. Here, we can repeat the reconstruction algorithm   from Section \bref{sec:structure} which for  $\pr \dot f^{(\alpha)}$  is even simpler.

Let us consider the first open interval $\alpha_1 \in (0, \frac12)$. On this interval the classical block contains two terms and from \eqref{fdot1}  one finds that they both equal zero,\footnote{\label{glbl} The respective large-$c$ Virasoro block is, therefore, equal to 1. This  is somewhat similar to the (bare) global conformal block  $G(h_i, \tilde h|z) = {}_2 F_1(\tilde h - h_1+h_2, \tilde h +h_3-h_4, 2\tilde h|z)$ \cite{Ferrara:1974ny}, which at pairwise equal external dimensions ($h_1=h_2$ and $h_3 = h_4$) and  zero intermediate dimension ($\tilde h = 0$) is also equal to 1.} 
\be
\pr \dot f^{(\alpha_1)} =   0\;.
\ee
Thus, the first non-zero classical block can appear at $\alpha \in [\frac12, \infty)_\rn$. Indeed, at the first exceptional point $\tilde \alpha_1  = \half$ one knows that \cite{Fitzpatrick:2014vua}:\footnote{In fact, the analytical  calculation of  \cite{Fitzpatrick:2014vua} demonstrates that for $\alpha = \half$ the transmutation of singularity  does hold in any order in $z$. In this case, the singularity $c=\infty$ is removable. }
\be
\label{vac1/2}
\pr \dot f^{(\tilde \alpha_1)} =  \dot f_0^{(\frac12)} = 2\delta^2 z^2~ _2F_1(2,2,4;z)\,.
\ee
One can directly check this result by expanding the hypergeometric series and comparing with \eqref{log_vac} up to $O(z^{16})$, see \eqref{iso1}. Note that here the classical block $\dot \cS^{(\tilde \alpha_1)}(\delta, c|z) = \pr \dot f^{(\tilde \alpha_1)} = O(c^0)$ contains just one term which is the isolated coefficient.

From  Proposition \bref{prop_vacuum} one finds on the second open interval $\alpha_2 \in (\frac12, \frac23)_\rn$ that 
\be
\pr \dot f^{(\alpha_2)} =  \dot f_{2\alpha_2-1}^{(\alpha_2)}\, c^{2\alpha_2-1} =  \dot f_0^{(\frac12)}\,c^{2\alpha_2-1}  = 2\delta^2 z^2~ _2F_1(2,2,4;z)\,c^{2\alpha_2-1}\,.
\ee
Recall that the universal coefficient is zero, see \eqref{fdot1}. Thus, one knows an analytic expression for the classical vacuum  block $\dot \cS^{(\alpha)}(\delta, c|z)$ at  $\alpha \in (0, 2/3)$. Developing this procedure further one can build  $\dot \cS^{(\alpha)}(\delta, c|z)$ for any $\alpha\in (0,1)_\rn$ provided one knows  isolated coefficients $f_0^{\tilde \alpha_n}$ at $n=2,3,...\,$. In this way, one can show (up to $O(z^{16})$) the main proposition of this section:

\begin{prop}[vacuum block reconstruction]
\label{prop_iden}
Classical vacuum conformal blocks for any $\alpha\in (0,1)_\rn$  are completely defined by isolated coefficients  $f_0^{(\tilde \alpha_M)}$, $M=1,2,...\,$. Namely, for $\alpha \in \big[\frac{M-1}{M}, \frac{M}{M+1} \big)_\rn$ one has  
\be
\label{isol_rec}
\dot\cS^{(\alpha)}(\delta, c|z) = \sum_{m=1}^{M-1} \dot f_{0}^{(\tilde\alpha_{m})}\, c^{(m+1) \alpha -m}\,.
\ee
\end{prop}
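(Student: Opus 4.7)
The plan is to obtain Proposition \bref{prop_iden} as a corollary of Proposition \bref{prop1} applied to the vacuum sector, the vanishing identities \eqref{fdot1}, and the vacuum reconstruction relation of Proposition \bref{prop_vacuum}, stitched together by the column-equalities \eqref{columneq1}. Throughout the plan I keep the convention of Proposition \bref{prop1} in which $\tilde\alpha_m = \frac{m}{m+1}$ is the right endpoint of the $m$-th subinterval, so that the left endpoint of the $M$-th subinterval is $\tilde\alpha_{M-1} = \frac{M-1}{M}$.

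First, I would fix a regular $\alpha$ in the open interval $\big(\frac{M-1}{M},\frac{M}{M+1}\big)_\rn$ and write out the principal part of the vacuum logarithmic block using Proposition \bref{prop1}:
\be
\pr \dot f^{(\alpha)} \;=\; \sum_{k=1}^{M}\dot f^{(\alpha)}_{k\alpha-(k-1)}\, c^{k\alpha-(k-1)} + \dot f_0^{(\alpha)}.
\ee
The vanishing identities \eqref{fdot1} kill the leading ($k=1$) term and the universal zeroth term, leaving only the coefficients at positions $k=2,\dots,M$. Next I would invoke the column-equality \eqref{columneq1}: the coefficient function at position $k$ is, as a function of $\delta$ and $z$, the same on every interval with index $\geq k$, hence it coincides with the minimal non-zero degree coefficient on the $k$-th interval itself. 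Proposition \bref{prop_vacuum} (after absorbing $\dot f_0^{(-)}=0$ in the underlying relation \eqref{restor}) identifies that minimal coefficient with the isolated $\dot f_0^{(\tilde\alpha_{k-1})}$. Combining these two steps and relabelling $m=k-1$ produces, for any regular $\alpha$ in the open subinterval,
\be
\pr \dot f^{(\alpha)} \;=\; \sum_{m=1}^{M-1}\dot f_0^{(\tilde\alpha_m)}\, c^{(m+1)\alpha - m}\,,
\ee
which is exactly \eqref{isol_rec}.

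It then remains to extend the equality to the half-open interval $\big[\frac{M-1}{M},\frac{M}{M+1}\big)_\rn$ promised in the statement, i.e.\ to include the left exceptional endpoint $\alpha=\tilde\alpha_{M-1}$. At this endpoint one is formally on the $(M-1)$-th interval, where the principal part has only $M-2$ non-zero higher-degree terms but where the zeroth coefficient is now the isolated $\dot f_0^{(\tilde\alpha_{M-1})}$ rather than the universal $\dot f_0^{(-)}=0$. Plugging $\alpha=\tilde\alpha_{M-1}=\tfrac{M-1}{M}$ into the right-hand side of \eqref{isol_rec} I would check that the $m=M-1$ term degenerates to $\dot f_0^{(\tilde\alpha_{M-1})}\,c^0$, supplying precisely this isolated zeroth coefficient, while the terms $m=1,\dots,M-2$ reproduce the $(M-1)$-interval higher-degree principal part; this establishes the formula as $\alpha$ crosses the exceptional point from above. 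Finally, the base case $M=1$ (empty sum) is consistent with the observation recorded in footnote \bref{glbl} that $\pr \dot f^{(\alpha_1)}=0$ on $(0,\tfrac12)_\rn$.

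The main obstacle is essentially bookkeeping rather than computation: the conventions for the exceptional points $\tilde\alpha_\bullet$ used in Propositions \bref{prop1}, \bref{universal01} and \bref{prop_vacuum} are shifted relative to one another, and the half-open interval in \eqref{isol_rec} is closed on the opposite side from the one in Proposition \bref{prop1}, so one must carefully track which endpoint is exceptional and verify the continuity of the formula across it. Once the reindexing is set up consistently, no new analytic input is required beyond Propositions \bref{prop1}, \bref{universal01}, \bref{prop_vacuum} and the vacuum vanishings \eqref{fdot1}; the truncation at $O(z^{16})$ in which these inputs were verified is inherited automatically.
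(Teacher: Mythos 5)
Your proposal is correct and follows essentially the same route as the paper: the paper's own justification of Proposition \bref{prop_iden} is precisely the interval-by-interval reconstruction assembled from Proposition \bref{prop1}, the vacuum vanishings \eqref{fdot1}, the column equalities \eqref{columneq1}, and Proposition \bref{prop_vacuum}, which you have simply written out for a fixed $M$ and supplemented with the (correct, and in the paper only implicit) check at the exceptional left endpoint $\alpha=\tfrac{M-1}{M}$. Your observation that the labelling of the exceptional points $\tilde\alpha_\bullet$ in Proposition \bref{prop_vacuum} is shifted by one relative to Propositions \bref{prop1} and \bref{universal01} is also accurate and is resolved exactly as you describe.
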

The first isolated coefficient  $f_0^{(\tilde \alpha_1)}$ here is explicitly known as the hypergeometric function \eqref{vac1/2}. One may wonder if  other isolated coefficients at higher  exceptional points $\tilde\alpha_{m} = \frac{m}{m+1} \in (0,1)_\rn$  have similar analytic expressions. More precisely, one may try to conjecture them by analyzing the numerical form of these  coefficients  which can be  calculated up to $O(z^{16})$, see e.g. \eqref{iso1}-\eqref{iso7}.  To this end, one introduces the function
\be
F_a(z) = {}_2F_1(a,a;2a|z)\,,
\ee  
which is in fact the $sl(2)$ vacuum global block (see the footnote \bref{glbl}). Then, the first two isolated coefficients \eqref{iso1} and \eqref{iso2}  can be represented as 
\be
\label{iso1223}
\dot f_0^{(\frac12)} =  2\delta^2 z^2 F_2(z)
\quad
\text{and}
\quad
\dot f_0^{(\frac23)} = \frac{4\delta^3 z^4}{5} \Big(6F_1(z)F_3(z)-5F_2(z)F_2(z)\Big)\,.
\ee
The available numerical expressions for the isolated coefficients at higher exceptional points do not contain enough terms to represent them in similar form. However, having in mind \eqref{iso1223} one may conjecture that  all of them are described by the following ansatz 
\be
\label{ansatz}
\dot f_0^{(\tilde\alpha_{n})} = \delta^{n+1} z^{m}\sum_{\substack{a_k \in \mathbb{Z} \\ a_1+...+a_k = m}} s_{a_1...a_k} F_{a_1}(z)F_{a_2}(z)\cdots F_{a_k}(z) \,,
\ee
where  coefficients $s_{a_1...a_k}$ are to be found and $m=n+1+(1+(-1)^n)/2$ for $n\in \NN$. It is interesting that the similar relations are known in the literature for  large-$c$ HHLL (heavy-heavy-light-light) vacuum  conformal block, see \cite{Kulaxizi:2018dxo,Fitzpatrick:2015qma,Karlsson:2021mgg}. In particular, relations \eqref{iso1223} arise in the lowest orders of the heavy-light perturbation theory. In our case the conformal block is of the HHHH type since  all operators are $\alpha$-heavy and there are differences arising in higher orders.   

Nonetheless, our numerical experiments show that apparently there are no any finite combinations of functions $F_{a}$ which can represent $f_0^{(\tilde\alpha_{n})}$ for any $n$ except for $n=1,2$. Infinite combinations \eqref{ansatz}  are still possible  because they are just the change of  basis of a function space. Presently, the only thing we can observe for sure is that 
\be
\dot f_0^{(\tilde\alpha_{n})} \sim \delta^{n+1} z^{m}\left[1 +\frac{m z}{2} + O(z^2)\right],
\qquad
m=n+1 + \frac{1+(-1)^n}{2}\,, \quad n\in \NN\,.
\ee
 
\paragraph{${\bm\alpha=1}$.} The classical vacuum block $\pr \dot f^{(1)}  = c\dot f_{1}^{(1)}+\dot f_{0}^{(1)}$  can be calculated up to $O(z^{16})$. The first term which is the Zamolodchikov vacuum conformal block reads
\be
\label{vacuum-classical}
\ba{l}
\dps \dot f_{1}^{(1)} = 2\delta^2 z^2 \bigg[1+z+z^2 \left(\frac{9}{10}+\frac{2 \delta }{5}-\frac{22 \delta^2}{5}\right)+z^3 \left(\frac{4}{5}+\frac{4 \delta }{5}-\frac{44 \delta ^2}{5}\right)+ 
\vspace{2mm} 
\\
\dps
  + z^4 \left(\frac{5}{7}+\frac{39 \delta }{35}-\frac{2503 \delta ^2}{210}-\frac{296 \delta ^3}{35}+\frac{6016 \delta^4}{105}\right)+z^5 \left(\frac{9}{14}+\frac{47 \delta }{35}-\frac{963 \delta ^2}{70}-\frac{888 \delta ^3}{35}+\frac{6016 \delta^4}{35}\right)+ 
\vspace{2mm} 
\\
\dps
+ z^6 \left(\frac{7}{12}+\frac{263 \delta }{175}-\frac{38359 \delta
   ^2}{2625}-\frac{41666 \delta ^3}{875}+\frac{814004 \delta ^4}{2625}+\frac{187616 \delta ^5}{875}-\frac{931408 \delta ^6}{875}\right)+
\vspace{2mm} 
\\
\dps
+z^7 \left(\frac{8}{15}+\frac{282 \delta }{175}-\frac{12937 \delta ^2}{875}-\frac{63064 \delta ^3}{875}+\frac{383472 \delta
   ^4}{875}+\frac{750464 \delta ^5}{875}-\frac{3725632 \delta ^6}{875}\right)
\ea
\ee
$$
\ba{l}
\dps
+ z^8\left(\frac{27}{55}+\frac{1943 \delta }{1155}-\frac{83773 \delta ^2}{5775}-\frac{2785771 \delta ^3}{28875}+\frac{15432733 \delta
   ^4}{28875}+\frac{58445216 \delta ^5}{28875} -\frac{276888008 \delta ^6}{28875} -\right. 
\vspace{2mm} 
\\
\dps
\left. -\frac{58081152 \delta ^7}{9625}+\frac{45651968 \delta^8}{1925}\right)+z^9 \left(\frac{5}{11}+\frac{1993 \delta }{1155}-\frac{80516 \delta ^2}{5775}-\frac{689743 \delta ^3}{5775}+\frac{3399349
   \delta ^4}{5775}+ \right.
\vspace{2mm} 
\\
\dps
\left.+\frac{3042464 \delta ^5}{825}-\frac{92469224 \delta ^6}{5775}-\frac{58081152 \delta ^7}{1925}+\frac{45651968 \delta^8}{385}\right) + O(z^{10}) \bigg]\,.
\ea
$$

\begin{prop}  The Zamolodchikov vacuum conformal block  can be represented as  
\label{prop:Zam} 
\be
\label{f10}
\dot f_{1}^{(1)} = \dot f_0^{(\frac12)} + \dot f_0^{(\frac23)}+ \dot f_0^{(\frac34)}+... 
\; = \; 
\sum_{M=1}^{\infty} \dot f_{0}^{(\tilde\alpha_{M})}\,,
\ee 
where $\dot f_{0}^{(\tilde\alpha_{M})}$ are isolated coefficients of the  classical vacuum conformal blocks  at exceptional points $\tilde\alpha_{M} = \frac{M}{M+1}$.  Equivalently, the relation \eqref{f10}  can be represented  in terms of regular  $\alpha$. Namely, for all $ \alpha_M\in \left(\frac{M-1}{M}, 1\right)_\rn$, $M \in \mathbb{N}$,  
\be
\label{f10-non-exep}
\dot f_{1}^{(1)} = \dot f^{(\alpha_2)}_{2\alpha_2-1} + \dot f^{(\alpha_3)}_{3\alpha_3-2} + f^{(\alpha_4)}_{4\alpha_4-3}+...\; = \;  \sum_{M=1}^{\infty} \dot f_{M\alpha_M- (M-1)}^{(\alpha_{M})}\,,
\ee
where $\dot f_{M\alpha_M - (M-1)}^{(\alpha_{M})}$ are coefficients from $\pr \dot f^{(\alpha_M)}$ \eqref{f_sequence1}.
\end{prop}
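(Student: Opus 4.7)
The plan is to deduce Proposition \bref{prop:Zam} by combining the reconstruction Proposition \bref{prop_iden} with the identifications supplied by Proposition \bref{prop_vacuum}, and then to verify the resulting identity order by order in $z$. First I would establish that the right-hand side of \eqref{f10} is a well-defined formal power series in $z$: the ansatz \eqref{ansatz} and the accompanying pattern $\dot f_0^{(\tilde\alpha_m)}\sim \delta^{m+1} z^{m+1+(1+(-1)^m)/2}$ show that $\dot f_0^{(\tilde\alpha_m)} = O(z^{m+1})$, so at any fixed order $z^N$ only the first $N-1$ summands can contribute and the series truncates coefficient by coefficient.

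Next I would motivate the identity by a formal $\alpha\to 1^-$ limit of the reconstruction formula. For $\alpha \in \big[\frac{M-1}{M},\frac{M}{M+1}\big)_\rn$, Proposition \bref{prop_iden} gives
\be
\dot\cS^{(\alpha)}(\delta,c|z) = \sum_{m=1}^{M-1} \dot f_0^{(\tilde\alpha_m)}\, c^{(m+1)\alpha - m}\,.
\ee
As $\alpha\to 1^-$ each Puiseux exponent $(m+1)\alpha-m$ tends to $1$ from below while $M\to\infty$, so formally the expression collapses to $c\,\sum_{m=1}^{\infty}\dot f_0^{(\tilde\alpha_m)}$. Comparing with the genuine $\alpha=1$ principal part $\dot\cS^{(1)}(\delta,c|z) = c\,\dot f_1^{(1)} + \dot f_0^{(1)}$ and matching the coefficient of $c^1$ reproduces \eqref{f10}, while the remaining $\dot f_0^{(1)}$ sits outside the collapsed tower and is emergent at $\alpha=1$.

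To convert this heuristic into an actual proof, I would verify \eqref{f10} order-by-order in $z$. Fix $N\leq 16$: the coefficient of $z^N$ on the left is extracted directly from \eqref{vacuum-classical} as an explicit polynomial in $\delta$, while the coefficient of $z^N$ on the right is a finite sum of polynomials in $\delta$ assembled from the isolated coefficients \eqref{iso1}--\eqref{iso7}. The identity at order $z^N$ thus reduces to a polynomial identity in $\delta$ which can be checked by the same \emph{Mathematica} routines that produced these expansions. The equivalent form \eqref{f10-non-exep} is then immediate from Proposition \bref{prop_vacuum}, which substitutes $\dot f_{M\alpha_M-(M-1)}^{(\alpha_M)} = \dot f_0^{(\tilde\alpha_M)}$ term-by-term to transform \eqref{f10-non-exep} into \eqref{f10}.

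The main obstacle is making the $\alpha\to 1^-$ limit rigorous rather than heuristic: at $\alpha=1$ the number of terms in $\pr\dot f^{(\alpha)}$ jumps discontinuously from infinite to two (Corollary \bref{cor11}), and the coefficient $\dot f_0^{(1)}$ appears outside the collapsed tower of monomials $c^{(m+1)\alpha-m}$, so no straightforward continuity argument in $\alpha$ can handle this jump. The order-by-order $z$-expansion bypasses the difficulty but is bounded by the $O(z^{16})$ computational reach of footnote \bref{foo1}; a fully general proof would require an analytic mechanism describing how the infinite Puiseux series $\sum_{m}\dot f_0^{(\tilde\alpha_m)}\,c^{(m+1)\alpha-m}$ reorganizes into $c\,\dot f_1^{(1)}+\dot f_0^{(1)}$ as $\alpha\to 1^-$, which is not in evidence at the present level of analysis.
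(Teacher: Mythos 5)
Your core argument matches the paper's own justification of this Proposition: the relation \eqref{f10} is established by direct order-by-order comparison of the explicit expansion \eqref{vacuum-classical} with the sum of the isolated coefficients \eqref{iso1}--\eqref{iso7} (valid to the computed order $O(z^{16})$), and the equivalent form \eqref{f10-non-exep} is obtained from \eqref{f10} by Proposition \bref{prop_vacuum}. Your observation that $\dot f_0^{(\tilde\alpha_M)}=O(\delta^{M+1})=O(z^{M+1})$, so that the infinite sum truncates at every fixed order in $z$, is also the correct reason the right-hand side is well defined, and your honest assessment that the $\alpha\to 1^-$ collapse of Proposition \bref{prop_iden} cannot be promoted to a continuity argument is accurate.

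One correction to your closing remark, however: the ``analytic mechanism'' you say is not in evidence is in fact supplied by the paper in Section \bref{sec:pert}. The paper notes that Proposition \bref{prop:Zam} follows from Proposition \bref{corollary31} at $\tilde\delta=0$, and the proof of the latter (Appendix \bref{app:proof_prop}) works not by taking a limit in $\alpha$ but by expanding each logarithmic block coefficient as a double series \eqref{expansion-gn} in $c$ and $k=p/c$. The leading-in-$c$ coefficients $A_{-1,m}$ of that double series are shown to be \emph{simultaneously} the degree-$m$ homogeneous pieces $P_m$ of the perturbative Zamolodchikov block (evaluating at fixed $k$, eq.~\eqref{A-to-P}) and the principal-part coefficients $f^{(1-\varkappa)}_{1-m\varkappa}$ of the $\alpha$-heavy blocks (evaluating at $k=c^{-\varkappa}$, eq.~\eqref{A-to-f}). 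The identification therefore happens coefficient-function by coefficient-function, with no reorganization of the Puiseux polynomial as $\alpha\to 1^-$ ever required; setting $\tilde\delta=0$, where $P_{M+1}(\delta|z)=a_{M+1,0}(z)\,\delta^{M+1}=\dot f_0^{(\tilde\alpha_M)}$, yields \eqref{f10} directly. This is the route that replaces your heuristic limit, and it is why the discontinuity of the term count at $\alpha=1$ (Corollary \bref{cor11}) is not an obstruction.
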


\noindent The relation \eqref{f10}  is directly seen by recalling the form of the  classical vacuum conformal blocks at exceptional points, see \eqref{iso1}-\eqref{iso7}. The relation \eqref{f10-non-exep} is obtained from \eqref{f10} by virtue of  Proposition \bref{prop_vacuum}, since in the vacuum case one has $ f_{0}^{(\tilde\alpha_{M})} =  f_{M\alpha_M- (M-1)}^{(\alpha_{M})}$ for any $\alpha_M \in \left(\frac{M-1}{M},1\right)_\rn$, $M \in \mathbb{N}$. Note that each $\dot f_{0}^{(\tilde\alpha_{M})}$ in \eqref{f10} is of order $\delta^{M+1}$. It means that assuming the perturbation theory in small $\delta$ the leading term in $\dot f_{1}^{(1)}$ is given by the $\alpha=\half$ vacuum block,\footnote{In fact, this relation was originally  observed in \cite{Fitzpatrick:2014vua}.} while the subleadings are given by the isolated coefficients of the  classical vacuum conformal blocks at higher exceptional points. We extend this observation in  Section \bref{sec:pert}.     

The second and last term in $\pr \dot f^{(1)}$ is quite complicated already in the second order:
\be
\dot{f}_{0}^{(1)} = \frac{2\delta^2 z^4}{25(1 - 22 \delta)^2}\Big[ 1+ 2 z + \frac{(545 + 
\delta (-23514 +\delta (235213 + 16 (36451 - 247612 \delta) \delta)))}{196(1 - 22 \delta)^2}z^2 
+O(z^3)\Big].
\ee 
Contrary to $\dot{f}_{1}^{(1)}$, there is no obvious structure.

\section{Perturbative $\Delta$-expansions and $1/\Delta$-expansions }
\label{sec:pert}

Making conformal dimensions $\Delta  = \{h_i, \tilde h\}$ small or large yields a different type of asymptotic expansions. In particular, the  classical (Zamolodchikov) conformal block $f^{(1)}_1 = f^{(1)}_1(\delta_i, \delta|z)$ can be  expanded in small or large classical dimensions $\delta_i, \delta$. In this section, we show that the corresponding Taylor or Laurent series in $\delta_i, \tdelta$ are related to the $\alpha$-heavy classical conformal blocks which are Puiseux polynomials in $c$. 

Let all external conformal dimensions be different, i.e $\cF = \cF (h_i, \tilde{h}, c|z)$, $i=1,..., 4$ and the logarithmic Virasoro conformal block is expanded as   
\be
\label{general-log-f}
f(h_i, \tilde{h}, c|z) = \sum_{n=1}^{\infty}  g_n(h_i, \tilde{h}, c) z^n =  \sum_{n=1}^\infty  \frac{p_n(h_i, \tilde{h}, c)}{q_n(\tilde{h},c)}\,z^n\,, 
\ee
where $p_n, q_n$ are polynomial functions of their arguments, cf. \eqref{asymp_f}.  Introducing $h_i =  \delta_i c$, $\tilde{h} =  \tilde{\delta} c$  and expanding \eqref{general-log-f} around $c = \infty$ one extracts the leading term $c^1 f^{(1)}_1(\delta_i, \tilde{\delta}|z)$ which coefficient  is the Zamolodchikov conformal  block.  Then, expanding $f^{(1)}_1$ around $\delta_i, \tilde{\delta} = 0$ one obtains the Taylor (perturbative) series, 
\be
\label{perturbative-f1-small}
f^{(1)}_1(\delta_i, \tilde{\delta}|z) =  \sum_{m} P_m(\delta_i, \tilde{\delta} |z)\,,
\ee
where $P_m(\delta_i, \tilde{\delta}|z)$ are homogeneous functions  of degree $m$ in variables $\delta_i, \tilde{\delta}$ (see e.g. \eqref{1-pert}).  In the same way, one  expands around   $\delta_i, \tilde{\delta} = \infty$ to  obtain the Zamolodchikov conformal block as the Laurent (perturbative) series 
\be
\label{perturbative-f1-large}
f^{(1)}_1(\delta_i, \tilde{\delta}|z) = \sum_{k} S_{k}(\delta_i, \tilde{\delta} |z)\,,
\ee
where $S_k(\delta_i, \tilde{\delta}|z)$ are  homogeneous functions of degree $-k$ in variables $\delta_i, \tilde{\delta}$ as well as  power series in $z$ (see e.g. \eqref{1-non-pert}).  In what follows we show that the summation domains in \eqref{perturbative-f1-small} and \eqref{perturbative-f1-large} are $m = 1,..., \infty$ and $k=-1, 0,1,..., \infty$.  The next Proposition explains how the above perturbative expansions are related to  principal parts $\pr f^{(\alpha)}$ at $\alpha \in (0,1)_\rn$ and $\alpha\in (1, \infty)_\rn$.

\begin{prop} 
\label{prop31}
\begin{enumerate}
\item  Let the Zamolodchikov classical block $f^{(1)}_1$ be expanded as \eqref{perturbative-f1-small}.  Fixing  some $m \in \NN$ and $\alpha_m\in \left(\frac{m-1}{m},1\right)_\rn$ one finds 
\be
\label{prop3.1_1}
P_m(\delta_i, \tilde{\delta}|z) = f^{(\alpha_m)}_{m\alpha_m-m+1}(\delta_i, \tilde{\delta}|z)\,,
\ee
where $f^{(\alpha_m)}_{m\alpha_m-m+1}(\delta_i, \tilde{\delta}|z)$ are coefficients from $\pr f^{(\alpha_m)}$  (see Proposition \bref{prop1}).

\item Let the Zamolodchikov classical block $f^{(1)}_1$ be expanded as \eqref{perturbative-f1-large}. Let $k \in  \{-1,0\} \cup \NN$. Fixing some $k \in  \{-1,0\}$ and $\alpha_k\in \left(1,\infty\right)_\rn$ or $k \in \NN$ and $\alpha_k\in \left(1, \frac{k+1}{k}\right)_\rn$ one finds  
\be
\label{prop3.1_2}
S_k(\delta_i, \tilde{\delta}|z) = f^{(\alpha_k)}_{-k\alpha_k+k+1}(\delta_i, \tilde{\delta}|z)\,,
\ee
where $ f^{(\alpha_k)}_{-k\alpha_k+k+1}(\delta_i, \tilde{\delta}|z)$ are coefficients from $\pr f^{(\alpha_k)}$ (see Proposition \bref{prop2}).
\end{enumerate}
\end{prop}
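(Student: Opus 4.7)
The strategy is to relate the $\alpha$-heavy and the $\alpha=1$ (Zamolodchikov) asymptotic expansions via the tautology $h_i = \delta_i c^\alpha = \hat\delta_i\cdot c$, $\tilde h = \tdelta c^\alpha = \hat{\tdelta}\cdot c$, with the auxiliary classical dimensions $\hat\delta_i := \delta_i c^{\alpha-1}$ and $\hat{\tdelta} := \tdelta c^{\alpha-1}$. Substituting into the standard Zamolodchikov large-$c$ expansion one writes
\be
f(\delta_i c^\alpha, \tdelta c^\alpha, c|z) = c\, f^{(1)}_1(\hat\delta_i,\hat{\tdelta}|z) + f^{(1)}_0(\hat\delta_i,\hat{\tdelta}|z) + c^{-1} f^{(1)}_{-1}(\hat\delta_i,\hat{\tdelta}|z) + \cdots\,,
\ee
and the idea is to expand each $f^{(1)}_j$ in the auxiliary variables (Taylor for $\alpha<1$, Laurent for $\alpha>1$), substitute $\hat\delta = \delta c^{\alpha-1}$ using homogeneity, regroup powers of $c$, and identify the coefficient of the target exponent with the one prescribed by Propositions \bref{prop1} and \bref{prop2}.

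For Part 1 one splits $f^{(1)}_j = \sum_{m'\geq 0} P^{(j)}_{m'}$ into pieces homogeneous of degree $m'$ in the $\hat\delta$'s, so that $P^{(1)}_{m'} \equiv P_{m'}$ by definition of \eqref{perturbative-f1-small}. The $(j,m')$-term contributes a monomial $\propto c^{j+m'(\alpha_m-1)}$, and matching to the target $m\alpha_m - m + 1$ yields $j = 1 + (m-m')(\alpha_m - 1)$. The constraint $j\in\ZZ_{\leq 1}$ together with $\alpha_m - 1 < 0$ forces $m'\leq m$; writing $\alpha_m = p/q$ in lowest terms, integrality of $j$ forces $q \mid (m-m')$ (using $\gcd(q,p-q)=1$). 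The key arithmetic step is that the hypothesis $\alpha_m > (m-1)/m$ translates into $m(q-p)<q$, i.e.\ $m < q$, so the only non-negative multiple of $q$ in $[0,m]$ is zero; hence only $m' = m, j=1$ survives and the coefficient of $c^{m\alpha_m - m+1}$ is precisely $P_m$, proving \eqref{prop3.1_1}.

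Part 2 follows by the symmetric argument with Laurent expansions $f^{(1)}_j = \sum_{k'\geq -1} S^{(j)}_{k'}$ into pieces homogeneous of degree $-k'$, with $S^{(1)}_{k'} \equiv S_{k'}$. The $(j,k')$-term now contributes $\propto c^{j-k'(\alpha_k-1)}$, and matching to $-k\alpha_k + k + 1$ yields $j = 1+(k'-k)(\alpha_k - 1)$. Since $\alpha_k - 1 > 0$, the bound $j \leq 1$ forces $k' \leq k$; writing $\alpha_k = p/q$ in lowest terms again forces $q\mid (k-k')$. The range constraint $\alpha_k \in (1,(k+1)/k)_\rn$ for $k\in\NN$ (and $\alpha_k\in(1,\infty)_\rn$ for $k\in\{-1,0\}$) is precisely what rules out any nonzero multiple of $q$ in the admissible interval of $k-k'$, leaving only $j=1, k'=k$ and giving $f^{(\alpha_k)}_{-k\alpha_k+k+1} = S_k$.

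The main obstacle is the combinatorial bookkeeping in Part 2, which is more delicate than in Part 1. In Part 1 the Taylor-series constraint $m'\geq 0$ is automatically compatible with $m<q$, closing the argument cleanly. In Part 2 the Laurent expansion admits $k' = -1$ (linear-in-$\hat\delta$ growth), and one must check that no subleading $f^{(1)}_{-j}$ with $j\geq 1$ can resonate at the target exponent for $\alpha_k$ inside the prescribed open interval; equivalently, one needs to exclude rational solutions $\alpha_k = (n+1)/n$ of the divisibility constraint with $n$ compatible with $k' \geq -1$. I expect this to reduce to showing that such resonances can only be hit at the exceptional points $\overset{\approx}{\alpha}_N = (N+1)/N$, which are excluded by the openness of the interval $(1, (k+1)/k)_\rn$, and at the analogous endpoints on the $\alpha<1$ side for Part 1.
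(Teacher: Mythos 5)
Your route is genuinely different from the paper's (you re-expand the Zamolodchikov tower $f^{(1)}_1, f^{(1)}_0, f^{(1)}_{-1},\dots$ in the rescaled classical dimensions and match exponents arithmetically, whereas the paper expands each block coefficient once as a double series in $(c,\,k=p/c)$ and reads off both sides as restrictions of the \emph{same} coefficients $A_{-1,m}$ along the paths $k=\mathrm{const}$ and $k=c^{-\varkappa}$), and your Part~1 arithmetic is sound. But Part~2 has a genuine gap, and your proposed fix is wrong. Within your own framework you allow every $f^{(1)}_j$ a Laurent piece $S^{(j)}_{-1}$ of homogeneous degree $+1$. Running your own divisibility analysis, the resonance $q\mid(k-k')$ with $0<k-k'\le k+1$ and $\alpha_k=p/q\in\bigl(1,\tfrac{k+1}{k}\bigr)_\rn$ forces $q=k+1$, $p=k+2$, $k'=-1$, $j=0$: i.e.\ at $\alpha_k=\tfrac{k+2}{k+1}=\overset{\approx}{\alpha}_{k+1}$ the degree-$(+1)$ piece of $f^{(1)}_{0}$ lands exactly on the target exponent $\tfrac{1}{k+1}$ (and similarly at $\alpha=2$ for $k=0$). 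These exceptional points are \emph{interior} to $\bigl(1,\tfrac{k+1}{k}\bigr)_\rn$ — only $\tfrac{k+1}{k}$ itself is an excluded endpoint — so openness of the interval does not remove them. To close the argument you need the structural fact that the degree-$d$ homogeneous part of $f^{(1)}_j$ vanishes for $d>j$ (in particular $f^{(1)}_0$ has no linearly growing piece); this is exactly the constraint $m\ge n$ on the double-series domain in \eqref{main_pui}, which the paper extracts from the Newton polygons and which is why its proof never meets this resonance. Alternatively, since Proposition \bref{prop2} already asserts that $f^{(\alpha_k)}_{-k\alpha_k+k+1}$ is one and the same function on the whole interval, you could evaluate at any $\alpha_k$ with denominator $q>k+1$ (where your arithmetic does force $k'=k$, $j=1$) and transport the identity to the resonant points by that $\alpha$-independence.

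A secondary but non-cosmetic point: the opening "tautology" — substituting $\hat\delta_i=\delta_i c^{\alpha-1}$ into the asymptotic expansion \eqref{stand_block} and regrouping — is a rearrangement of a double asymptotic series with $c$-dependent parameters, and is precisely where the analytic content lives. The paper justifies it by working with the exact rational coefficients $g_n(\gamma_i,\tilde\gamma\,|\,kc,c)$, establishing a convergent two-variable expansion \eqref{expansion-gn} with a controlled domain of exponents, and handling convergence by rescaling the classical dimensions; your write-up should at minimum reduce the regrouping to that (or an equivalent) statement about the exact coefficients rather than about the formal Zamolodchikov series.
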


\noindent A proof is given in Appendix \bref{app:proof_prop}. We see that  coefficients from $\pr f^{(\alpha)}$  are in one-to-one correspondence with $P_m$ for $\alpha \in (0,1)_\rn$ and with $S_k$ for $\alpha\in (1, \infty)_\rn$. Also, the Proposition sets  summation domains in \eqref{perturbative-f1-small} and  \eqref{perturbative-f1-large} as $m\in \NN$ and $k \in  \{-1,0\} \cup \NN$, i.e. they are the same as in  \eqref{lead_exp1} and \eqref{lead_exp2}. 

\begin{corollary}
\label{corollary31} Let  $m\in \NN$ and $\alpha_m\in \left(\frac{m-1}{m}, 1\right)_\rn$.  Then, the Zamolodchikov conformal block expanded around $\delta_i, \tilde{\delta} = 0$ can be represented  as 
\be
\label{perturbative-f1-small2}
f^{(1)}_1(\delta_i, \tilde{\delta}|z) =  \sum_{m=1}^\infty f_{m\alpha_m - (m-1)}^{(\alpha_m)}(\delta_i, \tilde{\delta}|z) \,.
\ee
Let $k \in \{-1,0\} \cup \NN$:  $\alpha_k\in (1,\infty)_\rn$ if $k\in \{-1,0\}$; $\alpha\in \left(1, \frac{k+1}{k}\right)_\rn$ if $k\in \NN$.  Then, the Zamolodchikov classical  block expanded around $\delta_i, \tilde{\delta} = \infty$ can be represented as 
\be
\label{perturbative-f1-large2}
f^{(1)}_1(\delta_i, \tilde{\delta}|z) = \sum_{k=-1}^\infty f_{-k\alpha_k+k+1}^{(\alpha_k)}(\delta_i, \tilde{\delta}|z)\,.
\ee
\end{corollary}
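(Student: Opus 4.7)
The plan is to treat the Corollary as a termwise repackaging of the perturbative series \eqref{perturbative-f1-small} and \eqref{perturbative-f1-large} via Proposition \bref{prop31}, with the only real work being to pin down the correct summation ranges. First I would simply substitute the identifications \eqref{prop3.1_1} and \eqref{prop3.1_2} into the Taylor and Laurent expansions of $f^{(1)}_1$, choosing, for each fixed $m$ (resp.\ $k$), any admissible representative $\alpha_m \in \left(\tfrac{m-1}{m},1\right)_\rn$ (resp.\ the appropriate interval for $\alpha_k$). Since the homogeneous component $P_m$ or $S_k$ is intrinsic to $f^{(1)}_1$ and does not depend on $\alpha_m$ or $\alpha_k$, Proposition \bref{prop31} guarantees that the right-hand side is well-defined, i.e.\ the coefficient function $f^{(\alpha_m)}_{m\alpha_m - m+1}$ is the \emph{same} function for every $\alpha_m$ in the prescribed interval (this is exactly the column-equality statement \eqref{columneq1}, respectively \eqref{columneq2}).

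Next I would argue that the summation ranges are precisely $m \in \mathbb{N}$ in \eqref{perturbative-f1-small2} and $k \in \{-1,0\}\cup \mathbb{N}$ in \eqref{perturbative-f1-large2}. The lower cut $m\geq 1$ follows from the fact that the bare Virasoro block satisfies $\cF(0,0,c|z) = 1$ (see footnote \bref{F=1}), whence $f(0,0,c|z) = 0$, so its large-$c$ coefficient $f^{(1)}_1$ must vanish at $\delta_i = \tilde\delta = 0$ and therefore $P_0 = 0$. For the expansion at infinity, the starting value $k = -1$ is dictated by Proposition \bref{prop2}: the highest-order coefficient surviving in $\pr f^{(\alpha)}$ for $\alpha\in(1,\infty)_\rn$ is $f^{(\alpha)}_\alpha$, and solving $-k\alpha + k + 1 = \alpha$ gives $k = -1$. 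The absence of $k \leq -2$ terms is exactly the statement that no Puiseux exponent greater than $\alpha$ appears in $\pr f^{(\alpha)}$, which is again a direct consequence of Proposition \bref{prop2}.

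Finally I would check that the listed coefficients in \eqref{f_sequence1} and \eqref{f_sequence2} are exhausted as $m$ (resp.\ $k$) ranges over all admissible values: as $m$ runs through $\mathbb{N}$ and $\alpha_m$ sweeps the union $\bigcup_{m \geq 1} \left(\tfrac{m-1}{m},1\right)_\rn$, every nontrivial coefficient $f^{(\alpha_m)}_{m\alpha_m - m+1}$ of $\pr f^{(\alpha_m)}$ (other than $f^{(\alpha_m)}_0$, which is accounted for by $m=1$ only through the universal value $f^{(-)}_0$) appears exactly once, matching one $P_m$ under \eqref{prop3.1_1}. The analogous bookkeeping for $k$ completes the identification.

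The main obstacle I anticipate is verifying the boundary of the summation range for the large-$\delta$ expansion, since one must rule out any homogeneous component of degree $\geq 2$ in $\delta_i, \tilde\delta$ in the Laurent series of $f^{(1)}_1$; once one accepts Proposition \bref{prop2} this is automatic from the leading exponent being exactly $c^\alpha$, but without invoking it one would need an independent argument for the asymptotic growth of the Zamolodchikov block at large dimensions. Apart from this point, everything else is an immediate combinatorial consequence of Proposition \bref{prop31}.
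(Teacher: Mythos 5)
Your overall route coincides with the paper's: the Corollary is obtained by summing the termwise identifications of Proposition \bref{prop31} over $m$ (resp.\ $k$), and the well-definedness of each summand for any admissible choice of $\alpha_m$, $\alpha_k$ is exactly the column-equality content of Propositions \bref{prop1} and \bref{prop2}. Where you diverge is in how the summation ranges are justified. The paper does not argue them separately: they fall out of the double Puiseux expansion \eqref{expansion-gn} established in Appendix \bref{app:proof_prop}, whose structure (the $c^{+1}$ row starting at $k^{1}$ in the expansion around $k=0$, and at $k=-1$ in the expansion around $k=\infty$) is fixed by the Newton-polygon properties of Appendix \bref{app:forth}. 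Your argument for the lower cut $k\geq -1$ — that a $k\leq -2$ component would force a Puiseux exponent above $c^{\alpha}$ in $\pr f^{(\alpha)}$, contradicting Proposition \bref{prop2} — is a legitimate contrapositive of the same content, though, as you note yourself, it leans on the exponentiation statement rather than on an independent growth estimate.

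The one step I would not accept as written is the deduction $P_0=0$ from $\cF(0,0,c|z)=1$. The constant term of the Taylor expansion of $f^{(1)}_1$ is the limit of the \emph{generically computed} classical block as $\delta_i,\tilde{\delta}\to 0$, whereas $\cF(0,0,c|z)=1$ controls the block computed \emph{after} setting the dimensions to zero; Section \bref{sec:sing} is devoted to showing that these two orders of operations need not commute. Indeed $\tilde{\delta}=0$ is the classical Kac dimension $\tilde{\delta}_1$, and the paper exhibits precisely this failure for the subleading coefficient: $f^{(1)}_0(0,0|z)\neq 0$ in eq.\ \eqref{noncom} even though $f(0,0,c|z)=0$ identically. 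That the leading $c^1$ coefficient happens to be continuous at the origin with vanishing limit is true, but it is exactly what needs to be checked rather than a consequence of $\cF(0,0,c|z)=1$ alone. The clean fix is to read $m\geq 1$ off the structural expansion \eqref{expansion-gn}, as the paper does; alternatively you would need to verify directly that the order-$c^1$ parts of the coefficients $g_n$ are regular and vanish at $\delta_i=\tilde{\delta}=0$.
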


Coming back to the case of equal external dimensions ($\delta_i \equiv  \delta$), one finds out that a given coefficient $P_m(\delta, \tilde{\delta}|z)$ does not contain negative degrees of $\tilde{\delta}$ while remaining a homogeneous function of degree $m$ in both arguments.\footnote{More generally, this is true for pairwise equal dimensions,  $h_1 = h_2$ and $h_3 = h_4$.} Therefore,  it is a homogeneous polynomial of degree $m$, i.e. 
\be
\label{P_m}
P_m(\delta, \tilde{\delta}|z) = \sum_{k+l = m} a_{k,l}(z)\, \delta^{k} \tilde{\delta}^{l}\,,
\qquad\;\;
k,l  \in \mathbb{N}_0\,,
\quad m-1\in \mathbb{N}\,.
\ee
Examples of such $P_m$ are given in Appendix \bref{app:vir_num}, where Proposition \bref{corollary31} is illustrated by eqs. \eqref{3/5-heavy} and \eqref{1-pert}. Finally, note that  Proposition \bref{prop:Zam} directly follows from Proposition \bref{corollary31}  when $\tilde{\delta} = 0$. In this case, $P_m(\delta|z) = a_{m,0}(z) \delta^m$, cf. \eqref{P_m}.

\section{$\cW_3$  conformal blocks}
\label{sec:w3}

Let us briefly recall basic facts regarding $\cW_3$ conformal symmetry and introduce the $\cW_3$ conformal block function (for review see e.g.  \cite{Bouwknegt:1992wg,10.1007/BFb0105278}).  The commutation relations of the $\cW_3$ algebra are \cite{Zamolodchikov:1985wn} 
\be
\left[L_n, L_m\right]=(n-m) L_{m+n}+\frac{c}{12}{n\left(n^2-1\right)\delta_{n+m,0}} \,,
\qquad
\left[L_n, W_m\right]=(2 n-m) W_{m+n}\,,
\ee
\be
\ba{l}
\dps
\frac{2}{9}\left[W_m, W_n\right]=c\,\frac{m(m^2-1)(m^2-4)}{360}\delta_{m+n,0}
\vspace{2mm}
\\
\dps
\hspace{3mm}+(m-n)\left\{ \frac{(m+n+3)(m+n+2)}{15}  - \frac{(m+2)(n+2)}{6} \right\}L_{m+n}
+ \frac{16(m-n)}{22+5c} \Lambda_{m+n}\,,
\ea
\ee
where $\Lambda_{n+m}$ is given by 
\be
\Lambda_m = \gamma_m L_m + \sum_{n\in \mathbb{Z}} :L_{m-n}L_n: \,, 
\ee  
where $\gamma_{2k} = (1-k)(1+k)/5$ and $\gamma_{2k+1} = (1-k)(2+k)/5$, $m,n,k \in \mathbb{Z}$, and $:\;:$ stands for the normal ordering. The Verma module is defined by a highest-weight vector $\ket{h,w}$ satisfying  the following defining  conditions 
\be
\label{W_Verma}
\ba{l}
\dps
L_{0}\ket{h,w} = h \ket{h,w},
\qquad
W_{0}\ket{h,w} = w \ket{h,w},
\vspace{2mm}
\\
\dps
 L_{n}\ket{h,w} = 0\,,
\qquad
W_{m}\ket{h,w} = 0\,,
\qquad
m,n \in \mathbb{Z}^+ \,,
\ea
\ee
with two (independent) parameters $h,w$ being a conformal dimension and a spin-3 charge. The module is spanned by the basis monomials ${Y} \equiv  L_{-\boldsymbol{\lambda}} W_{-\boldsymbol{\mu}}:= L_{-\lambda_1}... \,L_{-\lambda_k}W_{-\mu_1}... \, W_{-\mu_l}$, where $\lambda_i, \mu_i \in \mathbb{Z}^+$ . The sum $|Y| := \lambda_1 + ... + \lambda_k + \mu_1 + ... + \mu_l$ is a level: $L_0 Y \ket{h,w} = (|Y|+h)Y \ket{h,w}$. The inner product is defied by 
\be
\label{W3-conj}
L_n^\dagger = L_{-n}\,,
\qquad
W_n^\dagger = -W_{-n}\,,
\ee
which means that $h \in \RR$ and $w \in i\RR$.

The operator-state correspondence between  highest-weight vectors $\ket{h,w}$ and  primary operators $\cO_{h,w}(z)$ in $\cW_3$ CFT$_2$ can be introduced in the standard fashion by means of the   general vertex operator construction (see e.g. \cite{Goddard:196381,Gaberdiel:1999mc}), 
\be
\label{VOA}
\ba{l}
\dps
\ket{h,w} = \lim_{z, \bar z\to 0} \cO_{h,w}(z, \bar z)\ket{0}\,,
\vspace{2mm}
\\
\dps
\bra{h,w} = \lim_{z, \bar z\to 0} (-1/\bar z^2)^h (-1/z^2)^{\bar h}\bra{0} \cO^*_{h,w}(1/\bar{z}, 1/z)\,, 
\ea
\ee
where   $\cO_{h,w}^*(z, \bar z) \equiv \cO_{h,-w}(z, \bar z)$ is the charge conjugated  operator which therefore have an opposite spin-3 charge, $*^2 = 1$.  In particular, the 2-point correlation function  is related to the inner product as 
\be
\label{2pt}
1=\braket{h, w|h, w} = \lim_{z, \bar z\to \infty} (-z^{2})^{h} (-\bar{z}^{2})^{\bar h} \braket{\cO_{h,-w}(\bar z, z) \cO_{h,w}(0)}\,.
\ee
Therefore, the inner product is defined to have zero total spin-3 charge.\footnote{In what follows anti-holomorphic labels will be  suppressed for brevity.}  
 
As discussed in the Introduction, the leftover freedom in defining matrix elements and correlation functions  related to the presence of $(W_{-1})^n$ descendants can be fixed by imposing extra constraints which are  null-conditions in the respective Verma modules (for more details see  e.g. \cite{Fateev:1987vh,Belavin:2016wlo}).
  
\begin{definition}
\label{def_semi}
A semi-degenerate primary operator $\cO_{h,w}(z)$  satisfies the null-state condition on the 1st level \cite{Bowcock:1992gt}:  
\be
\label{semidegenerate1}
W_{-1} \cO_{h,w}(z)=\frac{3 w}{2 h} L_{-1} \cO_{h,w}(z)\,,
\ee
where $h$ and $w$ are polynomially related as 
\be
\label{degW}
\left[\frac{32}{22+5c}\left(h+\frac{1}{5}\right)-\frac{1}{5}\right]h^2 = w^2\,.  
\ee
\end{definition}

\subsection{Conformal block via matrix elements}
\label{sec:mat_block}

Consider the 4-point  function of arbitrary primary operators $\left\langle \cO_{a_1}(z_1)\cO_{a_2}(z_2)\cO_{a_3}(z_3)\cO_{a_4}(z_4) \right\rangle$. Collectively denoting the weights as $a =(h,w)$ one then considers the operator product expansion (OPE)   
\be
\label{OPE_W3}
\cO_{a_i}(z_i) \cO_{a_j}(z_j) = \sum_{a_k} f_{a_i a_j}^{a_k}\,\mathbb{C}_{a_ia_ja_k}(z_{ij}, \partial_j) \cO_{a_k}(z_j)\,,
\ee  
where $f_{a_i a_j}^{a_k}$ are structure constants and  OPE coefficients are packaged into differential operators $\mathbb{C}_{a_ia_ja_k}$ which are defined by conformal transformation properties of the left operator product. In general, the OPE coefficients depend on the fusion multiplicities related to the presence of  $W_{-1}$ and $W_{-2}$ descendants. Fixing  the $s$-channel one introduces the $\cW_3$ conformal blocks $\cF$ by acting twice with the OPEs, 
\be
\ba{l}
\dps
\left\langle \cO_{a_1}(z_1)\cO_{a_2}(z_2)\cO_{a_3}(z_3)\cO_{a_4}(z_4) \right\rangle = 
\sum_{s,p} f_{a_1a_2}^{s}f_{a_3a_4}^{p}\mathbb{C}_{a_1a_2 s}(z_{12}, \partial_2) \mathbb{C}_{a_3a_4 p}(z_{34}, \partial_4) 
\left\langle \cO_s(z_2)  \cO_p(z_4) \right\rangle
\vspace{2mm}
\\
\dps
= \sum_{p} f_{a_1a_2}^{p^*}f_{a_3a_4}^{p}\mathbb{C}_{a_1a_2 p^*}(z_{12}, \partial_2) \mathbb{C}_{a_3a_4p}(z_{34}, \partial_4) 
\left\langle \cO^*_p(z_2)  \cO_p(z_4) \right\rangle = \sum_{p}f_{a_1a_2}^{p^*}f_{a_3a_4}^{p}\, \cF(h_i,w_i, \tih_p, \tiw_p, c|z_i)\,,  
\ea
\ee 
where $p^* = (\tih_p,-\tiw_p)$ is charge conjugated with respect to $p = (\tih_p,\tiw_p)$. The form of the   2-point function \eqref{2pt}  in the last line suggests using  {\it the double line notation} to designate that the exchange channel is defined by two operators charge-conjugated to each other, $*^2 = 1$, see fig. \bref{fig:comb_W3}. 

Within the operator formulation the same conformal block expansion  can be obtained by inserting  a projector onto the Verma module $\mathcal{V}_{\tilde{h}, \tilde{w}}$ with weights $(\tilde h, \tilde w)$  between pairs $\cO_1\cO_2$ and $\cO_3\cO_4$. In this way, sending the points $(z_1,z_2,z_3,z_4)$  to $(\infty, 1,z,0)$ and introducing a bare conformal block $\wb$ by factoring out a leading power of $z$ in $\cF$, one  finds  \cite{Mironov:2009dr,Kanno:2010kj,Fateev:2011hq} 
\be
\label{blockW}
\ba{l}
\wb(h_i,w_i, \tih, \tiw,   c|z) = 
\vspace{3mm}
\\
\dps
\hspace{10mm}= \sum_{|{Y}|=\left|{Y}^{\prime}\right|}z^{|{Y}|}\; 
\Gamma^*(h_{1,2}, w_{1,2}; \tilde{h}, \tilde{w}|{Y})
\; Q_{\tilde{h},\tilde{w}}^{-1}\left({Y}, {Y}^{\prime}\right) \;
\Gamma(h_{3,4}, w_{3,4};\tilde{h}, \tilde{w}|Y')
\vspace{1mm}
\\
\dps
\hspace{20mm}\equiv 1+ B_1 z+B_2 z^2 +B_3 z^3+... \;= \; \sum_{n=0}^\cN B_n z^n+ O(z^{\cN+1})\,.
\ea
\ee
Here, $Q^{-1}_{\tilde{h}, \tilde{w}}\left({Y}, {Y}^{\prime}\right)$ is the inverse Gram  matrix of the Verma module $\mathcal{V}_{\tilde{h},\tilde{w}}$, 
\be
\label{gram}
Q_{\tilde{h}, \tilde{w}}\left({Y}, {Y}^{\prime}\right) = \bra{\tilde{h}, \tilde{w}}Y^\dagger Y'\ket{\tilde{h}, \tilde{w}}\,,
\ee
and the 3-point matrix elements   
\be
\label{3ptW}
\ba{l}
\dps
\Gamma^*(h_{1,2}, w_{1,2}; \tilde{h}, \tilde{w}|{Y}) = \big\langle h_1, -w_1|  \cO_{2}(1)  Y |\tilde{h}, \tilde{w}\big\rangle\,,
\vspace{2mm} 
\\
\dps
\;\Gamma(h_{3,4}, w_{3,4};\tilde{h}, \tilde{w}|Y)= \big\langle \tilde{h}, \tilde{w}|   Y^{\dagger}\cO_{3}(1)| h_4, w_4 \big\rangle\,, 
\ea
\ee
are associated to  the 3-point functions of two primary  and one secondary operators.  We notice that dual  matrix elements are not equal to each other, $\Gamma^* \neq  \Gamma$, which is different from Virasoro case, where $\Gamma^*=  \Gamma$. The expansion coefficients  $B_n = B_n(h_i, w_i, \tih, \tiw, c)$ are rational functions.\footnote{The  literature contains  sign conventions for conjugating $W_n$ operators that differ from \eqref{W3-conj}, e.g. see \cite{Kanno:2010kj} therein (19) and (30). However, it can be verified that the conformal block coefficients do not depend on the choice of a specific sign convention, see Appendix  \bref{app-signs}.}

If none of five involved representations are reducible than the conformal block depends on the free parameters which are particular 3-point matrix elements 
\be
\Gamma^*_n = \bra{h_1, -w_1} \cO_2(1) (W_{-1})^n \ket{\tih,\tilde{w}} 
\quad \text{and} \qquad 
\Gamma_n = \bra{\tih,\tilde{w}} (W_{1})^n \cO_3(1)  \ket{h_4,w_4}\,,
\ee 
where $n\in \mathbb{Z}^+_0$. It means that the $\cW_3$ conformal block can be  defined unambiguously only for particular conformal dimensions (in which case only the 3-point structure constants  $\Gamma^*_0$ and $\Gamma_0$ remain undetermined). In fact, taking one of three  operators in each 3-point function to be semi-degenerate is sufficient to lift the degeneracy  (see Appendix \bref{app:W3-deg}). It follows that the conformal block expansion is well-defined for two families of operators: (1) two of four external operators are semi-degenerate, the intermediate operator is arbitrary; (2) all four external operators are arbitrary, the intermediate operator is semi-degenerate. It is worth noting that in general the $\cW_3$ vacuum conformal block belongs to  the family (2) with pairwise equal external operators. Indeed, the vacuum block it is defined by choosing an identity exchange operator $(h, w=0)$ which is the simplest semi-degenerate operator \eqref{degW}. Specifying two of external operators to be semi-degenerate the vacuum block may also belong to  the family (1).   

In what follows we choose the operators as follows (see fig. \bref{fig:comb_W3}):
\be
\label{W3opers}
\ba{l}
\text{external operators:}\hspace{15mm} \cO_{1,3} = \cO_{h,w}\;\; \text{and} \;\; \cO_{2,4} = \cO^*_{h,w}
\vspace{2mm}
\\
\text{intermediate operator:}\hspace{29mm} \tilde \cO = \tilde \cO_{\tih,0} 
\ea
\ee
in which case the conformal block coefficients are drastically simplified (another simple choice of operators is discussed in Appendix \bref{app:B1}).  Note that the corresponding 4-point correlation function \eqref{4ptW} is invariant under  $w \to -w$ since $\cO^*_{h,w} = \cO_{h,-w}$. This symmetry is inherited for the   conformal blocks which, therefore, depend on $w^2$ that makes the block coefficients real functions (in other cases the $w$-dependence may lead to complex coefficients since $w\in \mathbb{C}$ for general $h$, cf. \eqref{degW}). 

By combining analytical and numerical calculations, we have succeeded to find explicitly  the  first three coefficients (up to $O(z^4)$ in \eqref{blockW}). The first coefficient is given by\footnote{The number of terms in the next coefficient $B_2$ is comparable to the number of terms  in the Virasoro block coefficient $F_4$. Our {\it Mathematica} code (on the standard laptop, dozens of hours) can help  to calculate the coefficient $B_3$. Higher-order coefficients are not yet available.}  
\be
B_1 =  \tilde{h}\frac{16(\tilde{h}+h) +2-c }{32\tilde{h} +2-c}\,,
\ee
see Appendix \bref{app:B1}. The pole $\tih = (c-2)/32$ here is the zero of the determinant of the Kac matrix of $\cW_3$ at level one \cite{Mizoguchi:1988vk} (see also e.g. \cite{Afkhami-Jeddi:2017idc,Carpi:2019szo}).   

\subsection{Large-$c$ asymptotic expansions}
\label{sec:structureW}

Similar to \eqref{def}  one  defines the $\cW_3$ logarithmic conformal block  as 
\be
\label{def-log-W3}
\cb(h,w, \tih,  c|z) \coloneqq \log[\wb(h,w, \tih, c|z)] 
\ee
and considers  its large-$c$ asymptotics by assuming that  dimensions and charges are given by  
\be
\label{h_th_aW}
\ba{l}
\dps
\; \alpha \in (0,1)_\rn:\quad\; h = \delta c^\alpha\,,
\qquad
w^2 =  - \frac{\delta^2}{5}  c^{2\alpha}\,,
\hspace{12mm}
\tih = \tdelta c^\alpha\,, 
\vspace{2mm}
\\
\dps
\alpha\in (1, \infty)_\rn:\quad\; h = \delta c^\alpha\,,
\qquad
w^2 = \frac{32 \delta^3}{5}  c^{3\alpha-1} \,,
\qquad
\tih = \tdelta c^\alpha\,,  
\vspace{2mm}
\\
\dps
\hspace{12mm}\alpha =1: \quad\; h = \delta c\,,
\qquad
w^2 = - \frac{\delta^2}{5}(1-32 \delta)  c^{2} \,,
\qquad
\tih = \tdelta c\,,
\ea
\ee
cf.   \eqref{degW}. Note that regions $\delta> 0$ at $\alpha\in (1, \infty)_\rn$ and $\delta< 1/32$ at $\alpha=1$ are manifestly non-unitary. This agrees with the unitary bound $h\geqslant (c-2)/32$ found in \cite{Afkhami-Jeddi:2017idc}.

Introducing the $\alpha$-heavy  logarithmic conformal block 
\be
\label{def-log-W3a}
\cb^{(\alpha)}(\delta, \tdelta |z) = \sum_{n=1}^\cN G^{(\alpha)}_n(h, \tilde h, c) z^n + O(z^{\cN+1})\,, 
\ee
and expanding near $c = \infty$ one finds  the large-$c$ asymptotics   
\be
\label{exp_prin2}
\wb(h,w, \tih, c|z)  \; \simeq \;  e^{\ws^{(\alpha)}(c|z)} \left[1 + O(1/c^\beta)\right]\,,
\qquad \exists\, \beta >0\,,
\ee
where the Puiseux polynomial  $\ws^{(\alpha)}(c|z) =  \pr \cb^{(\alpha)}(\delta, \tdelta |z)$ is now the classical $\cW_3$ conformal block. It turns out that the main results for  the principal part of the large-$c$ logarithmic $\cW_3$ conformal block are similar to  those obtained in the Virasoro case (see propositions \bref{prop1} and \bref{prop2}). Nevertheless, here we explicitly  formulate the analogous  proposition but in less detail. 

The  primary operators are listed in \eqref{W3opers}; the heaviness  parameter $\alpha\in \QQ$;  the conformal dimensions and spin-3 charges  are given by \eqref{h_th_aW}. The domains of $\alpha$ and their splitting in half-open intervals are the same as in Section \bref{sec:large_c}. The following Propositions are valid up to  $O(z^4)$ (see Appendix \bref{app:W3}). 

\begin{prop}
\label{prop1WX}

\begin{enumerate}
\item Let\; $\alpha\in (0,1)_\rn$ and $M\in \mathbb{N}$ be such that $\alpha \in \big(\frac{M-1}{M}, \frac{M}{M+1} \big]_\rn$. Then, the classical conformal block in  \eqref{exp_prin2} is the following Puiseux polynomial:   
\be
\label{lead_exp1W}
\ws^{(\alpha)}(\delta, \tilde\delta, c|z) = \sum_{m=1}^{M} \cb_{m \alpha -m+1}^{(\alpha)}\, c^{m \alpha -m+1}+\cb_0^{(\alpha)}\;.
\ee
The coefficient functions in  \eqref{lead_exp1W}, where regular $\alpha_M \in \big(\frac{M-1}{M} , \frac{M}{M+1} \big)_\rn$ and $M=1,2,...\,$ satisfy the relations
\be
\label{columneq1W}
\ba{l}
\cb_{\alpha_1} ^{(\alpha_1)} = \cb_{\alpha_2} ^{(\alpha_2)} = \cb_{\alpha_3} ^{(\alpha_3)} = \ldots\;,
\vspace{2mm}
\\
\dps
\cb_{2\alpha_1-1} ^{(\alpha_1)} = \cb_{2\alpha_2-1} ^{(\alpha_2)} = \cb_{2\alpha_3-1} ^{(\alpha_3)} = \ldots\;,
\vspace{2mm}
\\
\dps
\;\;\vdots
\vspace{2mm}
\\
\dps
\cb_{0}^{(\alpha_1)} = \cb_{0}^{(\alpha_2)} = \cb_{0} ^{(\alpha_3)} = \ldots\;.
\ea
\ee 
For all lines  except the last one we can extend the interval $\big(\frac{M-1}{M} , \frac{M}{M+1} \big)_\rn$ to $\big(\frac{M-1}{M} , \frac{M}{M+1} \big]_\rn$. At the right endpoints which are exceptional points $\tilde \alpha_M = \frac{M}{M+1}$ the last line contains essentially different coefficient functions:
\be
\label{f0_exceptional1W}
\cb_{0}^{(\tilde \alpha_1)} \neq  \cb_{0}^{(\tilde \alpha_2)} \neq  \cb_{0} ^{(\tilde\alpha_3)} \neq  ...\;. 
\ee
 
\item Let $\alpha\in(1, \infty)_\rn$ and $N\in \mathbb{N}$ be such that $\alpha \in \big[\frac{N+1}{N} , \frac{N}{N-1} \big)_\rn$. Then, the classical conformal block in  \eqref{exp_prin2} is the following Puiseux polynomial: 
\be
\label{lead_exp2W}
\ws^{(\alpha)}(\delta, \tilde\delta, c|z) = \sum_{n=-1}^{N-1} \cb_{-n\alpha +n+1}^{(\alpha)}\, c^{-n\alpha +n+1}+\cb_0^{(\alpha)}\,.
\ee
The  coefficients in \eqref{lead_exp2W}, where $\alpha_N \in \big(\frac{N+1}{N} , \frac{N}{N-1} \big)_\rn$ and $N=1,2,...\,$, satisfy the relations 
\be
\label{columneq2}
\ba{l}
\ldots  = \cb_{\alpha_3} ^{(\alpha_3)} = \cb_{\alpha_2} ^{(\alpha_2)} =  \cb_{\alpha_1} ^{(\alpha_1)}\;,
\vspace{2mm}
\\
\dps
\ldots  =  \cb_{1} ^{(\alpha_3)} = \cb_{1} ^{(\alpha_2)} = \cb_{1} ^{(\alpha_1)}\;,
\vspace{2mm}
\\
\dps
\hspace{35mm} \;\;\vdots
\vspace{2mm}
\\
\dps
\ldots  =  \cb_{0}^{(\alpha_3)} = \cb_{0}^{(\alpha_2)} = \cb_{0} ^{(\alpha_1)}\;.
\ea
\ee 
For all lines except the last one we can extend the interval $\big(\frac{N+1}{N} , \frac{N}{N-1} \big)_\rn$ to $\big[\frac{N+1}{N} , \frac{N}{N-1} \big)_\rn$. At the left endpoints which are exceptional points $\overset{\approx}{\alpha}_N = \frac{N+1}{N}$ the last line contains  essentially different coefficient functions:
\be
\label{f0_exceptional2}
\ldots  \neq  \cb_{0}^{(\overset{\approx}{\alpha}_3)} \neq \cb_{0}^{(\overset{\approx}{\alpha}_2)} \neq \cb_{0} ^{(\overset{\approx}{\alpha}_1)}\;.
\ee
\end{enumerate}
\end{prop}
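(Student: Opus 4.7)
My plan is to reduce the proof to essentially the same Newton polygon analysis that drives the Virasoro argument (Propositions 2.1 and 2.2), after a preliminary step that absorbs the $\cW_3$--specific complications into the rational form of the block coefficients. Concretely, I would compute the first three coefficients $B_1, B_2, B_3$ of $\wb(h,w,\tih,c\,|\,z)$ using the operator formula \eqref{blockW}, then use the semi-degeneracy relation \eqref{degW} and the symmetry $w\to -w$ (which holds for our choice \eqref{W3opers}) to eliminate $w^2$ in favor of $h$ and $c$. The point is that $B_n$ then becomes a rational function of $(h,\tih,c)$ whose numerator and denominator are ordinary polynomials (with the $c$-dependence in $w^2$ producing an extra $22+5c$ in denominators, absorbable into $\cQ$). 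Taking the logarithm via the Mercator series as in \eqref{asymp_f} gives $G_n^{(\alpha)}(h,\tih,c)$ as rational functions $\cP_n(c)/\cQ_n(c)$ of the form \eqref{FcPQ} once $h=\delta c^\alpha$ and $\tih=\tdelta c^\alpha$ are substituted.

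Next I would apply the Newton polygon technique developed in Appendix \bref{app:vir}, now parallel to the Virasoro treatment but with the appropriate polygons for $\cW_3$. For each $G_n^{(\alpha)}$, I would identify the set of exponents $\{k+m\alpha\}$ appearing in numerator and denominator, determine the lattice segment of $\cQ_n$ that minimizes $s+t\alpha$ (this governs the pole order at $c=\infty$), and the complementary segment of $\cP_n$ that maximizes $k+m\alpha$. Expanding $\cP_n/\cQ_n$ as a geometric series in the ratio of subleading to leading terms of $\cQ_n$ generates the exponents $\{l+n\alpha\}$ of the Puiseux series \eqref{puiF}. For $\alpha\in \big(\tfrac{M-1}{M},\tfrac{M}{M+1}\big]_\rn$, the only exponents in $D_{\hspace{-0.8mm}\pr}$ that are non-negative are precisely $\alpha, 2\alpha-1,\ldots,M\alpha-(M-1)$ together with $0$, yielding \eqref{lead_exp1W}. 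The analogous count with $\alpha>1$ gives \eqref{lead_exp2W}, where the extra coefficient $\cb_1^{(\alpha)}$ is forced by the term $t=0$ of the denominator lattice contributing $c^0$, so that the ratio produces the degree $c^1$ through the numerator's leading $c^{\alpha+1}$ term.

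The column equalities \eqref{columneq1W} and their $\alpha>1$ analogue would then follow from the fact that, for $\alpha$ in the open interval $\big(\tfrac{M-1}{M},\tfrac{M}{M+1}\big)_\rn$, the set of lattice points of $\cP_n$ and $\cQ_n$ that actually contribute to a given exponent $m\alpha-(m-1)$ in the principal part is stable under varying $\alpha$ within that interval: no new points cross the "visible" boundary of the polygon because the condition for a point $(l,n)$ with $l+n\alpha<0$ to contribute is open in $\alpha$ on the interior. At the exceptional endpoints $\tilde\alpha_M = \tfrac{M}{M+1}$, however, a previously subleading lattice point coalesces with the $c^0$ line, so its contribution is added to $\cb_0^{(\tilde\alpha_M)}$ and the universal value $\cb_0^{(-)}$ is shifted; this produces the strict inequalities \eqref{f0_exceptional1W}. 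The analogous coalescence at $\overset{\approx}{\alpha}_N = \tfrac{N+1}{N}$ yields \eqref{f0_exceptional2}.

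The main obstacle I expect is purely computational: the $\cW_3$ Gram matrix $Q_{\tih,\tiw}$ at levels $2$ and $3$ is substantially larger than its Virasoro counterpart, and its determinant carries the spurious Kac poles that interact with the semi-degenerate substitution of $w^2$. The risk is that after substituting $w^2(h,c)$ a numerator and denominator share a spurious factor that, uncancelled, would shift exponents of the Newton polygon and distort the pattern. I would therefore factor $B_2, B_3$ carefully before passing to the logarithm, and verify by explicit computation (using the $B_1$ of Appendix \bref{app:B1} as a sanity check) that the resulting $G_n^{(\alpha)}$ have precisely the polygon structure required. Once this bookkeeping is in hand, the rest of the argument is a direct transcription of the Virasoro proof, with the choice $\tilde w=0$ ensuring that no genuinely new type of exponent (e.g.\ involving the spin-$3$ scaling $\beta$ of \eqref{ab_dim}) appears.
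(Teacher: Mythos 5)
Your proposal follows essentially the same route as the paper's proof in Appendix \bref{app:W3}: compute the explicit rational forms of $G_1^{(\alpha)}, G_2^{(\alpha)}, G_3^{(\alpha)}$, exhibit their Newton polygons, verify properties {\bf (a)}--{\bf (e)} of Appendix \bref{app:forth}, and then transcribe the Virasoro argument of Appendix \bref{app:second} verbatim. One caveat: the proposition is formulated for the regime-dependent leading-order scalings \eqref{h_th_aW} of $w^2$ (whose sign and power of $c$ change across $\alpha=1$), not for the exact relation \eqref{degW}, so you should substitute those rather than eliminating $w^2$ exactly --- the exact relation would add vertices to your polygons (from the $22+5c$ factor and the subleading term of the cubic) and would change the function whose principal part the proposition describes.
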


\noindent  The cases of $\alpha=0,1$ are considered in Appendix \bref{app:w3_numerics}: 
\be
\ba{l}
\alpha=1:\qquad  \ws^{(1)}(\delta, \tilde\delta, c|z) = \cb_1^{(1)}c +\cb_0^{(1)}\,, \qquad \text{eqs. \eqref{f11W}-\eqref{f01W}}\,,
\vspace{2mm} 
\\
\dps
\alpha=0:\qquad  \ws^{(0)}(\delta, \tilde\delta, c|z) = \cb_0^{(0)}\,, \qquad  \text{eq. \eqref{f0W}}
\,.
\ea
\ee
The counting function for $\cW_3$ classical conformal blocks is the same as in the Virasoro case, see fig. \bref{fig_graphic}.

\section{Conclusion}
\label{sec:concl}

We have formulated the problem of description of the conformal blocks for $\alpha$-heavy operators, where $\alpha$-heaviness is defined by  general scaling behaviour of conformal dimensions and spin charges in   the large-$c$ regime  as $O(c^\alpha)$ with $\alpha\in \mathbb{Q}^+_0$. We have conjectured that the Virasoro and $\cW_3$ conformal blocks are exponentiated similar to the standard case $\alpha=1$ but the leading exponents are more involved -- they are represented by Puiseux polynomials in the central charge. Our consideration is both numerical and analytical: the conformal  block coefficients can be obtained explicitly up to high order by using   {\it Mathematica};  their  large-$c$ expansions can be analyzed analytically.  

Despite the relative  simplicity of the exponentiation procedure for any $\alpha \in \mathbb{Q}^+_0$ we underline that the real challenge is to describe the  process of restructuring  the classical conformal block (i.e. of the respective principal part $\pr f^{(\alpha)}$) at changing the  heaviness parameter  $\alpha$. On this way, we have observed a number of interesting related properties and methods: 

\begin{itemize}

\item $\alpha$-dependent counting of terms in the Puiseux polynomial $\pr f^{(\alpha)}$;

\item transmutation of singularities;

\item reconstruction of the  $\alpha$-heavy classical vacuum conformal blocks;

\item  asymptotic expansions and Newton polygons;

\item perturbation theory and $\alpha$-heavy operators.

\end{itemize}

Also, in order to formulate the conformal block expansion  in CFT$_2$ with $\cW_3$   symmetry we have reconsidered the fusion rules and found new constraints 3-point functions with two semi-degenerate operators. Note that we have  not analyzed the transmutation of singularities for the $\cW_3$ logarithmic conformal blocks because of our limited knowledge of the Gram matrix in the $\cW_3$ Verma modules and, respectively, the Kac dimensions/poles of the conformal blocks. Also, we have omitted from our analysis  perturbative classical $\cW_3$ conformal blocks. These  issues will be considered elsewhere.

Since the Virasoro and  $\cW_N$ conformal block functions are not presently known analytically for the general values of their arguments (conformal dimensions, higher-spin charges, the central charges, and coordinates) and can only be given  asymptotically, the $\alpha$-scalings (and their generalizations  \eqref{ab_dim_h})  naively just add another layer of complexity. However, this mechanism can be considered a useful tool for studying the asymptotic behaviour of the  conformal block coefficients in more general CFTs. E.g.  the counting functions, like that one shown on fig. \bref{fig_graphic}, with one or more peaks and infinite discontinuities may clarify the structure of  perturbation theory in  small/large conformal dimensions and higher-spin charges in the large-$c$ regime. Indeed, it would be interesting to study perturbative expansions of the  $\cW_N$ classical  blocks along the lines of Section \bref{sec:pert}. Such a  task is rather non-trivial since  in the general case there are two heaviness parameters $\alpha$ and $\beta$ (related to conformal dimensions and spin-3 charges) and one may wonder how  the associated $\alpha\beta$-heavy classical conformal blocks  control the perturbative expansion of the $\alpha=\beta=1$ $\cW_N$ classical conformal block.

At the present stage, our study of $\alpha$-heavy operators and their large-$c$ conformal blocks is rather formal and needs possible applications including the holographic interpretation. In particular, it is interesting to understand the gravitational role of subleading terms in $\alpha$-heavy classical conformal blocks given by Puiseux polynomials. At the same time, one may continue studying the $\alpha$-heavy classical conformal blocks  in other different directions.  

For example, they can be analyzed  within the free field realization of the Virasoro and $\cW_3$ algebra and the conformal block  in the spirit of \cite{Besken:2019jyw}. In addition, in the same vein, one can consider the case of $\cW_N$ conformal blocks  with  $N>3$, where we do not expect any fundamental obstacles except for the rapidly growing  technical difficulties associated   with calculating matrix elements and finding (inverse) Gram matrices. 

It would be interesting to see whether it is possible to find analytic expressions for $\alpha$-heavy vacuum conformal blocks at exceptional values of $\alpha$  using e.g. the direct matrix approach  of \cite{Fitzpatrick:2014vua}. According to  Proposition \bref{prop_iden} that will lead to the explicit analytic form of the exponentiated $\alpha$-heavy vacuum conformal  blocks for any $\alpha\in (0,1)_\rn$.  Also, one may  consider different scalings of operators, $h_i = O(c^{\alpha_i})$ that generalizes the HHLL conformal block, where  $h_H = O(c^1)$ and $h_L = O(c^0)$ \cite{Fitzpatrick:2015zha}. It is tempting to speculate that various relations between $\alpha$-heavy classical conformal blocks at different $\alpha\in \mathbb{Q}^+$ observed in this paper can be systematically obtained by means of considering CFTs in  different $\alpha$-parameterized  background geometries  \cite{Fitzpatrick:2015zha}.
 
\vspace{3mm}

\noindent \textbf{Supplemental material.} We provide  {\it Mathematica} notebooks with the arXiv submission for download. They include notebooks which calculate
\begin{itemize}
\item Gram matrices for Verma modules of $Vir$ and $\cW_3$ algebras;

\vspace{-2mm} 

\item the classical conformal blocks for $\alpha$-heavy operators for $Vir$ algebra;

\vspace{-2mm}

\item the classical conformal blocks for four semi-degenerate $\alpha$-heavy operators for $\cW_3$ algebra;

\vspace{-2mm}

\item the first twenty coefficients of the  Virasoro vacuum conformal block.
\end{itemize}

\vspace{3mm}

\noindent \textbf{Acknowledgements.} We are grateful to Danil Zherikhov for fruitful collaboration at early  stage of this project. We benefited from discussions with Vladimir  Belavin, Alexey Litvinov, Semyon  Mandrygin, Mikhail Pavlov. Our special thanks to  Alexey Litvinov for sharing with us his code for calculating  conformal blocks  which we have further modified, and especially to Aleksandr Kovalenko for his significant help in programming with {\it Mathematica}. Our work was supported by the Foundation for the Advancement of Theoretical Physics and Mathematics “BASIS”.

\appendix
\section{Notation for sets of  numbers} 
\label{app:notation}

We use the following (non)standard notation and conventions:
$$
\ba{ll}
\text{natural, integer, rational numbers:} & \NN, \ZZ, \mathbb{Q} 

\\

\text{positive (negative) integer, rational, real numbers:} & \ZZ^{\pm}, \mathbb{Q}^{\pm}, \mathbb{R}^{\pm} 

\\

\text{non-negative(positive) integer  numbers:} & \ZZ^\pm_0  = \{0\} \cup \ZZ^\pm

\\

\text{non-negative(positive) rational  numbers:} & \mathbb{Q}^\pm_0 = \{0\} \cup \mathbb{Q}^\pm 

\\

\text{non-negative(positive) real  numbers:} & \mathbb{R}^\pm_0 = \{0\} \cup \mathbb{R}^\pm 

\\

\text{(half)-open intervals on $\mathbb{Q}$:} & (a,b)_\rn : = (a,b) \cap \mathbb{Q}\,, \quad a,b \in \mathbb{Q}   

\\

 & \, [a,b)_\rn  : = [a,b) \cap \mathbb{Q}\,, \quad a,b \in \mathbb{Q}   

\\

& \, (a,b]_\rn  : = (a,b] \cap \mathbb{Q}\,, \quad a,b \in \mathbb{Q}   

\ea
$$

\section{The logarithmic $Vir$ block coefficients}
\label{app:vir}

In this Appendix we analyze the large-$c$ expansion of the lowest-order coefficients of the logarithmic block $g_{n}^{(\alpha)}$ at $n=2,3,4,5$ \eqref{f_exp_alpha}. Our analysis here is valid for real  $\alpha \in \RR_0^+$ which we restrict to rational $\alpha\in \mathbb{Q}^+_0$ in the main text. The  two coefficients, $g_2^{(\alpha)}$ and $g_4^{(\alpha)}$ are explicitly considered. Other two available coefficients $g_3^{(\alpha)}$ and $g_5^{(\alpha)}$ have the same dependence on $c$ (due to equal external dimensions) as respectively $g_2^{(\alpha)}$ and $g_4^{(\alpha)}$, and, therefore, they need not be analyzed. The final Propositions of Section  \bref{sec:structure} are valid up to $O(z^6)$.

\subsection{Second coefficient}
\label{app:second}

Let $\alpha \in \RR_0^+$. Consider  the second coefficient $g_2^{(\alpha)}$ \eqref{f_2ex} which, for convenience, we rename in this section as $\g2^{(\alpha)}$:
\be
\label{f_2ex2}
\g2^{(\alpha)} = \frac{A c^{3\alpha} + B c^{2\alpha+1} + C c^{2\alpha}+ D c^{\alpha+1} }{M c^{2\alpha} +N c^{\alpha+1}+K c^{\alpha}+L c^1}\,.  
\ee  
The powers of $c$ here are given by $k+m\alpha$, where $k,m  \in  \mathbb{Z}$. In other words, particular functions $c^{k+m\alpha}$ can be represented as nods on a two-dimensional lattice. In this way, we can concisely  describe both the numerator and denominator in \eqref{f_2ex2} as the Newton polygons shown on  fig. \bref{fig-num-g2} and fig. \bref{fig-den-g2}, where  the nods (bold black dots) represent non-zero terms; the nods are connected by edges (red lines) to form a Newton polygon. 

\begin{figure}[h!]
 \begin{minipage}{0.48\textwidth}
     \centering
     \begin{tikzpicture}[scale = 0.5]
        \draw[thin,dotted] (0,0) grid (2,3);
        \draw[->] (0,0) -- (2.4,0) node[right] {$k$};
        \draw[->] (0,0) -- (0,3.4) node[above] {$m$};
        \foreach \x/\xlabel in { 1/1, 2/2}
    \draw (\x cm,1pt ) -- (\x cm,-1pt ) node[anchor=north,fill=white] {\xlabel};
  \foreach \y/\ylabel in {1/1, 2/2, 3/3}
    \draw (1pt,\y cm) -- (-1pt ,\y cm) node[anchor=east, fill=white] {\ylabel};
    \draw[thick, red] (0,3) -- (1,2);
    \draw[thick, red] (0,3) -- (0,2);
    \draw[thick, red] (0,2) -- (1,1);
    \draw[thick, red] (1,1) -- (1,2);
    \draw[fill=black] (1,1) circle (0.12) node[above right] {};
    \draw[fill=black] (0,2) circle (0.12) node[above right] {};
    \draw[fill=black] (1,2) circle (0.12) node[above right] {};
    \draw[fill=black] (0,3) circle (0.12) node[above right] {};
    \end{tikzpicture}
    \caption{The numerator in $\g2^{(\alpha)}$. }
    \label{fig-num-g2}
 \end{minipage}\hfill
  \begin{minipage}{0.48\textwidth}
     \centering
     \begin{tikzpicture}[scale = 0.5]
        \draw[thin,dotted] (0,0) grid (2,3);
        \draw[->] (0,0) -- (2.4,0) node[right] {$k$};
        \draw[->] (0,0) -- (0,3.4) node[above] {$m$};
        \foreach \x/\xlabel in { 1/1, 2/2}
    \draw (\x cm,1pt ) -- (\x cm,-1pt ) node[anchor=north,fill=white] {\xlabel};
  \foreach \y/\ylabel in {1/1, 2/2, 3/3}
    \draw (1pt,\y cm) -- (-1pt ,\y cm) node[anchor=east, fill=white] {\ylabel};
    \draw[thick, red] (0,2) -- (0,1);
    \draw[thick, red] (0,2) -- (1,1);
    \draw[thick, red] (0,1) -- (1,0);
    \draw[thick, red] (1,0) -- (1,1);
    \draw[fill=black] (1,0) circle (0.12) node[above right] {};
    \draw[fill=black] (0,1) circle (0.12) node[above right] {};
    \draw[fill=black] (1,1) circle (0.12) node[above right] {};
    \draw[fill=black] (0,2) circle (0.12) node[above right] {};
    \end{tikzpicture}
    \caption{The denominator in $\g2^{(\alpha)}$.}
    \label{fig-den-g2}
  \end{minipage}
\end{figure}

\subsubsection{${\alpha\in (1, \infty)}$} 
\label{app:second_1inf}

Let us parameterize real  $\alpha \in (1, \infty)$ as $\alpha = 1 + \varkappa $, where $\varkappa \in (0,\infty)$. The coefficient \eqref{f_2ex2} can be equally represented  as:
\be
\label{g2_1inf}
\g2^{(1+\varkappa)}  = \left(\frac{A}{M} c^{1+\varkappa} + \frac{B}{M} c^{1} + \frac{C}{M} c^0+ \frac{D}{M} c^{-\varkappa}\right)\,\cfrac{1}{\dps 1+\frac{N}{M}c^{-\varkappa} + \frac{K}{M}c^{-1-\varkappa} + \frac{L}{M}c^{-1-2\varkappa}}\,.
\ee
The second factor has the form  $\frac{1}{1+X(c)}$, where $X(c) = \frac{N}{M}c^{-\varkappa} + \frac{K}{M}c^{-1-\varkappa} + \frac{L}{M}c^{-1-2\varkappa}$  contains only negative powers of $c$ so that  $\lim_{c\to \infty}  X(c) =  0$. Then, the small-$X(c)$ expansion of the second factor  yields  the  Puiseux  series. 

A convenient trick  is to represent the resulting series  as a double series in $x \equiv c$ and $y \equiv c^\varkappa$ variables treated as independent. Indeed, a general term of the Puiseux series  is   $c^{-n -m \varkappa}$ with integer $n,m$ running over some domain in  two-dimensional lattice $\mathbb{Z}\oplus \mathbb{Z}$. Such a term can be split as $x^{-m} \, y^{-n}$. In fact, this splitting allows one to read off the domain directly from \eqref{g2_1inf} that gives the following double expansion 
\be
\label{main_pui}
\g2^{(1+\varkappa)} = \sum_{n=-1}^{\infty} \sum_{m\geq n}\g2_{n,m}^{(1+\varkappa)}\, x^{-n} y^{-m} \;\equiv\; \sum_{n=-1}^{\infty} \sum_{m = n}^{\infty}\g2_{n,m}^{(1+\varkappa)}\, c^{-n- m\varkappa}\;.
\ee
Here, the expansion coefficients $\g2_{n,m}^{(1+\varkappa)}$ are  rational functions of $A,B,C,...\,$ which can be conveniently organized as the following set  
\be 
\label{array_1inf0}
\hspace{-2mm}\bold{P}^{(1+\varkappa)} =
\left\{ \g2_{n,m}^{(1+\varkappa)}~|~n+1, m+1\in\mathbb{Z}^+_0,\; m\geq n \right\}\;.
\ee 

Note that the splitting  trick is directly applicable  only when $\varkappa$ is irrational. Otherwise, variables $c$ and $c^{\varkappa}$ and their powers can be entangled like e.g. for a particular rational  $\varkappa=1$ (to some extent, this phenomenon resembles an (ir)rational winding of a torus). In the sequel, we carefully analyze these two different situations of rational and irrational $\varkappa$ keeping the splitting trick as a convenient and useful  parameterization of the Puiseux series.

The principal part of \eqref{main_pui} is given by   
\be
\label{prin_1inf}
\pr\g2^{(1+\varkappa)} =  \sum_{m=-1}^{N} \g2^{(1+\varkappa)}_{-1,m}\, c^{1-m\varkappa} + \g2_{0,0}^{(1+\varkappa)}\,,
\ee
with  $N$ being a maximum integer number depending on $\varkappa$ such that  
\be
\label{N_ineq}
N = N(\varkappa): \qquad 1-N \varkappa > 0\,.
\ee 
Thus, in the principal part there are $N+3$ terms  of degrees   $\{0, 1-m\varkappa ~| ~m=-1,0,1,...\,,N\}$: the smaller  $\varkappa$, the more  terms in the principal part. Recall that a principal part may contain $c^0$-term which in our notation is $\g2_{0,0}^{(1+\varkappa)}$, see Definition \bref{def_prin}. The regular part of $\g2^{(1+\varkappa)}$ is more complicated:
\be
\label{reg_1inf}
\reg \g2^{(1+\varkappa)} = \sum_{m=N+1}^\infty \g2^{(1+\varkappa)}_{-1,m}\,c^{1-m\varkappa} +\sum_{k=0}^{\infty}\sum_{\substack{l=k \\
l\geq 1}}^{\infty}\g2_{k,l}^{(1+\varkappa)}\,c^{-k-l\varkappa} \,.
\ee

Having in mind the form of principal and regular parts \eqref{prin_1inf} and \eqref{reg_1inf} the set of coefficients \eqref{array_1inf0} can be split  as
\be 
\label{array_1inf}
\hspace{-2mm}\bold{P}^{(1+\varkappa)} = \left\{ \g2^{(1+\varkappa)}_{-1,m} ~|~m\in\mathbb{Z}\,,\; m\geq -1\right\}\, \cup\, \left\{ \g2_{k,l}^{(1+\varkappa)}~|~k, l\in\mathbb{Z}^+_0,\; l\geq k \right\}\;,
\ee 
where $\g2_{-1,m}^{(1+\varkappa)}$ and $\g2^{(1+\varkappa)}_{k,l}$ are coefficients at $c^{1-m\varkappa}$ and  $c^{-k-l\varkappa}$, respectively.  Note that the first set contains coefficients from both the principal and  regular parts, the second set contains just one coefficient  from the principal part which is  $\g2_{0,0}^{(1+\varkappa)}$, while other coefficients are from  the regular part. 

For notational  convenience, in this section we assume that 
\be\;
k,l: ~~ k,\, l\in \mathbb{Z}^+_0,\; l\geq k 
\qquad \text{and} \qquad  
n,m: ~~ n+1,\, m+1\in\mathbb{Z}^+_0,\; m\geq n\,.
\ee

\noindent The following proposition comes  directly from the use of the splitting trick.
\begin{prop}
    \label{array_universality_1inf}
    The set of coefficients $\bold{P}^{(1+\varkappa)}$ \eqref{array_1inf0} is independent of $\varkappa$: 
        \be 
            \bold{P}^{(1+\varkappa)} \equiv \bold{P}^{(1+\varkappa')}\,,
            \qquad \forall\varkappa, \varkappa' \in (0, \infty),
        \ee
        or, equivalently, 
\be
\label{B_coin}
\g2^{(1+\varkappa)}_{-1,m} = \g2^{(1+\varkappa')}_{-1,m}\;,
\qquad
\g2^{(1+\varkappa)}_{k,l}  = \g2^{(1+\varkappa')}_{k,l}\;.
\ee
\end{prop}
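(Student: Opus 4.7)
The plan is to promote the one-variable Puiseux expansion to a genuine bivariate formal Laurent expansion in independent variables $x$ and $y$, where the specialization $x = c$, $y = c^{\varkappa}$ recovers the original expression for every $\varkappa$. Concretely, I would rewrite \eqref{g2_1inf} as $P(x,y) \cdot (1+X(x,y))^{-1}$ with
$$P(x,y) = \tfrac{A}{M} x y + \tfrac{B}{M} x + \tfrac{C}{M} + \tfrac{D}{M} y^{-1}, \qquad X(x,y) = \tfrac{N}{M} y^{-1} + \tfrac{K}{M} x^{-1} y^{-1} + \tfrac{L}{M} x^{-1} y^{-2}.$$
Every monomial in $X(x,y)$ has non-positive degrees in both $x$ and $y$, with at least one strictly negative, so $(1+X)^{-1} = \sum_{j \geq 0} (-X)^j$ makes sense as a formal power series in $x^{-1}$ and $y^{-1}$, with only finitely many contributions to each fixed monomial.

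Multiplying by $P(x,y)$ then produces a formal Laurent series $\sum_{n,m} \Pi_{n,m}\,x^{-n}y^{-m}$ supported on $\{n \geq -1,\; m \geq n\}$, whose coefficients $\Pi_{n,m}$ are explicit polynomial expressions in $A,B,C,D,N,K,L$ and $M^{-1}$ alone. Crucially, the parameter $\varkappa$ does not enter this bivariate expansion at any point. Specializing $x = c$ and $y = c^{\varkappa}$ reproduces \eqref{main_pui} term by term, yielding the identification $g_{n,m}^{(1+\varkappa)} = \Pi_{n,m}$, from which \eqref{B_coin} is immediate.

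The delicate point — and the one place where care is needed — is that for rational $\varkappa = p/q$ the map $(n,m) \mapsto n + m\varkappa$ from the support lattice to $\mathbb{Q}$ is not injective, so distinct pairs $(n,m)$ can collide to the same power of $c$. This is why the bivariate labelling must be adopted as the \emph{definition} of $g_{n,m}^{(1+\varkappa)}$: the coefficients are read off in the $(x,y)$ expansion and only afterwards is $y$ set equal to $x^{\varkappa}$. Once this convention is fixed the universality becomes nearly tautological, since there is no $\varkappa$-dependent object left to compare; essentially all of the work lies in verifying the $(1+X)^{-1}$ expansion converges formally and in tracking the support region $\{m \geq n,\; n \geq -1\}$. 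For irrational $\varkappa$ the specialization is already injective and the $\Pi_{n,m}$ are directly the genuine Puiseux coefficients, so the same conclusion holds a fortiori.
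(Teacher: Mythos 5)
Your argument is exactly the paper's ``splitting trick'': the paper likewise treats $x=c$ and $y=c^{\varkappa}$ as independent variables, reads off the bivariate Laurent expansion of \eqref{g2_1inf} whose coefficients are manifestly $\varkappa$-independent, and (as you note) defers the rational-$\varkappa$ collisions to the separate statement that a degenerate power $c^a$ carries the finite sum of the corresponding $\g2_{n,m}$. The proposal is correct and takes essentially the same route, with your explicit verification of the formal convergence of $(1+X)^{-1}$ and of the support region being a slightly more careful writing of what the paper asserts ``comes directly.''
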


\begin{corollary} We can change the notation and suppress the superscript $\varkappa$:
\be
\bold{P}^{+} \equiv \bold{P}^{(1+\varkappa)}\;, 
\qquad
\g2_{n,m}^+ \equiv \g2^{(1+\varkappa)}_{n,m}\;.
\ee 
\end{corollary}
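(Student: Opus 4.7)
The plan is to observe that the rational function $\g2^{(1+\varkappa)}$ in \eqref{g2_1inf} admits a \emph{bivariate lift} which is manifestly independent of $\varkappa$. First I would introduce two formal variables $x$ and $y$ and define
\be
R(x, y) \;:=\; \frac{\tfrac{A}{M}\, xy + \tfrac{B}{M}\, x + \tfrac{C}{M} + \tfrac{D}{M}\, y^{-1}}{1 + \tfrac{N}{M}\, y^{-1} + \tfrac{K}{M}\, x^{-1} y^{-1} + \tfrac{L}{M}\, x^{-1} y^{-2}}\,,
\ee
where $A,B,C,D,M,N,K,L$ are the polynomials in $\delta, \tilde{\delta}$ entering \eqref{f_2ex2} and are manifestly free of $\varkappa$. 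The substitution $(x, y) \mapsto (c, c^{\varkappa})$ tautologically recovers $\g2^{(1+\varkappa)}$, so the $\varkappa$-dependence is entirely packaged into this one specialization map.

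The next step will be to expand $R(x, y)$ as a formal Laurent series in $x^{-1}, y^{-1}$. Since the denominator has the form $1 + u$ with $u$ belonging to the maximal ideal generated by $x^{-1}, y^{-1}$ in the ring $\mathbb{C}[\delta, \tilde{\delta}][[x^{-1}, y^{-1}]]$, the geometric series $(1+u)^{-1} = \sum_{j \geq 0} (-u)^j$ converges in the $(x^{-1}, y^{-1})$-adic topology and produces a unique bivariate expansion
\be
R(x, y) \;=\; \sum_{n, m} R_{n, m}\, x^{-n} y^{-m}\,,
\ee
whose coefficients $R_{n, m} \in \mathbb{C}[\delta, \tilde{\delta}]$ are independent of $\varkappa$. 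A brief multinomial bookkeeping on the exponents produced by $u^j$, combined with multiplication by the four numerator monomials $xy, x, 1, y^{-1}$, will then reproduce the index range $\{n \geq -1,\, m \geq n\}$ displayed in \eqref{main_pui}. Specializing $y = c^{\varkappa}$ identifies $R_{n, m}$ with the coefficients $\g2^{(1+\varkappa)}_{n, m}$ defined in \eqref{array_1inf0}, which immediately yields \eqref{B_coin}.

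The main obstacle I anticipate is a subtle well-definedness issue for rational $\varkappa$. In that case distinct pairs $(n, m)$ may produce coincident powers $c^{-n - m\varkappa}$, so the \emph{univariate} Puiseux expansion of $\g2^{(1+\varkappa)}$ in $c$ determines only partial sums $\sum_{(n,m):\, n + m\varkappa = r} R_{n,m}$, rather than the individual $R_{n,m}$. The universality statement therefore rests on the convention that $\g2^{(1+\varkappa)}_{n, m}$ is read off as a coefficient of the \emph{bivariate} expansion of $R(x, y)$, before the identification $y = c^{\varkappa}$ is imposed; this is precisely the role of the splitting trick introduced around \eqref{main_pui}. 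Once this convention is laid down, universality is automatic because $R(x, y)$ itself contains no $\varkappa$, and the Proposition follows at once.
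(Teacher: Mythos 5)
Your argument is correct and is essentially the paper's own: the Corollary rests on Proposition \bref{array_universality_1inf}, which the paper justifies precisely by the splitting trick of treating $x=c$ and $y=c^{\varkappa}$ as independent variables, so that the bivariate expansion coefficients are built only from $A,B,C,\dots$ and carry no $\varkappa$-dependence. Your bivariate lift $R(x,y)$ and the caveat that for rational $\varkappa$ the coefficients must be defined via the bivariate (not the collapsed univariate) expansion are exactly the content of \eqref{main_pui}--\eqref{array_1inf0} and the surrounding discussion of degenerate indices, so nothing further is needed.
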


\vspace{5mm} 

Let us now represent the Puiseux series    \eqref{main_pui} as 
\be
\label{puis1}
\g2^{(1+\varkappa)} = \sum_{a\in A} \g2^{(1+\varkappa)}_{a}\, c^a\,,
\ee
where the summation domain is defined as
\be
\label{domainA0}
A^{(1+\varkappa)} = \big\{-n-m \varkappa ~|~n+1, m+1\in\mathbb{Z}^+_0,\; m\geq n   \big\}, 
\ee  
and the series coefficients $\g2^{(1+\varkappa)}_{a}$ are elements of $\bold{P}^{+}$ \eqref{array_1inf0} and their linear combinations (see below). Note that we keep  the index $(1+\varkappa)$ here since $\g2_a^{(1+\varkappa)}$ indeed depends on $\varkappa$, contrary to $\g2_{n,m}^+$. For a given $\varkappa$ each pair $(n,m)$ defines a particular power $a$. The summation domain can be represented  similar to \eqref{array_1inf}: 
\be
\label{domainA}
A^{(1+\varkappa)} = \big\{1-m \varkappa ~|~m+1\in \mathbb{Z}^+_0\big\}\cup \big\{-k -l\varkappa\,~| ~k, l\in\mathbb{Z}^+_0,\; l\geq k  \big\}\,, 
\ee  
that reflects the  splitting of $\g2^{(1+\varkappa)}$ into principal and regular parts.  

It turns out that the domain  $A^{(1+\varkappa)}$  contains coinciding  elements provided $\varkappa$ is rational. We will call this  phenomenon {\it a degeneracy}. Here is an example: it might happen that for a given index  $a$ one has $a=1-m\varkappa = -k-l\varkappa$. This last equation can have non-trivial solution relating  $\varkappa$ and $m,k,l$ so that the index $a$  is degenerate.  In this case, the respective contribution  $\g2_a^{(1+\varkappa)} c^a$ to the Puiseux series \eqref{puis1}  is given by a sum 
\be
\label{exB1}
\left(\g2_{-1,m}^{+} + \g2_{k,l}^{+}\right) c^a\,.
\ee
Also, it might happen that $a = -k - l\varkappa = -k' - l'\varkappa$ and then one has 
\be
\label{exB2}
\left(\g2_{k,l}^{+} + \g2_{k',l'}^{+}\right) c^a \,.
\ee

The above discussion shows that the set  $\bold{P}^{+}$ indeed encodes all  expansion coefficients  in the respective Puiseux series. However, a  rule for extracting them from $\bold{P}^{+}$ for a given $\varkappa$ crucially depends on whether $\varkappa$ is rational or not. Each of these cases is examined in detail below.

\paragraph{Irrational case.} Let us consider  irrational $\varkappa\in (0,\infty)\setminus \mathbb{Q}^+$. Then,  there is a one-to-one correspondence between elements of $\bold{P}^{+}$ and the  series  coefficients $\g2_a$ in \eqref{puis1}. This is  because in this case the equation  $a = -n-m\varkappa = -n'-m'\varkappa$ has no solutions except  the trivial one, $n = n', m =m'$. Thus, any element of $A^{(1+\varkappa)}$ \eqref{domainA0} is unique and the degeneracy is lifted. The next proposition is now obvious:
\begin{prop}
\label{array_coincidence_1inf}
For  $\varkappa\in (0,\infty)\setminus \mathbb{Q}^+$ and  $\forall a \in A^{(1+\varkappa)} ~ \exists!~ (m,n): ~ a = -n-m\varkappa $. Hence, there is a unique term $\g2_a^{(1+\varkappa)} c^a$ in the Puiseux series \eqref{puis1} of the form  
\be
\label{irrational_B_1inf}
\g2_{n,m}^{+}\, c^a\,.
\ee
\end{prop}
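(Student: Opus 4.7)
The plan is to prove the claim in two short steps. First, I would establish injectivity of the parametrization $(n,m)\mapsto -n-m\varkappa$ restricted to the index set $\{(n,m) : n+1,\,m+1\in\mathbb{Z}^+_0,\; m\ge n\}$, with the irrationality of $\varkappa$ playing the decisive role. Second, I would invoke Proposition \ref{array_universality_1inf} (the $\varkappa$-independence of the coefficient set $\bold{P}^{+}$) to conclude that the unique term in the Puiseux series \eqref{puis1} associated to a given $a$ is indeed $\g2_{n,m}^{+}\,c^{a}$.

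For the injectivity step, I would argue by contradiction: suppose $(n,m)$ and $(n',m')$ are two admissible pairs with $-n-m\varkappa=-n'-m'\varkappa$. Rearranging yields
\be
(m-m')\,\varkappa = n'-n .
\ee
If $m\ne m'$, then $\varkappa=(n'-n)/(m-m')\in\mathbb{Q}$, contradicting $\varkappa\in(0,\infty)\setminus\mathbb{Q}^+$. Hence $m=m'$, which forces $n=n'$. This shows that the map from the admissible lattice points into $A^{(1+\varkappa)}$ is a bijection, so for every $a\in A^{(1+\varkappa)}$ there exists a unique pair $(n,m)$ representing it.

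Given this uniqueness, the Puiseux series \eqref{puis1} contains at most one term at the power $c^{a}$, and that term is $\g2_{n,m}^{(1+\varkappa)}\,c^{a}$. Applying Proposition \ref{array_universality_1inf} (equivalently, the equalities \eqref{B_coin}) we may replace $\g2_{n,m}^{(1+\varkappa)}$ by the $\varkappa$-independent symbol $\g2_{n,m}^{+}$, giving \eqref{irrational_B_1inf}.

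There is no serious obstacle in this proof; the only thing worth flagging is that irrationality is used in a genuinely essential way, which is precisely the feature that \emph{fails} in the rational case (to be treated separately in the companion proposition) and produces the degenerate sums of the form \eqref{exB1}--\eqref{exB2}. Thus the proof both establishes the result and pinpoints the dichotomy between the irrational and rational regimes.
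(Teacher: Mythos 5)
Your proof is correct and follows essentially the same route as the paper: the paper also observes that $-n-m\varkappa=-n'-m'\varkappa$ forces $(m-m')\varkappa=n'-n$, which for irrational $\varkappa$ admits only the trivial solution, and then identifies the unique coefficient with $\g2_{n,m}^{+}$ via the $\varkappa$-independence of $\bold{P}^{+}$. You merely spell out the contradiction argument that the paper declares ``obvious,'' which is a reasonable amount of added detail.
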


Due to  Proposition \bref{array_universality_1inf} the  Puiseux series for any two irrational $\varkappa, \varkappa'$ have the same series coefficients, $\g2_{a}^{(1+\varkappa)} = \g2_{a}^{(1+\varkappa')} \equiv \g2_{n,m}^{+}$ for $a = -n-m\varkappa$, but different powers of $c$:
\be
\label{main_pui12}
\ba{l}
\dps
\g2^{(1+\varkappa)} = \sum_{n=-1}^{\infty} \sum_{m\geq n}\g2_{n,m}^+\,c^{-n- m\varkappa}\;,
\\
\vspace{2mm}
\dps
\g2^{(1+\varkappa')} = \sum_{n=-1}^{\infty} \sum_{m\geq n}\g2_{n,m}^+\,c^{-n- m\varkappa'}\;.
\ea 
\ee
However, a separation between principal and regular parts can be different, see fig. \bref{fig:array1}.    

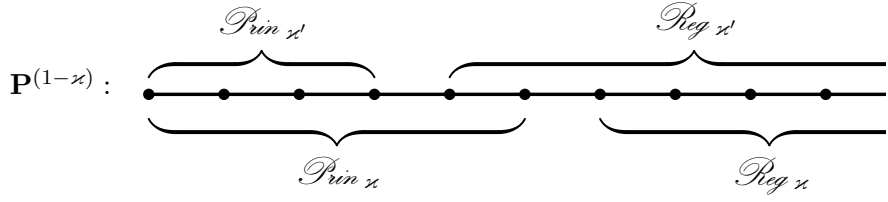
\begin{figure}
\hspace{15mm}\begin{tikzpicture}[thick, line width  = 1.3]
{$\bold{P}^{(1-\varkappa)}: \hspace{0.5 cm}$} \draw (0,0) -- (10,0);
\tkzDefPoint(0, 0){a1}
\tkzDefPoint(1, 0){a2}
\tkzDefPoint(2, 0){a3}
\tkzDefPoint(3, 0){a4}
\tkzDefPoint(4, 0){a5}
\tkzDefPoint(5, 0){a6}
\tkzDefPoint(6, 0){a7}
\tkzDefPoint(7, 0){a8}
\tkzDefPoint(8, 0){a9}
\tkzDefPoint(9, 0){a10}
\draw[decoration={calligraphic brace, raise=15pt, amplitude = 0.35 cm}, decorate, line width  = 1.5] (0, -0.3) -- node[above=25 pt] {$ \pr_{\varkappa'} $} (3, -0.3);
\draw[decoration={calligraphic brace, raise=15pt, amplitude = 0.35 cm}, decorate, line width  = 1.5] (4, -0.3) -- node[above = 25 pt] {$ \reg_{\varkappa'} $} (10.5, -0.3);
\draw[decoration={calligraphic brace, mirror, raise=15pt, amplitude = 0.35 cm},decorate, line width  = 1.5] (0, 0.2) -- node[below=25 pt] {$ \pr_{\varkappa} $} (5, 0.2);
\draw[decoration={calligraphic brace, mirror, raise=15pt, amplitude = 0.35 cm},decorate, line width  = 1.5] (6, 0.2) -- node[below=25 pt] {$ \reg_{\varkappa} $} (10.5, 0.2);
\fill [white] (9.8, 25 pt) rectangle (10.6, -25 pt);
\tkzDrawPoints[color=black,fill=black,size=3](a1, a2, a3, a4, a5, a6, a7, a8, a9, a10)
\end{tikzpicture}
\caption{The irrational case. Elements  $\g2_{-1,m}^{+} \in \bold{P}^{+}$ as nods of the one-dimensional  half-lattice: $m = -1,0,1,...,N(\varkappa'), ...\, ,N(\varkappa)$  enumerate coefficients  in the principal parts.  The coefficients with $m>N$ belong to the regular parts.}
\label{fig:array1}
\end{figure}

\paragraph{Rational case.} Let $\varkappa =p/q \in \QQ$, where $p,q$ are  coprime. Due to the degeneracy, $\exists\, a, a' \in A^{(1+\varkappa)}$ defined by $(n,m) \neq (n',m')$ such that $a = a'$. Let us describe all the cases when the degeneracy takes place. According to \eqref{domainA}  three types of constraints are possible.  
\begin{lemma}
\label{lemma:eqs} 
\label{constraint_solutions_1inf}
Let  $\varkappa = p/q\in \QQ$ with  $p, q$ being  coprime. 
\begin{enumerate}

\item The solution of the equation
\be
\label{coin_cond0_1inf}
1-m \, \frac{p}{q} = 1-m' \, \frac{p}{q}
\ee
is trivial, $m'=m$.

\item The solution of the equation 
\be
\label{coin_cond1_1inf}
1-m \, \frac{p}{q} = -k' - l'\, \frac{p}{q}, 
\ee
is given by  $k' = pt -1$,  $l' = m - qt$, where   $m+2,t\in \mathbb{N}, t \leq \left[ \frac{m+1}{p+q}\right]$. 

\item The solution of the equation
\be 
\label{coin_cond2_1inf}
-k - l\,\frac{p}{q} = -k' - l' \,\frac{p}{q}
\ee
is given by $k' = k - tp $ and $l' = l+tq$, where $k,l \in \mathbb{Z}^+_0, t\in\mathbb{Z},\; l\geq k,\; \left[\frac{k-l}{p+q} \right]\leq t  \leq \left[ \frac{k}{p} \right]$ .
\end{enumerate}
\end{lemma}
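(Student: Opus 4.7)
The plan is to treat each of the three Diophantine equations by the same three-step routine: clear the denominator $q$, exploit $\gcd(p,q)=1$ to extract divisibility, parameterize the integer solutions by one auxiliary variable $t$, and finally intersect with the inequalities defining the lattice domains $\{m\geq -1\}$ and $\{k,l\geq 0,\ l\geq k\}$ to pin down the admissible range of $t$. Part 1 is essentially immediate: multiplying \eqref{coin_cond0_1inf} by $q$ yields $p(m'-m)=0$, so $m'=m$ since $p\geq 1$.

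For part 2, I rewrite \eqref{coin_cond1_1inf} as $q(1+k') = p(m - l')$. Coprimality of $p$ and $q$ forces $p\mid (1+k')$, so I can write $1+k' = pt$ with $t\in\mathbb{Z}$, and the equation then delivers $l' = m-qt$ for free. Now I impose the constraints in turn: $k'\geq 0$ requires $pt\geq 1$, hence $t\geq 1$; $l'\geq 0$ requires $qt\leq m$; and $l'\geq k'$ requires $(p+q)t\leq m+1$, which is strictly the sharpest of the three upper bounds (for $p,q\geq 1$). Taking integer parts produces $t\leq \lfloor (m+1)/(p+q)\rfloor$, matching the claim; the side condition $m+2\in\mathbb{N}$ simply records that $m$ lies in the index set of the principal part \eqref{domainA}, and if the floor bound drops below $1$ the conclusion is vacuously that there are no solutions.

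For part 3, I rewrite \eqref{coin_cond2_1inf} as $q(k-k') = p(l'-l)$. Again using $\gcd(p,q)=1$, I set $k-k' = pt$ and $l'-l = qt$, i.e.\ $k' = k-pt$ and $l' = l+tq$, with $t\in\mathbb{Z}$; this is the general integer solution. The inequality $k'\geq 0$ gives $t\leq \lfloor k/p\rfloor$, and $l'\geq k'$ gives $t(p+q)\geq k-l$, which under the hypothesis $l\geq k$ implies a non-positive lower bound $t\geq (k-l)/(p+q)$ and is stronger than $l'\geq 0$ (as a short comparison $-(l-k)/(p+q)\geq -l/q$ confirms via $lp+qk\geq 0$). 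Rounding down to integers in accordance with the paper's convention $[\,\cdot\,]=\lfloor\,\cdot\,\rfloor$ reproduces the stated range.

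The only real obstacle is bookkeeping: the divisibility step is standard once $q\cdot(\text{LHS})=p\cdot(\text{RHS})$ is arranged, but I have to keep careful track of which inequality actually binds (in part 2 the $l'\geq k'$ condition dominates $l'\geq 0$; in part 3 the same happens on the lower end), and of the sign of $(k-l)/(p+q)$ so that floor versus ceiling conventions come out consistently. Once those bounds are unpacked, all three parts assemble into the stated parameterization.
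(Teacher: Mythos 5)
Your proof is correct and follows the elementary route the paper leaves implicit (the Lemma is stated in Appendix B.1.1 without an explicit proof): clear the denominator $q$, invoke coprimality of $p$ and $q$ to introduce the integer parameter $t$, and intersect with the domain inequalities $k',l'\geq 0$, $l'\geq k'$ to identify the binding bounds on $t$. The only cosmetic point is the lower bound in part 3: for integer $t$ the constraint $t\geq (k-l)/(p+q)$ with $k-l\leq 0$ is strictly a ceiling (equivalently, truncation toward zero) rather than a floor, but this is a reading of the paper's own $[\,\cdot\,]$ notation and does not affect the parameterization or the finiteness conclusion.
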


In simple words, the Lemma claims  that for a given $m,k,l$ there exist only finitely  many solutions $m',k',l'$. Consequently, only finitely many indices can be the same and the respective term $c^a$ in the Puiseux series has an overall coefficient which is a finite linear combination of elements of $\bold{P}^{(1+\varkappa)}$, cf. \eqref{exB1}--\eqref{exB2}. However, it is  also claimed that there are infinitely many $m,k,l$ which allow non-trivial solution. For example, for all $m\geq p+q$ there is a non-trivial solution of \eqref{coin_cond1_1inf}. Therefore, there are infinitely many degenerate indices in the summation domain $A^{(1+\frac{p}{q})}$ \eqref{domainA}.

To describe the coefficients with (non)degenerate indices $a\in A^{(1+\varkappa)}$  one introduces a set of pairs that defines  $a$: 
\be
\{(n,m)\}_a := \{(n,m)~|~a = -n-m\varkappa\}\,,
\ee 
and formulates the following 
\begin{prop}
\label{general_a_1inf}
Let $\varkappa = p/q $ with $p,q$ being  coprime and   $a\in A^{(1+\varkappa)}$ be an arbitrary index defined by pairs from $\{(n,m)\}_a$.  Then, the respective term in the  Puiseux series is given by 
\be
\label{B_sum_1inf}
\bigg[ \sum\limits_{\{(n,m)\}_a} \g2_{n,m}^{+}\bigg]\,c^a \;.
\ee
\end{prop}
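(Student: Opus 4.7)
The plan is to deduce the Proposition from Lemma \bref{lemma:eqs} combined with the universal double expansion \eqref{main_pui} and Proposition \bref{array_universality_1inf}, reducing it to a purely combinatorial regrouping.

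First, I would recall the geometric-series derivation of \eqref{main_pui}: one factors $\g2^{(1+\varkappa)}$ as in \eqref{g2_1inf} and expands the factor $(1+X(c))^{-1}$, with $X(c)$ a sum of strictly negative powers of $c$, as a formal power series in $X(c)$. Treating the monomials $c^{-1}$ and $c^{-\varkappa}$ appearing in $X(c)$ as independent symbols $x^{-1}$ and $y^{-1}$ yields a formal bivariate Laurent series whose coefficients $\g2^+_{n,m}$ are, by Proposition \bref{array_universality_1inf}, independent of $\varkappa$. This step is purely algebraic, hence it applies verbatim when $\varkappa = p/q \in \QQ$, provided one afterwards substitutes $y = c^{p/q}$.

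Next, I would perform this substitution and regroup by equal powers of $c$. Each monomial $x^{-n} y^{-m}$ specializes to $c^{-n - m p/q}$, and two pairs $(n,m) \neq (n',m')$ yield the same exponent $a$ iff they both lie in $\{(n,m)\}_a$, i.e., iff they solve one of the three equations \eqref{coin_cond0_1inf}--\eqref{coin_cond2_1inf}. Lemma \bref{lemma:eqs} enumerates all such solutions and shows that for every fixed $a \in A^{(1+p/q)}$ the set $\{(n,m)\}_a$ is finite. Consequently regrouping is legitimate and produces the finite inner sum of \eqref{B_sum_1inf}.

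The main obstacle is the bookkeeping step of justifying that the Puiseux expansion of $\g2^{(1+p/q)}$ in the single variable $c^{1/q}$ — obtained by applying the geometric series directly at $\varkappa = p/q$ — coincides with the regrouped bivariate expansion just constructed. Since both procedures are formal and agree term by term before any collision of exponents occurs, uniqueness of formal Puiseux expansions, together with the finiteness guaranteed by Lemma \bref{lemma:eqs}, forces equality of the coefficients of each $c^a$, yielding \eqref{B_sum_1inf} directly.
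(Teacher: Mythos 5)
Your proposal is correct and follows essentially the same route as the paper: the splitting-trick double expansion \eqref{main_pui} with $\varkappa$-independent coefficients (Proposition \bref{array_universality_1inf}), followed by specializing $y=c^{p/q}$ and collecting the finitely many colliding exponents enumerated by Lemma \bref{constraint_solutions_1inf}. Your extra remark on uniqueness of the formal Puiseux expansion only makes explicit what the paper leaves implicit when it treats the large-$c$ expansions as formal series.
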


\paragraph{Principal and regular parts.} In what follows we discuss  three statements about principal and regular parts of the Puiseux series: (1) positive degree terms in the principal part are not degenerate for any (ir)rational $\varkappa$; (2) a $c^0$-term can be degenerate for  rational $\varkappa$; (3) an infinite degeneracy in the regular part for rational $\varkappa$.

A convenient tool which allows one to structure the Puiseux series is to divide  the interval $(0,\infty) $ as 
\be
\label{dec_int}
(0,\infty) = \bigcup_{N=0}^{\infty}\Big[\frac{1}{N+1}, \frac{1}{N}\Big)\,,
\qquad
\text{(see fig. \bref{fig:plot})}.
\ee
This representation is instrumental in controlling a number of terms in the principal part. Indeed, for all $\varkappa \in \left[\frac{1}{N+1}, \frac{1}{N}\right)$ we have $N(\varkappa) = const = N$, so that the principal part has exactly $N+3$ terms of the from \eqref{prin_1inf} for all such $\varkappa$'s. Going to the adjacent interval increases a number of terms in the principal part by one.

(1) Let us focus on the principal part without $c^0$-term. Such terms have coefficients $\g2_a^{(1+\varkappa)}$ which are parameterized by positive $a\in A^{(1+\varkappa)}$. The condition $a>0$ holds  only for indices of the form $a = 1 - m \varkappa$ for some $m \geq -1$. In all other cases,  $a\leq 0$. One can see that for a given $\varkappa$ all positive indices $a>0$ are non-degenerate. Indeed, it directly follows from Lemma \bref{constraint_solutions_1inf}:  \eqref{coin_cond0_1inf} has only trivial solutions, while the right-hand side of \eqref{coin_cond1_1inf} is always   non-positive. Then $a = 1-m\varkappa$ for $\varkappa\in \Big[\frac{1}{N+1}, \frac{1}{N}\Big)$ and $-1\leq m \leq N$ is non-degenerate so

\be
\g2_a^{(1+\varkappa)} = \g2_{-1,m}.
\ee

(2) Now, we consider a $c^0$-term in the principal part, $\g2_{0}^{(1+\varkappa)}$. From Lemma \bref{constraint_solutions_1inf} we see that the only non-trivial solution of \eqref{coin_cond1_1inf}--\eqref{coin_cond2_1inf} for $k=l=0$ exists only if $\varkappa = \frac{1}{N}, N\in\mathbb{N}$ and it is a solution of \eqref{coin_cond1_1inf} of the form $m = N$. We call such $\varkappa = \frac{1}{N}$ \textit{exceptional}, while any other $\varkappa \in (0,\infty)\setminus \{\frac{1}{N}, N\in \mathbb{N} \}$  are \textit{regular} (see $\tilde \alpha_n$ in Definition \bref{def:reg}). Thus, $a = 0 $ is degenerate  only for exceptional $\varkappa$'s which are exactly boundaries of intervals in \eqref{dec_int}. The $c^0$-term is given by
\be
\begin{aligned}
    & \g2_0^{(1+\varkappa)} = \g2_{0,0}\,, \quad \text{regular } \varkappa\;; \\
    & \g2_0^{(1+\varkappa)} = \g2_{0,0} + \g2_{-1,N} \,, \quad \text{exceptional } \varkappa = \frac{1}{N}\,.
\end{aligned}
\ee
Note that   $\g2_{-1,N} c^{1-N\varkappa}$ is in the regular part for $\varkappa > \frac{1}{N}$  and in the principal  part for $\varkappa < \frac{1}{N}$ and exactly at $\varkappa = \frac{1}{N}$ the coefficient  $\g2_{-1,N}$ is a part of the $c^0$-term. It  demonstrates  the special role of $c^0$-term: it indicates whether the transition of terms of the form $\g2_{-1,m}$ from the regular part to the principal part happens or not. 

(3) We see that  all indices in the principal part are non-degenerate except $a = 0$ at  $\varkappa = \frac{1}{N}$ and, therefore, the principal part is quite simple to describe. In contrast, the regular part is much more complicated. Here,  for every rational $\varkappa$ there are infinitely many  different degenerate indices. It can be seen from Lemma \bref{constraint_solutions_1inf}. E.g. consider $\varkappa = \frac{p}{q}$ with coprime  $p,q$, then from the constraint \eqref{coin_cond1_1inf} one can see that for every $m \geq p+q-1$ the index $a = 1-m\frac{p}{q}$ is degenerate because there exist $k' = p -1$ and $l' = m-q$ such that $1-m\frac{p}{q} = -k'-l'\frac{p}{q}$. The same degeneracy takes place  for all indices of the form $a = -k-l\frac{p}{q}$, where $k \geq p$. In this case, from the solution of \eqref{coin_cond2_1inf} we see that there exist $k' = k-p$ and $l' = l+q$ such that $-k-l\frac{p}{q} = -k'-l'\frac{p}{q}$.

The above discussion can be summarized as follows.  
\begin{prop}
Let $N \in \mathbb{Z}^+_0$ and $\varkappa \in \big[\frac{1}{N+1}, \frac{1}{N} \big) \subset (0, \infty)$.
\begin{enumerate}

\item The principal part has exactly $N+3$ terms of the form:
\be
\label{prin_g2_1inf}
\ba{c}
\dps
\pr\g2^{(1+\varkappa)} = \g2_{1+\varkappa}^{(1+\varkappa)}\,c^{1+\varkappa}+\g2_{1}^{(1+\varkappa)} \,c+... +\g2_{1-N\varkappa}^{(1+\varkappa)}\,c^{1-N\varkappa}+\g2_0^{(1+\varkappa)} 
\vspace{2mm}
\\
\dps
 = \left(\sum_{m=-1}^{N} \g2_{1-m\varkappa}^{(1+\varkappa)}\, c^{1-m\varkappa}\right) +\g2_0^{(1+\varkappa)}\,.
\ea
\ee

\item Let $m = -1, 0, ... , N$. Then, the respective coefficients in $\pr\g2^{(1+\varkappa)}$ are independent of $\varkappa$, i.e. 
\be
\label{g2_prin_coeff_1inf}
\g2_{1-m\varkappa}^{(1+\varkappa)} = \g2_{-1,m}\,,
\ee
As a consequence, the coefficient  at $c^a, a = 1-m\varkappa$ is the same for all $\varkappa \in \big(0, \frac{1}{m} \big) \supset \big[ \frac{1}{N+1}, \frac{1}{N} \big)$.
     
\item  The $c^0$-term in $\pr\g2^{(1+\varkappa)}$ depends on $\varkappa$ as 
\be
\label{g2_f0_1inf}
\g2_0^{(1+\varkappa)} = \g2_{0,0} + \delta_{\varkappa, 1/N}\g2_{-1,N}\;,
\ee
where the Kronecker symbol is used in the last term. 
\end{enumerate}
\end{prop}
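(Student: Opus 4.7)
The three claims all follow from a careful bookkeeping of the summation domain $A^{(1+\varkappa)}$ combined with the previously established independence result (Proposition \ref{array_universality_1inf}) and the classification of degeneracies (Lemma \ref{constraint_solutions_1inf}, Proposition \ref{general_a_1inf}). The plan is to identify which indices $a \in A^{(1+\varkappa)}$ satisfy $a \geq 0$ and then evaluate the corresponding coefficients via the degeneracy analysis.

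First, for part 1, recall from \eqref{domainA} that $A^{(1+\varkappa)}$ is the union of $\{1 - m\varkappa : m \geq -1\}$ and $\{-k - l\varkappa : k,l \in \mathbb{Z}^+_0,\, l \geq k\}$. I would enumerate the non-negative elements of each set. For the first set, $1 - m\varkappa > 0$ iff $m < 1/\varkappa$; since $\varkappa \in [\tfrac{1}{N+1}, \tfrac{1}{N})$, this gives $1/\varkappa \in (N, N+1]$, so $m \in \{-1, 0, 1, \ldots, N\}$ yields strict positivity when $m < 1/\varkappa$ and $m = N$ may produce $1 - N\varkappa = 0$ precisely when $\varkappa = 1/N$. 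For the second set, $-k - l\varkappa \geq 0$ forces $k = l = 0$ (since $\varkappa > 0$ and $l \geq k \geq 0$). Counting yields exactly $N + 3$ contributions, producing \eqref{prin_g2_1inf}.

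For part 2, I would apply Proposition \ref{general_a_1inf} to indices $a = 1 - m\varkappa$ with $m \in \{-1, 0, \ldots, N\}$ and $1 - m\varkappa > 0$. By Lemma \ref{constraint_solutions_1inf}, equation \eqref{coin_cond0_1inf} has only the trivial solution (no collision with another first-set index), while \eqref{coin_cond1_1inf} has no solution at all when the left-hand side is strictly positive (since its right-hand side $-k' - l'\varkappa \leq 0$). Hence $\{(n',m')\}_a = \{(-1, m)\}$, so the sum in \eqref{B_sum_1inf} reduces to a single term and $\g2^{(1+\varkappa)}_{1-m\varkappa} = \g2^{+}_{-1, m}$. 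Since the right-hand side depends only on $m$, the coefficient is independent of $\varkappa$ throughout the interval $(0, 1/m)$, which contains $[\tfrac{1}{N+1}, \tfrac{1}{N})$.

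Finally, for part 3 with $a = 0$, I would again apply Lemma \ref{constraint_solutions_1inf}. The set $\{(n,m)\}_0$ always contains $(0,0)$ (from the second-set contribution). A nontrivial additional element can come only from the first set via $1 - m\varkappa = 0$, i.e.\ $\varkappa = 1/m$ with $m \in \mathbb{N}$; within our interval this forces $\varkappa = 1/N$ and $m = N$. Any other potential collisions are excluded: \eqref{coin_cond2_1inf} with $k = l = 0$ forces $k' = -tp$, $l' = tq$ which violates $l' \geq k' \geq 0$ unless $t = 0$. Thus $\{(n,m)\}_0 = \{(0,0)\}$ for regular $\varkappa$ and $\{(0,0), (-1, N)\}$ for the exceptional point $\varkappa = 1/N$, giving \eqref{g2_f0_1inf} via Proposition \ref{general_a_1inf}.

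The main technical burden is the case analysis in part 3: one must verify that no other index collisions occur at $a = 0$ for any rational $\varkappa$ in the interval, which requires invoking all three branches of Lemma \ref{constraint_solutions_1inf} and using the coprimality of $p, q$ to rule out spurious solutions. The counting in part 1 and the non-degeneracy in part 2 are then essentially immediate consequences of the sign constraints built into the lemma.
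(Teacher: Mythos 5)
Your overall strategy coincides with the paper's: classify the non-negative elements of the summation domain $A^{(1+\varkappa)}$ from \eqref{domainA}, then invoke Proposition \bref{array_universality_1inf} for the $\varkappa$-independence of the coefficients and Lemma \bref{constraint_solutions_1inf} together with Proposition \bref{general_a_1inf} to decide which indices are degenerate. Parts 1 and 2 are handled exactly as in the paper and are correct: strictly positive indices arise only as $a=1-m\varkappa$ with $m\le N<1/\varkappa$, they cannot collide with anything because the right-hand sides of \eqref{coin_cond1_1inf} and \eqref{coin_cond2_1inf} are non-positive, and the second component of $A^{(1+\varkappa)}$ meets $a\ge 0$ only at $k=l=0$.

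Part 3, however, contains a concrete false step. On the half-open interval $\big[\tfrac{1}{N+1},\tfrac{1}{N}\big)$ the point $\varkappa=1/N$ is \emph{excluded}, so your assertion that ``within our interval this forces $\varkappa=1/N$ and $m=N$'' cannot hold: the only reciprocal integer in the interval is the left endpoint $\varkappa=\tfrac{1}{N+1}$, where the collision is $1-(N+1)\varkappa=0$ and the extra contribution to the $c^0$-term is $\g2_{-1,N+1}$, not $\g2_{-1,N}$. (Relatedly, your count in part 1 should record that at this left endpoint the $m=N+1$ element of the first set lands at $a=0$ and merges with the $(0,0)$ contribution, so the total remains $N+3$.) To be fair, the printed formula \eqref{g2_f0_1inf} carries the same off-by-one — the Kronecker delta $\delta_{\varkappa,1/N}$ never fires on the stated interval — but the derivation preceding the Proposition unambiguously places the degenerate points at the boundaries $\varkappa=1/K$, $K\in\NN$, with extra coefficient $\g2_{-1,K}$, which on $\big[\tfrac{1}{N+1},\tfrac{1}{N}\big)$ gives $K=N+1$. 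A sound proof should therefore either derive $\g2_0^{(1+\varkappa)}=\g2_{0,0}+\delta_{\varkappa,1/(N+1)}\,\g2_{-1,N+1}$ and flag the discrepancy with the statement, or work on the interval $\big(\tfrac{1}{N+1},\tfrac{1}{N}\big]$; as written, your justification rests on the membership $1/N\in\big[\tfrac{1}{N+1},\tfrac{1}{N}\big)$, which is false.
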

This Proposition about the second coefficient $g_2$ is in fact Proposition \bref{prop2} formulated  in Section  \bref{sec:structure} for the whole logarithmic block. Namely, coming back to $\alpha = 1+\varkappa$, 
\begin{itemize}
    \item equation \eqref{prin_g2_1inf} corresponds to equations \eqref{f_sequence2}--\eqref{lead_exp2}; 
    \item equality of functions in the columns of \eqref{table2} except for the rightmost column, i.e. equations \eqref{columneq2}, except for the last line, follow from  \eqref{g2_prin_coeff_1inf}; 
    \item  equality  of functions in the rightmost column of \eqref{table2}, i.e. the last line in \eqref{columneq2} along with different $f_0$ at exceptional points  \eqref{f0_exceptional2} directly follow from  \eqref{g2_f0_1inf}.
\end{itemize}

\subsubsection{${\alpha\in (0, 1)}$}
\label{app:second_01}

Let us parameterize  real $\alpha \in (0, 1)$ as $\alpha = 1 - \varkappa$, where  $\varkappa \in (0,1)$, and reorganize  the second coefficient  \eqref{f_2ex}  as
\be
\label{g2_01}
\g2^{(1-\varkappa)}  
= \left(\frac{B}{N} c^{1-\varkappa}+ \frac{A}{N} c^{1-2\varkappa} + \frac{D}{N} c^{0} + \frac{C}{N} c^{-\varkappa}\right)\cfrac{1}{\dps 1+\frac{M}{N}c^{-\varkappa} + \frac{K}{N}c^{-1} + \frac{L}{N}c^{-1+\varkappa}}\,.
\ee
Similar to Appendix  \bref{app:second_1inf} one represents the resulting expression  as a double series in variables $x \equiv c$ and $y \equiv c^{-\varkappa}$  treated as independent:
\be
\label{main_pui_01}
\g2^{(1-\varkappa)} = \sum_{n=-1}^{\infty} \sum_{m\geq -n}\g2_{n,m}^{(1-\varkappa)}\, x^{-n} y^{m} \;\equiv\; \sum_{n=-1}^{\infty} \sum_{m\geq -n}\g2_{n,m}^{(1-\varkappa)}\, c^{-n- m\varkappa}\;.
\ee
Here, $\g2_{n,m}^{(1-\varkappa)}$ are some expansion coefficients which are  rational functions of $A,B,C,...\,$ which can be conveniently organized as the following set  
\be 
\label{array_01}
\hspace{-2mm}\bold{P}^{(1-\varkappa)} =
\left\{ \g2_{n,m}^{(1-\varkappa)}~|~n, m\in\mathbb{Z},\; n \geq -1, \; m\geq -n \right\}\;.
\ee 
The principal part of \eqref{g2_01} is given by
\be
\label{prin_01}
\pr\g2^{(1-\varkappa)} =  \sum_{m=1}^{M} \g2^{(1-\varkappa)}_{-1,m}\, c^{1-m\varkappa} + \g2_{0,0}^{(1-\varkappa)}\,,
\end{equation}
with  $M$ being a maximum integer number depending on $\varkappa$ such that  
\be
\label{M_ineq}
M = M(\varkappa): \qquad 1-M \varkappa > 0\,.
\ee 
Thus, in the principal part there are $M+1$ terms  of degrees   $\{0, 1-m\varkappa ~| ~m=1,2,...\,,M\}$. The regular part of $\g2^{(1-\varkappa)}$ is given by 
\be
\label{reg_01}
\reg \g2^{(1-\varkappa)} = \sum_{m=M+1}^\infty \g2^{(1-\varkappa)}_{-1,m}\,c^{1-m\varkappa} +\sum_{l=1}^{\infty}\g2_{0,l}^{(1-\varkappa)}\,c^{-l\varkappa}+ \sum_{k=1}^{\infty}\sum_{\substack{l=-k \\
}}^{\infty}\g2_{k,l}^{(1-\varkappa)}\,c^{-k-l\varkappa} \,.
\ee

\noindent The following proposition comes  directly from using  the splitting trick in \eqref{main_pui_01}:
\begin{prop}
    \label{array_universality_01}
    The set of coefficients $\bold{P}^{(1-\varkappa)}$ \eqref{array_01} is independent of $\varkappa$: 
        \be 
            \bold{P}^{(1-\varkappa)} \equiv \bold{P}^{(1-\varkappa')}\,,
            \qquad \forall\varkappa, \varkappa' \in (0,1),
        \ee
        or, equivalently, 
\be
\label{B_coin_01}
\g2^{(1-\varkappa)}_{n,m} = \g2^{(1-\varkappa')}_{n,m}\;.
\ee
\end{prop}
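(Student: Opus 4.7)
The plan is to show that $\g2^{(1-\varkappa)}$ depends on the heaviness parameter only through the substitution $(x,y)\mapsto(c,c^{-\varkappa})$ into a $\varkappa$-independent rational function of two variables. First I would rewrite the right-hand side of \eqref{g2_01} as $F(x,y)|_{x=c,\, y=c^{-\varkappa}}$, where
\begin{equation}
F(x,y) \;=\; \frac{\tfrac{B}{N}xy + \tfrac{A}{N}xy^{2} + \tfrac{D}{N} + \tfrac{C}{N}y}{1 + \tfrac{M}{N}y + \tfrac{K}{N}x^{-1} + \tfrac{L}{N}x^{-1}y^{-1}}\, .
\end{equation}
The coefficients $A,B,C,D,M,N,K,L$ depend only on $\delta$ and $\tdelta$, so $F(x,y)$ itself does not involve $\varkappa$.

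Next I would expand $F(x,y)$ around $(x^{-1},y)=(0,0)$ via the geometric series $(1+u)^{-1} = \sum_{k\geq 0}(-u)^{k}$ with $u = \tfrac{M}{N}y + \tfrac{K}{N}x^{-1} + \tfrac{L}{N}x^{-1}y^{-1}$. For a fixed target monomial $x^{-n}y^{m}$, a multinomial count of contributions from $u^{k}$ shows that only finitely many values of $k$ (roughly $n+m\leq k\leq 2n+m$, up to a shift coming from the numerator) can contribute, so each coefficient is a finite sum of rational expressions in $A,\ldots,L$. Multiplying by the numerator (which has bounded $x$- and $y$-support) preserves this finiteness and yields a unique formal expansion
\begin{equation}
F(x,y) \;=\; \sum_{n\geq -1}\sum_{m\geq -n}\g2_{n,m}\, x^{-n}y^{m},
\end{equation}
where $\g2_{n,m}$ is manifestly $\varkappa$-independent. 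Substituting $x=c$, $y=c^{-\varkappa}$ reproduces \eqref{main_pui_01} and identifies $\g2_{n,m}^{(1-\varkappa)} = \g2_{n,m}$, which is exactly \eqref{B_coin_01}.

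I expect the only nontrivial point to be the well-definedness of the labels $(n,m)$. When $\varkappa\in\mathbb{Q}$, two distinct pairs $(n,m)\neq(n',m')$ can produce the same exponent $-n-m\varkappa=-n'-m'\varkappa$ of $c$, so the individual coefficients cannot be recovered from the Puiseux series in $c$ alone; they are intrinsically the coefficients of the two-variable expansion in $x$ and $y$ \emph{before} specialization. Once this convention is made precise, the proposition is an immediate corollary of the uniqueness of the geometric-series expansion of the $\varkappa$-independent function $F(x,y)$, and the split \eqref{prin_01}--\eqref{reg_01} into principal and regular parts follows by inspecting the sign of $-n-m\varkappa$ for each $(n,m)$. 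No further combinatorics beyond the finiteness bound on $k$ is needed.
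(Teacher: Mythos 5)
Your proposal is correct and is essentially the paper's own argument: the paper proves this proposition by the same ``splitting trick,'' writing $\g2^{(1-\varkappa)}$ as a $\varkappa$-independent rational function of $x=c$ and $y=c^{-\varkappa}$ treated as independent variables and reading off the double expansion \eqref{main_pui_01}, so that the coefficients $\g2_{n,m}$ manifestly do not depend on $\varkappa$. Your added remarks (the finiteness bound $n+m\leq k\leq 2n+m$ on contributing powers of the geometric series, and the caveat that for rational $\varkappa$ the labels $(n,m)$ are only well-defined at the level of the two-variable expansion because of exponent degeneracies) make explicit points the paper treats, respectively, implicitly and in its separate discussion of degenerate indices.
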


\begin{corollary} We can change the notation and suppress the superscript $\varkappa$:
\be
\bold{P}^{-} \equiv \bold{P}^{(1-\varkappa)}\;, 
\qquad
\g2_{n,m}^- \equiv \g2^{(1-\varkappa)}_{n,m}\;.
\ee 
\end{corollary}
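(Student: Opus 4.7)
The plan is to establish the $\varkappa$-independence directly from the splitting construction that defines the coefficients. First I would set $x \coloneqq c$ and $y \coloneqq c^{-\varkappa}$ and reinterpret the expression \eqref{g2_01} as a formal rational function
\be
R(x,y) = \frac{\frac{B}{N} xy + \frac{A}{N} x y^{2} + \frac{D}{N} + \frac{C}{N} y}{1+ \frac{M}{N} y + \frac{K}{N} x^{-1}+ \frac{L}{N} x^{-1} y^{-1}}
\ee
in two independent indeterminates $x, y$. The coefficients of this rational function are rational expressions in $A,B,C,D,K,L,M,N$ only; the parameter $\varkappa$ has been moved entirely into the substitution rule $y = c^{-\varkappa}$, and nowhere  appears inside $R$.

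Next I would expand the denominator as a formal geometric series $\sum_{j\ge 0} (-X)^{j}$ in $X \coloneqq \frac{M}{N} y+\frac{K}{N} x^{-1}+\frac{L}{N} x^{-1} y^{-1}$, which converges in the $(x^{-1}, y)$-adic topology, and multiply through by the numerator.  The result is a well-defined formal double series
\be
R(x,y) = \sum_{n\ge -1}\sum_{m\ge -n} r_{n,m}\, x^{-n} y^{m},
\ee
with explicit rational coefficients $r_{n,m}$ in $A,\dots,N$ independent of $\varkappa$; the summation range $\{n\ge -1,\; m\ge -n\}$ follows by inspection of the four numerator monomials together with the three monomials in $X$. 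Substituting $y = c^{-\varkappa}$ yields term-by-term $x^{-n} y^{m} \mapsto c^{-n - m\varkappa}$, reproducing the double series \eqref{main_pui_01}, and matching $r_{n,m} = \g2_{n,m}^{(1-\varkappa)}$ as soon as the map $(n,m)\mapsto -n-m\varkappa$ is injective on the index set.

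For irrational $\varkappa\in(0,1)$ the map is injective, so the identification $\g2_{n,m}^{(1-\varkappa)} = r_{n,m}$ is immediate; since $r_{n,m}$ does not depend on $\varkappa$, one gets $\g2_{n,m}^{(1-\varkappa)} = \g2_{n,m}^{(1-\varkappa')}$ for all irrational $\varkappa, \varkappa'$.  For rational $\varkappa$ the definition of $\g2_{n,m}^{(1-\varkappa)}$ is not read off from the Puiseux series in $c$ alone (because of degeneracies of the form $-n-m\varkappa = -n'-m'\varkappa$ for $(n,m)\ne (n',m')$), but rather from the bivariate expansion itself; accordingly $\g2_{n,m}^{(1-\varkappa)}$ is defined to be $r_{n,m}$ by convention, so the equality holds tautologically for rational values as well.

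The only real subtlety, which I would address explicitly, is justifying that the formal bivariate expansion converges in the $(x^{-1},y)$-adic sense — i.e.\ that each prescribed monomial $x^{-n}y^{m}$ in $R$ receives only finitely many contributions from $(-X)^{j}$ as $j$ varies. This is a routine degree-bookkeeping check using the Newton polygon of the denominator: each of the three monomials in $X$ has non-positive $x$-degree with at least one strictly negative contribution once $y$ is allowed to grow, bounding the relevant $j$ at any fixed $(n,m)$. Once this bookkeeping is in place, Proposition \bref{array_universality_01} follows and the suppressed-superscript notation $\g2^{-}_{n,m}$ is well-defined.
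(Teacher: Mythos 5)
Your proposal is correct and follows essentially the same route as the paper: the coefficients $\g2_{n,m}^{(1-\varkappa)}$ are defined through the bivariate (splitting-trick) expansion in $x=c$ and $y=c^{-\varkappa}$, the rational function in $x,y$ carries no $\varkappa$-dependence, and hence neither do its expansion coefficients, which is exactly Proposition \bref{array_universality_01} and makes the suppressed-superscript notation well-defined. Your extra care about the adic convergence of the geometric series and about the degeneracy for rational $\varkappa$ only makes explicit what the paper leaves implicit.
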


\vspace{5mm} 

Let us now represent the Puiseux series   \eqref{main_pui_01} as
\be
\label{puis1_01}
\g2^{(1-\varkappa)} = \sum_{a\in A} \g2^{(1-\varkappa)}_{a}\, c^a\,,
\ee
where the summation domain is defined as
\be
\label{domainA0_01}
A^{(1-\varkappa)} = \big\{-n-m \varkappa ~|~n, m\in\mathbb{Z},\; n \geq -1, \; m\geq -n   \big\}, 
\ee  
and the series coefficients $\g2^{(1-\varkappa)}_{a}$ are elements of $\bold{P}^{-}$ \eqref{array_01}. Here, for a given $\varkappa$ each pair $(n,m)$ defines a particular power $a$. The interval $(0,1)\ni \varkappa$ can be divided as 
\be
\label{dec_int}
(0,1) = \bigcup_{M=1}^{\infty}\Big[\frac{1}{M+1}, \frac{1}{M}\Big)\,,
\qquad
\text{(see fig. \bref{fig:plot})}\;.
\ee
For all $\varkappa \in \left[\frac{1}{M+1}, \frac{1}{M}\right)$ we have $M(\varkappa) = const = M$, so that the principal part has exactly $M+1$ terms of the from \eqref{prin_01}.

One can see that the reasoning from the previous Appendix \bref{app:second_1inf} directly applies to the present case and the main conclusions have not changed. Thus, we can immediately collect the key facts in the following proposition.
\begin{prop}
Let $M \in \mathbb{N}$ and $\varkappa \in \big[\frac{1}{M+1}, \frac{1}{M} \big) \subset (0, 1)$.
\begin{enumerate}

\item   The principal part has exactly $M+1$ terms of the form:
\be
\label{prin_g2_01}
\ba{c}
\dps
\pr\g2^{(1-\varkappa)} = \g2_{1-\varkappa}^{(1-\varkappa)}\,c^{1-\varkappa}+... +\g2_{1-M\varkappa}^{(1-\varkappa)}\,c^{1-M\varkappa}+\g2_0^{(1-\varkappa)} 
\vspace{2mm}
\\
\dps
 = \left(\sum_{m=1}^{M} \g2_{1-m\varkappa}^{(1-\varkappa)}\, c^{1-m\varkappa}\right) +\g2_0^{(1-\varkappa)}\,.
\ea
\ee

\item Let $m = 1, 2, ..., \, M$. Then, the respective coefficients in $\pr\g2^{(1-\varkappa)}$ are independent of $\varkappa$, i.e. 
\be
\label{g2_prin_coeff_01}
\g2_{1-m\varkappa}^{(1-\varkappa)} = \g2_{-1,m}^-\,.
\ee
As a consequence, the coefficient  at $c^a, a = 1-m\varkappa$ is the same for all $\varkappa \in \big(0, \frac{1}{m} \big) \supset \big[ \frac{1}{M+1}, \frac{1}{M} \big)$.
     
\item  The $c^0$-term  in $\pr\g2^{(1-\varkappa)}$ depends on $\varkappa$ as  
\be
\label{g2_f0_01}
\g2_0^{(1-\varkappa)} = \g2_{0,0} + \delta_{\varkappa, 1/M}\g2_{-1,M}\;,
\ee
where the Kronecker symbol is used in the last term. 
\end{enumerate}
\end{prop}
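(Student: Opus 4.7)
The plan is to mirror the rational/irrational analysis developed in Appendix \bref{app:second_1inf} for the $(1+\varkappa)$ case, adapting it to the summation domain $A^{(1-\varkappa)}$ defined in \eqref{domainA0_01}. First, I would prove a direct analog of Lemma \bref{constraint_solutions_1inf} classifying solutions to the three coincidence equations obtained by equating $-n-m\varkappa = -n'-m'\varkappa$ for pairs $(n,m),(n',m')\in\{-1\}\times\mathbb{Z}\,\cup\,\mathbb{Z}_{0}^{+}\times\mathbb{Z}$ with $m\geq -n$. The case $\varkappa$ irrational is immediate: $n=n'$ and $m=m'$. For $\varkappa=p/q$ coprime, I would solve each case explicitly, yielding finite lists of degenerate pairs; this step is essentially a book-keeping exercise but is the main obstacle, since one must check that the geometry of $A^{(1-\varkappa)}$ (with the extra constraint $m\geq -n$ absent in the $(1+\varkappa)$ setup) does not introduce unexpected degeneracies at non-positive indices.

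With the classification at hand, part 1 follows by inspection of the constraint $a=-n-m\varkappa \geq 0$ in $A^{(1-\varkappa)}$: the condition $n\geq -1$ together with $m\geq -n$ forces either $n=-1$ (contributing $a=1-m\varkappa$ for $m\geq 1$) or $(n,m)=(0,0)$ (contributing $a=0$). The strict inequality $1-m\varkappa>0$ selects $m\leq M$ for $\varkappa\in[\frac{1}{M+1},\frac{1}{M})$, yielding exactly $M$ positive-degree terms plus the $c^{0}$-term, i.e. $M+1$ in total, as asserted in \eqref{prin_g2_01}.

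For part 2, the indices $a=1-m\varkappa$ with $1\leq m\leq M$ are strictly positive. The classification from step 1 shows that none of the three coincidence equations admits a non-trivial solution matching a strictly positive index with an index of the form $-k-l\varkappa$ (for $k\geq 0, l\geq -k$), because the right-hand side of such an equation is $\leq 0$. Hence each positive-degree index is non-degenerate, and by the splitting trick together with Proposition \bref{array_universality_01} one obtains $\g2_{1-m\varkappa}^{(1-\varkappa)}=\g2_{-1,m}^{-}$, independent of $\varkappa$.

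For part 3, the index $a=0$ always receives the contribution $\g2_{0,0}^{-}$ from $(n,m)=(0,0)$. Additional contributions could come only from $(n,m)=(-1,m)$ with $1-m\varkappa=0$, i.e.\ $\varkappa=1/m$, or from $(n,m)=(k,l)$ with $k\geq 1$, $k+l\varkappa=0$; the latter is impossible under $m\geq -n$ with $k\geq 1$ and $\varkappa>0$. Thus the only extra contribution occurs exactly at the exceptional points $\varkappa=1/M$, where $(-1,M)$ collapses onto $a=0$ and contributes $\g2_{-1,M}^{-}$. Collecting these contributions via Proposition \bref{general_a_1inf} (adapted to the present case) yields \eqref{g2_f0_01}, which completes the proof.
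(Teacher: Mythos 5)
Your proposal is correct and follows essentially the same route as the paper, which proves this proposition by transferring the splitting-trick and degeneracy analysis of Appendix \bref{app:second_1inf} to the domain $A^{(1-\varkappa)}$ of \eqref{domainA0_01}; your explicit classification of the non-negative indices (only $n=-1$ with $1\leq m\leq M$ and $(n,m)=(0,0)$ survive) and of the unique degeneracy at $a=0$ for exceptional $\varkappa$ is exactly the intended argument. No gaps.
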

This Proposition is in fact the Proposition \bref{prop1} formulated  in Section  \bref{sec:structure} for the whole logarithmic block. Namely, coming back to $\alpha = 1-\varkappa$: 
\begin{itemize}
    \item equation \eqref{prin_g2_01} corresponds to equations \eqref{f_sequence1}--\eqref{lead_exp1}; 
    \item equality of functions in the columns of \eqref{table1} except for the rightmost column, i.e. equations \eqref{columneq1}, except for the last line, follow from  \eqref{g2_prin_coeff_01}; 
    \item  equality  of functions in the rightmost column of \eqref{table1}, i.e. the last line in \eqref{columneq1} along with different $f_0$ at exceptional points  \eqref{f0_exceptional1} directly follow from  \eqref{g2_f0_01}.
\end{itemize}

\subsubsection{$\alpha = 1,0$}
\label{app:alpha01}

\paragraph{${\bm {\alpha=1}}$.} The second coefficient  \eqref{f_2ex} of the logarithmic conformal block is drastically simplified:
\be
\label{a1}
\begin{aligned}
& \g2^{(1)}  = \left(\dfrac{A+B}{M+N}c + \dfrac{C+D}{M+N}\right) \dfrac{1}{1 + \frac{K+L}{M+N}c^{-1}}=  \sum_{k=-1}^{\infty} \g2^{(1)}_k \frac{1}{c^k}\;. \\
\end{aligned}
\ee
Thus, 
\be 
\ba{l}
\dps
\pr \g2^{(1)} = \dfrac{A+B}{M+N}c + \left(\dfrac{C+D}{M+N} - \dfrac{(A+B)(K+L)}{(M+N)^2} \right)\;, 
\vspace{2mm}
\\
\dps
\reg \g2^{(1)} =  \sum_{k=1}^{\infty} (-1)^k\left(\dfrac{K+L}{M+N}\right)^k \left(\dfrac{C+D}{M+N} - \dfrac{(A+B)(K+L)}{(M+N)^2} \right)\dfrac{1}{c^k}\;.
\ea
\ee
The leading $c^1$-term in $\pr \g2^{(1)}$ is  the Zamolodchikov's classical conformal block \cite{Zamolodchikov1987}. 

\paragraph{${\bm \alpha=0}$.} In this case, the second coefficient is even simpler, 
\be
\label{a0}
\begin{aligned}
 & \g2^{(0)} = \left(\dfrac{B+D}{N+L} + \dfrac{A+C}{N+L}\dfrac{1}{c}\right) \dfrac{1}{1 + \frac{M+K}{N+L}c^{-1}} = \sum_{k=0}^{\infty} \g2^{(0)}_k \frac{1}{c^k} = \\
& = \dfrac{B+D}{N+L} + \sum_{k=1}^{\infty} (-1)^{k-1}\left(\dfrac{M+K}{N+L}\right)^{k-1} \left(\dfrac{A+C}{N+L} - \dfrac{(B+D)(M+K)}{(N+L)^2} \right)\dfrac{1}{c^k}.
\end{aligned}
\ee
The principal part here contains the only term $\sim c^0$. Taking the limit $c=\infty$  yields the second coefficient of the global conformal block \cite{Ferrara:1974ny}.

\subsection{Newton polygons and the large-$c$ expansions} 
\label{app:forth}

In this Appendix we formulate the Newton polygon technique   which helps to obtain the Puiseux series for the $\alpha$-heavy  conformal block coefficients at large-$c$ (it is equally applicable in the $\cW_N$ case, see Appendix \bref{app:W3}). To this end, we explicitly analyze the forth coefficient of the $\alpha$-heavy logarithmic  Virasoro conformal block and formulate the general rules valid for any higher-order coefficients. 

The fourth coefficient in \eqref{f_exp_alpha} has the form $g_4 ^{(\alpha)}=\mathcal{P}^{(\alpha)}/\mathcal{Q}^{(\alpha)}$:
\be
\label{g4_num}
\mathcal{P}^{(\alpha)} = \sum_{m=1}^{3} \sum_{k=3-m}^{4} A_{k,m}c^{k+m\alpha} + \sum_{m=4}^{8}\sum_{k=0}^{8-m} A_{k,m}c^{k+m\alpha} 
\ee
 and
\be
\label{g4_den}
\mathcal{Q}^{(\alpha)} = \sum_{m=0}^{2} \sum_{k=2-m}^{4} a_{k,m}c^{k+m\alpha} + \sum_{m=3}^{7}\sum_{k=0}^{7-m} a_{k,m}c^{k+m\alpha}\,,
\ee
where $A_{k,m} = A_{k,m}(\delta, \tdelta)$ and $a_{k,m} = a_{k,m}(\delta, \tdelta)$ are particular functions which can be explicitly found by using {\it Mathematica}. The fifth coefficient in \eqref{f_exp_alpha} has   the same $c$-dependence except for concrete  $A_{k,m}$ and $a_{k,m}$. We are interested in the large-$c$ expansion of $g_4^{(\alpha)}$ at different $\alpha$. To this end, we show that the Puiseux series can be read off directly from the Newton polygons on fig. \bref{fig-num-g4} and \bref{fig-den-g4}.

\begin{figure}[h!]
 \begin{minipage}{0.48\textwidth}
     \centering
     \begin{tikzpicture}[scale = 0.5]
        \draw[thin,dotted] (0,0) grid (4,8);
        \draw[->] (0,0) -- (4.4,0) node[right] {$k$};
        \draw[->] (0,0) -- (0,8.4) node[above] {$m$};
        \foreach \x/\xlabel in { 1/1, 2/2, 3/3, 4/4}
    \draw (\x cm,1pt ) -- (\x cm,-1pt ) node[anchor=north,fill=white] {\xlabel};
  \foreach \y/\ylabel in {1/1, 2/2, 3/3, 4/4, 5/5, 6/6, 7/7, 8/8}
    \draw (1pt,\y cm) -- (-1pt ,\y cm) node[anchor=east, fill=white] {\ylabel};
    \draw[thick, red] (0,8) -- (4,4);
    \draw[thick, red] (0,8) -- (0,3);
    \draw[thick, red] (0,3) -- (2,1);
    \draw[thick, red] (2,1) -- (4,1);
    \draw[thick, red] (4,1) -- (4,4);
    \draw[fill=black] (2,1) circle (0.12) node[above right] {};
    \draw[fill=black] (3,1) circle (0.12) node[above right] {};
    \draw[fill=black] (4,1) circle (0.12) node[above right] {};
    \draw[fill=black] (1,2) circle (0.12) node[above right] {};
    \draw[fill=black] (2,2) circle (0.12) node[above right] {};
    \draw[fill=black] (3,2) circle (0.12) node[above right] {};
    \draw[fill=black] (4,2) circle (0.12) node[above right] {};
    \draw[fill=black] (0,3) circle (0.12) node[above right] {};
    \draw[fill=black] (1,3) circle (0.12) node[above right] {};
    \draw[fill=black] (2,3) circle (0.12) node[above right] {};
    \draw[fill=black] (3,3) circle (0.12) node[above right] {};
    \draw[fill=black] (4,3) circle (0.12) node[above right] {};
    \draw[fill=black] (0,4) circle (0.12) node[above right] {};
    \draw[fill=black] (1,4) circle (0.12) node[above right] {};
    \draw[fill=black] (2,4) circle (0.12) node[above right] {};
    \draw[fill=black] (3,4) circle (0.12) node[above right] {};
    \draw[fill=black] (4,4) circle (0.12) node[above right] {};
    \draw[fill=black] (0,5) circle (0.12) node[above right] {};
    \draw[fill=black] (1,5) circle (0.12) node[above right] {};
    \draw[fill=black] (2,5) circle (0.12) node[above right] {};
    \draw[fill=black] (3,5) circle (0.12) node[above right] {};
    \draw[fill=black] (0,6) circle (0.12) node[above right] {};
    \draw[fill=black] (1,6) circle (0.12) node[above right] {};
    \draw[fill=black] (2,6) circle (0.12) node[above right] {};
    \draw[fill=black] (0,7) circle (0.12) node[above right] {};
    \draw[fill=black] (1,7) circle (0.12) node[above right] {};
    \draw[fill=black] (0,8) circle (0.12) node[above right] {};
    \end{tikzpicture}
    \caption{$\mathcal{P}^{(\alpha)}$-polygon \eqref{g4_num}.}
    \label{fig-num-g4}
 \end{minipage}\hfill
  \begin{minipage}{0.48\textwidth}
     \centering
     \begin{tikzpicture}[scale = 0.5]
          \draw[thin,dotted] (0,0) grid (4,8);
        \draw[->] (0,0) -- (4.4,0) node[right] {$k$};
        \draw[->] (0,0) -- (0,8.4) node[above] {$m$};
        \foreach \x/\xlabel in { 1/1, 2/2, 3/3, 4/4}
    \draw (\x cm,1pt ) -- (\x cm,-1pt ) node[anchor=north,fill=white] {\xlabel};
  \foreach \y/\ylabel in {1/1, 2/2, 3/3, 4/4, 5/5, 6/6, 7/7, 8/8}
    \draw (1pt,\y cm) -- (-1pt ,\y cm) node[anchor=east, fill=white] {\ylabel};
    \draw[thick, red] (0,7) -- (4,3);
    \draw[thick, red] (0,7) -- (0,2);
    \draw[thick, red] (0,2) -- (2,0);
    \draw[thick, red] (2,0) -- (4,0);
    \draw[thick, red] (4,0) -- (4,3);
    \draw[fill=black] (2,0) circle (0.12) node[above right] {};
    \draw[fill=black] (3,0) circle (0.12) node[above right] {};
    \draw[fill=black] (4,0) circle (0.12) node[above right] {};
    \draw[fill=black] (1,1) circle (0.12) node[above right] {};
    \draw[fill=black] (2,1) circle (0.12) node[above right] {};
    \draw[fill=black] (3,1) circle (0.12) node[above right] {};
    \draw[fill=black] (4,1) circle (0.12) node[above right] {};
    \draw[fill=black] (0,2) circle (0.12) node[above right] {};
    \draw[fill=black] (1,2) circle (0.12) node[above right] {};
    \draw[fill=black] (2,2) circle (0.12) node[above right] {};
    \draw[fill=black] (3,2) circle (0.12) node[above right] {};
    \draw[fill=black] (4,2) circle (0.12) node[above right] {};
    \draw[fill=black] (0,3) circle (0.12) node[above right] {};
    \draw[fill=black] (1,3) circle (0.12) node[above right] {};
    \draw[fill=black] (2,3) circle (0.12) node[above right] {};
    \draw[fill=black] (3,3) circle (0.12) node[above right] {};
    \draw[fill=black] (4,3) circle (0.12) node[above right] {};
    \draw[fill=black] (0,4) circle (0.12) node[above right] {};
    \draw[fill=black] (1,4) circle (0.12) node[above right] {};
    \draw[fill=black] (2,4) circle (0.12) node[above right] {};
    \draw[fill=black] (3,4) circle (0.12) node[above right] {};
    \draw[fill=black] (0,5) circle (0.12) node[above right] {};
    \draw[fill=black] (1,5) circle (0.12) node[above right] {};
    \draw[fill=black] (2,5) circle (0.12) node[above right] {};
    \draw[fill=black] (0,6) circle (0.12) node[above right] {};
    \draw[fill=black] (1,6) circle (0.12) node[above right] {};
    \draw[fill=black] (0,7) circle (0.12) node[above right] {};
    \end{tikzpicture}
    \caption{$\mathcal{Q}^{(\alpha)}$-polygon \eqref{g4_den}.}
    \label{fig-den-g4}
  \end{minipage}
\end{figure}

First of all, let us discuss the  general case. Let  $g_n^{(\alpha)} = P^{(\alpha)}/Q^{(\alpha)}$, $n\in \NN$, $\alpha \in \RR^+_0$; the respective Newton polygons are referred to as  $P^{(\alpha)}$-polygons and $Q^{(\alpha)}$-polygons. The large-$c$ expansion of $g_n^{(\alpha)}$ is the Puiseux series obtained through  the following three steps:

\begin{enumerate}
\item[{\bf I.}] Isolate the highest order term $a_{k,m}c^{k+m\alpha} $ in the denominator  and redefine  $g_n^{(\alpha)} = \tilde{P}^{(\alpha)}/\tilde{Q}^{(\alpha)} = P^{(\alpha)}/Q^{(\alpha)}$,  where $\tilde{P}^{(\alpha)}= P^{(\alpha)}/(a_{k,m}c^{k+m\alpha})$ and $\tilde{Q}^{(\alpha)} = Q^{(\alpha)}/(a_{k,m}c^{k+m\alpha})$. Then, the denominator $\tilde{Q}^{(\alpha)}$ has the  form $\tilde{Q}^{(\alpha)} =1+ X(c)$, where $X(c)$ contains negative power terms only; 

\item[{\bf II.}] Expand $1/\tilde{Q}^{(\alpha)}\equiv 1/(1+ X(c))$ at $X(c) \to 0 $  (which is equivalent to sending $c\to \infty$);
    
\item[{\bf III.}] Multiply the resulting expansion by $\tilde{P}^{(\alpha)}$ and rearrange terms in order to obtain the Puiseux series. 
\end{enumerate}

\noindent Note that the exponentiation hypothesis is equivalent to that the highest order term in $\tilde{P}$ is $c^\alpha$ for all $g_n^{(\alpha)}$ and for all $\alpha\in[0,\infty)$. It turns out that every block coefficient considered in this paper has the {\it same Puiseux expansion}, i.e. powers of $c$ are the same but expansion coefficients are different. Below we describe specific properties of  $P^{(\alpha)}$-polygons and $Q^{(\alpha)}$-polygons  which  are sufficient to have the large-$c$ expansions of the same form. It turns out that these properties are  related to the upper edge and some of its neighbouring points only  while the rest of the  Newton polygon is  irrelevant for finding the form of the respective Puiseux series.

\begin{itemize}

\item[{\bf (a)}] $\exists L, R, D, U \in \mathbb{Z}$ such that $R-L \leq U -D$ and all points $(k,m)$ in a given Newton  polygon necessarily satisfy  inequalities $L \leq k \leq R$ and $D \leq m \leq U$.  One denotes $L, R$ for $P^{(\alpha)}$ and $Q^{(\alpha)}$ as $L_{P^{(\alpha)}}, R_{P^{(\alpha)}}$ and $L_{Q^{(\alpha)}}, R_{Q^{(\alpha)}}$, respectively, and $D, U$ for $P^{(\alpha)}$ and $Q^{(\alpha)}$ as $D_{P^{(\alpha)}}, U_{P^{(\alpha)}}$ and $D_{Q^{(\alpha)}}, U_{Q^{(\alpha)}}$, respectively.

\item[{\bf (b)}] All points  in a given  Newton polygon necessarily satisfy  inequality $m + k \leq L + U $. There are points which satisfy $m = -k + L + U$ for all $k \in \{L,L+1,...,R\}$. The set of these points forms the \textit{upper edge} of a polygon.

\item[{\bf (c)}] The following relations hold:  $ U_{P^{(\alpha)}} = U_{Q^{(\alpha)}}+ 1$; \, $L_{P^{(\alpha)}}  = L_{Q^{(\alpha)}}$;\,  $R_{P^{(\alpha)}}  = R_{Q^{(\alpha)}}$.

\item[{\bf (d)}]  There exists a point $\in Q^{(\alpha)}$-polygon right  below the leftmost point of the upper edge, i.e. the point $(L_{Q^{(\alpha)}}, U_{Q^{(\alpha)}}+L_{Q^{(\alpha)}}-1)$.

\item[{\bf (e)}] There exists a point $\in Q^{(\alpha)}$-polygon right below the rightmost point of the upper edge, i.e. the point $(R_{Q^{(\alpha)}}, U_{Q^{(\alpha)}}+L_{Q^{(\alpha)}}-R_{Q^{(\alpha)}}-1)$.
\end{itemize}

\noindent One can see that for $n=4$ the respective Newton polygons  satisfy all these properties. From fig. \bref{fig-num-g4} and \bref{fig-den-g4}  one finds 
\be
\ba{c}
\dps
L_{\mathcal{P}^{(\alpha)}}=0\,, 
\quad
R_{\mathcal{P}^{(\alpha)}}=4\, 
\quad
D_{\mathcal{P}^{(\alpha)}}=1\,,
\quad
U_{\mathcal{P}^{(\alpha)}} = 8\,; 
\vspace{2mm}
\\
\dps
L_{\mathcal{P}^{(\alpha)}}=0\,,
\quad 
R_{\mathcal{P}^{(\alpha)}}=4\,,  
\quad
D_{\mathcal{P}^{(\alpha)}}=0\,, 
\quad
U_{\mathcal{P}^{(\alpha)}} = 7\,. 
\ea
\ee

In what follows  we demonstrate that these properties are sufficient to obtain the Puiseux expansion of the forth coefficient. 
 
\subsubsection{${\alpha = 1+\varkappa\, \in (1, \infty)}$} 

Let consider the Puiseux series for the coefficient $g_4^{(\alpha)}$ \eqref{g4_num}-\eqref{g4_den} at $\alpha \in (1, \infty)$ by applying the procedure described in  {\bf I. -- III.}  and {\bf (a) -- (e)} from Section \bref{app:forth}. 

{\bf I.} Consider a term of maximal power in the denominator $\mathcal{Q}^{(\alpha)}$.  To this end, rearranging ${k+m\alpha} \equiv {(k+m)+m\varkappa}$  we are looking for a term of maximal power $(k+m) + m\varkappa$. Equivalently, we need a point $(k,m)$ on the respective Newton polygon which has maximal values of $k+m$ and $m$ provided $\varkappa > 0$. According to item {\bf (b)}, the maximal value $k+m$ can be  reached on the upper edge, $m = -k +U_{\cQ^{(\alpha)}} + L_{\cQ^{(\alpha)}} $. According to item {\bf (a)}, the maximal value of $m$ is equal to $U_{\cQ^{(\alpha)}}$ which is reached when $k = L_{\cQ^{(\alpha)}} = 0$ since $k\geq L_{\cQ^{(\alpha)}}$. Thus, this is the leftmost point of the upper edge $(L_{\cQ^{(\alpha)}}, U_{\cQ^{(\alpha)}}+L_{\cQ^{(\alpha)}}) = (0,7)$ which corresponds to the term $a_{0,7} c^{7+7\varkappa}$.  

    \begin{figure}[h!]
 \begin{minipage}{0.48\textwidth}
     \centering
     \begin{tikzpicture}[scale = 0.5]
        \draw[thin,dotted] (0,-7) grid (4,1);
        \draw[->] (0,0) -- (4.4,0) node[right] {$k$};
        \draw[->] (0,-7) -- (0,1.4) node[above] {$m$};
        \foreach \x/\xlabel in { 1/1, 2/2, 3/3, 4/4}
    \draw (\x cm,1pt ) -- (\x cm,-1pt ) node[anchor=north,fill=white] {\xlabel};
  \foreach \y/\ylabel in {-7/{-7\hphantom{-}},-6/{-6\hphantom{-}}, -5/{-5\hphantom{-}}, -4/{-4\hphantom{-}}, -3/{-3\hphantom{-}}, -2/{-2\hphantom{-}}, -1/{-1\hphantom{-}}, 1/1}
    \draw (1pt,\y cm) -- (-1pt ,\y cm) node[anchor=east, fill=white] {\ylabel};
    \draw[thick, red] (0,1) -- (4,-3);
    \draw[thick, red] (0,1) -- (0,-4);
    \draw[thick, red] (0,-4) -- (2,-6);
    \draw[thick, red] (2,-6) -- (4,-6);
    \draw[thick, red] (4,-6) -- (4,-3);
    \draw[fill=black] (2,-6) circle (0.12) node[above right] {};
    \draw[fill=black] (3,-6) circle (0.12) node[above right] {};
    \draw[fill=black] (4,-6) circle (0.12) node[above right] {};
    \draw[fill=black] (1,-5) circle (0.12) node[above right] {};
    \draw[fill=black] (2,-5) circle (0.12) node[above right] {};
    \draw[fill=black] (3,-5) circle (0.12) node[above right] {};
    \draw[fill=black] (4,-5) circle (0.12) node[above right] {};
    \draw[fill=black] (0,-4) circle (0.12) node[above right] {};
    \draw[fill=black] (1,-4) circle (0.12) node[above right] {};
    \draw[fill=black] (2,-4) circle (0.12) node[above right] {};
    \draw[fill=black] (3,-4) circle (0.12) node[above right] {};
    \draw[fill=black] (4,-4) circle (0.12) node[above right] {};
    \draw[fill=black] (0,-3) circle (0.12) node[above right] {};
    \draw[fill=black] (1,-3) circle (0.12) node[above right] {};
    \draw[fill=black] (2,-3) circle (0.12) node[above right] {};
    \draw[fill=black] (3,-3) circle (0.12) node[above right] {};
    \draw[fill=black] (4,-3) circle (0.12) node[above right] {};
    \draw[fill=black] (0,-2) circle (0.12) node[above right] {};
    \draw[fill=black] (1,-2) circle (0.12) node[above right] {};
    \draw[fill=black] (2,-2) circle (0.12) node[above right] {};
    \draw[fill=black] (3,-2) circle (0.12) node[above right] {};
    \draw[fill=black] (0,-1) circle (0.12) node[above right] {};
    \draw[fill=black] (1,-1) circle (0.12) node[above right] {};
    \draw[fill=black] (2,-1) circle (0.12) node[above right] {};
    \draw[fill=black] (0,0) circle (0.12) node[above right] {};
    \draw[fill=black] (1,0) circle (0.12) node[above right] {};
    \draw[fill=black] (0,1) circle (0.12) node[above right] {};
    \end{tikzpicture}
    \caption{ $\tilde{\mathcal{P}}^{(\alpha)}$-polygon, $\alpha\in (1, \infty)$.}
    \label{fig-g4-p-tilde}
 \end{minipage}\hfill
  \begin{minipage}{0.48\textwidth}
     \centering
     \begin{tikzpicture}[scale = 0.5]
       \draw[thin,dotted] (0,-7) grid (4,1);
        \draw[->] (0,0) -- (4.4,0) node[right] {$k$};
        \draw[->] (0,-7) -- (0,1.4) node[above] {$m$};
        \foreach \x/\xlabel in { 1/1, 2/2, 3/3, 4/4}
    \draw (\x cm,1pt ) -- (\x cm,-1pt ) node[anchor=north,fill=white] {\xlabel};
  \foreach \y/\ylabel in {-7/{-7\hphantom{-}}, -6/{-6\hphantom{-}}, -5/{-5\hphantom{-}}, -4/{-4\hphantom{-}}, -3/{-3\hphantom{-}}, -2/{-2\hphantom{-}}, -1/{-1\hphantom{-}}, 1/1}
    \draw (1pt,\y cm) -- (-1pt ,\y cm) node[anchor=east, fill=white] {\ylabel};
    \draw[thick, red] (0,0) -- (4,-4);
    \draw[thick, red] (0,0) -- (0,-5);
    \draw[thick, red] (0,-5) -- (2,-7);
    \draw[thick, red] (2,-7) -- (4,-7);
    \draw[thick, red] (4,-7) -- (4,-4);
   
    \draw[fill=black] (2,-7) circle (0.12) node[above right] {};
    \draw[fill=black] (3,-7) circle (0.12) node[above right] {};
    \draw[fill=black] (4,-7) circle (0.12) node[above right] {};
    \draw[fill=black] (1,-6) circle (0.12) node[above right] {};
    \draw[fill=black] (2,-6) circle (0.12) node[above right] {};
    \draw[fill=black] (3,-6) circle (0.12) node[above right] {};
    \draw[fill=black] (4,-6) circle (0.12) node[above right] {};
    \draw[fill=black] (0,-5) circle (0.12) node[above right] {};
    \draw[fill=black] (1,-5) circle (0.12) node[above right] {};
    \draw[fill=black] (2,-5) circle (0.12) node[above right] {};
    \draw[fill=black] (3,-5) circle (0.12) node[above right] {};
    \draw[fill=black] (4,-5) circle (0.12) node[above right] {};
    \draw[fill=black] (0,-4) circle (0.12) node[above right] {};
    \draw[fill=black] (1,-4) circle (0.12) node[above right] {};
    \draw[fill=black] (2,-4) circle (0.12) node[above right] {};
    \draw[fill=black] (3,-4) circle (0.12) node[above right] {};
    \draw[fill=black] (4,-4) circle (0.12) node[above right] {};
    \draw[fill=black] (0,-3) circle (0.12) node[above right] {};
    \draw[fill=black] (1,-3) circle (0.12) node[above right] {};
    \draw[fill=black] (2,-3) circle (0.12) node[above right] {};
    \draw[fill=black] (3,-3) circle (0.12) node[above right] {};
    \draw[fill=black] (0,-2) circle (0.12) node[above right] {};
    \draw[fill=black] (1,-2) circle (0.12) node[above right] {};
    \draw[fill=black] (2,-2) circle (0.12) node[above right] {};
    \draw[fill=green] (0,-1) circle (0.15) node[above right] {};
    \draw[fill=green] (1,-1) circle (0.15) node[above right] {};
    \draw[fill=black] (0,0) circle (0.12) node[above right] {};
    \end{tikzpicture}
    \caption{ $\tilde{\mathcal{Q}}^{(\alpha)}$-polygon, $\alpha\in (1, \infty)$.  }
    \label{fig-g4-q-tilde}
  \end{minipage}
\end{figure}

The Newton polygons associated to $\tilde{\mathcal{P}}$ and $\tilde{\mathcal{Q}}$ are shifted down along the $m$-axis as shown on  fig. \bref{fig-g4-p-tilde},  \bref{fig-g4-q-tilde}. In particular, we find that  $\tilde{\mathcal{Q}}^{(\alpha)} = 1 + X(c)$, where $X(c)$ contains only negative powers of $c$ and   $X(c) \to 0$ at large $c$ (i.e. $1$ is the point $(0,0)$ and terms in $X(c)$ lie below the $k$-axis, see fig. \bref{fig-g4-q-tilde}). Note that now due to shifting down by $U_{\mathcal{Q}^{(\alpha)}}+L_{\mathcal{Q}^{(\alpha)}} = 7$ we obtain: for the $\tilde{\cQ}^{(\alpha)}$-polygon $L_{\tilde{\mathcal{Q}}^{(\alpha)}} = 0$, $R_{\tilde{\mathcal{Q}}^{(\alpha)}} = 4$, $D_{\tilde{\mathcal{Q}}^{(\alpha)}} = - 7$, $U_{\tilde{\mathcal{Q}}^{(\alpha)}} = 0$  and since item {\bf (c)} holds, for  the $\tilde{\cP}^{(\alpha)}$-polygon $L_{\tilde{\mathcal{P}}^{(\alpha)}} = 0$, $R_{\tilde{\mathcal{P}}^{(\alpha)}} = 4$, $D_{\tilde{\mathcal{P}}^{(\alpha)}} = - 6$, $U_{\tilde{\mathcal{P}}^{(\alpha)}} = 1$.

{\bf II.} Let us consider the large-$c$ expansion of the denominator. 
\begin{lemma} The Puiseux series for $\alpha\in (1, \infty)$ is given by 
\be
\label{lemmaA8}
\frac{1}{\tilde{\mathcal{Q}}^{(\alpha)}} \equiv \frac{1}{1+X(c)} = \sum_{n=0}^{\infty}\sum_{m=n}^{\infty}(\tilde{\mathcal{Q}}^{(\alpha)})_{n,m} c^{-n-m\varkappa}\,.
\ee
\end{lemma}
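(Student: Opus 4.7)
The plan is to read off the admissible powers of $c$ appearing in $X(c)$ directly from the (shifted) Newton polygon of $\tilde{\mathcal{Q}}^{(\alpha)}$, and then show that the semigroup generated by these exponents under addition coincides with $\{-n-m\varkappa : n\geq 0,\; m\geq n\}$. Closure of this set under the formal geometric series $1/(1+X)=\sum_{k\geq 0}(-X)^k$ gives the claimed Puiseux expansion.

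First I would classify the monomials in $X(c)$. A point $(k,m)$ of the $\tilde{\mathcal{Q}}^{(\alpha)}$-polygon contributes a term $c^{k+m\alpha}=c^{(k+m)+m\varkappa}$. By the shift performed in Step \textbf{I} together with item \textbf{(b)}, every point satisfies $k+m\leq U_{\tilde{\mathcal{Q}}^{(\alpha)}}+L_{\tilde{\mathcal{Q}}^{(\alpha)}}=0$, and by items \textbf{(a)}, \textbf{(c)} one has $k\geq 0$ and $m\leq 0$. Setting $n=-(k+m)$ and $m'=-m$, one immediately gets
\begin{equation*}
n\geq 0,\qquad m'=n+k\geq n,\qquad \text{exponent}=-n-m'\varkappa,
\end{equation*}
so every monomial of $X(c)$ has the form $c^{-n-m\varkappa}$ with $(n,m)$ in the index set
\begin{equation*}
\mathbb{D}:=\{(n,m)\in\mathbb{Z}_{\geq 0}^{2}\,:\,m\geq n\}\setminus\{(0,0)\},
\end{equation*}
the excluded point $(0,0)$ corresponding to the $1$ that was factored out in Step \textbf{I}.

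Next I would expand $1/(1+X(c))=\sum_{k\geq 0}(-X(c))^{k}$ as a formal Puiseux series. The $k$-fold product $X(c)^{k}$ is a sum of monomials $c^{-(n_{1}+\cdots+n_{k})-(m_{1}+\cdots+m_{k})\varkappa}$ with each $(n_{i},m_{i})\in\mathbb{D}$. Setting $N=\sum n_{i}$ and $M=\sum m_{i}$ one has $N\geq 0$ and $M\geq N$, so every exponent arising in the expansion belongs to $\{-N-M\varkappa: N\geq 0,\, M\geq N\}$, which is precisely the exponent set stated in \eqref{lemmaA8}. Conversely, each such pair $(N,M)$ is actually realised: the leftmost upper-edge point $(0,0)$ of the original polygon gave the $1$, while item \textbf{(d)} guarantees the point $(0,-1)$ of $\tilde{\mathcal{Q}}^{(\alpha)}$, i.e.\ the monomial $c^{-1-\varkappa}$, and the upper-edge points $(j,-j)$ for $j=1,\ldots,R_{\tilde{\mathcal{Q}}^{(\alpha)}}$ give monomials $c^{-j\varkappa}$. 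Together these generate $(N,M)=(1,1)$ and $(0,1)$, hence all of $\mathbb{D}\cup\{(0,0)\}$ by addition.

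The only delicate point is the well-definedness of the formal expansion, i.e.\ the assertion that for any fixed $(N,M)\in\mathbb{D}\cup\{(0,0)\}$ only finitely many $k$ contribute to the coefficient $(\tilde{\mathcal{Q}}^{(\alpha)})_{N,M}$; this is what allows us to compare the two sides of \eqref{lemmaA8} as genuine Puiseux series rather than formal power-series in two independent variables $x=c^{-1}$ and $y=c^{-\varkappa}$. The main obstacle is therefore the same irrational-vs-rational dichotomy that appears in Appendix \bref{app:second}. For irrational $\varkappa$ the two generators $1$ and $\varkappa$ are $\mathbb{Q}$-linearly independent, so each $(N,M)$ uniquely determines $k$ up to finitely many partitions and finiteness is automatic. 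For rational $\varkappa=p/q$ the exponent $-N-Mp/q$ does not determine $(N,M)$, but the constraint $n_{i}+m_{i}\geq 1$ and $m_{i}\geq n_{i}\geq 0$ still bounds the number of $k$ that can contribute to a given exponent, because both $\sum n_{i}$ and $\sum m_{i}$ grow with $k$. I would handle this uniformly by applying the splitting trick of Appendix \bref{app:second} — treating $x=c^{-1}$ and $y=c^{-\varkappa}$ as independent formal variables — to obtain \eqref{lemmaA8} as an identity in $\mathbb{C}[[x,y]]$, and only then specialise to the Puiseux series in $c$.
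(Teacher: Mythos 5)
Your proposal is correct and follows essentially the same route as the paper's proof: read off the admissible monomials of $X(c)$ from the shifted Newton polygon (in particular the two generators $c^{-\varkappa}$ and $c^{-1-\varkappa}$ supplied by properties \textbf{(b)} and \textbf{(d)}), expand the geometric series, verify closure of the exponent set $\{-n-m\varkappa:\,0\le n\le m\}$ under addition, and realise every index as $N\cdot(1,1)+(M-N)\cdot(0,1)$. Your explicit treatment of the finiteness of contributions to each coefficient (via $\sum_i m_i\ge k$ and the rational/irrational dichotomy) is a point the paper defers to the splitting trick of Appendix \bref{app:second}, but it does not change the substance of the argument.
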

\begin{proof} The $X(c)$ has the following terms: there are terms  $c^{-\varkappa}$ and $c^{-1-\varkappa}$ (they are marked by green dots on fig. \bref{fig-g4-q-tilde}) while others are  $c^{-r-s\varkappa}$ with $ 0 \leq r \leq  s$ and $2 \leq s \leq 7$ (they are below the $m=-1$ line). Their presence in $X(c)$ can be deduced  from the properties {\bf (a)},  {\bf (b)}, and {\bf (d)}. Indeed, from {\bf (b)} we see that there is a point on the upper edge such that $k+m = U_{\tilde{\mathcal{Q}}^{(\alpha)}} + L_{\tilde{\mathcal{Q}}^{(\alpha)}} = 0$ and $k = 1$, so $m = -1$, therefore, the corresponding term is $c^{1 - \alpha} \equiv c^{-\varkappa}$.  Then, from {\bf (d)} we see that there  is a term with coordinates $(L_{\tilde{\mathcal{Q}}^{(\alpha)}}, U_{\tilde{\mathcal{Q}}^{(\alpha)}}-1) = (0,-1)$, therefore, it is  $c^{-\alpha} \equiv c^{-1-\varkappa}$. All other terms have $D_{\tilde{\mathcal{Q}}^{(\alpha)}} \leq m \leq U_{\tilde{\mathcal{Q}}^{(\alpha)}}-2$, i.e. $-7\leq m \leq -2$ since a term with $m=U_{\tilde{\mathcal{Q}}^{(\alpha)}} = 0$ is $1$ and all terms with $m=U_{\tilde{\mathcal{Q}}^{(\alpha)}}-1 = -1$ are exactly $c^{-\varkappa}$ and $c^{-1-\varkappa}$.  Also, from {\bf (a)} and {\bf (b)} it follows that other terms in $\tilde{\mathcal{Q}}^{(\alpha)}$ are  $c^{(k+m)+m\varkappa}$, where $k,m$ necessarily satisfy the two  inequalities $ k+m\leq 0$ and $k\geq 0$ which both can be packed into  a single inequality $ m \leq k+m \leq 0$.  These terms can equivalently be  represented as $c^{-r-s\varkappa}$, where $ 0\leq r \leq s$ and $2 \leq s \leq 7$.  

Now, one expands the denominator, 
\be
\label{X-expansion}
\frac{1}{1+X(c)} = \sum_{m=0}^{\infty}(-1)^m X^m(c)\,.
\ee
Firstly, let us show that every term in $X^m(c)$ has the form $c^{-N-M\varkappa}$ for some $M,N: 0 \leq N \leq M$.  The multinomial expansion of $X^m(c)$ consists of the following terms: 
\be
\label{term_multinom}
\prod_{i=1}^{m} c^{-r_i - s_i \varkappa} = c^{-\sum_{i=1}^m r_i - \varkappa\sum_{i=1}^m s_i}
\ee
for $0\leq r_i \leq s_i$ and $2\leq s_i \leq 7$.  Since these inequalities hold for all $i$, then  $  \sum_{i=1}^m r_i \leq \sum_{i=1}^m s_i$. Denoting $ \sum_{i=1}^m r_i = N$ and $\sum_{i=1}^m s_i = M $ one finds  that the right-hand side of \eqref{term_multinom} is exactly $c^{-N-M\varkappa}$ for $0 \leq N \leq M$.  Secondly, for every $M \geq 0$ the expression  $X(c)^M \sim (c^{-1+\varkappa}+ ... + c^{-\varkappa} )^M$  contains the following monomials:  
\be
c^{N(-1+\varkappa)}c^{(M-N)(-\varkappa)} = c^{-N-M\varkappa} \quad  \text{for all}\quad  N = 0,1,...,M\,.
\ee 
Therefore, expanding  and rearranging the whole expression  $X^m(c)$ in \eqref{X-expansion} we find out  that all possible degrees of $c$ in \eqref{X-expansion} are generated by repeatedly multiplying $c^{-\varkappa}$ and $c^{-1-\varkappa}$ that results in $c^{-n-m\varkappa}$ for all  $n,m: 0\leq n \leq m$. In this way, one obtains the Puiseux series \eqref{lemmaA8}.
\end{proof}

{\bf III.} Finally, consider a product of  $1/\tilde{\mathcal{Q}}^{(\alpha)}$  and $\tilde{\mathcal{P}}^{(\alpha)}$. First of all, following item  {\bf I.} one can show that due to the properties {\bf (a)} and {\bf (b)} a point with maximal $(k+m)+m\varkappa, ~\forall \varkappa >0$ in the $\tilde{\mathcal{P}}^{(\alpha)}$-polygon  has coordinates  $(L_{\tilde{\mathcal{P}}^{(\alpha)}}, U_{\tilde{\mathcal{P}}^{(\alpha)}}) = (L_{\mathcal{P}^{(\alpha)}} - L_{\mathcal{Q}^{(\alpha)}},U_{\mathcal{P}^{(\alpha)}}- U_{\mathcal{Q}^{(\alpha)}})$ which equals $(0,1)$ due to {\bf (c)}. It corresponds to the term $c^{\alpha} \equiv c^{1+\varkappa}$, which, therefore, is the highest-order term in the numerator.  Subleading terms in the numerator have the form $c^{-r-s\varkappa}$ for $-1 \leq r \leq s$ and $0 \leq s \leq 6$. The reasoning is the same as for $\tilde{\cQ}^{(\alpha)}$:  every subleading term has a form $c^{(k+m) + m\varkappa}$ for $k+m\leq L_{\tilde{\mathcal{P}}^{(\alpha)}}+ U_{\tilde{\mathcal{P}}^{(\alpha)}} = 1$, $k\geq L_{\tilde{\mathcal{P}}^{(\alpha)}} = 0$ and $D_{\tilde{\mathcal{P}}^{(\alpha)}} \leq m\leq U_{\tilde{\mathcal{P}}^{(\alpha)}} -1$, i.e. $-6 \leq m\leq -1$  which by  denoting $k+m = -r$, $m = -s$ can be cast into the form $c^{-r-s\varkappa}$.  Using the expansion \eqref{lemmaA8} one finds   
\be
g_{4}^{(1+\varkappa)} = \tilde{\mathcal{P}}^{(\alpha)} \frac{1}{\tilde{\mathcal{Q}}^{(\alpha)}} \sim (c^{1+\varkappa}+...)\sum_{n=0}^{\infty}\sum_{m=n}^{\infty} (\tilde{\mathcal{Q}}^{(\alpha)})_{n,m} c^{-n-m\varkappa}\,,
\ee
where the ellipsis in $\tilde{\mathcal{P}}^{(\alpha)}$ denotes the subleading terms.   Multiplying $c^{1+\varkappa}$ by $c^{-n-m\varkappa}$ for $0\leq n \leq m$ one obtains $c^{-(n-1)-(m-1)\varkappa}$. Multiplying subleading term $c^{-r-s\varkappa}$ by $c^{-n-m\varkappa}$ one obtains $c^{-(n+r) - (m+s)\varkappa}$ so $-1 \leq n+r \leq m+s$ and $m+s \geq 0$. This shows that  multiplying $\tilde{\cP}^{(\alpha)}$ by the expansion \eqref{lemmaA8} and rearranging terms change only limits of the sum and the explicit form of the coefficients of the decomposition, while the general structure is not changed: it still consists of terms $c^{-n-m\varkappa}$ for $n\leq m$ but now $ n \geq -1$. Thus, the Puiseux series  $g_4^{(1+\varkappa)}$ can be cast into the form
\be
\label{g4_P1}
g_{4}^{(1+\varkappa)} = \sum_{n=-1}^{\infty}\sum_{m=n}^{\infty} (g_{4})_{n,m}^{(1+\varkappa)}c^{-n-m\varkappa}\;.
\ee

Let us stress again that in deriving this expansion we used only those properties of the Newton polygons in  variables $c$ and $c^{1+\varkappa}$ which were listed in the previous section. 
The same analysis can be made for the same Newton polygons but now in  variables $x$ and $ y $, where $x,y$ are treated  as independent (see the splitting trick in \eqref{main_pui}). In this way, one obtains the expansion of $g_4^{(\alpha)}$  around $x = \infty$ and $y = \infty$:
\be
\label{g4_xy}
g_{4}^{(1+\varkappa)} = \sum_{n=-1}^{\infty}\sum_{m=n}^{\infty} (g_{4})_{n,m}^{(1+\varkappa)}x^{-n}y^{-m}\;.
\ee

\subsubsection{${\alpha =1-\varkappa\, \in (0,1)}$}  
\label{app:A22}

Let consider the Puiseux series for the coefficient $g_4^{(\alpha)}$ \eqref{g4_num}-\eqref{g4_den} at $\alpha \in (0, 1)$ by applying the procedure described in  {\bf I. -- III.}  and {\bf (a) -- (e)} from Section \bref{app:forth}. 

{\bf I.} Consider a term of maximal power in the denominator $\mathcal{Q}^{(\alpha)}$.  To this end, rearranging ${k+m\alpha} \equiv {(k+m)-m\varkappa}$  we are looking for a term of maximal power $(k+m) - m\varkappa$. Equivalently, we need a point $(k,m)$ on the respective polygon which has a maximal value of $k+m$ and, simultaneously, a minimal value of $m$, provided $\varkappa\in (0,1)$. According to item {\bf (b)}, a maximal value $k+m$ can be  reached on the upper edge, $m = -k + L_{{\cQ}^{(\alpha)}}+U_{{\cQ}^{(\alpha)}}$. Then, a minimal $m$ on the upper edge is reached when $k $ is maximal, i.e. according to {\bf (a)}, when $k = R_{\tilde{\cQ}^{(\alpha)}}$. Thus, it is the rightmost point of the upper edge $(R_{\cQ^{(\alpha)}}, U_{\mathcal{Q}^{(\alpha)}}+L_{{\cQ}^{(\alpha)}} - R_{\cQ^{(\alpha)}}) = (4,3)$. This point corresponds to the term $a_{4,3} c^{7-3\varkappa}$.  

    \begin{figure}[h!]
 \begin{minipage}{0.48\textwidth}
     \centering
     \begin{tikzpicture}[scale = 0.5]
        \draw[thin,dotted] (0,-3) grid (-4,5);
        \draw[->] (-4,0) -- (1,0) node[right] {$k$};
        \draw[->] (0,-3) -- (0,5.8) node[above] {$m$};
        \foreach \x/\xlabel in {-4/{-4\hphantom{-}}, -3/{-3\hphantom{-}}, -2/{-2\hphantom{-}}, -1/{-1\hphantom{-}}}
    \draw (\x cm,1pt ) -- (\x cm,-1pt ) node[anchor=north,fill=white] {\xlabel};
  \foreach \y/\ylabel in { -3/{-3\hphantom{-}}, -2/{-2\hphantom{-}}, -1/{-1\hphantom{-}}, 1/1, 2/2, 3/3, 4/4, 5/5}
    \draw (1pt,\y cm) -- (-1pt ,\y cm) node[anchor=west, fill=white] {\ylabel};
    \draw[thick, red] (-4,5) -- (0,1);
    \draw[thick, red] (0,1) -- (0,-2);
    \draw[thick, red] (0,-2) -- (-2,-2);
    \draw[thick, red] (-2,-2) -- (-4,0);
    \draw[thick, red] (-4,0) -- (-4,5);
    \draw[fill=black] (-2,-2) circle (0.12) node[above right] {};
    \draw[fill=black] (-1,-2) circle (0.12) node[above right] {};
    \draw[fill=black] (0,-2) circle (0.12) node[above right] {};
    \draw[fill=black] (-3,-1) circle (0.12) node[above right] {};
    \draw[fill=black] (-2,-1) circle (0.12) node[above right] {};
    \draw[fill=black] (-1,-1) circle (0.12) node[above right] {};
    \draw[fill=black] (0,-1) circle (0.12) node[above right] {};
    \draw[fill=black] (-4,0) circle (0.12) node[above right] {};
    \draw[fill=black] (-3,0) circle (0.12) node[above right] {};
    \draw[fill=black] (-2,0) circle (0.12) node[above right] {};
    \draw[fill=black] (-1,0) circle (0.12) node[above right] {};
    \draw[fill=black] (0,0) circle (0.12) node[above right] {};
    \draw[fill=black] (-4,1) circle (0.12) node[above right] {};
    \draw[fill=black] (-3,1) circle (0.12) node[above right] {};
    \draw[fill=black] (-2,1) circle (0.12) node[above right] {};
    \draw[fill=black] (-1,1) circle (0.12) node[above right] {};
    \draw[fill=black] (0,1) circle (0.12) node[above right] {};
    \draw[fill=black] (-4,2) circle (0.12) node[above right] {};
    \draw[fill=black] (-3,2) circle (0.12) node[above right] {};
    \draw[fill=black] (-2,2) circle (0.12) node[above right] {};
    \draw[fill=black] (-1,2) circle (0.12) node[above right] {};
    \draw[fill=black] (-4,3) circle (0.12) node[above right] {};
    \draw[fill=black] (-3,3) circle (0.12) node[above right] {};
    \draw[fill=black] (-2,3) circle (0.12) node[above right] {};
    \draw[fill=black] (-4,4) circle (0.12) node[above right] {};
    \draw[fill=black] (-3,4) circle (0.12) node[above right] {};
    \draw[fill=black] (-4,5) circle (0.12) node[above right] {};
    \end{tikzpicture}
    \caption{ $\tilde{\mathcal{P}}^{(\alpha)}$-polygon, $\alpha\in(0,1)$. }
    \label{fig-g4-p-tilde-2}
 \end{minipage}\hfill
  \begin{minipage}{0.48\textwidth}
     \centering
     \begin{tikzpicture}[scale = 0.5]
        \draw[thin,dotted] (0,-3) grid (-4,5);
        \draw[->] (-4,0) -- (1,0) node[right] {$k$};
        \draw[->] (0,-3) -- (0,5.8) node[above] {$m$};
        \foreach \x/\xlabel in {-4/{-4\hphantom{-}}, -3/{-3\hphantom{-}}, -2/{-2\hphantom{-}}, -1/{-1\hphantom{-}}}
    \draw (\x cm,1pt ) -- (\x cm,-1pt ) node[anchor=north,fill=white] {\xlabel};
  \foreach \y/\ylabel in { -3/{-3\hphantom{-}}, -2/{-2\hphantom{-}}, -1/{-1\hphantom{-}}, 1/1, 2/2, 3/3, 4/4, 5/5}
    \draw (1pt,\y cm) -- (-1pt ,\y cm) node[anchor=west, fill=white] {\ylabel};
    \draw[thick, red] (-4,4) -- (0,0);
    \draw[thick, red] (0,0) -- (0,-3);
    \draw[thick, red] (0,-3) -- (-2,-3);
    \draw[thick, red] (-2,-3) -- (-4,-1);
    \draw[thick, red] (-4,-1) -- (-4,4);
    \draw[fill=black] (-2,-3) circle (0.12) node[above right] {};
    \draw[fill=black] (-1,-3) circle (0.12) node[above right] {};
    \draw[fill=black] (0,-3) circle (0.12) node[above right] {};
    \draw[fill=black] (-3,-2) circle (0.12) node[above right] {};
    \draw[fill=black] (-2,-2) circle (0.12) node[above right] {};
    \draw[fill=black] (-1,-2) circle (0.12) node[above right] {};
    \draw[fill=black] (0,-2) circle (0.12) node[above right] {};
    \draw[fill=black] (-4,-1) circle (0.12) node[above right] {};
    \draw[fill=black] (-3,-1) circle (0.12) node[above right] {};
    \draw[fill=black] (-2,-1) circle (0.12) node[above right] {};
    \draw[fill=black] (-1,-1) circle (0.12) node[above right] {};
    \draw[fill=green] (0,-1) circle (0.15) node[above right] {};
    \draw[fill=black] (-4,0) circle (0.12) node[above right] {};
    \draw[fill=black] (-3,0) circle (0.12) node[above right] {};
    \draw[fill=black] (-2,0) circle (0.12) node[above right] {};
    \draw[fill=black] (-1,0) circle (0.12) node[above right] {};
    \draw[fill=black] (0,0) circle (0.12) node[above right] {};
    \draw[fill=black] (-4,1) circle (0.12) node[above right] {};
    \draw[fill=black] (-3,1) circle (0.12) node[above right] {};
    \draw[fill=black] (-2,1) circle (0.12) node[above right] {};
    \draw[fill=green] (-1,1) circle (0.15) node[above right] {};
    \draw[fill=black] (-4,2) circle (0.12) node[above right] {};
    \draw[fill=black] (-3,2) circle (0.12) node[above right] {};
    \draw[fill=black] (-2,2) circle (0.12) node[above right] {};
    \draw[fill=black] (-4,3) circle (0.12) node[above right] {};
    \draw[fill=black] (-3,3) circle (0.12) node[above right] {};
    \draw[fill=black] (-4,4) circle (0.12) node[above right] {};
    \end{tikzpicture}
    \caption{ $\tilde{\mathcal{Q}}^{(\alpha)}$-polygon, $\alpha\in(0,1)$.  }
    \label{fig-g4-q-tilde-2}
  \end{minipage}
\end{figure}

The Newton polygons associated to $\tilde{\mathcal{P}}$ and $\tilde{\mathcal{Q}}$ are shifted down along the $m$-axis and shifted to the left along $k$-axis as shown on  fig. \bref{fig-g4-p-tilde-2},  \bref{fig-g4-q-tilde-2}. In particular, we find that  $\tilde{\mathcal{Q}}^{(\alpha)} = 1 + X(c)$, where $X(c)$ contains only negative powers of $c$ and   $X(c) \to 0$ at large $c$ (i.e. $1$ is the point $(0,0)$ and terms in $X(c)$ lie below the $k$-axis and to the left of the $y$ axis, see fig. \bref{fig-g4-q-tilde-2}). Note that now due to shifting down by $U_{\mathcal{Q}^{(\alpha)}}+L_{{\cQ}^{(\alpha)}} - R_{\cQ^{(\alpha)}} = 3$ and to the left by $R_{{\cQ}^{(\alpha)}} = 4$ we have for the $\tilde{\mathcal{Q}}^{(\alpha)}$-polygon $L_{\tilde{\mathcal{Q}}^{(\alpha)}} = -4$, $R_{\tilde{\mathcal{Q}}^{(\alpha)}} = 0$, $D_{\tilde{\mathcal{Q}}^{(\alpha)}} = -3$, $U_{\tilde{\mathcal{Q}}^{(\alpha)}} = 4$ and, since item {\bf (c)} holds, for the $\tilde{\cP}^{(\alpha)}$-polygon $L_{\tilde{\mathcal{P}}^{(\alpha)}} = -4$, $R_{\tilde{\mathcal{P}}^{(\alpha)}} = 0$, $D_{\tilde{\mathcal{P}}^{(\alpha)}} = -2$, $U_{\tilde{\mathcal{P}}^{(\alpha)}} = 5$.
   
{\bf II.} Let us consider the large-$c$ expansion of the denominator. 
\begin{lemma} The Puiseux series for $\alpha\in (0,1)$ is given by 
\be
\label{lemmaA9}
\frac{1}{\tilde{\mathcal{Q}}^{(\alpha)}} \equiv \frac{1}{1+X(c)} = \sum_{n=0}^{\infty}\sum_{m=-n}^{\infty}(\tilde{\mathcal{Q}}^{(\alpha)})_{n,m} c^{-n-m\varkappa}\,.
\ee
\end{lemma}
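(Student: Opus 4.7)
The plan is to mirror the proof of Lemma~\bref{lemmaA8} step by step, adapted to the shifted polygon on fig.~\bref{fig-g4-q-tilde-2}. The key idea is again that two ``generator'' terms in $X(c)$ suffice to produce every power of $c$ in the claimed Puiseux series, while all other terms of $X(c)$ only contribute further to the existing powers without creating new ones.

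First, I would identify the explicit structure of $X(c) \equiv \tilde{\mathcal{Q}}^{(\alpha)} - 1$ using the Newton polygon properties (a)--(e) from Appendix~\bref{app:forth}. Property (b) applied to the shifted polygon guarantees a point on the upper edge adjacent to the origin at $(-1,1)$, contributing $c^{-\varkappa}$ to $X(c)$; property (e) guarantees a point at $(0,-1)$ directly below the rightmost point of the upper edge, contributing $c^{-1+\varkappa}$ to $X(c)$. These two terms are the ones marked by green dots on fig.~\bref{fig-g4-q-tilde-2}. Every other point $(k',m')$ of the shifted polygon satisfies $k'+m'\leq 0$ and $k'\leq 0$; writing the corresponding monomial as $c^{-r-s\varkappa}$ with $r=-(k'+m')$ and $s=m'$, these bounds translate into $r\geq 0$ and $s\geq -r$.

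Next, I would expand
\be
\frac{1}{1+X(c)}=\sum_{j=0}^{\infty}(-1)^{j}X(c)^{j}
\ee
and analyze each $X(c)^j$ via the multinomial theorem. A generic term is $\prod_{i=1}^{j}c^{-r_i-s_i\varkappa}=c^{-N-M\varkappa}$ with $N=\sum_i r_i$ and $M=\sum_i s_i$. Since each $r_i\geq 0$ and $s_i\geq -r_i$, the sums satisfy $N\geq 0$ and $M\geq -N$, which is precisely the summation range on the right-hand side of \eqref{lemmaA9}. It remains to check that every such pair $(N,M)$ is actually realized. This is where the two generator terms come in: the monomial identity
\be
\bigl(c^{-\varkappa}\bigr)^{N+M}\bigl(c^{-1+\varkappa}\bigr)^{N}=c^{-N-M\varkappa}
\ee
shows that whenever $N\geq 0$ and $N+M\geq 0$ (i.e.\ $M\geq -N$) such a term appears in the expansion, so the summation domain is indeed saturated.

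The main obstacle, as in Lemma~\bref{lemmaA8}, is to secure the presence of the two specific edge monomials $c^{-\varkappa}$ and $c^{-1+\varkappa}$ in $X(c)$, since the whole argument collapses without them. The subtlety is that for $\alpha\in(0,1)$ the roles played by ``upper edge'' and ``below-the-edge'' swap from left to right compared with the $\alpha\in(1,\infty)$ case: now the rightmost (not leftmost) point of the upper edge is the normalization point, so one must invoke property (e) rather than (d). Once this swap is correctly accounted for, the remaining combinatorial bookkeeping goes through verbatim as in Appendix~\bref{app:second_1inf}, and the Puiseux expansion \eqref{lemmaA9} follows.
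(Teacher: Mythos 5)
Your proposal is correct and follows essentially the same route as the paper's proof: the same identification of the two generator monomials $c^{-\varkappa}$ and $c^{-1+\varkappa}$ via properties (b) and (e), the same multinomial bound $N\geq 0$, $M\geq -N$, and the same saturation argument (your identity $(c^{-\varkappa})^{N+M}(c^{-1+\varkappa})^{N}=c^{-N-M\varkappa}$ is exactly the paper's $c^{-N-M\varkappa}=c^{N(-1+\varkappa)+(M+N)(-\varkappa)}$ appearing in $X^{2N+M}(c)$). You also correctly flag the only genuine subtlety, namely that the normalization point moves to the rightmost vertex of the upper edge so that property (e) replaces property (d).
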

\begin{proof} The $X(c)$ has the following terms: there are terms  $c^{-\varkappa}$ and $c^{-1+\varkappa}$ (they are marked by the green dots on fig. \bref{fig-g4-q-tilde-2}) while others are  $c^{-r-s\varkappa}$ with coordinates $(r,s)$  which  necessarily satisfy $-r \leq  s \leq 4$, $0\leq r \leq 7$ and $(r,s) \neq (0,0), (0,-1), (1,-1)$. Their presence in $X(c)$ can be deduced  from the properties {\bf (a)},  {\bf (b)}, and {\bf (e)}. Indeed, from {\bf (b)} we see that there is a point on the upper edge such that $k+m = U_{\tilde{\mathcal{Q}}^{(\alpha)}} +L_{\tilde{\mathcal{Q}}^{(\alpha)}}  = 0$ and $k = -1$, so $m = 1$, therefore, the corresponding term is $c^{-\varkappa}$.  Then, from {\bf (e)} we see that there is a term with coordinates $(R_{\tilde{\cQ}^{(\alpha)}}, U_{\tilde{\mathcal{Q}}^{(\alpha)}}+L_{\tilde{\mathcal{Q}}^{(\alpha)}}-R_{\tilde{\cQ}^{(\alpha)}}-1) = (0,-1)$, therefore, it is  $c^{-1+\varkappa}$. From {\bf (a)}  it follows that all other terms in $\tilde{\mathcal{Q}}^{(\alpha)}$ are  $c^{(k+m)-m\varkappa}$, where $k,m$ necessarily satisfy the inequalities  $-4 = L_{\tilde{\mathcal{Q}}^{(\alpha)}} \leq k \leq R_{\tilde{\mathcal{Q}}^{(\alpha)}} = 0$, $ -3 = D_{\tilde{\mathcal{Q}}^{(\alpha)}} \leq m \leq U_{\tilde{\mathcal{Q}}^{(\alpha)}} = 4$ so that $-7 \leq k+m \leq m  \leq 4$. From {\bf (b)} it follows that  $ k+m\leq  0$.  Denoting $r \equiv -(k+m)$ and $ s\equiv m$ we obtain that these terms can be  represented as $c^{-r-s\varkappa}$, where $ -r \leq s \leq 4$ and $0 \leq r \leq 7$.  

Now, one expands the denominator, 
\be
\label{X-expansion-2}
\frac{1}{1+X(c)} = \sum_{m=0}^{\infty}(-1)^m X^m(c)\,.
\ee
Firstly, let us show that every term in $X^m(c)$ has a form $c^{-N-M\varkappa}$ for $-N \leq M, N\geq 0$. Since  every term in $X(c)$ has a form $c^{-r-s\varkappa}$, then  expanding $X^m(c)$ one obtains that 
\be
\label{term_multinom-2}
\prod_{i=1}^{m} c^{-r_i - s_i \varkappa} = c^{-\sum\limits_{i=1}^m r_i - \varkappa\sum\limits_{i=1}^m s_i}
\ee
 for $0\leq r_i \leq 7$ and $-r \leq s_i \leq 4$.  Since these inequalities hold for all $i$, then  $-\sum_{i=1}^m r_i \leq \sum_{i=1}^m s_i$. Denoting $\sum_{i=1}^m r_i = N$ and $\sum_{i=1}^m s_i = M $ one finds that the right-hand side of \eqref{term_multinom-2} becomes $c^{-N-M\varkappa}$ for  $-N \leq M$ and $N \geq 0$.  Secondly, let us show that every such term $c^{-N - M\varkappa}$ is present in the decomposition \eqref{X-expansion-2}. Indeed, 
\be
\label{c-multinom-2}
c^{-N-M\varkappa} = c^{N(-1+\varkappa) + (M+N)(-\varkappa)}\,.
\ee
Then it follows that $X^{2N+M}(c) = (c^{-\varkappa} +... + c^{-1+\varkappa})^{2N+M}$ contains  exactly the term in right-hand side of \eqref{c-multinom-2}. Therefore,  expanding and rearranging $X^m(c)$ in \eqref{X-expansion-2} we see that all possible degrees in $c$ in \eqref{X-expansion-2} are generated by multiplication of $c^{-\varkappa}$ and $c^{-1-\varkappa}$, i.e. they have form $c^{-n-m\varkappa}$ for all  $n,m : -n \leq m$ and $n\geq 0$. In this way, we obtain the Puiseux series \eqref{lemmaA9}. 
\end{proof}

{\bf III.} Finally, consider a product of  $1/\tilde{\mathcal{Q}}^{(\alpha)}$  and $\tilde{\mathcal{P}}^{(\alpha)}$. First of all, as in   item  {\bf I.} one can show that due to the properties {\bf (a)} and {\bf (b)} a point with maximal $(k+m)-m\varkappa, ~\forall \varkappa \in (0,1)$ in the $\tilde{\mathcal{P}}^{(\alpha)}$-polygon  has coordinates  $(R_{\tilde{\mathcal{P}}^{(\alpha)}}, U_{\tilde{\mathcal{P}}^{(\alpha)}} + L_{\tilde{\mathcal{P}}^{(\alpha)}} - R_{\tilde{\mathcal{P}}^{(\alpha)}}) = (0,1)$, due to {\bf (c)}. It corresponds to $c^{\alpha} \equiv c^{1-\varkappa}$, which, therefore, is the highest-order term in the numerator.  Subleading terms in the numerator have the form $c^{-r-s\varkappa}$ for $-r \leq s<5$ and $-1 \leq r \leq 6$. The reasoning is the same as for $\tilde{\cQ}^{(\alpha)}$:  every subleading term has a form $c^{(k+m) - m\varkappa}$ where $k,m$ necessarily satisfy $k+m\leq U_{\tilde{\mathcal{P}}^{(\alpha)}} + L_{\tilde{\mathcal{P}}^{(\alpha)}} = 1$, $-4 = L_{\tilde{\mathcal{P}}^{(\alpha)}} \leq  k\leq R_{\tilde{\mathcal{P}}^{(\alpha)}} = 0$ and $-2 = D_{\tilde{\mathcal{P}}^{(\alpha)}} \leq m\leq U_{\tilde{\mathcal{P}}^{(\alpha)}}  = 5$. Denoting $r \equiv -(k+m) $ and  $s \equiv m$ this term can be represented as  $c^{-r-s\varkappa}$ with $-r \leq s \leq 5$ and $-1\leq r \leq 6$.  Using the expansion \eqref{lemmaA9} one finds   
\be
g_{4}^{(1-\varkappa)} = \tilde{\mathcal{P}}^{(\alpha)} \frac{1}{\tilde{\mathcal{Q}}^{(\alpha)}} \sim (c^{1-\varkappa}+...)\sum_{n=0}^{\infty}\sum_{m=-n}^{\infty} (\tilde{\mathcal{Q}}^{(\alpha)})_{n,m} c^{-n-m\varkappa}\,,
\ee
where the ellipsis in $\tilde{\mathcal{P}}^{(\alpha)}$ denotes the subleading terms.   Multiplying $c^{1-\varkappa}$ by $c^{-n-m\varkappa}$ for $-n \leq m$, $n\geq 0$ one obtains $c^{-(n-1)-(m+1)\varkappa}$ so $-n+1 \leq m+1$ and $n -1\geq -1$. Multiplying subleading term $c^{-r-s\varkappa}$ by $c^{-n-m\varkappa}$ one obtains $c^{-(n+r) - (m+s)\varkappa}$ so  $ -(n+r) \leq m+s$ and $n+r \geq 0$. This shows that  multiplying $\tilde{\cP}^{(\alpha)}$ by the expansion \eqref{lemmaA9} and rearranging terms change  limits of the sum and expansion coefficients, while the general structure is not changed: it still consists of terms $c^{-n-m\varkappa}$ for $-n\leq m$ but now $ n \geq -1$. Thus, the Puiseux series  $g_4^{(1-\varkappa)}$ can be cast into the form
\be
\label{g4_P1-2}
g_{4}^{(1-\varkappa)} = \sum_{n=-1}^{\infty}\sum_{m=-n}^{\infty} (g_{4})_{n,m}^{(1-\varkappa)}c^{-n-m\varkappa}\;.
\ee

Finally, by analogy with  \eqref{g4_xy} one can treat $c \equiv x$ and $c^{-\varkappa} \equiv y$ as independent and expand $g_4^{(\alpha)}$  around $x = \infty$ and $y = 0$:
\be
\label{g4_xy-2}
g_{4}^{(1-\varkappa)} = \sum_{n=-1}^{\infty}\sum_{m=-n}^{\infty} (g_{4})_{n,m}^{(1-\varkappa)}x^{-n}y^{m}\;.
\ee

\subsubsection{${\alpha =1}$}
 
\begin{figure}[h!]
 \begin{minipage}{0.48\textwidth}
     \centering
     \begin{tikzpicture}[scale = 0.5]
        \draw[thin,dotted] (0,0) grid (4,8);
        \draw[->] (0,0) -- (4.4,0) node[right] {$k$};
        \draw[->] (0,0) -- (0,8.4) node[above] {$m$};
        \foreach \x/\xlabel in { 1/1, 2/2, 3/3, 4/4}
    \draw (\x cm,1pt ) -- (\x cm,-1pt ) node[anchor=north,fill=white] {\xlabel};
  \foreach \y/\ylabel in {1/1, 2/2, 3/3, 4/4, 5/5, 6/6, 7/7, 8/8}
    \draw (1pt,\y cm) -- (-1pt ,\y cm) node[anchor=east, fill=white] {\ylabel};
    \draw[thick, orange] (0,8) -- (4,4);
    \draw[thick, orange] (0,7) -- (4,3);
    \draw[thick, orange] (0,6) -- (4,2);
    \draw[thick, orange] (0,5) -- (4,1);
    \draw[thick, orange] (0,4) -- (3,1);
    \draw[thick, orange] (0,3) -- (2,1);
    \draw[fill=black] (2,1) circle (0.12) node[above right] {};
    \draw[fill=black] (3,1) circle (0.12) node[above right] {};
    \draw[fill=black] (4,1) circle (0.12) node[above right] {};
    \draw[fill=black] (1,2) circle (0.12) node[above right] {};
    \draw[fill=black] (2,2) circle (0.12) node[above right] {};
    \draw[fill=black] (3,2) circle (0.12) node[above right] {};
    \draw[fill=black] (4,2) circle (0.12) node[above right] {};
    \draw[fill=black] (0,3) circle (0.12) node[above right] {};
    \draw[fill=black] (1,3) circle (0.12) node[above right] {};
    \draw[fill=black] (2,3) circle (0.12) node[above right] {};
    \draw[fill=black] (3,3) circle (0.12) node[above right] {};
    \draw[fill=black] (4,3) circle (0.12) node[above right] {};
    \draw[fill=black] (0,4) circle (0.12) node[above right] {};
    \draw[fill=black] (1,4) circle (0.12) node[above right] {};
    \draw[fill=black] (2,4) circle (0.12) node[above right] {};
    \draw[fill=black] (3,4) circle (0.12) node[above right] {};
    \draw[fill=black] (4,4) circle (0.12) node[above right] {};
    \draw[fill=black] (0,5) circle (0.12) node[above right] {};
    \draw[fill=black] (1,5) circle (0.12) node[above right] {};
    \draw[fill=black] (2,5) circle (0.12) node[above right] {};
    \draw[fill=black] (3,5) circle (0.12) node[above right] {};
    \draw[fill=black] (0,6) circle (0.12) node[above right] {};
    \draw[fill=black] (1,6) circle (0.12) node[above right] {};
    \draw[fill=black] (2,6) circle (0.12) node[above right] {};
    \draw[fill=black] (0,7) circle (0.12) node[above right] {};
    \draw[fill=black] (1,7) circle (0.12) node[above right] {};
    \draw[fill=black] (0,8) circle (0.12) node[above right] {};
    \end{tikzpicture}
    \caption{$\cP^{(1)}$-polygon.}
    \label{fig-num-g4-1}
 \end{minipage}\hfill
  \begin{minipage}{0.48\textwidth}
     \centering
     \begin{tikzpicture}[scale = 0.5]
          \draw[thin,dotted] (0,0) grid (4,8);
        \draw[->] (0,0) -- (4.4,0) node[right] {$k$};
        \draw[->] (0,0) -- (0,8.4) node[above] {$m$};
        \foreach \x/\xlabel in { 1/1, 2/2, 3/3, 4/4}
    \draw (\x cm,1pt ) -- (\x cm,-1pt ) node[anchor=north,fill=white] {\xlabel};
  \foreach \y/\ylabel in {1/1, 2/2, 3/3, 4/4, 5/5, 6/6, 7/7, 8/8}
    \draw (1pt,\y cm) -- (-1pt ,\y cm) node[anchor=east, fill=white] {\ylabel};
    \draw[thick, orange] (0,7) -- (4,3);
    \draw[thick, orange] (0,6) -- (4,2);
    \draw[thick, orange] (0,5) -- (4,1);
    \draw[thick, orange] (0,4) -- (4,0);
    \draw[thick, orange] (0,3) -- (3,0);
    \draw[thick, orange] (0,2) -- (2,0);
    \draw[fill=black] (2,0) circle (0.12) node[above right] {};
    \draw[fill=black] (3,0) circle (0.12) node[above right] {};
    \draw[fill=black] (4,0) circle (0.12) node[above right] {};
    \draw[fill=black] (1,1) circle (0.12) node[above right] {};
    \draw[fill=black] (2,1) circle (0.12) node[above right] {};
    \draw[fill=black] (3,1) circle (0.12) node[above right] {};
    \draw[fill=black] (4,1) circle (0.12) node[above right] {};
    \draw[fill=black] (0,2) circle (0.12) node[above right] {};
    \draw[fill=black] (1,2) circle (0.12) node[above right] {};
    \draw[fill=black] (2,2) circle (0.12) node[above right] {};
    \draw[fill=black] (3,2) circle (0.12) node[above right] {};
    \draw[fill=black] (4,2) circle (0.12) node[above right] {};
    \draw[fill=black] (0,3) circle (0.12) node[above right] {};
    \draw[fill=black] (1,3) circle (0.12) node[above right] {};
    \draw[fill=black] (2,3) circle (0.12) node[above right] {};
    \draw[fill=black] (3,3) circle (0.12) node[above right] {};
    \draw[fill=black] (4,3) circle (0.12) node[above right] {};
    \draw[fill=black] (0,4) circle (0.12) node[above right] {};
    \draw[fill=black] (1,4) circle (0.12) node[above right] {};
    \draw[fill=black] (2,4) circle (0.12) node[above right] {};
    \draw[fill=black] (3,4) circle (0.12) node[above right] {};
    \draw[fill=black] (0,5) circle (0.12) node[above right] {};
    \draw[fill=black] (1,5) circle (0.12) node[above right] {};
    \draw[fill=black] (2,5) circle (0.12) node[above right] {};
    \draw[fill=black] (0,6) circle (0.12) node[above right] {};
    \draw[fill=black] (1,6) circle (0.12) node[above right] {};
    \draw[fill=black] (0,7) circle (0.12) node[above right] {};
    \end{tikzpicture}
    \caption{$\cQ^{(1)}$-polygon.}
    \label{fig-den-g4-1}
  \end{minipage}
\end{figure}
In this case,  every term in $\cP^{(1)}$ and  $\cQ^{(1)}$ has a form $c^{k+m}$, see fig. \bref{fig-num-g4-1}, \bref{fig-den-g4-1}: constant $k+m$ are shown as orange lines, i.e. the dots on a given orange line correspond to terms of equal power of $c$. Of course, both the numerator and denominator are crucially simplified at $\alpha=1$, but it is still convenient to represent them as Newton polygons which are now reorganized into lines of constant inclination each of which represents a single term. Now let us find the corresponding Puiseux series following  items {\bf I -- III} of section \bref{app:forth}.

{\bf I.} Consider a term of maximal power in the denominator $\mathcal{Q}^{(1)}$.  To this end, we are looking for a term with a maximal power $k+m$. Due to {\bf (b)}  it is the upper edge of the polygon: $k+m = U_{\mathcal{Q}^{(\alpha)}} +L_{\mathcal{Q}^{(\alpha)}} = 7$. So the highest order term in $\mathcal{Q}^{(1)}$ equals $c^7 \sum_{k = 0}^{4} a_{k,7-k}$.

Every term in $\mathcal{Q}^{(1)}$ has a form $c^{k+m}$ for pairs $(k,m)$ which necessarily satisfy $k+m \leq 7$, $0\leq k \leq 4$ and $0\leq m \leq 7$ so that $0\leq k+m \leq 7$. Thus, every term in the redefined  $\tilde{\cQ}^{(1)}$ has a form $c^{k+m}$ but $-7 \leq k+m \leq 0$.  Denoting $n \equiv -(k+m)$ one finds that every term $\tilde{\mathcal{Q}}^{(1)}$ becomes $c^{-n}$ for $0\leq n \leq 7$, and term with $k+m = 0$ is exactly $1$, so $\tilde{\cQ}^{(1)} = 1 +X(c)$, where $X(c)$ consists of terms $c^{-n}$, $1\leq n \leq 7$. Since {\bf (c)} holds,  then $\tilde{\cP}^{(1)}$  consists of terms $c^{-n}$, $-1 \leq n \leq 6$. Redefining the numerator and denominator leads to the $\tilde{\cP}^{(1)}$-polygon  and $\tilde{\cQ}^{(1)}$-polygon which are shifted down along the $m$-axis as shown on fig. \bref{fig-g4-p-tilde-1} and \bref{fig-g4-q-tilde-1}. 
 
    \begin{figure}[h!]
 \begin{minipage}{0.48\textwidth}
     \centering
     \begin{tikzpicture}[scale = 0.5]
        \draw[thin,dotted] (0,-7) grid (4,1);
        \draw[->] (0,0) -- (4.4,0) node[right] {$k$};
        \draw[->] (0,-7) -- (0,1.4) node[above] {$m$};
        \foreach \x/\xlabel in { 1/1, 2/2, 3/3, 4/4}
    \draw (\x cm,1pt ) -- (\x cm,-1pt ) node[anchor=north,fill=white] {\xlabel};
  \foreach \y/\ylabel in {-7/{-7\hphantom{-}},-6/{-6\hphantom{-}}, -5/{-5\hphantom{-}}, -4/{-4\hphantom{-}}, -3/{-3\hphantom{-}}, -2/{-2\hphantom{-}}, -1/{-1\hphantom{-}}, 1/1}
    \draw (1pt,\y cm) -- (-1pt ,\y cm) node[anchor=east, fill=white] {\ylabel};
    \draw[thick, orange] (0,1) -- (4,-3);
    \draw[thick, orange] (0,0) -- (4,-4);
    \draw[thick, orange] (0,-1) -- (4,-5);
    \draw[thick, orange] (0,-2) -- (4,-6);
    \draw[thick, orange] (0,-3) -- (3,-6);
    \draw[thick, orange] (0,-4) -- (2,-6);
    \draw[fill=black] (2,-6) circle (0.12) node[above right] {};
    \draw[fill=black] (3,-6) circle (0.12) node[above right] {};
    \draw[fill=black] (4,-6) circle (0.12) node[above right] {};
    \draw[fill=black] (1,-5) circle (0.12) node[above right] {};
    \draw[fill=black] (2,-5) circle (0.12) node[above right] {};
    \draw[fill=black] (3,-5) circle (0.12) node[above right] {};
    \draw[fill=black] (4,-5) circle (0.12) node[above right] {};
    \draw[fill=black] (0,-4) circle (0.12) node[above right] {};
    \draw[fill=black] (1,-4) circle (0.12) node[above right] {};
    \draw[fill=black] (2,-4) circle (0.12) node[above right] {};
    \draw[fill=black] (3,-4) circle (0.12) node[above right] {};
    \draw[fill=black] (4,-4) circle (0.12) node[above right] {};
    \draw[fill=black] (0,-3) circle (0.12) node[above right] {};
    \draw[fill=black] (1,-3) circle (0.12) node[above right] {};
    \draw[fill=black] (2,-3) circle (0.12) node[above right] {};
    \draw[fill=black] (3,-3) circle (0.12) node[above right] {};
    \draw[fill=black] (4,-3) circle (0.12) node[above right] {};
    \draw[fill=black] (0,-2) circle (0.12) node[above right] {};
    \draw[fill=black] (1,-2) circle (0.12) node[above right] {};
    \draw[fill=black] (2,-2) circle (0.12) node[above right] {};
    \draw[fill=black] (3,-2) circle (0.12) node[above right] {};
    \draw[fill=black] (0,-1) circle (0.12) node[above right] {};
    \draw[fill=black] (1,-1) circle (0.12) node[above right] {};
    \draw[fill=black] (2,-1) circle (0.12) node[above right] {};
    \draw[fill=black] (0,0) circle (0.12) node[above right] {};
    \draw[fill=black] (1,0) circle (0.12) node[above right] {};
    \draw[fill=black] (0,1) circle (0.12) node[above right] {};
    \end{tikzpicture}
    \caption{$\tilde{\mathcal{P}}^{(1)}$-polygon.}
    \label{fig-g4-p-tilde-1}
 \end{minipage}\hfill
  \begin{minipage}{0.48\textwidth}
     \centering
     \begin{tikzpicture}[scale = 0.5]
       \draw[thin,dotted] (0,-7) grid (4,1);
        \draw[->] (0,0) -- (4.4,0) node[right] {$k$};
        \draw[->] (0,-7) -- (0,1.4) node[above] {$m$};
        \foreach \x/\xlabel in { 1/1, 2/2, 3/3, 4/4}
    \draw (\x cm,1pt ) -- (\x cm,-1pt ) node[anchor=north,fill=white] {\xlabel};
  \foreach \y/\ylabel in {-7/{-7\hphantom{-}}, -6/{-6\hphantom{-}}, -5/{-5\hphantom{-}}, -4/{-4\hphantom{-}}, -3/{-3\hphantom{-}}, -2/{-2\hphantom{-}}, -1/{-1\hphantom{-}}, 1/1}
    \draw (1pt,\y cm) -- (-1pt ,\y cm) node[anchor=east, fill=white] {\ylabel};
    \draw[thick, orange] (0,0) -- (4,-4);
    \draw[thick, orange] (0,-1) -- (4,-5);
    \draw[thick, orange] (0,-2) -- (4,-6);
    \draw[thick, orange] (0,-3) -- (4,-7);
    \draw[thick, orange] (0,-4) -- (3,-7);
    \draw[thick, orange] (0,-5) -- (2,-7);
    \draw[fill=black] (2,-7) circle (0.12) node[above right] {};
    \draw[fill=black] (3,-7) circle (0.12) node[above right] {};
    \draw[fill=black] (4,-7) circle (0.12) node[above right] {};
    \draw[fill=black] (1,-6) circle (0.12) node[above right] {};
    \draw[fill=black] (2,-6) circle (0.12) node[above right] {};
    \draw[fill=black] (3,-6) circle (0.12) node[above right] {};
    \draw[fill=black] (4,-6) circle (0.12) node[above right] {};
    \draw[fill=black] (0,-5) circle (0.12) node[above right] {};
    \draw[fill=black] (1,-5) circle (0.12) node[above right] {};
    \draw[fill=black] (2,-5) circle (0.12) node[above right] {};
    \draw[fill=black] (3,-5) circle (0.12) node[above right] {};
    \draw[fill=black] (4,-5) circle (0.12) node[above right] {};
    \draw[fill=black] (0,-4) circle (0.12) node[above right] {};
    \draw[fill=black] (1,-4) circle (0.12) node[above right] {};
    \draw[fill=black] (2,-4) circle (0.12) node[above right] {};
    \draw[fill=black] (3,-4) circle (0.12) node[above right] {};
    \draw[fill=black] (4,-4) circle (0.12) node[above right] {};
    \draw[fill=black] (0,-3) circle (0.12) node[above right] {};
    \draw[fill=black] (1,-3) circle (0.12) node[above right] {};
    \draw[fill=black] (2,-3) circle (0.12) node[above right] {};
    \draw[fill=black] (3,-3) circle (0.12) node[above right] {};
    \draw[fill=black] (0,-2) circle (0.12) node[above right] {};
    \draw[fill=black] (1,-2) circle (0.12) node[above right] {};
    \draw[fill=black] (2,-2) circle (0.12) node[above right] {};
    \draw[fill=black] (0,-1) circle (0.12) node[above right] {};
    \draw[fill=black] (1,-1) circle (0.12) node[above right] {};
    \draw[fill=black] (0,0) circle (0.12) node[above right] {};
    \end{tikzpicture}
    \caption{ $\tilde{\mathcal{Q}}^{(1)}$-polygon.  }
    \label{fig-g4-q-tilde-1}
  \end{minipage}
\end{figure}

{\bf II.} The Laurent decomposition of   $\tilde{\cQ}^{(1)}$ is given by
\be
\frac{1}{\tilde{\cQ}^{(1)}} = \frac{1}{1 + X(c)} = \sum_{n=0}^\infty \left( \tilde{\cQ}^{(1)}\right)_{n} c^{-n} \,.
\ee
The line $k+m = -1$  is below the upper edge in the $\tilde{\cQ}^{(1)}$-polygon on fig. \bref{fig-g4-q-tilde-1}. From {\bf (d)} and {\bf (e)} it follows that the line  has at least two points. Thus, there are non-zero terms  $c^{-1}$. Then, decomposing $1/\tilde{\cQ}^{(1)}$ one obtains  
\be
\frac{1}{1 + X(c)} = \sum_{n = 0}^{\infty} (-1)^{n} X^n(c) \,,
\ee
so that each $X^n(c) = (c^{-1} + ...)^{n}$ contains terms $c^{-n}$ for all $n\geq 0$. 

{\bf III.} Consider a product of $1/\tilde{\cQ}^{(1)}$ and $\tilde{\cP}^{(1)}$. Due to {\bf (c)} the term of maximal power $k+m$ in $\tilde{\cP}^{(1)}$ is $c^1$. Subleading terms  $c^{-n}$ with $0\leq n\leq 6$. Then, 
\be
\label{g4-1}
g_4^{(1)} = \tilde{\cP}^{(1)}/\tilde{\cQ}^{(1)} = (c^{1} + .. ) \sum_{n = 0}^{\infty}  \left( \tilde{\cQ}^{(1)}\right)_{n} c^{-n} = \sum_{n = -1}^{\infty} \left(g_4^{(1)} \right)_n c^{-n} \;,
\ee 
and the principal part has two terms $c^1$ and $c^0$.

\subsubsection{${\alpha =0}$}
 
\begin{figure}[h!]
 \begin{minipage}{0.48\textwidth}
     \centering
     \begin{tikzpicture}[scale = 0.5]
        \draw[thin,dotted] (0,0) grid (4,8);
        \draw[->] (0,0) -- (4.4,0) node[right] {$k$};
        \draw[->] (0,0) -- (0,8.4) node[above] {$m$};
        \foreach \x/\xlabel in { 1/1, 2/2, 3/3, 4/4}
    \draw (\x cm,1pt ) -- (\x cm,-1pt ) node[anchor=north,fill=white] {\xlabel};
  \foreach \y/\ylabel in {1/1, 2/2, 3/3, 4/4, 5/5, 6/6, 7/7, 8/8}
    \draw (1pt,\y cm) -- (-1pt ,\y cm) node[anchor=east, fill=white] {\ylabel};
    \draw[thick, orange] (0,8) -- (0,3);
    \draw[thick, orange] (1,7) -- (1,2);
    \draw[thick, orange] (2,6) -- (2,1);
    \draw[thick, orange] (3,5) -- (3,1);
    \draw[thick, orange] (4,4) -- (4,1);
    \draw[fill=black] (2,1) circle (0.12) node[above right] {};
    \draw[fill=black] (3,1) circle (0.12) node[above right] {};
    \draw[fill=black] (4,1) circle (0.12) node[above right] {};
    \draw[fill=black] (1,2) circle (0.12) node[above right] {};
    \draw[fill=black] (2,2) circle (0.12) node[above right] {};
    \draw[fill=black] (3,2) circle (0.12) node[above right] {};
    \draw[fill=black] (4,2) circle (0.12) node[above right] {};
    \draw[fill=black] (0,3) circle (0.12) node[above right] {};
    \draw[fill=black] (1,3) circle (0.12) node[above right] {};
    \draw[fill=black] (2,3) circle (0.12) node[above right] {};
    \draw[fill=black] (3,3) circle (0.12) node[above right] {};
    \draw[fill=black] (4,3) circle (0.12) node[above right] {};
    \draw[fill=black] (0,4) circle (0.12) node[above right] {};
    \draw[fill=black] (1,4) circle (0.12) node[above right] {};
    \draw[fill=black] (2,4) circle (0.12) node[above right] {};
    \draw[fill=black] (3,4) circle (0.12) node[above right] {};
    \draw[fill=black] (4,4) circle (0.12) node[above right] {};
    \draw[fill=black] (0,5) circle (0.12) node[above right] {};
    \draw[fill=black] (1,5) circle (0.12) node[above right] {};
    \draw[fill=black] (2,5) circle (0.12) node[above right] {};
    \draw[fill=black] (3,5) circle (0.12) node[above right] {};
    \draw[fill=black] (0,6) circle (0.12) node[above right] {};
    \draw[fill=black] (1,6) circle (0.12) node[above right] {};
    \draw[fill=black] (2,6) circle (0.12) node[above right] {};
    \draw[fill=black] (0,7) circle (0.12) node[above right] {};
    \draw[fill=black] (1,7) circle (0.12) node[above right] {};
    \draw[fill=black] (0,8) circle (0.12) node[above right] {};
    \end{tikzpicture}
    \caption{$\cP^{(0)}$-polygon.}
    \label{fig-num-g4-0}
 \end{minipage}\hfill
  \begin{minipage}{0.48\textwidth}
     \centering
     \begin{tikzpicture}[scale = 0.5]
          \draw[thin,dotted] (0,0) grid (4,8);
        \draw[->] (0,0) -- (4.4,0) node[right] {$k$};
        \draw[->] (0,0) -- (0,8.4) node[above] {$m$};
        \foreach \x/\xlabel in { 1/1, 2/2, 3/3, 4/4}
    \draw (\x cm,1pt ) -- (\x cm,-1pt ) node[anchor=north,fill=white] {\xlabel};
  \foreach \y/\ylabel in {1/1, 2/2, 3/3, 4/4, 5/5, 6/6, 7/7, 8/8}
    \draw (1pt,\y cm) -- (-1pt ,\y cm) node[anchor=east, fill=white] {\ylabel};
    \draw[thick, orange] (0,7) -- (0,2);
    \draw[thick, orange] (1,6) -- (1,1);
    \draw[thick, orange] (2,5) -- (2,0);
    \draw[thick, orange] (3,4) -- (3,0);
    \draw[thick, orange] (4,3) -- (4,0);
    \draw[fill=black] (2,0) circle (0.12) node[above right] {};
    \draw[fill=black] (3,0) circle (0.12) node[above right] {};
    \draw[fill=black] (4,0) circle (0.12) node[above right] {};
    \draw[fill=black] (1,1) circle (0.12) node[above right] {};
    \draw[fill=black] (2,1) circle (0.12) node[above right] {};
    \draw[fill=black] (3,1) circle (0.12) node[above right] {};
    \draw[fill=black] (4,1) circle (0.12) node[above right] {};
    \draw[fill=black] (0,2) circle (0.12) node[above right] {};
    \draw[fill=black] (1,2) circle (0.12) node[above right] {};
    \draw[fill=black] (2,2) circle (0.12) node[above right] {};
    \draw[fill=black] (3,2) circle (0.12) node[above right] {};
    \draw[fill=black] (4,2) circle (0.12) node[above right] {};
    \draw[fill=black] (0,3) circle (0.12) node[above right] {};
    \draw[fill=black] (1,3) circle (0.12) node[above right] {};
    \draw[fill=black] (2,3) circle (0.12) node[above right] {};
    \draw[fill=black] (3,3) circle (0.12) node[above right] {};
    \draw[fill=black] (4,3) circle (0.12) node[above right] {};
    \draw[fill=black] (0,4) circle (0.12) node[above right] {};
    \draw[fill=black] (1,4) circle (0.12) node[above right] {};
    \draw[fill=black] (2,4) circle (0.12) node[above right] {};
    \draw[fill=black] (3,4) circle (0.12) node[above right] {};
    \draw[fill=black] (0,5) circle (0.12) node[above right] {};
    \draw[fill=black] (1,5) circle (0.12) node[above right] {};
    \draw[fill=black] (2,5) circle (0.12) node[above right] {};
    \draw[fill=black] (0,6) circle (0.12) node[above right] {};
    \draw[fill=black] (1,6) circle (0.12) node[above right] {};
    \draw[fill=black] (0,7) circle (0.12) node[above right] {};
    \end{tikzpicture}
    \caption{$\cQ^{(0)}$-polygon.}
    \label{fig-den-g4-0}
  \end{minipage}
\end{figure} 

In this case,  every term in $\cP^{(0)}$ and  $\cQ^{(0)}$ has a form $c^{k}$, see on fig. \bref{fig-num-g4-0}, \bref{fig-den-g4-0}: constant $k$ are shown as orange lines, i.e. the dots on a given orange line corresponds to terms of equal power of $c$. Similar to the case $\alpha=1$ the respective Newton  polygons are sliced into (vertical) lines representing the terms of equal powers.  Now let us find the corresponding Puiseux series following  items {\bf I -- III} of section \bref{app:forth}.

{\bf I.} Consider a term of maximal power in the denominator $\mathcal{Q}^{(0)}$.  To this end, we are looking for a term of a maximal power $k$. Due to {\bf (a)} these are  points with $k = R_{\cQ^{(0)}} = 4$. Thus, the highest order term in $\mathcal{Q}^{(1)}$ is given by $c^4 \sum_{m = 0}^{3} a_{4,m}$.

Every term in the redefined $\tilde{\cQ}^{(0)}$ has a form $c^{k}$ with $-4 \leq k \leq 0$.  Denoting $n \equiv -k$ every term $\tilde{\mathcal{Q}}^{(0)}$ becomes $c^{-n}$ for $0\leq n \leq 4$, and the $k = 0$ term is $1$. Then, $\tilde{\cQ}^{(0)} = 1 +X(c)$, where $X(c)$ consists of terms  $c^{-n}$, $1\leq n \leq 4$. Due to  {\bf (c)} the polynomial  $\tilde{\cP}^{(0)}$ consists of terms  $c^{-n}$, $0 \leq n \leq 4$.   Redefining the numerator and denominator leads to the $\tilde{\cP}^{(0)}$-polygon  and $\tilde{\cQ}^{(0)}$-polygon which are shifted leftwards and downwards as shown on fig. \bref{fig-g4-p-tilde-0} and \bref{fig-g4-q-tilde-0}.

{\bf II.} The Laurent decomposition of   $\tilde{\cQ}^{(0)}$ is given by
\be
\frac{1}{\tilde{\cQ}^{(0)}} = \frac{1}{1 + X(c)} = \sum_{n=0}^\infty \left( \tilde{\cQ}^{(0)}\right)_{n} c^{-n} \,.
\ee
The line $k = -1$ intersects  the upper edge  in the $\tilde{\cQ}^{(0)}$-polygon on fig. \bref{fig-g4-q-tilde-0}. From {\bf (b)} it follows that this line  has at least one point. Thus, there is a non-zero term  $c^{-1}$.  Then, decomposing $1/\tilde{\cQ}^{(\alpha)}$ one has 
\be
\frac{1}{1 + X(c)} = \sum_{n = 0}^{\infty} (-1)^{n} X^n(c) \,,
\ee
so every $X^n(c) = (c^{-1} + ...)^{n}$ contains terms $c^{-n}$ for all $n\geq 0$. 
  \begin{figure}[h!]
 \begin{minipage}{0.48\textwidth}
     \centering
     \begin{tikzpicture}[scale = 0.5]
        \draw[thin,dotted] (0,-3) grid (-4,5);
        \draw[->] (-4,0) -- (1,0) node[right] {$k$};
        \draw[->] (0,-3) -- (0,5.8) node[above] {$m$};
        \foreach \x/\xlabel in {-4/{-4\hphantom{-}}, -3/{-3\hphantom{-}}, -2/{-2\hphantom{-}}, -1/{-1\hphantom{-}}}
    \draw (\x cm,1pt ) -- (\x cm,-1pt ) node[anchor=north,fill=white] {\xlabel};
  \foreach \y/\ylabel in { -3/{-3\hphantom{-}}, -2/{-2\hphantom{-}}, -1/{-1\hphantom{-}}, 1/1, 2/2, 3/3, 4/4, 5/5}
    \draw (1pt,\y cm) -- (-1pt ,\y cm) node[anchor=west, fill=white] {\ylabel};
    \draw[thick, orange] (-4,5) -- (-4,0);
    \draw[thick, orange] (-3,4) -- (-3,-1);
    \draw[thick, orange] (-2,3) -- (-2,-2);
    \draw[thick, orange] (-1,2) -- (-1,-2);
    \draw[thick, orange] (0,1) -- (0,-2);
    \draw[fill=black] (-2,-2) circle (0.12) node[above right] {};
    \draw[fill=black] (-1,-2) circle (0.12) node[above right] {};
    \draw[fill=black] (0,-2) circle (0.12) node[above right] {};
    \draw[fill=black] (-3,-1) circle (0.12) node[above right] {};
    \draw[fill=black] (-2,-1) circle (0.12) node[above right] {};
    \draw[fill=black] (-1,-1) circle (0.12) node[above right] {};
    \draw[fill=black] (0,-1) circle (0.12) node[above right] {};
    \draw[fill=black] (-4,0) circle (0.12) node[above right] {};
    \draw[fill=black] (-3,0) circle (0.12) node[above right] {};
    \draw[fill=black] (-2,0) circle (0.12) node[above right] {};
    \draw[fill=black] (-1,0) circle (0.12) node[above right] {};
    \draw[fill=black] (0,0) circle (0.12) node[above right] {};
    \draw[fill=black] (-4,1) circle (0.12) node[above right] {};
    \draw[fill=black] (-3,1) circle (0.12) node[above right] {};
    \draw[fill=black] (-2,1) circle (0.12) node[above right] {};
    \draw[fill=black] (-1,1) circle (0.12) node[above right] {};
    \draw[fill=black] (0,1) circle (0.12) node[above right] {};
    \draw[fill=black] (-4,2) circle (0.12) node[above right] {};
    \draw[fill=black] (-3,2) circle (0.12) node[above right] {};
    \draw[fill=black] (-2,2) circle (0.12) node[above right] {};
    \draw[fill=black] (-1,2) circle (0.12) node[above right] {};
    \draw[fill=black] (-4,3) circle (0.12) node[above right] {};
    \draw[fill=black] (-3,3) circle (0.12) node[above right] {};
    \draw[fill=black] (-2,3) circle (0.12) node[above right] {};
    \draw[fill=black] (-4,4) circle (0.12) node[above right] {};
    \draw[fill=black] (-3,4) circle (0.12) node[above right] {};
    \draw[fill=black] (-4,5) circle (0.12) node[above right] {};
    \end{tikzpicture}
    \caption{$\tilde{\mathcal{P}}^{(0)}$-polygon. }
    \label{fig-g4-p-tilde-0}
 \end{minipage}\hfill
  \begin{minipage}{0.48\textwidth}
     \centering
     \begin{tikzpicture}[scale = 0.5]
        \draw[thin,dotted] (0,-3) grid (-4,5);
        \draw[->] (-4,0) -- (1,0) node[right] {$k$};
        \draw[->] (0,-3) -- (0,5.8) node[above] {$m$};
        \foreach \x/\xlabel in {-4/{-4\hphantom{-}}, -3/{-3\hphantom{-}}, -2/{-2\hphantom{-}}, -1/{-1\hphantom{-}}}
    \draw (\x cm,1pt ) -- (\x cm,-1pt ) node[anchor=north,fill=white] {\xlabel};
  \foreach \y/\ylabel in { -3/{-3\hphantom{-}}, -2/{-2\hphantom{-}}, -1/{-1\hphantom{-}}, 1/1, 2/2, 3/3, 4/4, 5/5}
    \draw (1pt,\y cm) -- (-1pt ,\y cm) node[anchor=west, fill=white] {\ylabel};
    \draw[thick, orange] (-4,4) -- (-4,-1);
    \draw[thick, orange] (-3,3) -- (-3,-2);
    \draw[thick, orange] (-2,2) -- (-2,-3);
    \draw[thick, orange] (-1,1) -- (-1,-3);
    \draw[thick, orange] (0,0) -- (0,-3);
    \draw[fill=black] (-2,-3) circle (0.12) node[above right] {};
    \draw[fill=black] (-1,-3) circle (0.12) node[above right] {};
    \draw[fill=black] (0,-3) circle (0.12) node[above right] {};
    \draw[fill=black] (-3,-2) circle (0.12) node[above right] {};
    \draw[fill=black] (-2,-2) circle (0.12) node[above right] {};
    \draw[fill=black] (-1,-2) circle (0.12) node[above right] {};
    \draw[fill=black] (0,-2) circle (0.12) node[above right] {};
    \draw[fill=black] (-4,-1) circle (0.12) node[above right] {};
    \draw[fill=black] (-3,-1) circle (0.12) node[above right] {};
    \draw[fill=black] (-2,-1) circle (0.12) node[above right] {};
    \draw[fill=black] (-1,-1) circle (0.12) node[above right] {};
    \draw[fill=black] (0,-1) circle (0.12) node[above right] {};
    \draw[fill=black] (-4,0) circle (0.12) node[above right] {};
    \draw[fill=black] (-3,0) circle (0.12) node[above right] {};
    \draw[fill=black] (-2,0) circle (0.12) node[above right] {};
    \draw[fill=black] (-1,0) circle (0.12) node[above right] {};
    \draw[fill=black] (0,0) circle (0.12) node[above right] {};
    \draw[fill=black] (-4,1) circle (0.12) node[above right] {};
    \draw[fill=black] (-3,1) circle (0.12) node[above right] {};
    \draw[fill=black] (-2,1) circle (0.12) node[above right] {};
    \draw[fill=black] (-1,1) circle (0.12) node[above right] {};
    \draw[fill=black] (-4,2) circle (0.12) node[above right] {};
    \draw[fill=black] (-3,2) circle (0.12) node[above right] {};
    \draw[fill=black] (-2,2) circle (0.12) node[above right] {};
    \draw[fill=black] (-4,3) circle (0.12) node[above right] {};
    \draw[fill=black] (-3,3) circle (0.12) node[above right] {};
    \draw[fill=black] (-4,4) circle (0.12) node[above right] {};
    \end{tikzpicture}
    \caption{$\tilde{\mathcal{Q}}^{(0)}$-polygon.  }
    \label{fig-g4-q-tilde-0}
  \end{minipage}
\end{figure}

{\bf III.} Consider a product of $1/\tilde{\cQ}^{(0)}$ and $\tilde{\cP}^{(0)}$. Due to {\bf (c)} the term of maximal power $k$ in $\tilde{\cP}^{(0)}$ is $c^0$. Subleading terms are $c^{-n}$ with $1\leq n\leq 4$. Then, 
\be
\label{g4-0}
g_4^{(0)} = \tilde{\cP}^{(0)}/\tilde{\cQ}^{(0)} = (c^{0} + .. ) \sum_{n = 0}^{\infty}  \left( \tilde{\cQ}^{(0)}\right)_{n} c^{-n} = \sum_{n = 0}^{\infty} \left(g_4^{(0)} \right)_n c^{-n} \,,
\ee 
and the principal part contains no poles.

\vspace{2mm} 

Further analysis of $g_4^{(\alpha)}$ is similar to that one for $g_2^{(\alpha)}$ because the structure of the  Puiseux series obtained here, see \eqref{g4_P1}, \eqref{g4_P1-2}, \eqref{g4-1}, \eqref{g4-0} for each domain of $\alpha$ is the same. Thus,  all the conclusions from Appendix   \bref{app:second}   apply here without any modifications. 

\section{Proof of Proposition \bref{corollary31}}
\label{app:proof_prop}

Consider an $n$-th order coefficient $g_n$  of the logarithmic conformal block \eqref{general-log-f} and suppose that $h_i = \gamma_i p$ and $\tilde{h} = \tilde{\gamma} p$, where $\gamma_i, \tilde{\gamma} = const$ and $p \in [0, \infty)$ is a common parameter which defines the way the conformal dimensions go to infinity. In general, to reach points at infinity in the space of conformal dimensions ($\Delta$-space) one  can choose different paths, $h_i = h_i(p)$ and $\tilde{h} = \tilde{h}(p)$, where $p$ is  a proper parameter along a given path.  Here, we fixed  a  linear dependence which will finally correspond to considering classical blocks. Other choices, e.g. $h_{1,2} = O(p^2)$, $h_{3,4}  =  O(p)$, $\tilde{h} = O(p)$, which define different rates of approaching the infinity, would correspond to heavy-light conformal blocks.  

Let us represent  $g_n$ in the following form:
\be
g_n(h_i, \tilde{h}, c) = \frac{p_n(h_i, \tilde{h}, c)}{q_n(\tilde{h},c)} =  \frac{\bar{p}_n(\gamma_i p/c, \tilde{\gamma} p/c, c)}{\bar{q}_n(\tilde{\gamma} p/c, c)} \equiv g_n(\gamma_i, \gamma \,|\, p,c) \,,
\ee
which is obviously achieved by multiplying and dividing each conformal dimension by $c$. In doing so we see that  the resulting functions $\bar{p}_n=p_n(c,p)$ and $\bar{q}_n=q_n(c,p)$ are  polynomials on the 2-plane $(p,c)$. This reduction of  the number of independent variables is due to  the fact that we have chosen a particular path in the $\Delta$-space.   Also, we assume that the Newton polygons for $p_n$, $q_n$ have the properties $\bold{(a)}-\bold{(e)}$ formulated  in Appendix \bref{app:forth}, for all $n\in \mathbb{N}$ (this is needed to control their asymptotic expansions).

1. Let us prove  the first part of Proposition \bref{corollary31}. In general, the asymptotic behaviour of the logarithmic conformal block depends on how the variables approach infinity. Let variables $(c,p)$ tend to infinity along a particular path $p = p(c)$ on the $(c,k)$-plane.  E.g., a path  $p = c^{\alpha}$ with exponent  $\alpha \in (0,1)_\rn$ gives a logarithmic conformal  block $f^{(\alpha)}$ for $\alpha$-heavy operators \eqref{alpha_lor}, which forms for two different $\alpha$'s are governed by Proposition \bref{prop1}. Our goal here is to show that despite the fact that both the Zamolodchikov and $\alpha$-heavy classical  conformal blocks  are obtained by choosing different paths on the $(c,p)$-plane,  they can be performed as series for different variables, but with the same expansion coefficients. 

To this end, we expand the coefficients $g_n=g_n(p,c)$ near $c=\infty$ and $p/c=0$. For convenience, let us switch to a variable $k \equiv p/c$. Following Appendix \bref{app:A22}, the coefficient $g_n$ can be expanded  around this point by  using  \eqref{g4_xy-2}, where one assigns  $x := c$ and $y := k$,  
\be
\label{expansion-gn}
 g_n(\gamma_i, \gamma \,|\, kc,c) =\sum_{m=1}^{\infty}A^{(n)}_{-1,m}(\gamma_i, \tilde{\gamma}) c^1 k^m + \sum_{l=0}^{\infty}\sum_{m=-l}^{\infty}B^{(n)}_{l,m}(\gamma_i, \tilde{\gamma}) c^{-l}k^m \,,
\ee
where $A^{(n)}_{-1,m}$ and $B^{(n)}_{l,m}$ are some expansion coefficients. If one chooses a path $k = k_0 + o(c^0) $  for some fixed $k_0$, then, since $h = O(p) = O(c)$ (i.e. $h_i = \gamma_i p$, $\tilde h = \tilde \gamma p$,  and $p = kc = k_0 c + o(c^0)$), the  leading term $\sim c$ in \eqref{expansion-gn} is by definition the Zamolodchikov classical block. Therefore, we will focus on the first  sum on  the right-hand side.

By construction, $k$ is always accompanied by  $\gamma_i$ or $\tilde{\gamma}$  which implies that $A^{(n)}_{-1,m}(\gamma_i, \tilde{\gamma})$ is a homogeneous function of degree $m$. Since $\bar{p}_n, \bar{q}_n$ are Laurent polynomials in $\tilde{\gamma}$, while $\bar{p}_n$ is also a polynomial in $\gamma_i$, then  $A^{(n)}_{-1,m}(\gamma_i, \tilde{\gamma})$ inherits the same properties in $\gamma_i, \tilde{\gamma}$. Thus, summing over all $n$ and defining $\sum_{n=1}^{\infty} z^n A^{(n)}_{-1,m} = A_{-1,m}$, one obtains 
\be
\label{expansion-with-k}
\sum_{n=1}^\infty z^n g_n(\gamma_i, \tilde \gamma \,|\, kc,c) \simeq  c^1\sum_{m=1}^{\infty} A_{-1,m}(\gamma_i, \tilde{\gamma}|z) k^m\,,
\ee
where $\simeq$ means that we keep only a leading order term in $c$. The coefficient $A_{-1,m}$ here is a homogeneous function of $\gamma_i, \tilde{\gamma}$ of degree $m$, a Laurent series in $\tilde{\gamma}$, and a Taylor series in $\gamma_i$.

Let the series in the right-hand side of \eqref{expansion-with-k} have a radius of convergence $R>1$. If not, i.e. $R\leq1$, then due to homogeneity of $A_{-1,m}(\gamma_i, \tilde{\gamma})$ one can rescale $\gamma_i, \tilde{\gamma} $ by the factor of $r<R$. {So, from the beginning one assumes that all $\gamma_i, \tilde{\gamma}$ are smaller than the initial ones by the factor of $r$ to make $R>1$}. It follows that one can choose  $k = 1 $ for which the series converges. Now, since $k$ is a finite constant, i.e. $h =  O(p) = O(c)$, then the series defines the Zamolodchikov classical block. {Since the expansion \eqref{expansion-with-k} is around $k = 0$, i.e. all classical dimensions $h/c = \gamma k$ are small,  then one obtains that the right-hand side there defines a perturbative expansion of the classical block.} Thus, 
\be
\label{A-to-P}
A_{-1,m}(\gamma_i, \tilde{\gamma}|z)c^1k^m\,\big|_{k=1}\, =\, c^1P_m\left(\gamma_i, \tilde{\gamma}\middle|z\right)\,,
\ee
where $P_m$ are defined by  \eqref{perturbative-f1-small}.  {We emphasize that here we call classical dimensions $h/c = \gamma k$ small in a sense that a common parameter $k$ lies inside the radius of convergence around $k=0$. In the same way one should treat smallness of $\delta_i, \tdelta$ in \eqref{perturbative-f1-small}.

On the other hand, let $k$ be $c$-dependent:  $k = c^{-\varkappa}$, where $\varkappa\in(0,1)_\rn$. Then,  the series \eqref{expansion-with-k} converges for $c >R^{-1/\varkappa}$. Since $h = O(p) = O(c^{1-\varkappa})$ (i.e. $h_i = \gamma_i p$, $\tilde h = \tilde \gamma p$,  and $p = kc = c^{1-\varkappa}$) the series by construction defines an  expansion of the $\alpha$-heavy logarithmic conformal block $f^{(\alpha)}$ near $c = \infty$ for $\alpha=1-\varkappa$.  For a given $m\in \mathbb{N}$ and for all $\alpha \in (\frac{m-1}{m}, 1)_\rn$, the coefficients $A_{-1,m}$ are exactly  $f^{(\alpha)}_{m\alpha-m+1} = f^{(1-\varkappa)}_{1-m\varkappa}$ which define the principal part $\pr f^{(\alpha)}$, see Proposition \bref{prop1}. Thus,    
\be
\label{A-to-f}
A_{-1,m}(\gamma_i, \tilde{\gamma}|z)c^1k^m\,\big|_{k = c^{-\varkappa}}\, =\, f_{1-m\varkappa}^{(1-\varkappa)}\left(\gamma_i, \tilde{\gamma}\middle|z\right) c^{1-m\varkappa}\,.
\ee
From \eqref{A-to-P} one finds that      
\be
\forall \varkappa \in (0, 1/m)_\rn:\qquad  P_m\left(\gamma_i, \tilde{\gamma}\middle|z\right) = f_{1-m\varkappa}^{(1-\varkappa)}\left(\gamma_i, \tilde{\gamma}\middle|z\right)\,.
\ee
This  is exactly \eqref{prop3.1_1} provided that $\alpha = 1-\varkappa$ and $\gamma_i:= \delta_i$, $\tilde \gamma := \tdelta$. 
 
2. The relation \eqref{prop3.1_2} from the second part of the  Proposition is proved similarly (to this end, one uses the expansion \eqref{g4_xy}).

\section{Fusion rules for semi-degenerate operators}
\label{app:W3-deg}

In this section we discuss constraints imposed on the 3-point correlation function when two of three  operators are semi-degenerate.  To this end, write down the Ward identities in the spin-3 sector for correlation functions of arbitrary $N$ primary operators:
\be
\label{ward-W3}
\ba{l}
\dps
\sum_{j=0}^N W_{-2}^{(j)}\Big\langle\prod_{i=1}^N \cO_i\left(z_i\right)\Big\rangle=0\,, 
\vspace{2mm} 
\\
\dps
\sum_{j=0}^N\left(z_j W_{-2}^{(j)}+W_{-1}^{(j)}\right)\Big\langle\prod_{i=1}^N \cO_i\left(z_i\right)\Big\rangle=0\,,
\vspace{2mm} 
\\
\dps
\sum_{j=0}^N\left(z_j^2 W_{-2}^{(j)}+2 z_j W_{-1}^{(j)}+w_j\right)\Big\langle\prod_{i=1}^N \cO_i\left(z_i\right)\Big\rangle=0\,, 
\vspace{2mm} 
\\
\dps
\sum_{j=0}^N\left(z_j^3 W_{-2}^{(j)}+3 z_j^2 W_{-1}^{(j)}+3 z_j w_j\right)\Big\langle\prod_{i=1}^N \cO_i\left(z_i\right)\Big\rangle=0\,, 
\vspace{2mm} 
\\
\dps
\sum_{j=0}^N\left(z_j^4 W_{-2}^{(j)}+4 z_j^3 W_{-1}^{(j)}+6 z_j^2 w_j\right)\Big\langle\prod_{i=1}^N \cO_i\left(z_i\right)\Big\rangle=0\,,
\ea
\ee
where the notation $W^{(j)}_{-n}\,\big\langle\prod_{i=1}^N \cO_i\left(z_i\right)\big\rangle \equiv  \big\langle \cO_1(z_1)...W_{-n}\cO_{j}(z_j) ...\cO_{N}(z_N) \big\rangle$ is introduced. This is the algebraic equation system for the $N$-point correlation functions of  secondary operators. Unlike the Virasoro case, such secondary correlation functions $W^{(j)}_{-n}\big\langle\prod_{i=1}^N \cO_i\left(z_i\right)\big\rangle$ can be independent of $\big\langle\prod_{i=1}^N \cO_i\left(z_i\right)\big\rangle$.\footnote{In the Virasoro case, the Ward identities allow one to express arbitrary  correlation functions of secondaries in terms of those for primaries. } 

Let us  consider the primary correlation function $\big\langle \cO_1(z_1)\cO_2(z_2)\cO_{3}(z_3) \big\rangle$. The spin-2 Ward identities fix  the $z$-dependence up to structure constants $C_{123}$ in the standard fashion as  
\be
\label{three-point-primary}
\big\langle \cO_1(z_1)\cO_2(z_2)\cO_{3}(z_3) \big\rangle  = \frac{C_{123}}{ z_{12}^{h_1+h_2-h_3} z_{23}^{h_2+h_3-h_1}z_{13}^{h_1+h_3-h_2}}\,.
\ee
To identify further limitations on the 3-point function one solves  \eqref{ward-W3} which is the linear system for 6 unknown quantities $W_{-1}^{(j)} \left\langle\prod_{i=1}^3 \cO_i\left(z_i\right)\right\rangle$ and $W_{-2}^{(j)} \left\langle\prod_{i=1}^3 \cO_i\left(z_i\right)\right\rangle$. Obviously, 5 equations in \eqref{ward-W3} are not sufficient to find 6 independent variables. This ambiguity can be fixed by  introducing  (semi-) degenerate operators. 

E.g. if $\cO_3$ is semi-degenerate \eqref{semidegenerate1}, then the secondary  correlation function of $W_{-1}\cO_3$ is expressed through the primary correlation function of $\cO_3$ as follows (see e.g. \cite{Fateev:2007ab})
\be
\label{W-1}
\big\langle \cO_1(z_1) \cO_2(z_2) W_{-1}\cO_3(z_3) \big\rangle =  \frac{3 w_3}{2 h_3} \cL_{-1}^{(3)}\big\langle \cO_1(z_1) \cO_2(z_2) \cO_3(z_3) \big\rangle \,,
\ee 
where $\cL_{-1}^{(i)} = \partial_{z_i}$ which acts on the 3-point function  \eqref{three-point-primary} as
\be
\cL_{-1}^{(i)}\big\langle \cO_i(z_i) \cO_j(z_j) \cO_k(z_k) \big\rangle =   -\left(\frac{h_i+h_j-h_k}{z_i-z_j}+\frac{h_i+h_k-h_j}{z_i-z_k}\right)\big\langle \cO_i(z_i) \cO_j(z_j) \cO_k(z_k)  \big\rangle\,.
\ee
The relation \eqref{W-1} reduces the number of independent variables by one, so that there are 5 equations for 5 variables. Applying  Ward identities \eqref{ward-W3} one  obtains a system of 5 inhomogeneous linear equations: 
{\small
\be
\label{ward-three-point-one-deg}
\begin{bmatrix}
 0 & 0 & 1 & 1 & 1 \\[1mm]
 1 & 1 & z_1 & z_2 & z_3 \\[1mm]
 2 z_1 & 2 z_2 & z_1^2 & z_2^2 & z_3^2 \\[1mm]
 3 z_1^2 & 3 z_2^2 & z_1^3 & z_2^3 & z_3^3 \\[1mm]
 4 z_1^3 & 4 z_2^3 & z_1^4 & z_2^4 & z_3^4 \\[1mm]
\end{bmatrix}
\begin{bmatrix}
    \braket{W_{-1}\cO_1(z_1) \cO_2(z_2) \cO_3(z_3)} \\[1mm]
    \braket{\cO_1(z_1) W_{-1}\cO_2(z_2) \cO_3(z_3)} \\[1mm]
    \braket{W_{-2}\cO_1(z_1) \cO_2(z_2) \cO_3(z_3)} \\[1mm]
    \braket{\cO_1(z_1) W_{-2}\cO_2(z_2) \cO_3(z_3)} \\[1mm]
    \braket{\cO_1(z_1) \cO_2(z_2) W_{-2}\cO_3(z_3)} \\[1mm]
\end{bmatrix}
= 
\boldsymbol{v}\,,
\ee}
where
{\small 
\be
\boldsymbol{v}=
\begin{bmatrix}
    0 \\
    \frac{3 w_3}{2 h_3} \cL_{-1}^{(3)}  \\
    w_1 + w_2 + w_3 + \frac{3 w_3 \cL_{-1}^{(3)}}{h_3}z_3   \\
    3(w_1 z_1+ w_2 z_2+ w_3 z_3)+ \frac{9 w_3 \cL_{-1}^{(3)}}{2 h_3}z_3   \\
    6(w_1 z_1^2+ w_2 z_2^2+ w_3 z_3^2)+ \frac{6 w_3 \cL_{-1}^{(3)}}{ h_3}z_3  
\end{bmatrix}
\big\langle \cO_1(z_1) \cO_2 (z_2) \cO_3(z_3) \big\rangle \,.
\ee}
It has a unique solution  provided that the determinant is not zero:
\be
\det = -\left(z_{12}\right){}^4 \left(z_{13}\right){}^2 \left(z_{23}\right){}^2 \neq 0 \,.
\ee
In this case one can express the 3-point functions of $W_{-1}$ and $W_{-2}$ descendants in terms of the 3-point function of primary operators. 

Making one more operator semi-degenerate allows one to reduce a number of independent variables by one  so that  there are now 5 equations for 4 variables and the system becomes overdetermined. However, notice that the 3-point structure constants \eqref{three-point-primary} are still arbitrary. Then,  the Ward identities impose restrictions on the structure constants known as the fusion rules. Let $\cO_2$ and $\cO_3$ be semi-degenerate and $\cO_1$ be arbitrary. Then applying Ward identities \eqref{ward-W3}, one obtains the  system of 5 homogeneous linear equations:
{\small
\be
\label{ward-three-point-two-deg}
\begin{bmatrix}
 0 & 0 & 1 & 1 & 1 \\
 \frac{3 w_2 \cL_{-1}^{(2)}}{2 h_2}+\frac{3 w_3 \cL_{-1}^{(3)}
   }{2 h_3} & 1 & z_1 & z_2 & z_3 \\
 \frac{3 w_2 \cL_{-1}^{(2)}}{h_2}z_2+\frac{3 w_3
  \cL_{-1}^{(3)}}{h_3}z_3+w_1+w_2+w_3 & 2 z_1 & z_1^2 & z_2^2 & z_3^2 \\
 \frac{9 w_2  \cL_{-1}^{(2)}}{2 h_2}z_2^2+\frac{9 w_3
   \cL_{-1}^{(3)}}{2 h_3} z_3^2+3( w_1 z_1+ w_2 z_2+ w_3 z_3) & 3 z_1^2 & z_1^3 & z_2^3 & z_3^3
   \\
 \frac{6 w_2 \cL_{-1}^{(2)}}{h_2}z_2^3+\frac{6 w_3 
   \cL_{-1}^{(3)}}{h_3}z_3^3+6( w_1 z_1^2+ w_2 z_2^2+ w_3 z_3^2) & 4 z_1^3 & z_1^4 & z_2^4 &
   z_3^4 \\
\end{bmatrix}
\begin{bmatrix}
    \braket{\cO_1(z_1) \cO_2(z_2) \cO_3(z_3)} \\[1.8mm]
    \braket{W_{-1}\cO_1(z_1) \cO_2(z_2) \cO_3(z_3)} \\[1.8mm]
    \braket{W_{-2}\cO_1(z_1) \cO_2(z_2) \cO_3(z_3)} \\[1.8mm]
    \braket{\cO_1(z_1) W_{-2}\cO_2(z_2) \cO_3(z_3)} \\[1.8mm]
    \braket{\cO_1(z_1) \cO_2(z_2) W_{-2}\cO_3(z_3)} 
\end{bmatrix}
= 0\,,
\ee}
It has non-trivial solutions provided the respective  determinant   
\be
\det = \frac{2 h_2 h_3  w_1  - h_3\left(-3 h_1+h_2+3 h_3\right) w_2-h_2 \left(-3 h_1+3 h_2+h_3\right) w_3}{2 h_2 h_3}\left(z_{12}\right){}^3 \left(z_{13}\right){}^3 \left(z_{23}\right)^1
\ee
is zero for any three points $z_1$, $z_2$, $z_3$ which means that the 3-point function $\braket{\cO_1(z_1) \cO_2(z_2) \cO_3(z_3)}$ is zero unless  
\be
\label{det_cond}
 w_1 = \frac{w_2}{h_2}\left(h_2+3(h_3-h_1)\right) + \frac{w_3}{h_3} \left(h_3+3(h_2 - h_1)\right)\,.
\ee
Note that \eqref{det_cond}  has  the obvious permutation symmetries  
$$
\ba{l}
\dps
(h_1, w_1;\, h_2, w_2;\, h_3,w_3) \leftrightarrow (h_1, w_1;\, h_3, w_3;\, h_2, w_2)
\vspace{2mm}
\\
\dps
(h_1, w_1;\, h_2, w_2;\,  h_3,w_3) \leftrightarrow (h_1, -w_1;\, h_3, -w_3;\, h_3, -w_2)
\ea
$$
and  $(h_2, w_2, h_1, w_1) \leftrightarrow (h_3, -w_3, h_1,  -w_1)$.  The equation  \eqref{det_cond} defines the fusion rules which, in particular, include the following:  
\begin{itemize}
\item if $\cO_2 = \cO_3$, i.e. $(h_3, w_3) = (h_2, w_2)$,  then: 
\be
\label{w12}
w_1 = \left(4-\frac{3 h_1}{h_2}\right)w_2\,, 
\qquad
\forall h_1\,;
\ee
\item if $\cO_3 = \cO_2^*$, i.e  $(h_3, w_3) = (h_2, -w_2)$, then:
\be
\label{w123}
w_1 = 0\,, 
\qquad
\forall h_1\,;
\ee
\item there exist $(h_1,w_1)$ which  satisfy \eqref{det_cond} such that \eqref{degW} holds, i.e. $\cO_1$ is semi-degenerate. E.g., if $\cO_2 = \cO_3$, then one can set $h_1 = h_2$, which leads to $w_1 = w_2$ \eqref{w12}, i.e. $\cO_1 = \cO_2 = \cO_3$. Another example is  $\cO_3 = \cO_2^*$:  one can set $h_1 = 0$, then we obtain a semi-degenerate operator $\cO_1 = \boldsymbol{1}$ \eqref{w123}.
    
\item if one of semi-degenerate operators is an identity operator, e.g., $\cO_3 = \boldsymbol{1}$, i.e. $(h_3,w_3) = (0,0)$, then one obtains a simple condition which arises when considering Ward identities for a 2-point function \cite{Fateev:2007ab}: 
\be
\label{W3-two-point-cond}
h_1 = h_2\,, \qquad w_1 = - w_2 \,.
\ee
In other words, the 3-point function $\braket{\cO_1(z_1) \cO_2(z_2) \cO_3(z_3)}$, where  $\cO_3 = \boldsymbol{1}$, goes to the 2-point function $\braket{\cO_1(z_1) \cO_2(z_2)} = \delta_{h_1-h_2,0}\delta_{w_1+w_2,0}\, z_{12}^{-2h_1}$, i.e. $\cO_2 = \cO_1^*$.  One can check that choosing $\cO_1 = \boldsymbol{1}$ or $\cO_2 = \boldsymbol{1}$ reduces the zero-determinant  condition \eqref{det_cond}  to $h_i = h_j$ and $w_i = w_j$ that is equivalent to going to the respective 2-point function.    
\end{itemize}

\noindent In Section  \bref{sec:mat_block} we consider  conformal blocks of operators \eqref{W3opers} that  are consistent with the fusion rules discussed above.

\section{The first coefficient $B_1$ of the $\cW_3$ conformal block}
\label{app:B1}
\subsection{Explicit expression of the first coefficient $B_1$}

We explicitly calculate the coefficient $B_1$  \eqref{blockW} for semi-degenerate $\cO_{2,3}$ and arbitrary $\cO_{1,4}$:
\be
B_1  =\sum_{|Y| = |Y'| = 1}\Gamma^*(Y)Q^{-1}(Y,Y') \Gamma(Y') \;.
\ee
The matrix elements $\Gamma, \Gamma^*$ and the Gram matrix $Q$ are given by 
\begin{equation}
\begin{aligned}
& \Gamma\left(L_{-1}\right)= \tilde{h}+h_3-h_4, \qquad \Gamma\left(W_{-1}\right)= \tilde{w}+2 w_3-w_4+\frac{3 w_3}{2 h_3}\left( \tilde{h}-h_3-h_4\right)  \\
& \Gamma^*\left(L_{-1}\right)= \tilde{h}+h_2-h_1, \qquad \Gamma^*\left(W_{-1}\right)= \tilde{w}+w_2+w_1-\frac{3 w_2}{2 h_2}\left( \tilde{h}+h_2-h_1\right) \\
\end{aligned}
\end{equation}
\begin{equation}
    \begin{array}{|c||c|c|}
        \hline  Q\left(Y, Y'\right) & L_{-1} & W_{-1}  \\
        \hline  L_{-1} & 2 \tilde{h} & 3 \tilde{w} \\
         \hline W_{-1} & 3 \tilde{w} &  \frac{9}{2} \left(\frac{32}{22+5c}\left(\tilde{h}+\frac{1}{5}\right)-\frac{1}{5}\right)\tilde{h} \\
         \hline
    \end{array}
\end{equation} 
Finally, 
\be
\label{general_B1}
\begin{aligned}
& B_1 = \left(\frac{\tilde{h} (c-32 \tilde{h}-2) (-h_1+h_2+\tilde{h})}{2
   \left(\tilde{h}^2 (c-32 \tilde{h}-2)+(5 c+22) \tilde{w}^2\right)}+\frac{3 \tilde{w} \left(w_1+w_2-\frac{3 w_2 (-h_1+h_2+\tilde{h})}{2 h_2}+\tilde{w}\right)}{\frac{9 \tilde{h}^2 (c-32 \tilde{h}-2)}{5 c+22}+9 \tilde{w}^2}\right)  \\
   & \times (h_3-h_4+\tilde{h}) + \frac{ (w_2 \tilde{h} (-3 h_1+h_2+3 \tilde{h})-2 h_2 w_1 \tilde{h}+h_2 \tilde{w} (-3 h_1+3 h_2+\tilde{h}))}{18 h_2 h_3 \left(\tilde{h}^2 (c-32 \tilde{h}-2)+(5 c+22) \tilde{w}^2\right)} \\
   & \times (5 c+22)(-2 h_3 w_4+w_3 (h_3-3 h_4+3 \tilde{h})+2 h_3 \tilde{w}) \, .
\end{aligned}
\ee
As a consistency check, one can choose  e.g. $\cO_1  = {\bm {1}}$ that yields the 3-point function expanded near $z=0$, i.e. $  B_1 = (\tilde h - h_3+h_4)$ (since $[\cO_3][\cO_4] = [\tilde\cO]+...$, then, according to \eqref{W3-two-point-cond}, one imposes the fusion constraints $\tilde h = h_2, \tilde{w} = -w_2 $).\footnote{Also, \eqref{general_B1} reproduces the $c=2$ block coefficient calculated  in \cite{Mironov:2009by}.} 

Even though the block coefficient \eqref{general_B1} is manifestly rational function, one should keep in mind that $(h_2,w_2)$ and $(h_3, w_3)$ are subjected to the constraint \eqref{degW} and substituting e.g. $w_{2,3} = w_{2,3}(h_{2,3})$ will produce square roots. Nevertheless, one can consider particular dimensions and charges in order to keep  $B_1$ rational even in terms of independent variables. E.g., let  all external operators  be semi-degenerate and   $\cO_1 = \cO_3 = \cO_2^* = \cO_4^*$. Then, according to the fusion rules of Appendix \bref{app:W3-deg}, one sets $\tilde{w} = 0$. Denoting $h_i \equiv h$ one finds out that  $B_1$ is a rational function of conformal dimensions $h,\tilde h$ and $c$:
\be
\label{B1_app}
B_1  = \tilde{h}\frac{16(\tilde{h}+h) +2-c }{32\tilde{h} +2-c}\,.
\ee
Choosing  a different order, $\cO_2 = \cO_3 = \cO_1^* = \cO_4^*$ (keeping $\tilde{w} = 0$), one finds the similar result,
\be
\label{B1_2}
B_1  =\frac{16(\tilde{h} - h) }{32\tilde{h}+2-c}\,.
\ee
The denominator is the same for both expressions since it is defined by the Kac determinant. 
%

\subsection{On different sign conventions}
\label{app-signs}

There are different sign conventions for the Hermitian conjugation of $W$-generators: 
\be
\label{Wpm}
W^\dagger_n = \pm W_{-n}\,.
\ee 
In this paper, we adhere to somewhat standard convention with the minus sign \eqref{W3-conj} (see e.g. \cite{Fateev:2007ab}), but the plus sign can  also be used (see e.g. \cite{Kanno:2010kj}). Here, we show that the conformal block coefficients calculated using the two sign conventions \eqref{Wpm}  remain the same.    

Consider an $n$-th order term in \eqref{blockW},  
\be
\label{block_term}
z^n:\qquad \langle h_1, -w_1| \cO_2(1) Y|\tilde{h}, \tilde{w}\rangle\; Q^{-1}_{Y, Y'}(\tilde{h}, \tilde{w})\; \langle \tilde{h}, \tilde{w} |Y'^\dagger \cO_{3}(1) | h_4, w_4 \rangle\,,
\ee
where $n = |Y| = |Y'|$, and $Q^{-1}_{Y,Y'}(\tilde{h}, \tilde{w})$ is the inverse Gram matrix. Let $m_{Y}$ and $m_{Y'}$ be numbers of $W$-generators in the basis monomials $Y= L_{-\lambda_1}...L_{-\lambda_k}W_{-\mu_1}...W_{-\mu_{m_{Y}}}$ and $Y' = L_{-\lambda_1}...L_{-\lambda_{k'}}W_{-\mu_1}...W_{-\mu_{m_{Y'}}} $. Also, choosing different sign conventions we label  respective quantities by $(\pm)$, e.g. 
\be
\label{Ypm}
(Y)_{(-)} = (Y)_{(+)}\,, \qquad (Y'^\dagger)_{(-)} = (-)^{m_{Y'}} (Y'^{\dagger})_{(+)} \,.
\ee
One finds that the  Gram matrices calculated by using the two sign conventions are related as  
\be
(Q_{Y, Y'})_{(-)} = \langle \tilde{h}, \tilde{w} | (Y^\dagger)_{(-)} (Y')_{(-)} |\tilde{h}, \tilde{w} \rangle = (-)^{m_{Y}} (Q_{Y, Y'})_{(+)}\,.
\ee
Here, every row is multiplied by the overall sign $(-)^{m_{Y}}$, whence $\det Q_{(-)} = (-)^M \det Q_{(+)}$, where $M=\sum\limits_{|Y| = n} m_{Y}$. Now, let $A_{Y, Y'}$ be a cofactor of $Q_{Y,Y'}$. Then, one obtains
\be
\label{Qpm}
(Q^{-1}_{Y,Y'})_{(-)} = \frac{(A_{Y',Y})_{(-)}}{(\det Q)_{(-)}} =  \frac{(-1)^M}{(\det Q)_{(+)}} (-)^{M-m_{Y'}} (A_{Y',Y})_{(+)} =  (-)^{m_{Y'}} (Q_{Y,Y'}^{-1})_{(+)}\,,
\ee
where the sign $(-)^{M-m_{Y'}}$ comes from $A_{Y', Y}$ because  defining this cofactor one deletes a row which corresponds to $Y'$. Combining  \eqref{Ypm}  and \eqref{Qpm}  one finds that the conformal block coefficients \eqref{block_term} are independent of the sign conventions \eqref{Wpm}.

\section{The logarithmic  $\cW_3$ conformal block coefficients}
\label{app:W3}

Here, we analyze the large-$c$ expansion of the first three coefficients of the logarithmic $\cW_3$ conformal block, $G_{n}^{(\alpha)}$, $n=1,2,3$ with $\alpha\in \mathbb{Q}^+_0$, see \eqref{def-log-W3a}. As we discussed in the previous section, the $\cW_3$ block coefficients for  particular fields are still rational functions and, therefore, rescaling dimensions and charges  and expanding the resulting coefficients  near $c=\infty$ one can apply the same methods discussed in Appendix \bref{app:second}.

The first coefficient is   $G_1^{(\alpha)} = P_1^{(\alpha)}/Q_1^{(\alpha)}$, where 
\be
P_1^{(\alpha)} = A_{0,2}c^{2\alpha} +A_{1,1}c^{1+\alpha} + A_{0,1}c^{\alpha} \;, \qquad Q_1^{(\alpha)} = a_{0,1}c^\alpha + a_{1,0}c + a_{0,0}\;,
\ee
where $A_{k,m}, a_{k,m}$ are polynomials of classical dimensions, cf. \eqref{B1_app}.  The second coefficient is given by $G_2^{(\alpha)} = P_2^{(\alpha)}/Q_2^{(\alpha)}$, where 
\be
P_2^{(\alpha)} = \sum_{k=1}^3 D_{k,1}c^{k+\alpha} + \sum_{m=2}^5 \sum_{k=0}^{4-m} D_{k,m}c^{k+m\alpha}\;, \quad Q^{(\alpha)}_2 = \sum_{k=1}^3 d_{k,0} c^{k} + \sum_{m=1}^{4} \sum_{k=0}^{4-m} d_{k,m}c^{k+m\alpha} \;,
\ee
where $D_{k,m}, d_{k,m}$ are polynomials of classical dimensions. The third coefficient is given by  $G_3^{(\alpha)} = P_3^{(\alpha)}/Q_3^{(\alpha)}$, where  
\be
P_3^{(\alpha)} = \sum_{k=1}^6 T_{k,0}c^{k} + \sum_{m=1}^4 \sum_{k=0}^{8} T_{k,m}c^{k+m\alpha}+\sum_{m=5}^{12}\sum_{k=0}^{12-m}T_{k,m}c^{k+m\alpha}\;, 
\ee
\be 
 Q^{(\alpha)}_3 = \sum_{k=1}^8 t_{k,0}c^{k} + \sum_{m=1}^3 \sum_{k=0}^{8} t_{k,m}c^{k+m\alpha}+\sum_{m=4}^{11}\sum_{k=0}^{11-m}t_{k,m}c^{k+m\alpha}  \;,
\ee
where $T_{k,m}, t_{k,m}$ are polynomials  of classical dimensions. 
\begin{figure}[h!]
    \begin{minipage}{0.48\textwidth}
    \centering
     \begin{tikzpicture}[scale = 0.5]
        \draw[thin,dotted] (0,0) grid (2,2);
        \draw[->] (0,0) -- (2.4,0) node[right] {$k$};
        \draw[->] (0,0) -- (0,2.4) node[above] {$m$};
        \foreach \x/\xlabel in { 1/1, 2/2}
    \draw (\x cm,1pt ) -- (\x cm,-1pt ) node[anchor=north,fill=white] {\xlabel};
  \foreach \y/\ylabel in { 1/1, 2/2}
    \draw (1pt,\y cm) -- (-1pt ,\y cm) node[anchor=east, fill=white] {\ylabel};
     \draw[thick, red] (1,1) -- (0,2);
        \draw[thick, red] (1,1) -- (0,1);
        \draw[thick, red] (0,1) -- (0,2);
    \draw[fill=black] (0,2) circle (0.12) node[above right] {};
    \draw[fill=black] (0,1) circle (0.12) node[above right] {};
    \draw[fill=black] (1,1) circle (0.12) node[above right] {};
    \end{tikzpicture}
    \caption{$P_1^{(\alpha)}$-polygon.}
    \label{fig-W3-g1-num}
     \end{minipage}\hfill
    \begin{minipage}{0.48\textwidth}
    \centering
     \begin{tikzpicture}[scale = 0.5]
        \draw[thin,dotted] (0,0) grid (2,2);
        \draw[->] (0,0) -- (2.4,0) node[right] {$k$};
        \draw[->] (0,0) -- (0,2.4) node[above] {$m$};
        \foreach \x/\xlabel in { 1/1, 2/2}
        \draw (\x cm,1pt ) -- (\x cm,-1pt ) node[anchor=north,fill=white] {\xlabel};
        \foreach \y/\ylabel in { 1/1, 2/2}
        \draw (1pt,\y cm) -- (-1pt ,\y cm) node[anchor=east, fill=white] {\ylabel};
        \draw[thick, red] (0,1) -- (1,0);
        \draw[thick, red] (0,1) -- (0,0);
        \draw[thick, red] (0,0) -- (1,0);
        \draw[fill=black] (0,0) circle (0.12) node[above right] {};
        \draw[fill=black] (1,0) circle (0.12) node[above right] {};
        \draw[fill=black] (0,1) circle (0.12) node[above right] {};
    \end{tikzpicture}
    \caption{$Q_1^{(\alpha)}$-polygon.}
    \label{fig-W3-g1-den}
     \end{minipage}
\end{figure}

\begin{figure}[h!]
 \begin{minipage}{0.48\textwidth}
     \centering
     \begin{tikzpicture}[scale = 0.5]
        \draw[thin,dotted] (0,0) grid (3,5);
        \draw[->] (0,0) -- (3.4,0) node[right] {$k$};
        \draw[->] (0,0) -- (0,5.4) node[above] {$m$};
        \foreach \x/\xlabel in { 1/1, 2/2, 3/3}
    \draw (\x cm,1pt ) -- (\x cm,-1pt ) node[anchor=north,fill=white] {\xlabel};
  \foreach \y/\ylabel in { 1/1, 2/2, 3/3, 4/4, 5/5}
    \draw (1pt,\y cm) -- (-1pt ,\y cm) node[anchor=east, fill=white] {\ylabel};
    \draw[thick, red] (0,5) -- (0,2);
    \draw[thick, red] (0,2) -- (1,1);
    \draw[thick, red] (3,1) -- (1,1);
    \draw[thick, red] (3,1) -- (3,2);
    \draw[thick, red] (0,5) -- (3,2);
    \draw[fill=black] (1,1) circle (0.12) node[above right] {};
    \draw[fill=black] (2,1) circle (0.12) node[above right] {};
    \draw[fill=black] (3,1) circle (0.12) node[above right] {};
    \draw[fill=black] (0,2) circle (0.12) node[above right] {};
    \draw[fill=black] (1,2) circle (0.12) node[above right] {};
    \draw[fill=black] (2,2) circle (0.12) node[above right] {};
    \draw[fill=black] (3,2) circle (0.12) node[above right] {};
    \draw[fill=black] (0,3) circle (0.12) node[above right] {};
    \draw[fill=black] (1,3) circle (0.12) node[above right] {};
    \draw[fill=black] (2,3) circle (0.12) node[above right] {};
    \draw[fill=black] (0,4) circle (0.12) node[above right] {};
    \draw[fill=black] (1,4) circle (0.12) node[above right] {};
    \draw[fill=black] (0,5) circle (0.12) node[above right] {};
    \draw[fill=black] (0,0) circle (0.00) node[above right] {};
    \end{tikzpicture}
    \caption{$P_2^{(\alpha)}$-polygon.}
    \label{fig-W3-g2-num}
 \end{minipage}\hfill
  \begin{minipage}{0.48\textwidth}
     \centering
     \begin{tikzpicture}[scale = 0.5]
        \draw[thin,dotted] (0,0) grid (3,5);
        \draw[->] (0,0) -- (3.4,0) node[right] {$k$};
        \draw[->] (0,0) -- (0,5.4) node[above] {$m$};
        \foreach \x/\xlabel in { 1/1, 2/2, 3/3}
    \draw (\x cm,1pt ) -- (\x cm,-1pt ) node[anchor=north,fill=white] {\xlabel};
  \foreach \y/\ylabel in { 1/1, 2/2, 3/3, 4/4, 5/5}
    \draw (1pt,\y cm) -- (-1pt ,\y cm) node[anchor=east, fill=white] {\ylabel};
    \draw[thick, red] (0,4) -- (3,1);
    \draw[thick, red] (0,4) -- (0,1);
    \draw[thick, red] (0,1) -- (1,0);
    \draw[thick, red] (1,0) -- (3,0);
    \draw[thick, red] (3,0) -- (3,1);
    \draw[fill=black] (1,0) circle (0.12) node[above right] {};
    \draw[fill=black] (2,0) circle (0.12) node[above right] {};
    \draw[fill=black] (3,0) circle (0.12) node[above right] {};
    \draw[fill=black] (0,1) circle (0.12) node[above right] {};
    \draw[fill=black] (1,1) circle (0.12) node[above right] {};
    \draw[fill=black] (2,1) circle (0.12) node[above right] {};
    \draw[fill=black] (3,1) circle (0.12) node[above right] {};
    \draw[fill=black] (0,2) circle (0.12) node[above right] {};
    \draw[fill=black] (1,2) circle (0.12) node[above right] {};
    \draw[fill=black] (2,2) circle (0.12) node[above right] {};
    \draw[fill=black] (0,3) circle (0.12) node[above right] {};
    \draw[fill=black] (1,3) circle (0.12) node[above right] {};
    \draw[fill=black] (0,4) circle (0.12) node[above right] {};
    \end{tikzpicture}
    \caption{$Q_2^{(\alpha)}$-polygon.}
        \label{fig-W3-g2-den}
  \end{minipage}
\end{figure}

\begin{figure}[h!]
 \begin{minipage}{0.48\textwidth}
   \centering
     \begin{tikzpicture}[scale = 0.5]
        \draw[thin,dotted] (0,0) grid (8,12);
        \draw[->] (0,0) -- (8.4,0) node[right] {$k$};
        \draw[->] (0,0) -- (0,12.4) node[above] {$m$};
        \foreach \x/\xlabel in {1/1, 2/2, 3/3, 4/4, 5/5, 6/6, 7/7, 8/8}
    \draw (\x cm,1pt ) -- (\x cm,-1pt ) node[anchor=north,fill=white] {\xlabel};
  \foreach \y/\ylabel in {1/1, 2/2, 3/3, 4/4, 5/5, 6/6, 7/7, 8/8, 9/9, 10/10, 11/11,12/12}
    \draw (1pt,\y cm) -- (-1pt ,\y cm) node[anchor=east, fill=white] {\ylabel};
    \draw[thick, red] (0,12) -- (8,4);
    \draw[thick, red] (0,12) -- (0,1);
    \draw[thick, red] (0,1) -- (1,0);
    \draw[thick, red] (1,0) -- (6,0);
    \draw[thick, red] (8,1) -- (6,0);
    \draw[thick, red] (8,1) -- (8,4);
    \draw[fill=black] (1,0) circle (0.12) node[above right] {};
    \draw[fill=black] (2,0) circle (0.12) node[above right] {};
    \draw[fill=black] (3,0) circle (0.12) node[above right] {};
    \draw[fill=black] (4,0) circle (0.12) node[above right] {};
    \draw[fill=black] (5,0) circle (0.12) node[above right] {};
    \draw[fill=black] (6,0) circle (0.12) node[above right] {};
    \draw[fill=black] (0,1) circle (0.12) node[above right] {};
    \draw[fill=black] (1,1) circle (0.12) node[above right] {};
    \draw[fill=black] (2,1) circle (0.12) node[above right] {};
    \draw[fill=black] (3,1) circle (0.12) node[above right] {};
    \draw[fill=black] (4,1) circle (0.12) node[above right] {};
    \draw[fill=black] (5,1) circle (0.12) node[above right] {};
    \draw[fill=black] (6,1) circle (0.12) node[above right] {};
    \draw[fill=black] (7,1) circle (0.12) node[above right] {};
    \draw[fill=black] (8,1) circle (0.12) node[above right] {};
    \draw[fill=black] (0,2) circle (0.12) node[above right] {};
    \draw[fill=black] (1,2) circle (0.12) node[above right] {};
    \draw[fill=black] (2,2) circle (0.12) node[above right] {};
    \draw[fill=black] (3,2) circle (0.12) node[above right] {};
    \draw[fill=black] (4,2) circle (0.12) node[above right] {};
    \draw[fill=black] (5,2) circle (0.12) node[above right] {};
    \draw[fill=black] (6,2) circle (0.12) node[above right] {};
    \draw[fill=black] (7,2) circle (0.12) node[above right] {};
    \draw[fill=black] (8,2) circle (0.12) node[above right] {};
    \draw[fill=black] (0,3) circle (0.12) node[above right] {};
    \draw[fill=black] (1,3) circle (0.12) node[above right] {};
    \draw[fill=black] (2,3) circle (0.12) node[above right] {};
    \draw[fill=black] (3,3) circle (0.12) node[above right] {};
    \draw[fill=black] (4,3) circle (0.12) node[above right] {};
    \draw[fill=black] (5,3) circle (0.12) node[above right] {};
    \draw[fill=black] (6,3) circle (0.12) node[above right] {};
    \draw[fill=black] (7,3) circle (0.12) node[above right] {};
    \draw[fill=black] (8,3) circle (0.12) node[above right] {};
    \draw[fill=black] (0,4) circle (0.12) node[above right] {};
    \draw[fill=black] (1,4) circle (0.12) node[above right] {};
    \draw[fill=black] (2,4) circle (0.12) node[above right] {};
    \draw[fill=black] (3,4) circle (0.12) node[above right] {};
    \draw[fill=black] (4,4) circle (0.12) node[above right] {};
    \draw[fill=black] (5,4) circle (0.12) node[above right] {};
    \draw[fill=black] (6,4) circle (0.12) node[above right] {};
    \draw[fill=black] (7,4) circle (0.12) node[above right] {};
    \draw[fill=black] (8,4) circle (0.12) node[above right] {};
    \draw[fill=black] (0,5) circle (0.12) node[above right] {};
    \draw[fill=black] (1,5) circle (0.12) node[above right] {};
    \draw[fill=black] (2,5) circle (0.12) node[above right] {};
    \draw[fill=black] (3,5) circle (0.12) node[above right] {};
    \draw[fill=black] (4,5) circle (0.12) node[above right] {};
    \draw[fill=black] (5,5) circle (0.12) node[above right] {};
    \draw[fill=black] (6,5) circle (0.12) node[above right] {};
    \draw[fill=black] (7,5) circle (0.12) node[above right] {};
    \draw[fill=black] (0,6) circle (0.12) node[above right] {};
    \draw[fill=black] (1,6) circle (0.12) node[above right] {};
    \draw[fill=black] (2,6) circle (0.12) node[above right] {};
    \draw[fill=black] (3,6) circle (0.12) node[above right] {};
    \draw[fill=black] (4,6) circle (0.12) node[above right] {};
    \draw[fill=black] (5,6) circle (0.12) node[above right] {};
    \draw[fill=black] (6,6) circle (0.12) node[above right] {};
    \draw[fill=black] (0,7) circle (0.12) node[above right] {};
    \draw[fill=black] (1,7) circle (0.12) node[above right] {};
    \draw[fill=black] (2,7) circle (0.12) node[above right] {};
    \draw[fill=black] (3,7) circle (0.12) node[above right] {};
    \draw[fill=black] (4,7) circle (0.12) node[above right] {};
    \draw[fill=black] (5,7) circle (0.12) node[above right] {};
    \draw[fill=black] (0,8) circle (0.12) node[above right] {};
    \draw[fill=black] (1,8) circle (0.12) node[above right] {};
    \draw[fill=black] (2,8) circle (0.12) node[above right] {};
    \draw[fill=black] (3,8) circle (0.12) node[above right] {};
    \draw[fill=black] (4,8) circle (0.12) node[above right] {};
    \draw[fill=black] (0,9) circle (0.12) node[above right] {};
    \draw[fill=black] (1,9) circle (0.12) node[above right] {};
    \draw[fill=black] (2,9) circle (0.12) node[above right] {};
    \draw[fill=black] (3,9) circle (0.12) node[above right] {};
    \draw[fill=black] (0,10) circle (0.12) node[above right] {};
    \draw[fill=black] (1,10) circle (0.12) node[above right] {};
    \draw[fill=black] (2,10) circle (0.12) node[above right] {};
    \draw[fill=black] (0,11) circle (0.12) node[above right] {};
    \draw[fill=black] (1,11) circle (0.12) node[above right] {};
    \draw[fill=black] (0,12) circle (0.12) node[above right] {};
    \end{tikzpicture}
    \caption{$P_3^{(\alpha)}$-polygon.}
    \label{fig-W3-g3-num}
 \end{minipage}\hfill
  \begin{minipage}{0.48\textwidth}
     \centering
     \begin{tikzpicture}[scale = 0.5]
        \draw[thin,dotted] (0,0) grid (8,12);
        \draw[->] (0,0) -- (8.4,0) node[right] {$k$};
        \draw[->] (0,0) -- (0,12.4) node[above] {$m$};
        \foreach \x/\xlabel in {1/1, 2/2, 3/3, 4/4, 5/5, 6/6, 7/7, 8/8}
    \draw (\x cm,1pt ) -- (\x cm,-1pt ) node[anchor=north,fill=white] {\xlabel};
  \foreach \y/\ylabel in {1/1, 2/2, 3/3, 4/4, 5/5, 6/6, 7/7, 8/8, 9/9, 10/10, 11/11,12/12}
    \draw (1pt,\y cm) -- (-1pt ,\y cm) node[anchor=east, fill=white] {\ylabel};
    \draw[thick, red] (0,11) -- (8,3);
    \draw[thick, red] (0,11) -- (0,1);
    \draw[thick, red] (0,1) -- (1,0);
    \draw[thick, red] (1,0) -- (8,0);
    \draw[thick, red] (8,0) -- (8,3);
    \draw[fill=black] (1,0) circle (0.12) node[above right] {};
    \draw[fill=black] (2,0) circle (0.12) node[above right] {};
    \draw[fill=black] (3,0) circle (0.12) node[above right] {};
    \draw[fill=black] (4,0) circle (0.12) node[above right] {};
    \draw[fill=black] (5,0) circle (0.12) node[above right] {};
    \draw[fill=black] (6,0) circle (0.12) node[above right] {};
    \draw[fill=black] (7,0) circle (0.12) node[above right] {};
    \draw[fill=black] (8,0) circle (0.12) node[above right] {};
    \draw[fill=black] (0,1) circle (0.12) node[above right] {};
    \draw[fill=black] (1,1) circle (0.12) node[above right] {};
    \draw[fill=black] (2,1) circle (0.12) node[above right] {};
    \draw[fill=black] (3,1) circle (0.12) node[above right] {};
    \draw[fill=black] (4,1) circle (0.12) node[above right] {};
    \draw[fill=black] (5,1) circle (0.12) node[above right] {};
    \draw[fill=black] (6,1) circle (0.12) node[above right] {};
    \draw[fill=black] (7,1) circle (0.12) node[above right] {};
    \draw[fill=black] (8,1) circle (0.12) node[above right] {};
    \draw[fill=black] (0,2) circle (0.12) node[above right] {};
    \draw[fill=black] (1,2) circle (0.12) node[above right] {};
    \draw[fill=black] (2,2) circle (0.12) node[above right] {};
    \draw[fill=black] (3,2) circle (0.12) node[above right] {};
    \draw[fill=black] (4,2) circle (0.12) node[above right] {};
    \draw[fill=black] (5,2) circle (0.12) node[above right] {};
    \draw[fill=black] (6,2) circle (0.12) node[above right] {};
    \draw[fill=black] (7,2) circle (0.12) node[above right] {};
    \draw[fill=black] (8,2) circle (0.12) node[above right] {};
    \draw[fill=black] (0,3) circle (0.12) node[above right] {};
    \draw[fill=black] (1,3) circle (0.12) node[above right] {};
    \draw[fill=black] (2,3) circle (0.12) node[above right] {};
    \draw[fill=black] (3,3) circle (0.12) node[above right] {};
    \draw[fill=black] (4,3) circle (0.12) node[above right] {};
    \draw[fill=black] (5,3) circle (0.12) node[above right] {};
    \draw[fill=black] (6,3) circle (0.12) node[above right] {};
    \draw[fill=black] (7,3) circle (0.12) node[above right] {};
    \draw[fill=black] (8,3) circle (0.12) node[above right] {};
    \draw[fill=black] (0,4) circle (0.12) node[above right] {};
    \draw[fill=black] (1,4) circle (0.12) node[above right] {};
    \draw[fill=black] (2,4) circle (0.12) node[above right] {};
    \draw[fill=black] (3,4) circle (0.12) node[above right] {};
    \draw[fill=black] (4,4) circle (0.12) node[above right] {};
    \draw[fill=black] (5,4) circle (0.12) node[above right] {};
    \draw[fill=black] (6,4) circle (0.12) node[above right] {};
    \draw[fill=black] (7,4) circle (0.12) node[above right] {};
    \draw[fill=black] (0,5) circle (0.12) node[above right] {};
    \draw[fill=black] (1,5) circle (0.12) node[above right] {};
    \draw[fill=black] (2,5) circle (0.12) node[above right] {};
    \draw[fill=black] (3,5) circle (0.12) node[above right] {};
    \draw[fill=black] (4,5) circle (0.12) node[above right] {};
    \draw[fill=black] (5,5) circle (0.12) node[above right] {};
    \draw[fill=black] (6,5) circle (0.12) node[above right] {};
    \draw[fill=black] (0,6) circle (0.12) node[above right] {};
    \draw[fill=black] (1,6) circle (0.12) node[above right] {};
    \draw[fill=black] (2,6) circle (0.12) node[above right] {};
    \draw[fill=black] (3,6) circle (0.12) node[above right] {};
    \draw[fill=black] (4,6) circle (0.12) node[above right] {};
    \draw[fill=black] (5,6) circle (0.12) node[above right] {};
    \draw[fill=black] (0,7) circle (0.12) node[above right] {};
    \draw[fill=black] (1,7) circle (0.12) node[above right] {};
    \draw[fill=black] (2,7) circle (0.12) node[above right] {};
    \draw[fill=black] (3,7) circle (0.12) node[above right] {};
    \draw[fill=black] (4,7) circle (0.12) node[above right] {};
    \draw[fill=black] (0,8) circle (0.12) node[above right] {};
    \draw[fill=black] (1,8) circle (0.12) node[above right] {};
    \draw[fill=black] (2,8) circle (0.12) node[above right] {};
    \draw[fill=black] (3,8) circle (0.12) node[above right] {};
    \draw[fill=black] (0,9) circle (0.12) node[above right] {};
    \draw[fill=black] (1,9) circle (0.12) node[above right] {};
    \draw[fill=black] (2,9) circle (0.12) node[above right] {};
    \draw[fill=black] (0,10) circle (0.12) node[above right] {};
    \draw[fill=black] (1,10) circle (0.12) node[above right] {};
    \draw[fill=black] (0,11) circle (0.12) node[above right] {};
    \end{tikzpicture}
    \caption{$Q_3^{(\alpha)}$-polygon.}
    \label{fig-W3-g3-den}
  \end{minipage}
\end{figure}

Now we apply the Newton polygon technique developed in Appendix  \bref{app:forth}. The  polygons associated to the numerators and denominators in  $G^{(\alpha)}_i$ are shown on fig. \bref{fig-W3-g1-num}--\bref{fig-W3-g3-den}. One can easily check that properties {\bf (a) - (e)} from Appendix  \bref{app:forth} are true for considered polygons. E.g., let us check the property {\bf (a)}. We have 
\be
\begin{aligned}
    & L_{P_i^{(\alpha)}} = L_{Q_i^{(\alpha)}}  = 0\;;   \quad D_{P_1^{(\alpha)}} = D_{P_2^{(\alpha)}} = 1 \;, \quad  D_{Q_i^{(\alpha)}} = 0\;;\\
    & R_{P_1^{(\alpha)}} = R_{Q_1^{(\alpha)}} = 1 \;, \quad R_{P_2^{(\alpha)}} = R_{Q_2^{(\alpha)}} = 3\;, \quad R_{P_3^{(\alpha)}} = R_{Q_3^{(\alpha)}} = 8 \;; \\
    & U_{P_1^{(\alpha)}} = 2, \quad U_{Q_1^{(\alpha)}} = 1, \quad U_{P_2^{(\alpha)}} = 5\;, \quad U_{Q_2^{(\alpha)}} = 4, \quad U_{P_3^{(\alpha)}} =  12\;,\quad U_{Q_3^{(\alpha)}} = 11\;.
\end{aligned}
\ee
Obviously, $R_{P_i^{(\alpha)}} - L_{P_i^{(\alpha)}} \leq U_{P_i^{(\alpha)}} - D_{P_i^{(\alpha)}}$ and  $R_{Q_i^{(\alpha)}} - L_{Q_i^{(\alpha)}} \leq U_{Q_i^{(\alpha)}} - D_{Q_i^{(\alpha)}}$ for all $i \in \{1,2,3\}$, so the property {\bf (a)} is true. Other properties can be checked directly. Consequently,  we can provide the same reasoning as in Appendix \bref{app:forth} and conclude that all $G_n^{(\alpha)}$, $n=1,2,3$ have the same Puiseux expansion as the second coefficient of Virasoro block $g_2$. Therefore, all conclusions that were made in Appendix \bref{app:second} hold for $G_n^{(\alpha)}$.

\section{Numerics}
\label{sec:numerics}

This section contains several explicit expressions for the $\alpha$-heavy classical Virasoro and $\cW_3$ conformal blocks calculated at different values of $\alpha$. 

\subsection{$Vir$ classical conformal blocks}
\label{app:vir_num}

\paragraph{$\bm{ \alpha\in (0,1)_\rn}$.}  Here, $\alpha = 1/2$ is exceptional, $\alpha = 1/3, 3/5$ are regular (see Definition \bref{def:reg}). 

\hspace{2mm}

\noindent $\cS^{(1/3)} = f^{(1/3)}_{1/3} c^{1/3} + f^{(1/3)}_0$:  
\be
\label{1/3-heavy}
\ba{l}
\dps
f_{1/3}^{(1/3)} = \frac{\tilde{\delta} }{2}z+\frac{3 \tilde{\delta}}{16}z^2+\frac{5 \tilde{\delta} }{48}z^3+\frac{35 \tilde{\delta} }{512}z^4+\frac{63 \tilde{\delta}}{1280}z^5+ O(z^6)\,, \vspace{2mm}
\\
\dps
f_{0}^{(1/3)} =\frac{z^2}{32} +\frac{z^3}{32}+  \frac{29 z^4}{1024}+\frac{13 z^5}{512}+ O(z^6)\,.
\ea
\ee
\noindent $\cS^{(1/2)} = f^{(1/2)}_1 c^{1/2} + f^{(1/2)}_0$:   
\be
\label{1/2-heavy}
\ba{l}
\dps
f^{(1/2)}_{1/2} = \frac{\tilde\delta}{2} z + \frac{3 \tilde\delta}{16} z^2 + \frac{5 \tilde\delta}{48} z^3 + \frac{35 \tilde\delta}{512} z^4 + \frac{63 \tilde\delta}{1280} z^5 + O(z^6)\,, \vspace{2mm}
\\
\dps
f^{(1/2)}_0 = \frac{1 + 4(\tilde\delta + 4 \delta)^2}{32} z^2 + \frac{1 + 4(\tilde\delta + 4 \delta)^2}{32} z^3 + \left(\frac{9 \delta ^2}{5}+\frac{141 \tilde{\delta}^2}{1280}+\frac{71 \delta  \tilde{\delta}}{80} + \frac{29}{1024}\right) z^4 
\vspace{2mm} 
\\
\dps
\hspace{15mm}+\left(\frac{8 \delta ^2}{5}+\frac{61 \tilde{\delta}^2}{640}+\frac{31 \delta  \tilde{\delta}}{40}+\frac{13}{512}\right)  z^5 + O(z^6)\,.
\ea
\ee
$\cS^{(3/5)} = f^{(3/5)}_1 c^{3/5}+ f^{(3/5)}_{3/5}c^{3/5} + f^{(3/5)}_0$:  
\be
\label{3/5-heavy}
\ba{l}
\dps
f_{3/5}^{(3/5)} = \frac{\tilde{\delta} }{2}z+\frac{3 \tilde{\delta}}{16}z^2+\frac{5 \tilde{\delta} }{48}z^3+\frac{35 \tilde{\delta} }{512}z^4+\frac{63 \tilde{\delta}}{1280}z^5+ O(z^6)\,, \vspace{2mm}
\\
\dps
f_{1/5}^{(3/5)} =\frac{(4 \delta +\tilde{\delta})^2}{8} z^2 +\frac{(4 \delta +\tilde{\delta})^2}{8}
   z^3 + \left(\frac{9 \delta ^2}{5}+\frac{141 \tilde{\delta}^2}{1280}+\frac{71 \delta  \tilde{\delta}}{80}\right)z^4 + O(z^5)\,,
\vspace{2mm} 
\\
\dps
f_{0}^{(3/5)} =\frac{z^2}{32} +\frac{z^3}{32}+  \frac{29 z^4}{1024}+\frac{13 z^5}{512}+ O(z^6)\,.
\ea
\ee
There are relations $f_{1/3}^{(1/3)} = f_{1/2}^{(1/3)} = f^{(3/5)}_{3/5}$, $f_{0}^{(1/3)} = f_{0}^{(3/5)}$ and $f_0^{(1/2)} = f_{0}^{(3/5)} + f_{1/5}^{(3/5)}$ which illustrate Propositions \bref{prop1} and \bref{universal01}. 


\paragraph{$\bm{\alpha\in(1,\infty)_\rn}$.}  Here, $\alpha = 2$ is exceptional, $\alpha = 5/3, 3$ are regular. 

\hspace{2mm}

\noindent $\cS^{(5/3)} = f^{(5/3)}_{5/3} c^{5/3} +f^{(5/3)}_{1} c^{1}+f^{(5/3)}_{1/3}c^{1/3}+ f^{(5/3)}_0$:   
\be
\ba{l}
\dps
f_{5/3}^{(5/3)} = \frac{\tilde\delta}{2} z + \frac{13 \tilde\delta^2 + 8 \tilde\delta \delta + 16 \delta ^2}{64 \tilde\delta} z^2 + \frac{23 \tilde\delta^2 + 24 \tilde\delta \delta + 48 \delta ^2}{192 \tilde\delta} z^3 + O(z^4)\,, 
\vspace{2mm}
\\
\dps
f_{1}^{(5/3)} = - \frac{(\tilde\delta + 4 \delta)^2}{512 \tilde\delta^2} z^2 - \frac{(\tilde\delta + 4 \delta)^2}{512 \tilde\delta^2} z^3 + O(z^4)\,,
\vspace{2mm}
\\
\dps 
f_{1/3}^{(5/3)} = \frac{(\tilde\delta + 4 \delta)^2}{4096 \tilde\delta^3} z^2 + \frac{(\tilde\delta + 4 \delta)^2}{4096 \tilde\delta^3} z^3 + O(z^4)\,,
\vspace{2mm}
\\
\dps 
f_0^{(5/3)} =  \frac{(\tilde\delta - 12 \delta)^2}{512 \tilde\delta^2} z^2 + \frac{(\tilde\delta -12 \delta)^2}{512 \tilde\delta^2} z^3 + O(z^4)\,,\,.
\ea
\ee
\noindent $\cS^{(2)} = f^{(2)}_{2} c^{2} +f^{(2)}_{1} c^{1}+ f^{(2)}_0$:   
\be
\ba{l}
\dps
f_2^{(2)} = \frac{\tilde\delta}{2} z + \frac{13 \tilde\delta^2 + 8 \tilde\delta \delta + 16 \delta ^2}{64 \tilde\delta} z^2 + \frac{23 \tilde\delta^2 + 24 \tilde\delta \delta + 48 \delta ^2}{192 \tilde\delta} z^3 + O(z^4)\,, 
\vspace{2mm}
\\
\dps
f_1^{(2)} = - \frac{(\tilde\delta + 4 \delta)^2}{512 \tilde\delta^2} z^2 - \frac{(\tilde\delta + 4 \delta)^2}{512 \tilde\delta^2} z^3 + O(z^4)\,,
\vspace{2mm}
\\
\dps 
f_0^{(2)} = \frac{\tilde\delta^2 (8 \tilde\delta + 1) + 8 \tilde\delta \delta (1 - 24 \tilde\delta) + 16 \delta ^2 (72 \tilde\delta + 1)}{4096 \tilde\delta^3} (z^2+z^3) + O(z^4)\,.
\ea
\ee
\noindent $\cS^{(3)} = f^{(3)}_{3} c^{3} +f^{(3)}_{1} c^{1}+ f^{(3)}_0$:  
\be
\label{3-heavy}
\ba{l}
\dps
f_3^{(3)} = \frac{\tilde\delta}{2} z + \frac{13 \tilde\delta^2 + 8 \tilde\delta \delta + 16 \delta ^2}{64 \tilde\delta} z^2 + \frac{23 \tilde\delta^2 + 24 \tilde\delta \delta + 48 \delta ^2}{192 \tilde\delta} z^3 + O(z^4)\,, 
\vspace{2mm}
\\
\dps
f_1^{(3)} = - \frac{(\tilde\delta + 4 \delta)^2}{512 \tilde\delta^2} z^2 - \frac{(\tilde\delta + 4 \delta)^2}{512 \tilde\delta^2} z^3 + O(z^4)\,,
\vspace{2mm}
\\
\dps 
f_0^{(3)} =  \frac{(\tilde\delta - 12 \delta)^2}{512 \tilde\delta^2} z^2 + \frac{(\tilde\delta -12 \delta)^2}{512 \tilde\delta^2} z^3 + O(z^4)\,.
\ea
\ee
There are relations $f_{3}^{(3)} = f_{5/3}^{(5/3)} = f^{(2)}_{2}$, $f_{1}^{(3/2)} = f_{1}^{(5/3)} = f_{1}^{(2)}$, $f_0^{(5/3)}= f_0^{(3)}$ and  $f_0^{(2)} = f_{0}^{(5/3)} + f_{1/5}^{(5/3)}$ which illustrate Propositions \bref{prop2} and \bref{universal01}. 


\paragraph{${\bm\alpha=\bf 1}$.} The classical conformal block $\cS^{(1)} = f^{(1)}_1 c + f^{(1)}_0$ is given by   
\be
\label{1-heavy}
\ba{l}
\dps
f^{(1)}_1 = \frac{\tilde\delta}{2} z + \frac{\tilde\delta (26 \tilde\delta + 3) + 16 \tilde\delta \delta + 32 \delta ^2}{16 (8 \tilde\delta + 1)} z^2 + \frac{46 \tilde\delta^2 + 48 \tilde\delta \delta + 96 \delta ^2 + 5 \tilde\delta}{48 (8 \tilde\delta + 1)} z^3 + O(z^4)\,,
\vspace{2mm}
\\
\dps
f_0^{(1)} = \frac{(2 \tilde\delta -24 \delta + 1)^2}{32 (8 \tilde\delta + 1)^2} z^2 + \frac{\big( 2 \tilde\delta - 24 \delta + 1  \big)^2}{32 (8 \tilde\delta + 1)^2} z^3 + O(z^4)\,. \ea
\ee
Higher-order  terms are quite complicated to be presented here.\footnote{In the Liouville parametrization the Zamolodchikov classical block $f^{(1)}_1$ for particular conformal dimensions  was calculated up to a high order \cite{Hadasz:2005gk}.} A perturbative expansion of $f^{(1)}_1$ \eqref{1-heavy}  around $\delta, \tilde{\delta} = 0$ is given by $f^{(1)}_1 \simeq \sum_{m=1}^{\infty} P_m (\delta, \tilde{\delta}|z)$ \eqref{perturbative-f1-small}, where 
\be
\label{1-pert}
\ba{l}
\dps
P_1 = \frac{\tilde{\delta} }{2}z+\frac{3 \tilde{\delta}}{16}z^2+\frac{5 \tilde{\delta} }{48}z^3+\frac{35 \tilde{\delta} }{512}z^4+\frac{63 \tilde{\delta}}{1280}z^5+ O(z^6)\,, \vspace{2mm}
\\
\dps
P_2 =\frac{(4 \delta +\tilde{\delta})^2}{8} z^2 +\frac{(4 \delta +\tilde{\delta})^2}{8}
   z^3 + \left(\frac{9 \delta ^2}{5}+\frac{141 \tilde{\delta}^2}{1280}+\frac{71 \delta  \tilde{\delta}}{80}\right)z^4  + O(z^5)\,.
\ea
\ee
We see that $P_1 = f^{(\alpha)}_\alpha$ and $P_2 = f^{(\alpha)}_{2\alpha-1}$ for $\forall\alpha \in (0,1)_\rn$, e.g. one can compare with \eqref{3/5-heavy} (i.e. choosing $\alpha = 3/5$). These relations illustrate the first part of Proposition \bref{corollary31}. 

Similarly, an asymptotic expansion of $f^{(1)}_1$ \eqref{1-heavy}  around $\delta, \tilde{\delta} = \infty$ is given by $f^{(1)}_1 \simeq \sum_{k=-1}^{\infty} S_k (\delta, \tilde{\delta}|z)$ \eqref{perturbative-f1-large}, where 
\be
\label{1-non-pert}
\ba{l}
\dps
S_{-1} = \frac{\tilde\delta}{2} z + \frac{13 \tilde\delta^2 + 8 \tilde\delta \delta + 16 \delta ^2}{64 \tilde\delta} z^2 + \frac{23 \tilde\delta^2 + 24 \tilde\delta \delta + 48 \delta ^2}{192 \tilde\delta} z^3 + O(z^4)\,, 
\vspace{2mm}
\\
\dps
S_{0} = - \frac{(\tilde\delta + 4 \delta)^2}{512 \tilde\delta^2} z^2 - \frac{(\tilde\delta + 4 \delta)^2}{512 \tilde\delta^2} z^3 + O(z^4)\,.
\ea
\ee
 We see that $S_{-1} = f^{(\alpha)}_\alpha$ and $S_0 = f^{(\alpha)}_1$  for $\forall\alpha\in (1, \infty)_\rn$, e.g. one can compare with \eqref{3-heavy} (i.e. choosing $\alpha =3$). These relations illustrate the second part of Proposition \bref{corollary31}.


\paragraph{Isolated coefficients.} Here, we collect isolated coefficients ($c^0$-terms) of the  classical vacuum conformal blocks at exceptional points $\tilde\alpha_{n} = \frac{n}{n+1} \in (0,1)_\rn$ for $n=1,...,7$ (using {\it Mathematica} they can be calculated up to $O(z^{16})$):
\be
\label{iso1}
\dot f_0^{(\frac12)} = 2 \delta^2 z^2\left[1+ z+\frac{9 z^2}{10}+\frac{4 z^3}{5}+\frac{5 z^4}{7}+\frac{9 z^5}{14}+\frac{7 z^6}{12} +\frac{8 z^7}{15} +\frac{27 z^8}{55} +\frac{5 z^9}{11} +O(z^{10})\right],
\ee
\be
\label{iso2}
\dot f_0^{(\frac23)} =\frac{4 \delta^3 z^4}{5}\left[1+2z+\frac{39 z^2}{14}+\frac{47 z^3}{14}+\frac{263 z^4}{70}+\frac{141 z^5}{35} +\frac{1943 z^6}{462} +\frac{1993 z^7}{462}+O(z^{8})\right],
\ee
\be
\label{iso3}
\dot f_0^{(\frac34)} = -\frac{44 \delta^4 z^4}{5}\left[1 + 2 z + \frac{2503}{924} z^2 + \frac{963 }{308}z^3 + \frac{38359}{11550}z^4 + \frac{12937}{3850}z^5 + \frac{83773}{25410}z^6 + O(z^8)\right],
\ee
\be
\dot f_0^{(\frac45)} =-\frac{592 \delta^5 z^6}{35}\left[1 + 3 z + \frac{20833}{3700} z^2 + \frac{7883}{925} z^3 + \frac{2785771}{244200}z^4 + \frac{689743}{48840}z^5+O(z^6)\right],
\ee
\be
\dot f_0^{(\frac56)} =\frac{12032 \delta^6 z^6}{105}\left[1 + 3 z + \frac{203501}{37600}z^2 + \frac{71901}{9400}z^3 + \frac{15432733}{1654400}z^4 + \frac{3399349}{330880}z^5+O(z^6)\right],
\ee
\be
\dot f_0^{(\frac67)} = \frac{375232 \delta^7 z^8}{875}\left[1+4z+\frac{1826413 z^2}{193479} +\frac{3327695 z^3}{193479}+O(z^4)\right],
\ee
\be
\label{iso7}
\dot f_0^{(\frac78)} = -\frac{1862816 \delta^8z^8}{875} \left[1+4z +\frac{34611001 z^2}{3842058}+\frac{57793265 z^3}{3842058}+O(z^4)\right].
\ee

\subsection{$\cW_3$ classical conformal blocks} 
\label{app:w3_numerics}

The operators are chosen as \eqref{W3opers}.  

\paragraph{${\bm{\alpha \in (0,1)_\rn}}$.} Here,  $\alpha = 1/2$ is exceptional, $\alpha = 1/3, 3/5$ are regular (see Definition \bref{def:reg}). 

\hspace{2mm} 

\noindent  $\ws^{(1/3)} = \cb^{(1/3)}_{1/3}c^{1/3} + \cb^{(1/3)}_{0}$: 
\be
\cb^{(1/3)}_{1/3} = \tilde{\delta}z+\frac{3 \tilde{\delta} }{4}z^2+ \frac{5 \tilde{\delta}}{6}z^3+O(z^4)\;, \qquad \cb^{(1/3)}_{0} =  \frac{1}{4}z^2 +  \frac{1}{2}z^3 + O(z^4)\,.
\ee
\noindent $\ws^{(1/2)} = \cb^{(1/2)}_{1/2}c^{1/2} + \cb^{(1/2)}_{0}$: 
\be
\ba{l}
\dps
\cb^{(1/2)}_{1/2} = \tilde{\delta}z +\frac{3 \tilde{\delta} }{4}z^2+\frac{5 \tilde{\delta}}{6}z^3 + O(z^4)\,,
\vspace{2mm}
\\
\dps
\cb^{(1/2)}_{0} =16 \tilde{\delta}  (\tilde{\delta}-\delta )z+\left(2 \delta ^2+28 \tilde{\delta}^2-28 \delta  \tilde{\delta}+\frac{1}{4}\right)z^2 +  \left(\frac{4}{15}\left(7 \delta ^2+200 \tilde{\delta}^2-192 \delta  \tilde{\delta}\right)+\frac{1}{2}\right)z^3 + O(z^4)\,.
\ea
\ee
\noindent $\ws^{(3/5)} = \cb^{(3/5)}_{3/5}c^{3/5} + \cb^{(3/5)}_{1/5}c^{1/5} + \cb^{(3/5)}_0$: 
\be
\ba{l}
\dps
\cb^{(3/5)}_{3/5} = \tilde{\delta}z + \frac{3 \tilde{\delta} }{4}z^2 + \frac{5 \tilde{\delta}}{6}z^3 + O(z^4)\,, \qquad \cb^{(3/5)}_{0} = \frac{1}{4}z^2+\frac{1}{2}z^3 + O(z^4) \,,
\vspace{2mm}
\\
\dps
\cb^{(3/5)}_{1/5} =16 \tilde{\delta}  (\tilde{\delta}-\delta )z+   (2\delta ^2+28 \tilde{\delta}^2-28 \delta  \tilde{\delta})z^2+ \frac{4}{15} \left(7 \delta ^2+200 \tilde{\delta}^2-192 \delta  \tilde{\delta}\right)z^3+O(z^4)\,.
\ea
\ee
There are relations $\cb^{(1/3)}_{1/3} = \cb^{(1/2)}_{1/2} = \cb^{(3/5)}_{3/5}$, $\cb^{(1/3)}_{0} = \cb^{(3/5)}_{0}$ and  $\cb_0^{(1/2)} = \cb_0^{(3/5)} +\cb_{1/5}^{(3/5)} $ which  illustrate Propositions from  Section  \bref{sec:structureW}. 

\paragraph{${\bm{\alpha \in (1,\infty)_\rn}}$.} Here, $\alpha = 3/2$ is exceptional, $\alpha= 3, 5/3$ are regular.

\hspace{2mm}

\noindent $\ws^{(5/3)} = \cb^{(5/3)}_{5/3}c^{5/3} + \cb^{(5/3)}_1c + \cb^{(5/3)}_{1/3}c^{1/3}+ \cb^{(5/3)}_0$:  
\be
\ba{l}
\dps
\cb^{(5/3)}_{5/3} =\frac{\delta +\tilde{\delta}}{2} z+ \frac{ (\delta +\tilde{\delta}) (9 \delta +13 \tilde{\delta})}{64 \tilde{\delta}}z^2+ \frac{1}{960}\left(250 \delta +\frac{119 \delta ^2}{\tilde{\delta}}+115 \tilde{\delta}\right)z^3+O(z^4)\,,
\vspace{2mm}
\\
\dps
\cb_1^{(5/3)} =\frac{\delta -\tilde{\delta}}{64 \tilde{\delta}}z+\frac{ 9 \delta ^2-9 \tilde{\delta}^2-2 \delta  \tilde{\delta}}{1024
   \tilde{\delta}^2}z^2+\frac{ \left(225 \delta ^3-103 \delta  \tilde{\delta}^2-190 \tilde{\delta}^3+132 \delta ^2 \tilde{\delta}\right)}{30720 \tilde{\delta}^3}z^3 + O(z^4)\,,
\vspace{2mm}
\\
\dps
\cb_{1/3}^{(5/3)} =\frac{ \delta -\tilde{\delta}}{2048 \tilde{\delta}^2}z+ \frac{23 \delta ^2-5 \tilde{\delta}^2-22 \delta  \tilde{\delta}}{65536 \tilde{\delta}^3}z^2 + \frac{15 \delta ^3-36 \delta  \tilde{\delta}^2-4 \tilde{\delta}^3-104 \delta ^2 \tilde{\delta}}{245760 \tilde{\delta}^4}z^3+O(z^4)\,,
\vspace{2mm}
\\
\dps
\cb_0^{(5/3)} = \frac{ \tilde{\delta}-\delta}{32 \tilde{\delta}}z +\frac{63 \delta ^2+9 \tilde{\delta}^2-22 \delta \tilde{\delta}}{512 \tilde{\delta}^2}z^2+\frac{1292 \delta ^2 \tilde{\delta}-225 \delta ^3-521 \delta  \tilde{\delta}^2+190 \tilde{\delta}^3}{15360 \tilde{\delta}^3}z^3+O(z^4)\,.
\ea
\ee
\noindent $\ws^{(2)} = \cb^{(2)}_{2}c^{2} + \cb^{(2)}_1c + \cb^{(2)}_0$: 
\be
\ba{l}
\dps
\cb_{2}^{(2)} =\frac{\delta +\tilde{\delta}}{2}z   +\frac{ (\delta +\tilde{\delta}) (9 \delta +13 \tilde{\delta})}{64 \tilde{\delta}}z^2+\frac{1}{960}  \left(250 \delta +\frac{119 \delta ^2}{\tilde{\delta}}+115 \tilde{\delta}\right)z^3+ O(z^4)\,,
\vspace{2mm}
\\
\dps
\cb_1^{(2)} =\frac{\delta -\tilde{\delta}}{64 \tilde{\delta}}z+\frac{ 9 \delta ^2-9 \tilde{\delta}^2-2 \delta  \tilde{\delta}}{1024
   \tilde{\delta}^2}z^2+\frac{ \left(225 \delta ^3-103 \delta  \tilde{\delta}^2-190 \tilde{\delta}^3+132 \delta ^2 \tilde{\delta}\right)}{30720 \tilde{\delta}^3}z^3 + O(z^4)\,,
\vspace{2mm}
\\
\dps
\begin{aligned}
& \cb_0^{(2)}=\frac{ (64 \tilde{\delta} -1) (\tilde{\delta} -\delta)}{2048 \tilde{\delta} ^2}z+ \frac{ (1152 \tilde{\delta} -5) \tilde{\delta} ^2+(8064 \tilde{\delta} +23) \delta^2-22 (128 \tilde{\delta} +1) \tilde{\delta}  \delta}{65536
   \tilde{\delta} ^3}z^2+ \\
& + \frac{3040 \tilde{\delta} ^4-4 \tilde{\delta} ^3 (2084 \delta+1)+4 \tilde{\delta} ^2 \delta (5168 \delta-9)-8 \tilde{\delta}  \delta^2 (450 \delta+13)+15
   \delta^3}{245760 \tilde{\delta} ^4}z^3+O(z^4) \,.
\end{aligned}
\ea
\ee
\noindent $\ws^{(3)} = \cb^{(3)}_{3}c^{3} + \cb^{(3)}_1c + \cb^{(3)}_0$: 
\be
\ba{l}
\dps
\cb_{3}^{(3)} =\frac{\delta +\tilde{\delta}}{2}z   +\frac{ (\delta +\tilde{\delta}) (9 \delta +13 \tilde{\delta})}{64 \tilde{\delta}}z^2+\frac{1}{960}  \left(250 \delta +\frac{119 \delta ^2}{\tilde{\delta}}+115 \tilde{\delta}\right)z^3+ O(z^4)\,,
\vspace{2mm}
\\
\dps
\cb_1^{(3)} =\frac{\delta -\tilde{\delta}}{64 \tilde{\delta}}z+\frac{ 9 \delta ^2-9 \tilde{\delta}^2-2 \delta  \tilde{\delta}}{1024
   \tilde{\delta}^2}z^2+\frac{ \left(225 \delta ^3-103 \delta  \tilde{\delta}^2-190 \tilde{\delta}^3+132 \delta ^2 \tilde{\delta}\right)}{30720 \tilde{\delta}^3}z^3 + O(z^4)\,,
\vspace{2mm}
\\
\dps
\cb_0^{(3)} = \frac{ \tilde{\delta}-\delta}{32 \tilde{\delta}}z +\frac{63 \delta ^2+9 \tilde{\delta}^2-22 \delta \tilde{\delta}}{512 \tilde{\delta}^2}z^2+\frac{1292 \delta ^2 \tilde{\delta}-225 \delta ^3-521 \delta  \tilde{\delta}^2+190 \tilde{\delta}^3}{15360 \tilde{\delta}^3}z^3+O(z^4)\,.
\ea
\ee
Again, following Propositions of  Section  \bref{sec:structureW}  we find  that: $\cb_{2}^{(2)} = \cb_{5/3}^{(5/3)} = \cb_{3}^{(3)}$, i.e. the leading terms  coincide;  $\cb_{1}^{(2)} = \cb_{1}^{(5/3)} = \cb_{1}^{(3)}$, $\cb_0^{(3)} = \cb_0^{(5/3)}$ and $\cb_0^{(2)} = \cb_0^{(5/3)} + \cb_{1/3}^{(5/3)} $.

\paragraph{${\bm\alpha=\bf 1}$.}  The classical block $\ws^{(1)} = \cb^{(1)}_1c + \cb^{(1)}_0$ is given by
\be
\label{f11W}
\begin{aligned}
&\cb_1^{(1)} =\frac{\tilde{\delta} (16 (\delta +\tilde{\delta})-1)}{32 \tilde{\delta}-1}z + \left(\frac{2 \delta ^2 (24 \tilde{\delta}-1) (96 \tilde{\delta}+1)+4 \delta \tilde{\delta} (8 \tilde{\delta} (352 \tilde{\delta}-35)+7)}{ (32 \tilde{\delta}-1)^3}+\right.\\
&\left.+\frac{\tilde{\delta} (16 \tilde{\delta} (4 \tilde{\delta} (416 \tilde{\delta}-57)+11)-3)}{4 (32 \tilde{\delta}-1)^3}\right)z^2+ O(z^3) \,, 
\end{aligned}
\end{equation}
\be
\label{f01W}
\begin{aligned}
& \cb_{0}^{(1)} =\frac{32 \tilde{\delta} (\tilde{\delta}-\delta )}{(1-32 \tilde{\delta})^2}z +\left(\frac{ 4\delta ^2 (112 \tilde{\delta} (288 \tilde{\delta}-29)+81)-8 \delta  \tilde{\delta} (16 \tilde{\delta} (352 \tilde{\delta}+25)-79)  -18 \delta}{ (1-32 \tilde{\delta})^4} + \right.\\
&\left.+\frac{1+32 \tilde{\delta} (\tilde{\delta} (8 \tilde{\delta}
   (288 \tilde{\delta}-7)+5)-1)}{4 (1-32 \tilde{\delta})^4}\right)z^2+O(z^3) \,.
\end{aligned}
\ee
Note that  the classical block $\ws^{(1)}$ has a pole $\tilde{\delta} = 1/32$,  $\ws^{(\alpha > 1)}$ has a pole at $\tilde\delta = 0$,  $\ws^{(\alpha <1)}$ does not have poles.

\paragraph{${\bm\alpha=\bf 0}$.}  The classical block $\ws^{(0)}$  is given by
\be
\label{f0W}
\ws^{(0)}=  \tilde{\delta}z+\frac{ \tilde{\delta}  (3 \tilde{\delta} -2)}{4 (\tilde{\delta} -1)}z^2+ \frac{\tilde{\delta}  (5 \tilde{\delta} -2)}{6 (\tilde{\delta} -1)}z^3+ O(z^4)\,.
\ee
Recall that the operators were chosen as \eqref{W3opers}. A different choice  of operators with the same fusion $\tilde{w}=0$ reads $\cO_{2,3} = \cO_{h, w},\;\; \cO_{1,4} =\cO_{h,w}^* =  \cO_{h,-w}$. In this case, the  block function  is given by    
\be
\bar{\ws}^{(0)}=  \frac{  \tilde{\delta}^2}{4 (\tilde{\delta} -1)}z^2+ O(z^4)\,,
\ee
which being exponentiated reproduces the known global $\cW_3$ conformal block calculated in \cite{Fateev:2011qa} (see eq. (2.68) therein). Note that the $z^1$-term is absent since the first coefficient \eqref{B1_2} is zero in the large-$c$ limit.

\bibliographystyle{JHEP}
\bibliography{heavy-block}

\end{document}